\newcommand\nc\newcommand
\nc\bfa{{\boldsymbol a}}\nc\bfA{{\boldsymbol A}}\nc\cA{{\mathcal A}}
\nc\bfb{{\boldsymbol b}}\nc\bfB{{\boldsymbol B}}\nc\cB{{\mathcal B}}
\nc\bfc{{\boldsymbol c}}\nc\bfC{{\boldsymbol C}}\nc\cC{{\mathcal C}}
\nc\sC{{\mathscr C}}
\nc\bfd{{\boldsymbol d}}\nc\bfD{{\boldsymbol D}}\nc\cD{{\mathcal D}}
\nc\bfe{{\boldsymbol e}}\nc\bfE{{\boldsymbol E}}\nc\cE{{\mathcal E}}
\nc\bff{{\boldsymbol f}}\nc\bfF{{\boldsymbol F}}\nc\cF{{\mathcal F}}
\nc\bfg{{\boldsymbol g}}\nc\bfG{{\boldsymbol G}}\nc\cG{{\mathcal G}}
\nc\bfh{{\boldsymbol h}}\nc\bfH{{\boldsymbol H}}\nc\cH{{\mathcal H}}
\nc\bfi{{\boldsymbol i}}\nc\bfI{{\boldsymbol I}}\nc\cI{{\mathcal I}}
\nc\bfj{{\boldsymbol j}}\nc\bfJ{{\boldsymbol J}}\nc\cJ{{\mathcal J}}
\nc\bfk{{\boldsymbol k}}\nc\bfK{{\boldsymbol K}}\nc\cK{{\mathcal K}}
\nc\bfl{{\boldsymbol l}}\nc\bfL{{\boldsymbol L}}\nc\cL{{\mathcal L}}
\nc\bfm{{\boldsymbol m}}\nc\bfM{{\boldsymbol M}}\nc\sM{{\mathscr M}}\nc\cM{{\mathcal M}}
\nc\bfn{{\boldsymbol n}}\nc\bfN{{\boldsymbol N}}\nc\cN{{\mathcal N}}
\nc\bfo{{\boldsymbol o}}\nc\bfO{{\boldsymbol O}}\nc\cO{{\mathcal O}}
\nc\bfp{{\boldsymbol p}}\nc\bfP{{\boldsymbol P}}\nc\cP{{\mathcal P}}
\nc\bfq{{\boldsymbol q}}\nc\bfQ{{\boldsymbol Q}}\nc\cQ{{\mathcal Q}}
\nc\bfr{{\boldsymbol r}}\nc\bfR{{\boldsymbol R}}\nc\cR{{\mathcal R}}
\nc\bfs{{\boldsymbol s}}\nc\bfS{{\boldsymbol S}}\nc\cS{{\mathcal S}}
\nc\bft{{\boldsymbol t}}\nc\bfT{{\boldsymbol T}}\nc\cT{{\mathcal T}}
\nc\bfu{{\boldsymbol u}}\nc\bfU{{\boldsymbol U}}\nc\cU{{\mathcal U}}
\nc\bfv{{\boldsymbol v}}\nc\bfV{{\boldsymbol V}}\nc\cV{{\mathcal V}}
\nc\bfw{{\boldsymbol w}}\nc\bfW{{\boldsymbol W}}\nc\cW{{\mathcal W}}
\nc\bfx{{\boldsymbol x}}\nc\bfX{{\boldsymbol X}}\nc\cX{{\mathcal X}}
\nc\bfy{{\boldsymbol y}}\nc\bfY{{\boldsymbol Y}}\nc\cY{{\mathcal Y}}
\nc\bfz{{\boldsymbol z}}\nc\bfZ{{\boldsymbol Z}}\nc\cZ{{\mathcal Z}}
\nc\diff{{\mathrm d}}
\nc\e{{\mathrm e}}
\nc\calC{{\mathcal C}}
\newcommand{\remove}[1]{}
\newcommand{\correct}[1]{\textcolor{black}{#1}}
\newcommand{\latest}[1]{#1}
\newcommand{\avg}{{\mathbb E}}
\newcommand{\dist}{d_{L}}
\numberwithin{equation}{section}
\theoremstyle{plain}
\newtheorem*{theorem*}{Theorem}
\newtheorem*{lemma*}{Lemma}
\newtheorem*{corollary*}{Corollary}
\newtheorem{observation}[theorem]{Observation}
\newcommand\reals{{\mathbb R}}
\newcommand\integers{{\mathbb Z}}
\newcommand{\norm}[1]{||#1||}
\begin{document}

\title{Community Recovery in the Geometric Block Model}

\author{\name Sainyam Galhotra \email sg@cs.cornell.edu \\
       \addr Department of Computer Science\\
       Cornell University\\
       Bill and Melinda Gates Hall, \\
       Ithaca, NY 14850 USA
       \AND
       \name Arya Mazumdar \email arya@ucsd.edu \\
       \addr Halicioglu Data Science Institute\\
       University of California, San Diego\\
       9500 Gilman Dr \\ 
       La Jolla, CA 92093 USA
       \AND
       \name Soumyabrata\ Pal \email soumyabrata@google.com \\
       \addr Google Research\\
       Bengaluru 560016, India
       \AND 
       \name Barna Saha \email barnas@ucsd.edu \\
       \addr Department of Computer Science and Halicioglu Data Science Institute\\
       University of California, San Diego\\
       9500 Gilman Dr \\ 
       La Jolla, CA 92093 USA
       }

\editor{Alexandre Proutiere}

\maketitle

\begin{abstract}
To capture the inherent geometric features of many community detection problems, we propose to use a new random graph model of communities that we call a {\em Geometric Block Model}. The geometric block model builds on the {\em random geometric graphs} (Gilbert, 1961), one of the  basic models of random graphs for spatial networks, in the same way that the well-studied stochastic block model builds on the Erd\H{o}s-R\'{en}yi random graphs. It is also a natural extension of random community models inspired by the recent theoretical and practical advancements in community detection. 
%
To analyze the geometric block model, we first provide new connectivity results for  {\em random annulus graphs} which are  generalizations of random geometric graphs. 
The connectivity properties of geometric graphs have been studied since their introduction, and analyzing them has been more difficult than their  Erd\H{o}s-R\'{en}yi counterparts due to correlated edge formation.


We then use the connectivity results of random annulus graphs to provide necessary and sufficient conditions for efficient recovery of communities for  the geometric block model. 
We show that a simple triangle-counting algorithm to detect communities in the geometric block model is near-optimal. For this we consider the following two regimes of graph density.

In the regime where the average degree of the graph grows  logarithmically with the number of vertices, we show that our algorithm performs extremely well, both theoretically and practically. In contrast, the triangle-counting algorithm is far from being optimum for the stochastic block model in the logarithmic degree regime. We simulate our results on both real and synthetic datasets to show superior performance of both the new model as well as our algorithm. 


\end{abstract}
\begin{keywords}
Random graphs, Community recovery, Generative model, Graph clustering, Random geometric graphs.
\end{keywords}

\section{Introduction}

{

Clustering of graphs is a ubiquitous problem  where the objective is to partition the vertices of a graph into disjoint clusters such that each cluster is more densely connected internally than across clusters. Many real-world datasets can be represented in the form of graphs where the vertices (nodes) represent data elements and the edges represent (noisy) interaction between the elements. It is of primary interest to recover latent clusters in the graph either as an end goal or as a means to some other learning problem. As an example, consider the graph induced by the collection of political blogs in internet where each vertex corresponds to a blog website,  and an edge exists between two vertices if one of the corresponding blogs hyperlinks to the other. From this graph, it might be of interest to partition the blogs into two clusters. Achieving this objective allows a recommendation engine to recommend each blog to the correct audience or suggest correct political advertisements and thereby garner more views.

A very simple algorithm to perform this task is the well-known \textit{min-cut} algorithm which intends to partition the graph into clusters such that minimum number of edges go across the clusters. The min-cut algorithm runs in polynomial time but it does not have any guarantee on the sizes of the clusters returned. If we constrain the output clusters to be equally sized, then this problem, also known as the \textit{min-bisection} problem becomes NP-hard. However, most real-world datasets are not pathological and there exist many properties which are satisfied by the graph, for example, sparsity and transitivity. Hence, an approach to resolve this problem is to devise a simple and elegant modeling assumption according to which the observed real-world graphs are generated. Under this assumption, we can design efficient algorithms that can recover the latent clusters. Such modeling assumptions not only allow a rigorous theoretical treatment, but also allow us to benchmark and compare different heuristics for graph partitioning. Finally, note that these models must capture the inherent properties of real world graphs so that the algorithms designed under the corresponding assumption work well on real world datasets as well.

The {\em planted-partition} model or the {\em stochastic block model} (SBM) is an example of such a random graph model for community detection that generalizes the well-known \latest{Erd\H{o}s-R\'{en}yi graphs} \citep{holland1983stochastic,dyer1989solution,decelle2011asymptotic,abbe2015community,abh:16,DBLP:conf/colt/HajekWX15,chin2015stochastic,mossel2015consistency}.
Consider a graph $G(V,E)$, where $V = V_1 \sqcup V_2 \sqcup \dots \sqcup V_k$ is a disjoint union of $k$ clusters denoted by $V_1, \dots, V_k.$ The edges of the graph are drawn randomly: there is an edge between $u \in V_i$ and $v \in V_j$ with probability $q_{i,j}, 1\le i,j \le k.$

This model has been  popular both in theoretical and practical domains of community detection, and the aforementioned references are just a small sample. Recent theoretical works  focus on characterizing sharp threshold of recovering the partition in the SBM. For example, when there are only two communities of exactly equal size, and the inter-cluster edge probability $q_{i,j}=\frac{b\log n}{n}, i \ne j$ and intra-cluster edge probability is $q_{i,i}=\frac{a\log n}{n}$, it is known that perfect recovery is possible if and only if $\sqrt{a} - \sqrt{b} > \sqrt{2}$ 
\citep{abh:16,mossel2015consistency}. The regime of the probabilities being $\Theta\Big(\frac{\log n}{n}\Big)$ has been put forward as one of most interesting ones, \latest{because  in an Erd\H{o}s-R\'{en}yi random graph}, this is the threshold for graph connectivity \citep{bollobas1998random}. This result has been subsequently generalized for $k$ communities \citep{abbe2015community,abbe2015recovering,hajek2016achieving} (for constant $k$ or when $k=o(\log{n})$), and under the assumption that the communities are generated according to a probabilistic generative model (there is a prior probability $p_i$ of an element being in the $i$th community) \citep{abbe2015community}. Note that, the results are not only of theoretical interest, many real-world networks exhibit a ``sparsely connected'' community feature \citep{leskovec2008statistical}, and any efficient recovery algorithm for SBM has many potential applications.  

One aspect that the SBM does not account for is a ``transitivity rule'' (`friends having common friends') inherent to many social and other community structures. To be precise, consider any three vertices $x, y$ and $z$. If $x$ and $y$ are connected by an edge, and $y$ and $z$ are connected by an edge, then it is more likely than not that $x$ and $z$ are connected by an edge. This phenomenon can be seen in many network structures - predominantly in social networks, blog-networks and advertising. SBM, primarily a generalization of \latest{Erd\H{o}s-R\'{en}yi random graph}, does not take into account this characteristic, and in particular, the event that an edge exists between $x$ and $z$ is independent of the events that there exist edges between $x$ and $y$ and $y$ and $z$. However, one needs to be careful such that by allowing such ``transitivity'', the simplicity and elegance of the SBM is not lost.

Further, from an algorithmic point of view, it is well known that triangle based heuristic algorithms work really well for graph clustering tasks (see \citep{tsourakakis2009approximate}, \citep{tsourakakis2017scalable}, \citep{kolountzakis2012efficient}). In particular, in \citep{tsourakakis2017scalable}, the authors proposed a triangle counting based heuristic algorithm \texttt{TECTONIC} that has better performance on Amazon, DBLP and YouTube datasets for graph partitioning than many popular competitors including Spectral Clustering, \latest{Girvan-Newman} algorithm, Louvain method and the Clauset-Newman-Moore (CNM) to name a few. Triangle (and motifs in general) based analytics has been extensively useful in different areas including social networks~\citep{newman2002random}, metabolic networks~\citep{milo2002network}, protein networks~\citep{alon2007network,vinayagam2016controllability}, transportation networks~\citep{rosvall2014memory}, neural networks~\citep{park2013structural}, and food webs~\citep{stouffer2012evolutionary}. Specifically, clustering methods that use triangles capture different interaction patterns. For example, optimizing to minimize the number of triangles across the cut returns similar energy flow patterns across species in the food web data~\citep{benson2016higher}. Further, it was shown in \citep{watts1998collective} that triangles are stronger signals of community structure than edges alone. This begs the following question: ``Is it possible to provide a theoretical justification of why simple triangle based heuristics perform well on real-world data?''. 
It turns out that triangle counting can not recover the latent clusters in SBM in the logarithmic degree regime, which is one of the interesting regimes. This further limits the applicability of algorithms designed using the SBM assumption thus raising the possibility of a different random graph model that captures significantly more properties of real-world graphs.

Inspired by the above questions, we propose a novel random graph community detection model analogous to the stochastic block model, that we call the {\em geometric block model} (GBM). The GBM depends on the basic definition of the well-known class of random graphs called {\em random geometric graph} (RGG) \citep{gilbert1961random} that has found a lot of practical use in wireless networking because of its inclusion of the notion of proximity between nodes \citep{penrose2003random}. The GBM satisfies several desiderata of real networks, such as the degree
associativity property (high degree nodes tend to connect). However, analyzing the GBM is significantly
challenging due to the presence of correlated edges and in particular,  the techniques developed for the
SBM \latest{do not extend} to this setting. 

More precisely, the GBM  is defined as follows.
 Let $V\equiv  V_1 \sqcup V_2 \sqcup \dots \sqcup V_k$ be the set of vertices that is a disjoint union of $k$ clusters, denoted by $V_1, \dots , V_k$. Let, $\beta_{i,j}, i,j \in \{1,\dots,k\}$ be unknown latent variables. 
Given an integer $t\geq 1$, for each vertex $u \in V$, define a random vector $Z_u \in \reals^{t+1}$ that is uniformly distributed in $\cS^{t} \subset \reals^{t+1},$ the $t$-dimensional sphere. 
In this random graph, an edge exists between $v \in V_i$ and $u \in V_j$ if and only if $\langle Z_u, Z_v\rangle \ge \beta_{i,j}$. 
In this special case of $t=1$, the above definition is equivalent to choosing  random variable $\theta_u$ uniformly distributed in  $[0,2\pi]$, for all $u \in V$. Then there will be an edge between two vertices $u\in V_i,v\in V_j$ if and only if 
$\cos \theta_u \cos \theta_v + \sin \theta_u \sin \theta_v = \cos(\theta_u -\theta_v) \ge \beta_{i,j}$ or $\min\{|\theta_u -\theta_v|, 2\pi -|\theta_u - \theta_v|\} \le \arccos \beta_{i,j}$. This in turn, is equivalent to choosing a random variable $X_u$ uniformly distributed in $[0,1]$ for all $u \in V$, and there exists an edge between   two vertices $u\in V_i,v\in V_j$ if and only if 
$$
d_L(X_u,X_v) \equiv \min\{|X_u - X_v|, 1- |X_u- X_v|\} \le r_{i,j},
$$
where $r_{i,j} \in [0,\frac12], 0 \le i,j \le k$, are a set of real numbers. This corresponds to the Geometric block model in 1-dimension which will be referred to as ${\rm GBM_1}$.
 
For the rest of this paper, we concentrate on the case when $r_{i,i} = r_s$ for all $i \in \{1, \dots, k\}$, which we call the ``intra-cluster distance'' and $r_{i,j} = r_d$ for all $i,  j\in  \{1, \dots, k\}, i \ne j$, which we call the ``inter-cluster distance,'' to simply the analysis. To allow for edge density to be higher inside the clusters than across the clusters,  assume $r_s \geq r_d$.

\subsection{{Validation} of GBM}\label{sec:motivation}
 We next give two examples of datasets that {motivate} the geometric block model. In particular, this  datasets validate our hypothesis about geometric block model and the role of distance in the formation of edges. The first one is a product purchase metadata from Amazon, and the second one is a dataset with academic collaboration.

\subsubsection{Amazon Metadata}
\label{subsec:m2}
The first dataset that we use in our experiments is the Amazon product metadata on SNAP (\url{https://snap.stanford.edu/data/amazon-meta.html}), that has 548552 products and each product is one of the following types
\{Books, Music CD's, DVD's, Videos\}. Moreover, each product has a list of attributes, for example, a book may have attributes like $\langle$``General'', ``Sermon'', ``Preaching''$\rangle$. We consider the co-purchase network over these products. We make two observations here: (1) edges get formed (that is items are co-purchased) more frequently if they are similar, where we measure similarity by the number of common attributes between products, and (2) two products that share an edge have more common neighbors (number of items that are bought along with both those products) than two products with no edge in between. 

\begin{figure*}[htbp]
  \centering
  \begin{minipage}[b]{0.4\textwidth}
    \includegraphics[width=0.8\textwidth]{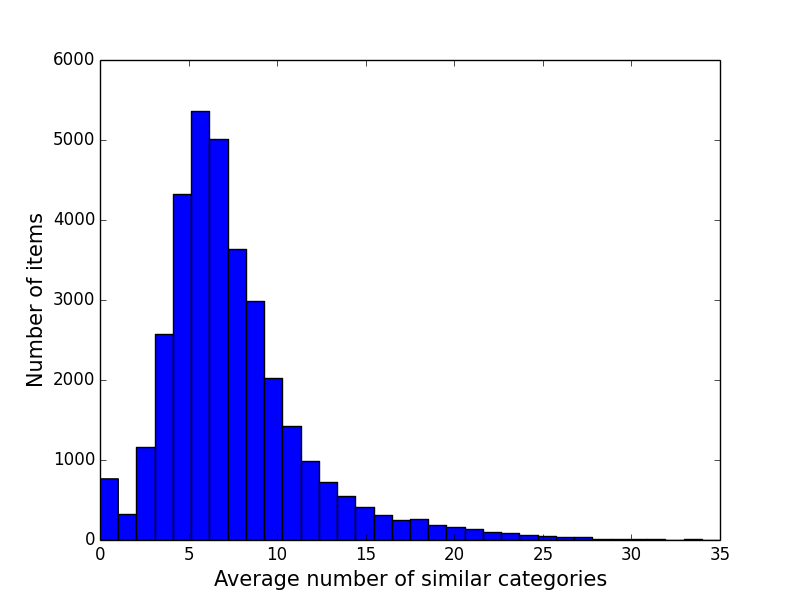}
    \caption{Histogram: similarity of products bought together (mean $\approx 6$)}
    \label{fig:amazon1}
  \end{minipage}
  \quad
  \begin{minipage}[b]{0.4\textwidth}
    \includegraphics[width=0.8\textwidth]{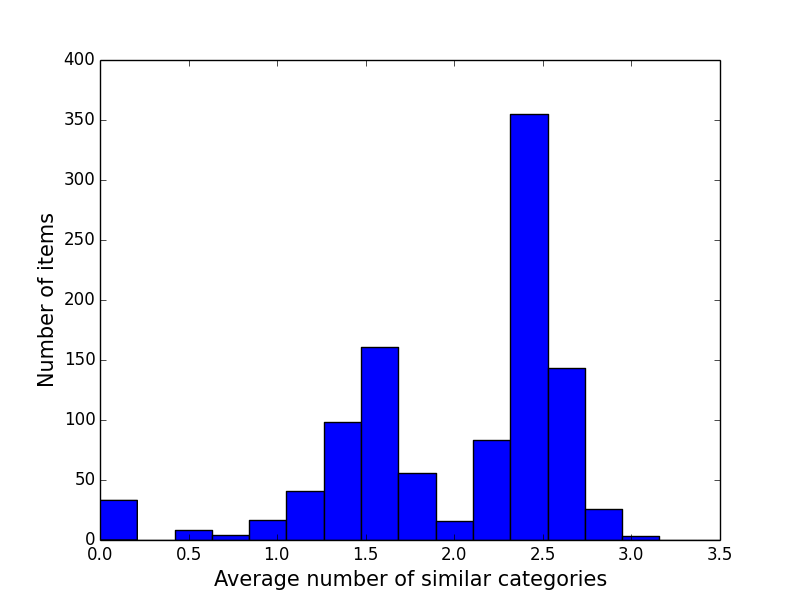}
    \caption{Histogram: similarity of products not bought together (mean$\approx 2$)}
    \label{fig:amazon2}
  \end{minipage}
\end{figure*}
\begin{figure*}[htbp]
  \centering
  \begin{minipage}[b]{0.28\textwidth}
    \includegraphics[width=\textwidth]{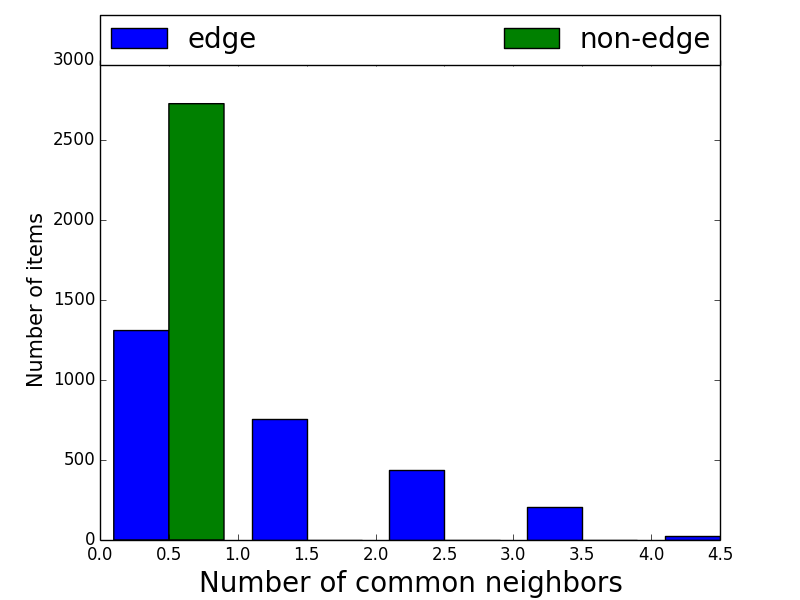}
  \end{minipage}
  \begin{minipage}[b]{0.3\textwidth}
    \includegraphics[width=\textwidth]{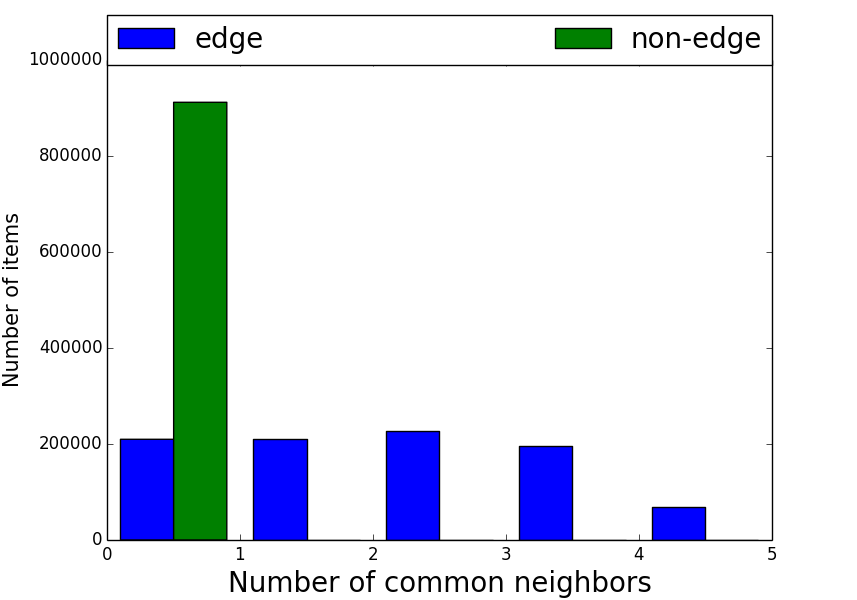}
  \end{minipage}
  \begin{minipage}[b]{0.28\textwidth}
    \includegraphics[width=\textwidth]{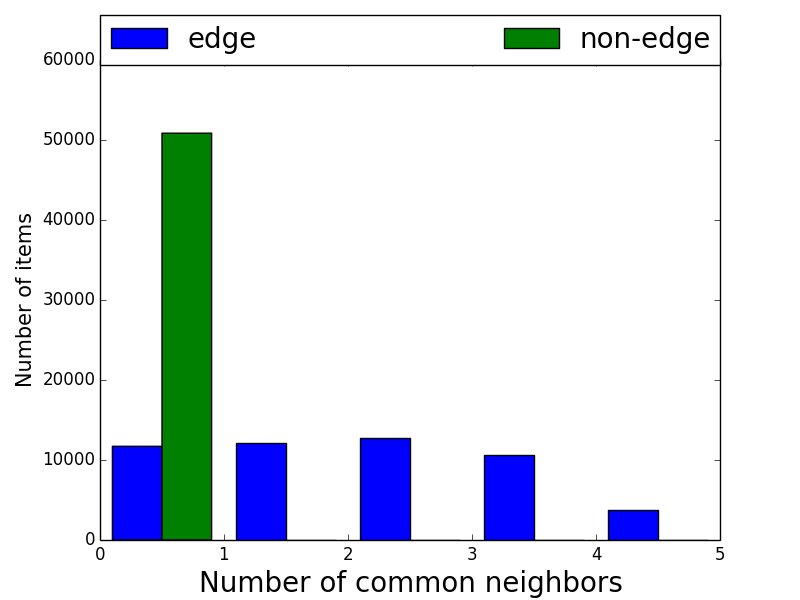}
  \end{minipage}
   \caption{Histogram of common neighbors of edges and non-edges in the co-purchase network, from left to right: Book-DVD, Book-Book, DVD-DVD}
    \label{fig:amazon3}
\end{figure*}

Figures~\ref{fig:amazon1} and \ref{fig:amazon2} show respectively average similarity of products that were bought together, and not bought together. From the distribution, it is quite evident that edges in a co-purchase network gets formed according to distance, a salient feature of random geometric graphs, and the GBM.

We next take equal number of product pairs inside Book (also inside DVD, and across Book and DVD) that have an edge in-between and do not have an edge respectively. Figure~\ref{fig:amazon3} shows that the number of common neighbors when two products share an edge is much higher than when they do not. In fact, almost all product pairs that do not have an edge in between also do not share any common neighbor. 
\latest{This can be explainable in sparse graphs if we imagine the latent features to follow the rules of GBM.
On the other hand, this also suggests that SBM is not a good model for this network, as in sparse SBM, two nodes having common neighbors is extremely rare, irrespective of whether they share an edge or not. Note that,  in practice, additional covariate features like product metadata, reviews etc. are also observed, which could be used for community detection.}

\subsubsection{Academic Collaboration}
 \label{subsec:m1}
We consider the collaboration network of academicians in Computer Science since 2016 (data obtained from \texttt{csrankings.org})\footnote{We use author affiliations from the paper available at \url{https://www.aminer.org/citation}}. According to area of expertise of the authors, we 
consider five different communities: Data Management (MOD), Machine Learning and Data Mining (ML), Artificial Intelligence (AI), Robotics (ROB), Architecture (ARCH). If two authors share the same affiliation, or shared affiliation in the past, we assume that they are geographically close.
We would like to hypothesize that, two authors in the same communities might collaborate even when they are geographically far. However, two authors in different communities are more likely to collaborate only if they share the same affiliation (or are geographically close).
Table \ref{table:collab} describes the number of edges across  the communities. 
 It is evident that the authors from same community are likely to collaborate irrespective of the affiliations and the authors of different communities collaborate much frequently when they share affiliations or are close geographically. This clearly indicates that the inter cluster edges are likely to form if the distance between the nodes is quite small, motivating the fact $r_d < r_s$ in the GBM.  

\begin{table*}[htbp]
\centering
\begin{tabular}{|p{1.5cm}|p{1.5cm}|p{1.5cm}|p{1.5cm}|} 
 \hline
 Area 1 & Area 2  & same  & different \\ 
 \hline
ML & AI  & 65& 43 \\
\hline
AI & ROB & 21 &8\\
\hline
ML & MOD & 22&14\\
\hline
ARCH & MOD & 7& 1\\
\hline
ROB & ARCH  &5& 1\\
\hline
\end{tabular}
\quad
\begin{tabular}{|p{1.5cm}|p{1.5cm}|p{1.5cm}|} 
 \hline
 Area  &  same & different   \\ 
 \hline
MOD & 88&91 \\
\hline
ARCH & 51 & 45\\
\hline
ROB & 63 &30\\
\hline
AI & 194& 110\\
\hline
ML &152 &141\\
\hline
\end{tabular}
\caption{On the left we count the number of inter-cluster edges when authors shared same affiliation and different affiliations. On the right, we count the same for intra-cluster edges.\label{table:collab}}
\end{table*}


}

\section{Description of Our Results}
In Section \ref{sec:rgg_intro}, we describe a new random graph model that we call the {\em Random Annulus Graphs} (RAG) that we study in this paper. The RAG is a variant of the RGG  and interestingly, the connectivity properties of the RAG will allow us to design algorithms with provable guarantees for community detection in the GBM. Subsequently, in section \ref{sec:gbm_intro}, we introduce the Geometric Block Model more formally. In section \ref{sec:cluster_recovery}, we provide a brief discussion on the algorithms that we design for recovery of clusters in a random graph sampled according to GBM.
\subsection{Random graph models}\label{sec:rgg_intro}
Models of random graphs are ubiquitous with Erd\H{o}s-R\'{en}yi graphs \citep{erdos1959random,gilbert1959random} at the forefront. Studies of the properties of random graphs have led to many fundamental theoretical observations as well as many engineering applications.  In an Erd\H{o}s-R\'{en}yi graph $G(n,p), n \in \integers_+, p \in [0,1]$, the randomness lies in how the edges are chosen: each possible pair of vertices forms an edge independently with probability $p$. It is also possible to consider models of graphs where randomness lies in the vertices. 

Keeping up with the simplicity of the  Erd\H{o}s-R\'{en}yi model, let us define a random annulus graph in one dimension (${\rm RAG_1}$) in the following way.
\begin{definition}[Random annulus graph]\label{defn:high}
Let the $t$-dimensional unit sphere be denoted by $S^t\equiv \{x \in \reals^{t+1} \mid \|x\|_2=1\}$ where $\|\cdot\|_2$ denote the $\ell_2$ norm.
For $t\ge 1$, a $t$-dimensional random annulus graph ${\rm RAG}_t(n,[r_1,r_2])$ on $n$ vertices has parameters $n,t \in \integers_+$,  and $r_1, r_2 \in [0,2], r_1 \le r_2$. It is defined by assigning $X_i \in S^t$ to vertex $i, 1 \le i \le n,$ where $X_i$'s are independent and identical random vectors uniformly distributed in $S^t$. There will be an edge between vertices $i$ and $j, i \ne j,$ if and only if $r_1 \le \|X_i-X_j\|_2  \le r_2$. 
\end{definition}
When from the context if it is clear that we are in high dimensions ($t >1$), we use $d(u,v)$ to denote $\|X_u-X_v\|_2$ or just the $\ell_2$ distance between the arguments.


 We give the name random annulus graph (RAG) because two vertices are connected if one is within an `annulus' centered at the other.
 One can think of the random variables $X_i, 1\le i \le n$, to be uniformly distributed on the perimeter of a circle with radius $\frac1{2\pi}$ and the distance $d_L(\cdot,\cdot)$ to be the geodesic distance (the length of the smaller arc between the two points).
For clarity in the calculations, it will be helpful to consider the vertices as just random points on $[0,1]$. Note that every point has a natural left direction (if we think of them as points on a circle then this is the counterclockwise direction) and a right direction.

This definition is by no means new. For the case of $r_1=0, t=1$, this is the random geometric graphs (RGG) in one dimension. 
Random Geometric graphs were defined first  by \citep{gilbert1961random} and constitute the first and simplest model of spatial networks. The definition of ${\rm RAG_1}$ has been previously mentioned in \citep{dettmann2016random}. The interval $[r_1,r_2]$ is called the connectivity interval in ${\rm RAG_1}$.
Random geometric graphs have several desirable properties that model real human social networks, such as vertices with the degree associativity property (high degree nodes tend to connect). This has led RGGs to be used as models of disease outbreak in social network \citep{eubank2004modelling} and flow of opinions \citep{zhang2014opinion}. RGGs are an excellent model for wireless (ad-hoc) communication networks \citep{dettmann2016random,haenggi2009stochastic}. From a more mathematical stand-point, RGGs act as a bridge between the theory of classical random graphs and that of percolation \citep{b:01,b:06}. Recent works on RGGs also include hypothesis testing between an Erd\H{o}s-R\'{en}yi graph and a random geometric graph \citep{bubecktriangle}.

Threshold properties of   Erd\H{o}s-R\'{en}yi graphs have been at the center of much theoretical interest, and in particular it is known  that many graph properties exhibit sharp phase transition phenomena \citep{friedgut1996every}. Random geometric graphs also exhibit similar threshold properties  \citep{penrose2003random}. 

\paragraph{Random Annulus Graphs in one dimension}  
We describe the case of $t=1$ separately from $t>1$ because in the former case we have exact connectivity result.
The base of the logarithm is $e$ here and everywhere else in the paper unless otherwise mentioned. 


\begin{theorem}[Connectivity threshold of one dimensional random annulus graphs]\label{thm:rag}
The ${\rm RAG}_1(n,[2\sin{\frac{\pi b\log n}{n}},2\sin{\frac{\pi a\log n}{n}}])$  is \latest{ i) connected with  probability $1-o(1)$ if $a >1$ and $a - b >0.5$ ii) disconnected with  probability $1-o(1)$ if $a<1$ or $a - b < 0.5$  }
\end{theorem}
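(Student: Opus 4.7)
The plan is to work in the equivalent 1D torus model: identify $S^1$ with $[0,1]$ (wrap-around) so that, using the identity $\|Z_u - Z_v\|_2 = 2\sin(\pi d_L(X_u, X_v))$, the edge condition becomes $d_L(X_u, X_v) \in [r_1, r_2]$ with $r_1 := b\log n/n$ and $r_2 := a\log n/n$. I would then prove both directions by analysing two independent obstructions via first-moment/geometric arguments.

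\emph{Necessity.} The expected number of isolated vertices is $n(1-2(r_2-r_1))^{n-1} \sim n^{1-2(a-b)}$, and the expected number of consecutive circular gaps of length $> r_2$ is $\sim n e^{-n r_2} = n^{1-a}$. If $a-b \le 1/2$, the first quantity is $\Omega(1)$; if $a \le 1$, the second is $\Omega(1)$. In either case, a Poisson-paradigm/second-moment argument turns $\Omega(1)$ expectation into non-vanishing probability of either at least one isolated vertex (which directly disconnects the graph), or of at least two long gaps (which split the circle into $\ge 2$ nontrivial arcs with no cross-arc RAG edge, since every edge has geodesic length at most $r_2$).

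\emph{Sufficiency.} Assume $a>1$ and $a-b>1/2$. The same first-moment computations give, with probability $1-o(1)$: (S1) every consecutive circular gap has length $\le r_2$, and (S2) no vertex is isolated. The central step is the deterministic implication ``(S1) $\wedge$ (S2) $\Longrightarrow$ graph is connected''. The plan is to upper-bound
$$
\sum_{k=1}^{n/2}\mathbb{E}\bigl[\#\{\text{components of size exactly }k\}\bigr]
$$
and show it is $o(1)$; this rules out every component of size at most $n/2$, forcing a single component on $n$ vertices. The $k=1$ term is $n^{1-2(a-b)}=o(1)$ by (S2). For $k\ge 2$, a connected candidate $T$ with span $s$ must satisfy $s\ge r_1$ (otherwise $T$ has no internal edge at all), and a direct geometric case analysis of the external-neighbourhood set $N = \bigcup_{v \in T}\bigl([v-r_2,v-r_1]\cup[v+r_1,v+r_2]\bigr)$ gives a lower bound on $|N|$ that, combined with the $\binom{n}{k}$ configuration count and the span bound $s\le (k-1)r_2$, telescopes the sum to $o(1)$, using both hypotheses $a>1$ and $a-b>1/2$.

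\emph{Main obstacle.} The hard part is the deterministic implication in the sufficiency step, and specifically the lower bound on $|N|$. It splits into sub-regimes according to the relative sizes of $r_1$ and $r_2$ (whether $r_2\ge 2r_1$, so that the left-annulus of the rightmost $T$-vertex merges with the right-annulus of the leftmost $T$-vertex, versus $r_2<2r_1$, where these annuli stay disjoint) and further according to whether the span $s$ lies in $[r_1, 2r_1]$, $[2r_1, r_2]$, or $(r_2, \infty)$; each sub-regime demands a separate computation so that the isolation probability $(1-|N|)^{n-k}$ carries the correct exponent. Components of large span $s>r_2$ require a separate treatment exploiting (S1): every internal consecutive gap within such a $T$ is at most $r_2$, which both forces many vertices and enlarges $|N|$ enough to dominate the combinatorial factor $\binom{n}{k}$ in the tail of the sum.
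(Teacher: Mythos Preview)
Your necessity argument is essentially the paper's: a second-moment calculation on isolated vertices gives the $a-b>1/2$ threshold, and either long gaps or monotonicity from the RGG result handles $a>1$.

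For sufficiency, there is a genuine gap. First, the sentence ``(S1)$\wedge$(S2)$\Longrightarrow$ connected'' as a \emph{deterministic} implication is simply false. Take (in units of $\log n/n$) $r_1=1$, $r_2=2$, put $A$-vertices at $\{1.5\,i\}$ and $B$-vertices at $\{1.5\,i+0.75\}$; consecutive gaps in $A\cup B$ are $0.75<r_2$ (so (S1) holds), every vertex has a same-colour neighbour at distance $1.5\in[r_1,r_2]$ (so (S2) holds), yet every $A$--$B$ distance lies in $\{0.75,\ 2.25,\ 3.75,\dots\}\cap[0,1/2]\subset[0,r_1)\cup(r_2,\infty)$, so the graph is two disjoint spanning cycles. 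This ``interleaved cycles'' picture is exactly what the paper's Figure~3 anticipates.

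Second, what you actually attempt is a first-moment bound on components, and the hard regime is $k=\Theta(n)$, where $\binom{n}{k}=e^{\Theta(n)}$. Your appeal to (S1) does not help here: (S1) constrains gaps among \emph{all} $n$ vertices, not among the $k$ vertices of a candidate component $T$. The correct fact---that a connected $T$ of span $<1/2$ has internal gaps $\le r_2$---follows from the edge-length bound, not (S1); but even granting it, $|N(T)|$ can still be bounded well away from $1$: whenever an internal $T$-gap exceeds $r_2-r_1$, the right-annuli of consecutive $T$-vertices fail to overlap, and nothing in your sketch fills those holes. The interleaved-cycle configurations above are precisely those with $|[0,1]\setminus N(T)|>0$, and you give no mechanism to show their expected count is $o(1)$ against the $e^{\Theta(n)}$ combinatorial factor.

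The paper's route is entirely different. Lemma~3.1 (the technical heart, needing a delicate ``patches'' argument to push from the easy $a-b>1$ down to $a-b>1/2$) shows every vertex has, via at most two hops, a neighbour on each side, so the graph is a union of spanning cycles. Lemma~3.2 then exhibits w.h.p.\ a single \emph{pole} vertex $u_0$ together with a ladder of $2k$ auxiliary vertices ($k=\lceil b/(a-b)\rceil+1$) positioned so that any other spanning cycle---which necessarily passes within $r_2$ of $u_0$---is forced to have an edge to one of them. This merges all cycles into one. Nothing analogous to this linking step appears in your proposal.
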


For the 1-dimensional case, the definition of random annulus graphs can be  simplified, and that simplification explains the parameterization in the above theorem. The ${\rm RAG}_1(n,[2\sin{\frac{\pi b\log n}{n}},2\sin{\frac{\pi a\log n}{n}}])$ is equivalent  to ${\rm RAG}^\ast_1(n,[\frac{b\log n}{n},\frac{a\log n}{n}])$ (i.e., the two random graphs follow the same distribution), where the later graph is defined below.

\begin{definition}\label{defn:one}
A random annulus graph ${\rm RAG}^\ast_1(n,[r_1,r_2])$ on $n$ vertices has parameters $n$,  and  a pair of  real numbers $r_1, r_2 \in [0,1/2], r_1 \le r_2$. It is defined by assigning a number $X_i \in \reals$ to vertex $i, 1 \le i \le n,$ where $X_i$s are independent and identical random variables uniformly distributed in $[0,1]$. There will be an edge between vertices $i$ and $j, i \ne j,$ if and only if $r_1 \le d_L(X_i,X_j)  \le r_2$ where $d_L(X_i,X_j) \equiv \min\{|X_i - X_j|, 1 - |X_i - X_j|\} $. 
\end{definition}
For the 1-dimensional case, for any two vertices $u,v$,  let $d(u,v)$ denote $d_L(X_u,X_v)$ where $X_u,X_v$ are corresponding random values to the vertices respectively. We can extend this notion to denote the distance $d(u,v)$ between a vertex $u$ (or the embedding of that vertex in $[0,1]$) and a number $v \in [0,1]$ naturally. 

\begin{remark}
For the analysis and proofs of the $1$-dimensional case, we will use the definition of ${\rm RAG}_1^{\ast}$ (instead of ${\rm RAG}_1$). However, due to the equivalence between the two random graph models, our results for ${\rm RAG}_1^{\ast}$ also hold for ${\rm RAG}_1$ with appropriately modified parameters. This is because we can think of the vertices in ${\rm RAG}^{\ast}_1(n,[r_1,r_2])$ being distributed randomly on the circumference of circle of radius $1/2\pi$ and distance between two vertices in ${\rm RAG}^{\ast}_1(n,[r_1,r_2])$ is geodesic (measured along the circumference of the circle). It is  easier  to reason about the geodesic distance over Euclidean distance in $1$-dimension; however, the advantage  is no longer there in higher dimensions. Since handling geodesic distances is more cumbersome in the higher dimensions, we resorted to Euclidean distance for $t>1$ in order to retain simplicity.
\end{remark}

Consider a ${\rm RAG}^\ast_1(n,[0,r])$ defined above with $r = \frac{a\log n}{n}$. It is known that ${\rm RAG}^\ast_1(n,[0,r])$ is connected with high probability if and only if $a > 1$\footnote{That is, ${\rm RAG}^\ast_1(n,[0,\frac{(1+\epsilon)\log n}{n}])$ is connected for any $\epsilon >0$. We will ignore this $\epsilon$ and just mention connectivity threshold as $\frac{\log{n}}{n}$.}~(\citep{muthukrishnan2005bin,penrose2003random} See also \citep{penrose2016}). Now let us consider the graph ${\rm RAG}^\ast_1(n,[\frac{\delta\log n}{n},\frac{\log n}{n}]), \delta>0$. Clearly this graph has less edges than ${\rm RAG}^\ast_1(n,[0,\frac{\log n}{n}])$.  {\bf Is this graph still connected?}
Surprisingly, we show that the above modified graph remains connected as long as $\delta \le 0.5$. Note that, on the other hand,  ${\rm RAG}^\ast_1(n,[0,\frac{(1-\epsilon)\log n}{n}])$ is not connected for any $\epsilon >0$.




This means the graphs ${\rm RAG^\ast}_1(n, [0, \frac{0.99 \log n}{n}])$ and ${\rm RAG}^\ast_1(n, [\frac{0.49 \log n}{n}, \frac{0.99 \log n}{n}])$ are not connected with high probability, whereas ${\rm RAG}^\ast_1(n, [\frac{0.50 \log n}{n}, \frac{\log n}{n}])$ is connected. 
For a depiction of the connectivity regime for the random annulus graph ${\rm RAG}^\ast_1(n,[\frac{b\log n}{n},\frac{a\log n}{n}])$ see Figure~\ref{fig:region}.
\vspace{-0.1in}
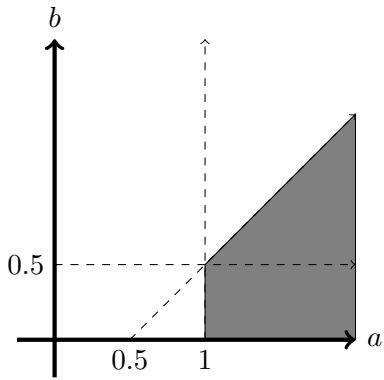
\begin{figure*}[htbp]
\centering
\begin{tikzpicture}

\draw[fill=gray]  (2,1) -- (4,3) -- (4,0) -- (2,0) -- cycle;
\draw[->,ultra thick] (-0.5,0)--(4,0) node[right]{$a$};
\draw[->,ultra thick] (0,-0.5)--(0,4) node[above]{$b$};

\draw[->,dashed] (0,1)--(4,1); 
\draw[->,dashed] (2,0)--(2,4); 
\draw[->,dashed] (1,0)--(4,3); 

\node[below] at (2,0) {$1$};
\node[below] at (1,0) {$0.5$};
\node[left] at (0,1) {$0.5$};

\draw[fill=gray,transparent]  (2,1) -- (4,3) -- (4,0) -- (2,0) -- cycle;
\end{tikzpicture}
\caption{The shaded area in  the $a$-$b$ plot shows the regime where an ${\rm RAG}_1^\ast(n,[\frac{b \log n}{n}, \frac{a\log n}{n}])$ is connected with high probability. \label{fig:region}}
\end{figure*}

Can we explain this seemingly curious shift in connectivity interval, when one goes from $b=0$ to $b >0$? Compare the ${\rm RAG}_1^\ast(n, [\frac{0.50 \log n}{n}, \frac{\log n}{n}])$ with the ${\rm RAG}^\ast_1(n, [0, \frac{\log n}{n}])$. The former one can  be thought of being obtained by deleting all the `short-distance' edges from the latter. It turns out  the `long-distance' edges are  sufficient to maintain connectivity, because they can connect points over multiple hops in the graph. Another possible explanation is that connectivity threshold for ${\rm RAG}^\ast_1$ is not dictated by isolated nodes as is the case in Erd\H{o}s-R\'{en}yi graphs. Thus, after the connectivity threshold has been achieved, removing certain short edges still retains connectivity. 

\paragraph{Random Annulus Graphs in Higher Dimension} It is natural to ask similar question of connectivity for random annulus graphs in higher dimension. In a random annulus graph at dimension $t$, we may assign $t$-dimensional random vectors to each of the vertices, and use a standard metric such as the Euclidean distance to decide whether there should be an edge between two vertices. 


\remove{
\begin{remark}
 There are two notable difference between the definition of ${\rm RAG}$ for $t=1$ (see Definition \ref{defn:one}) and ${\rm RAG}_t$ stated in Definition \ref{defn:high} (hence Definition \ref{defn:one} does not follow by substituting $t=1$ in Definition \ref{defn:high}) namely
\begin{enumerate}
    \item Vertices in ${\rm RAG}_t(n,[r_1,r_2])$ (for $t>1$) are distributed randomly on the surface of the $t$-dimensional unit sphere whereas vertices in ${\rm RAG}^\ast_1(n,[r_1,r_2])$ are distributed randomly on the circumference of circle of radius $1/2\pi$. The case of $t=1$  is handled separately to cleanly provide a tight connectivity threshold for ${\rm RAG}^\ast_1(n,[r_1,r_2])$ (in Theorem \ref{thm:rag}), whereas in high dimensional random annulus graphs (see Theorems \ref{th:lb} and \ref{thm:highdem1}) the bounds are not tight. 
    
    \item The distance between two vertices in ${\rm RAG}_t(n,[r_1,r_2])$ is the Euclidean distance while the distance between two vertices in ${\rm RAG}^\ast_1(n,[r_1,r_2])$ is geodesic (measured along the circumference of the circle). Since handling geodesic distances is more cumbersome in the higher dimensions, we resorted to Euclidean distance for $t>1$ in order to retain simplicity.
\end{enumerate}
\end{remark}

}
The ${\rm RAG}_t(n, [0,r])$ gives the standard definition of random geometric graphs in $t$ dimensions (for example, see \citep{bubecktriangle} or \citep{penrose2003random}).
Our main result here gives a condition that guarantees connectivity.

\begin{theorem}\label{thm:highdem1}
Let $t>1$ and
$$
\psi(t)\equiv\frac{\sqrt{\pi}(t+1)\Gamma(\frac{t+2}{2})}{\Gamma(\frac{t+3}{2})},$$ where $\Gamma(x) = \int_0^\infty y^{x-1}e^{-y}dy$ is the gamma function.
If 
$(a/2)^t-b^t \ge {8(t+1)\psi(t)}\text{  and  }  a>2b$, then a $t-$dimensional random annulus graph ${\rm RAG}_t(n,[b\Big(\frac{\log n}{n}\Big)^{1/t},a \Big(\frac{ \log n}{n}\Big)^{1/t}])$ is connected with  probability $1-o(1)$.
\end{theorem}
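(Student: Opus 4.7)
The plan is the classical covering-plus-chaining recipe for random geometric graph connectivity, adapted to the annular constraint $br \le \|X_u - X_v\|_2 \le ar$, where I write $r := (\log n/n)^{1/t}$. Choose a small constant $\eta > 0$ (to be fixed last) and tessellate $S^t$ into $N = \Theta((\eta r)^{-t})$ cells, each of Euclidean diameter at most $\eta r$. The uniform measure of a small Euclidean-radius-$\rho$ spherical cap on $S^t$ is $c_t\rho^t(1+o(1))$, with a constant $c_t$ inversely proportional to $\psi(t)$---this is exactly how $\psi(t)$ enters the hypothesis. A union bound over the $N$ cells then shows that, with probability $1 - o(1)$, every cell contains at least one vertex of $V$, provided $c_t\eta^t$ exceeds a fixed absolute constant.

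Conditioned on this ``covering event,'' I would chain any pair $u,v$ as follows. Let $\gamma$ be a geodesic on $S^t$ from $X_u$ to $X_v$ and place points $p_0 = X_u, p_1, \dots, p_K = X_v$ along $\gamma$ at consecutive Euclidean distances exactly $(a/2)r$ (with a possibly shorter final segment). For each $1 \le i \le K-1$, the covering event supplies a vertex $w_i$ with $\|X_{w_i} - p_i\|_2 \le \eta r$; set $w_0 = u$ and $w_K = v$. The triangle inequality yields
\[
\bigl(\tfrac{a}{2} - 2\eta\bigr) r \;\le\; \|X_{w_{i-1}} - X_{w_i}\|_2 \;\le\; \bigl(\tfrac{a}{2} + 2\eta\bigr) r,
\]
so choosing $\eta$ strictly below $(a-2b)/4$---positive by the hypothesis $a > 2b$---keeps this distance inside $[br, ar]$, making each consecutive pair $(w_{i-1}, w_i)$ a legitimate edge. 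The chain $w_0, w_1, \dots, w_K$ connects $u$ to $v$.

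It remains to reconcile the chaining constraint $\eta < (a-2b)/4$ with the covering constraint $c_t \eta^t \gtrsim 1$ from Step~1, and verify that the hypothesis $(a/2)^t - b^t \ge 8(t+1)\psi(t)$ makes both simultaneously satisfiable. The appearance of $(a/2)^t - b^t$ (rather than $((a-2b)/4)^t$) suggests that the cleanest implementation of Step~1 is to count, at each candidate point $p$, vertices in the ``inner-half'' annulus $\{X : br \le \|X - p\|_2 \le (a/2)r\}$, whose normalized measure is proportional to $(a/2)^t - b^t$; the factors $8$ and $t+1$ then absorb the bookkeeping from the tessellation overhead and from the two-sided slack on the annulus radii.

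The main obstacle I expect is Step~1: matching the precise constant $8(t+1)\psi(t)$. This requires (i) an explicit tessellation of $S^t$ with a controlled diameter-to-measure ratio (such as one obtained by projection from a grid in $\reals^{t+1}$ or via recursive spherical coordinates), (ii) a sharp---not merely asymptotic---cap-volume formula on $S^t$ that isolates the factor $\psi(t)$ coming from integrating $\sin^{t-1}$-type Jacobians, and (iii) careful tracking of how $\eta$ trades off against both the inner and outer radius of the annulus. None of these steps is conceptually new, but each contributes a constant that must line up with the theorem's form; in particular, I expect the $(t+1)$ to reflect the codimension of $S^t$ in $\reals^{t+1}$ and the factor $8$ to track two slack factors (one per side of the annulus) each incurring two tessellation-radius additions.
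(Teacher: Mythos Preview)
Your approach is genuinely different from the paper's, and that difference explains both why your constants will not line up and why there is an unaddressed endpoint gap in your chain.

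The paper does not tessellate. Instead it (i) produces, by a second-moment argument, a single distinguished vertex---a \emph{pole}---which is adjacent to every vertex within distance $r_2$ of it, and (ii) shows that for every vertex $u$ and every hyperplane $L$ through $u$ that is not too tangential, $u$ has a neighbor on the far side of $L$. Step~(ii) is exactly where the form $(a/2)^t - b^t$ and the factor $t+1$ come from: one lower-bounds the fraction of the annulus $B_t(u,[r_1,r_2])$ lying on the smaller side of any such $L$ by $\frac{(a/2)^t - b^t}{a^t - b^t}$, and then invokes an $\epsilon$-net theorem for a range space of VC dimension $t+1$ (half-annuli cut by hyperplanes in $\reals^{t+1}$) to handle all orientations of $L$ at once; the $8$ is the $\epsilon$-net constant. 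Connectivity follows by repeatedly stepping ``toward the pole'' in the first coordinate. None of these ingredients is tessellation bookkeeping or two-sided radius slack, so your guess about their origin is off. A covering-and-chaining proof yields a condition of the shape $((a-2b)/4)^t \gtrsim \psi(t)$, not $(a/2)^t - b^t \gtrsim (t+1)\psi(t)$; the two are incomparable (for $a = 2b + \delta$ with $b$ large, the theorem's hypothesis is far easier to satisfy than yours).

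There is also a concrete hole in the chain itself. You fix step length $(a/2)r$ ``with a possibly shorter final segment,'' but your displayed lower bound $(a/2 - 2\eta)r \le \|X_{w_{i-1}} - X_{w_i}\|$ uses that step length and therefore fails on the final segment. If the residual geodesic length drops below $(b + 2\eta)r$---in particular whenever $\|X_u - X_v\| < br$, so $K = 1$---the last hop need not be an edge at all. This is precisely what distinguishes an annulus graph from an ordinary random geometric graph, and your argument does not handle it. The paper's pole construction sidesteps the issue entirely: every vertex reaches the pole, so two nearby vertices are connected through the pole rather than directly. Your geodesic scheme can be repaired (detour through a point at distance $(a/2)r$, then chain back), but that repair is not the one-line chain you wrote, and it still will not recover the stated constants.
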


Computing the connectivity threshold of RAG in high dimensions exactly is highly challenging, and we have to use several approximations of high dimensional geometry. Our arguments crucially rely on VC dimensions of sets of geometric objects such as intersections of high dimensional annuluses and hyperplanes. 

In the process, we also derived the following results about the existence of isolated vertices in random annulus graphs.


\begin{theorem}[Zero-One law for Isolated Vertices in ${\rm RAG_t}$]\label{th:lb}
Let $t>1$. For a $t$-dimensional random annulus graph ${\rm RAG}_t(n,[r_1,r_2])$ where $r_2=a \Big(\frac{ \log n}{n}\Big)^{\frac{1}{t}}$ and $r_1=b\Big(\frac{\log n}{n}\Big)^{\frac{1}{t}}$, \latest{i) there exists isolated nodes with  probability $1-o(1)$ if 
$a^t -b^t < \psi(t)$, and ii) the graph is connected with probability $1-o(1)$ if 
$a^t -b^t > \psi(t)$.}
\end{theorem}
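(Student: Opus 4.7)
The plan is to apply the first and second moment methods to the count $N$ of isolated vertices in ${\rm RAG}_t(n,[r_1,r_2])$. Let $X_1,\dots,X_n$ be the i.i.d.\ uniform points on $S^t$ and write $p := \Pr\bigl[r_1 \le \|X_1-X_2\|_2 \le r_2\bigr]$ for the edge probability, so that $\expect[N] = n(1-p)^{n-1}$. To match the stated threshold I compute $p$ via the spherical cap formula: letting $\theta(r)=2\arcsin(r/2)$ be the angular radius corresponding to chord length $r$,
\[
\Pr\bigl[\|X_1-X_2\|_2 \le r\bigr] \;=\; \frac{A_{t-1}}{A_t}\int_0^{\theta(r)} \sin^{t-1}\!\phi\,d\phi,
\]
where $A_{t-1}=2\pi^{t/2}/\Gamma(t/2)$ and $A_t=2\pi^{(t+1)/2}/\Gamma((t+1)/2)$. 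For the relevant radii $r=c(\log n/n)^{1/t}=o(1)$, I expand $\theta(r) = r(1+O(r^2))$ and $\sin^{t-1}\!\phi = \phi^{t-1}(1+O(\phi^2))$ to obtain $\Pr[\|X_1-X_2\|_2 \le r] = \frac{A_{t-1}}{tA_t}r^t(1+o(1))$. Applying the identities $t\Gamma(t/2)=2\Gamma((t+2)/2)$ and $\Gamma((t+3)/2)=((t+1)/2)\Gamma((t+1)/2)$ simplifies $A_{t-1}/(tA_t)$ to exactly $1/\psi(t)$, so
\[
p \;=\; \frac{a^t-b^t}{\psi(t)}\cdot\frac{\log n}{n}(1+o(1)), \qquad \log\expect[N] \;=\; \Bigl(1 - \frac{a^t-b^t}{\psi(t)}\Bigr)\log n\,(1+o(1)).
\]

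The ``only if'' direction is now immediate: if $a^t-b^t > \psi(t)$, then $\expect[N]\to 0$ polynomially and Markov's inequality gives $\Pr[N\ge 1]\to 0$. For the ``if'' direction $a^t-b^t<\psi(t)$, the expectation grows polynomially and I apply Chebyshev, so it suffices to show $\expect[N^2]=(1+o(1))\expect[N]^2$. Write $\expect[N^2] = \expect[N] + n(n-1)\Pr[v_1,v_2\text{ both isolated}]$; the first summand is negligible. For the second, condition on $(X_1,X_2)$ and let $f(X_1,X_2)$ denote the fraction of $S^t$ covered by the union of the two annular shells centered at $X_1$ and $X_2$; the conditional probability of joint isolation is $\mathbf{1}\bigl[\|X_1-X_2\|_2\notin[r_1,r_2]\bigr](1-f)^{n-2}$. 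I split by whether $\|X_1-X_2\|_2\ge 2r_2$ (``far'', at which the shells are disjoint by the triangle inequality) or $\|X_1-X_2\|_2 < 2r_2$ (``close''). In the far regime $f=2p$ exactly, and the contribution is $\Pr[\text{far}](1-2p)^{n-2}=(1+o(1))\expect[N]^2/n^2$, which sums to $(1+o(1))\expect[N]^2$ over the $n(n-1)$ ordered pairs. In the close regime I use the trivial lower bound $f\ge p$ (from $|A_1\cup A_2|\ge|A_1|$) to get $(1-f)^{n-2}\le(1-p)^{n-2} = (\expect[N]/n)(1+o(1))$; combined with $\Pr[\|X_1-X_2\|_2 < 2r_2]=O(\log n/n)$ (the same cap computation at radius $2r_2$), this contributes at most $O\bigl(n^2\cdot(\log n/n)\cdot\expect[N]/n\bigr) = O(\log n\cdot\expect[N]) = o(\expect[N]^2)$ since $\expect[N]$ grows polynomially. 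Chebyshev then yields $\Pr[N\ge 1]\to 1$.

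The main obstacle is the exact asymptotic calculation of $p$: identifying the constant as precisely $1/\psi(t)$ requires combining the spherical cap formula with two separate gamma function identities to match the stated form. Beyond this bookkeeping, the proof is a classical moment computation; the one subtle point in the variance bound is using $|A_1\cup A_2|\ge|A_1|$ to get $(1-f)^{n-2}\le(1-p)^{n-2}$ in the close regime, since the naive bound $(1-f)^{n-2}\le 1$ would only suffice to absorb the close-pair contribution when $a^t-b^t < \psi(t)/2$, whereas the sharper bound handles the full range $a^t-b^t<\psi(t)$.
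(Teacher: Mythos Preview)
Your proof is correct and follows essentially the same approach as the paper: a first/second moment argument with a far/close split on $\|X_1-X_2\|_2$ at threshold $2r_2$, using the bound $\Pr(\text{both isolated}\mid\text{close})\le\Pr(\text{one isolated})$ (i.e.\ your $f\ge p$) in the close regime and Chebyshev to conclude. Your derivation of the constant $1/\psi(t)$ via the angular cap formula is equivalent to the paper's computation through the normalized cap area $|B_t(u,r)|=c_tr^t$ and $|S^t|$.
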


An obvious deduction from this theorem is that an ${\rm RAG}_t(n,[b\Big(\frac{\log n}{n}\Big)^{\frac{1}{t}},a \Big(\frac{ \log n}{n}\Big)^{\frac{1}{t}}])$ is not connected with probability $1-o(1)$ if $a^t -b^t < \psi(t)$.

All these connectivity results find immediate application in analyzing the algorithm that we propose for the geometric block model (GBM). 
A GBM is a generative model for networks (graphs) with underlying community structure.





\subsection{Geometric Block Model}\label{sec:gbm_intro}


Since for random geometric graphs in 1-dimensional case the distance $d_L(\cdot,\cdot)$ defined in Def.~\ref{defn:one} gives a simpler expression, we use this for the 1-dimensional case. For higher dimensions we resort back to the Euclidean distance.

 
\begin{definition}[Geometric Block Model in 1-dimension]
Given $V = V_1\sqcup V_2, |V_1|=|V_2| = \frac{n}2$,  choose a random variable $X_u$ uniformly distributed in $[0,1]$ for all $u \in V$.
The geometric block model  ${\rm GBM_1}(r_s, r_d)$ with parameters $r_s> r_d$ is a random graph where an edge exists between vertices $u$ and $v$  if and only if,
\begin{align*}
d_L(X_u, X_v) \le r_s & \text{ when } u, v \in V_1 \text{ or } u,v \in V_2\\
 d_L(X_u, X_v) \le r_d & \text{ when } u \in V_1, v \in V_2 \text{ or } u\in V_2,v \in V_1.
\end{align*}
 \end{definition}

As a consequence of the connectivity lower bound on  ${\rm RAG}_1^\ast$, we are able to show that  recovery of the partition in  ${\rm GBM_1}(\frac{a\log n}{n}, \frac{b \log n}{n})$ is not possible with high probability  by any means whenever $a - b <0.5$ or $a<1$ (see, Theorem~\ref{thm:rag}). Another consequence of the random annulus graph results is that we show that if in  addition to a ${\rm GBM_1}$ graph, all the locations of the vertices are also provided, then recovery is possible if and only if $a - b >0.5$ or $a>1$ (formal statement in Theorem~\ref{thm:gbmplus}).

Coming back to the actual recovery problem, our main contribution for ${\rm GBM_1}$ is to provide a simple and efficient algorithm that performs  well in the sparse regime (see, Algorithm~\ref{alg:alg1}). 

\begin{theorem}[Recovery algorithm for ${\rm GBM_1}$]\label{gbm:upper}
Suppose we have the graph $G(V,E)$ generated according to ${\rm GBM_1}(r_s \equiv \frac{a\log n}{n},r_d\equiv \frac{b\log n}{n}), a \ge 2b$. 
Define
\begin{align*}
f_1&=\min\{f: (2b+f)\log \frac{2b+f}{2b}-f > 1\},~~~
f_2=\min\{f : (2b-f)\log \frac{2b-f}{2b}+f > 1\}\\
\theta_1 &= \max\{\theta:\frac{1}{2}\Big((4b+2f_1)\log \frac{4b+2f_1}{2a-\theta}+2a-\theta-4b-2f_1\Big) > 1  \text{ and } 0 \le \theta \le 2a-4b-2f_1\}\\
\theta_2 &= \min\{\theta: \frac{1}{2}\Big((4b-2f_2\log \frac{4b-2f_2}{2a-\theta}+2a-\theta-4b+2f_2\Big) > 1
 \text{ and }  
a \ge \theta \ge \max\{2b,2a-4b+2f_2\}\}.
\end{align*}
Then, if   $a-\theta_2+\theta_1>2$ or $a> \max(1+\theta_2,2)$, there exists an efficient algorithm which will recover the correct partition in $G$ with  probability $1-o(1)$. Moreover, if $a-b < 0.5$ or $a < 1$, any algorithm to recover the partition in ${\rm GBM_1}(\frac{a \log n}{n},\frac{b \log n}{n})$ will give incorrect output with probability $1-o(1)$ .
\end{theorem}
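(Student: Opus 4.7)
The plan is to treat the two halves of the statement separately. For the impossibility direction, I would first observe that in ${\rm GBM}_1$, once the embedding $\{X_u\}$ is fixed every pair is of exactly one of three types: a \emph{forced} edge when $d_L(X_u,X_v)\le r_d$, an \emph{informative} edge that is present iff $u,v$ share a cluster when $r_d<d_L(X_u,X_v)\le r_s$, and a \emph{forbidden} non-edge when $d_L(X_u,X_v)>r_s$. Hence all information about the labeling beyond the embedding is carried by the informative subgraph on the annulus $(r_d,r_s]$, which is distributed exactly as ${\rm RAG}_1^{\ast}(n,[r_d,r_s])$. Theorem~\ref{thm:rag} then tells us that whenever $a<1$ or $a-b<1/2$ this subgraph is disconnected with probability $1-o(1)$. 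A standard counting argument extracts a connected component $C$ containing vertices from both $V_1$ and $V_2$; flipping labels inside $C$ (and, if necessary, balancing by an analogous swap in a second component to preserve $|V_1|=|V_2|$) produces a distinct labeling that induces the very same graph, precluding unique recovery by any algorithm.

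For the algorithmic half I plan to analyze an algorithm based on counting common neighbors. For an edge $(u,v)$ with $d_L(X_u,X_v)=\theta\log n/n$, a direct interval-intersection computation on the circle yields $\mathbb{E}[N(u,v)]\approx (2a-\theta)\log n/2$ when $u,v$ are in the same cluster with $\theta\in(2b,a]$, whereas $\mathbb{E}[N(u,v)]\approx 2b\log n$ when $u,v$ lie in different clusters (which under $a\ge 2b$ forces $\theta\le b$). The constants $f_1,f_2$ are exactly the Poisson--Chernoff rate-function thresholds (the ``$>1$'' condition makes each tail $o(n^{-1})$) that simultaneously place every length-$2r_d$ interval's vertex count inside $[(2b-f_2)\log n,(2b+f_1)\log n]$; in particular the inter-cluster common neighbor count always lies inside this window with high probability. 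The quantities $\theta_1,\theta_2$ then identify, via the same rate-function machinery with the stronger factor ``$1/2$'' (equivalently, tail probability $o(n^{-2})$ to survive a union bound over all $\Theta(n^2)$ pairs), the largest $\theta$ for which an intra-cluster edge's $N(u,v)$ strictly exceeds $(2b+f_1)\log n$, and symmetrically the smallest $\theta$ for which it falls strictly below $(2b-f_2)\log n$. Edges whose common-neighbor count lies outside the inter-cluster concentration window can therefore be declared confidently intra-cluster.

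Given the classifier, the algorithm I would analyze collects the confidently-intra edges into $E^{\rm intra}$ and two-colors $(V,E^{\rm intra})$ to output the partition; what remains is to argue that $E^{\rm intra}$ restricted to each cluster $V_i$ keeps it connected. This is precisely where Theorem~\ref{thm:rag} reappears: the confidently-intra subgraph on $V_i$ contains the union of a short-distance piece ${\rm RAG}_1^{\ast}(n/2,[0,\theta_1\log n/n])$ and a long-distance annular piece ${\rm RAG}_1^{\ast}(n/2,[\theta_2\log n/n,a\log n/n])$. A careful accounting --- rescaling the $\log n/n$ factor from $n$ to $n/2$ vertices introduces a factor of $2$ in each coordinate --- shows that the hypothesis $a-\theta_2+\theta_1>2$ is exactly the RAG connectivity condition $a'-b'>1/2$ applied to the union, while $a>\max(1+\theta_2,2)$ suffices via the long-distance annular piece alone (giving both $a'>1$ and $a'-b'>1/2$). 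Two-coloring the resulting connected intra-cluster subgraph then recovers the partition.

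The main obstacle I expect is the concentration of $N(u,v)$: the indicators that a third vertex $w$ is a common neighbor of $u,v$ are not mutually independent across overlapping pairs, so I will have to couple $N(u,v)$ to a Poisson count on the geometric intersection region in order to apply the rate-function bounds with a tail small enough to withstand a union bound over all $\Theta(n^2)$ candidate edges. This coupling will be most delicate near the boundary values $\theta\to 2b$ and $\theta\to a$, where the intra- and inter-cluster expectations collapse toward each other and the rate functions defining $f_1,f_2,\theta_1,\theta_2$ vanish; handling this boundary cleanly is what forces the precise ``$>1$'' and ``$>1$ after the $1/2$'' conditions appearing in their definitions.
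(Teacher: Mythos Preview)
Your overall strategy is the paper's: the impossibility half reduces to disconnection of the informative subgraph ${\rm RAG}^{\ast}_1(n,[r_d,r_s])$ via Theorem~\ref{thm:rag}, and the algorithmic half uses common--neighbor counts with a two--threshold window $[E_D,E_S]$, after which each cluster's surviving edges form a graph of the shape $[0,\theta_1]\cup[\theta_2,a]$ on $n/2$ vertices; the paper closes with Corollary~\ref{cor:patch}, which is exactly your rescaling observation that produces the ``$>2$'' in the hypotheses.

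Two places in your reading of the constants are off, and the ``main obstacle'' you anticipate does not exist. First, once you condition on $d_L(X_u,X_v)=x$, the indicators $\{z\text{ is a common neighbor of }u,v\}$ are \emph{mutually independent} across $z$, because each depends only on $X_z$ and the now-fixed positions of $u,v$. Hence $N(u,v)$ is exactly Binomial (Lemmas~\ref{lem:sep} and \ref{lem:sep2}) and a straight Chernoff/KL bound applies; no Poisson coupling is needed. Second, the factor $\tfrac12$ in the definitions of $\theta_1,\theta_2$ is \emph{not} there to drive the tail down to $o(n^{-2})$ for an $n^2$ union bound. It appears because, when $\theta>2b$, only same-cluster vertices (there are $n/2$ of them) can be common neighbors, so the relevant Binomial has $n/2$ trials and the Chernoff exponent is $\tfrac{n}{2}D(\cdot\|\cdot)$; the condition $\tfrac12(\cdots)>1$ therefore yields a tail of order $1/(n\log^2 n)$, not $n^{-2}$. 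Correspondingly, the union bound is taken over the $O(n\log n)$ edges of the GBM (established first by a one-line degree/Chernoff argument, as in the proof of Lemma~\ref{lem:defn}), not over all $\Theta(n^2)$ pairs. If you proceed with your interpretation and try to union over $n^{2}$ pairs with what is actually an $o(n^{-1})$ tail, the argument fails and you will not recover the constants stated in the theorem.
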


Some examples of the parameters when the proposed algorithm (Algorithm~\ref{alg:alg1}) can successfully recover are given  in Table~\ref{tab:per} \latest{and Figure~\ref{fig:comparison_ab} compares them with the lower bound.}

 \begin{figure}
     \centering
     \includegraphics[scale=0.5]{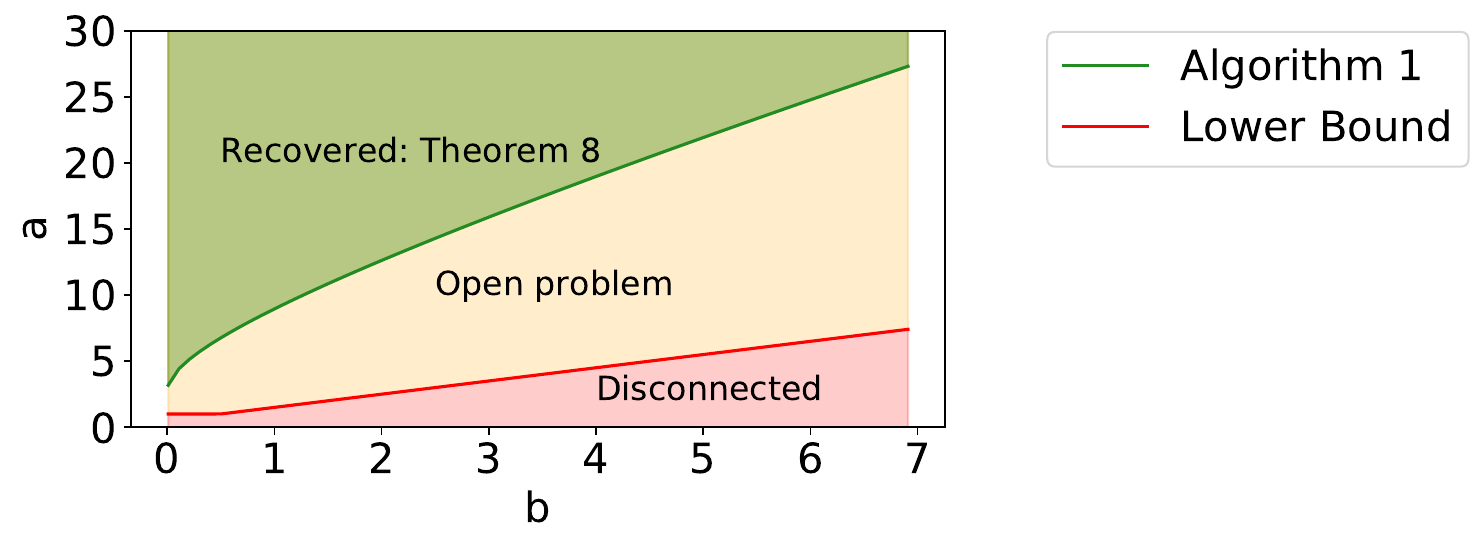}
     \caption{\latest{Comparison of $a$ and $b$ in ${\rm GBM_1}(\frac{a\log n}{n},\frac{b\log n}{n})$. Red region denotes the values where the graph is disconnected, green region is recoverable by Algorithm~\ref{alg:alg1} and yellow region denotes values of $a$ and $b$ for which there is no algorithm to recover the clusters.}}
     \label{fig:comparison_ab}
 \end{figure}

\begin{table}[h!]\label{tab:per}
\centering
\begin{tabular}{ |c|c|c|c|c|c|c|c|c| } 
 \hline
$b$ & 0.01 & 1 & 2 & 3 & 4 & 5 & 6 & 7 \\ 
\hline
Minimum value of $a$ & 3.18 & 8.96 & 12.63 & 15.9 & 18.98 & 21.93 & 24.78 & 27.57 \\
 \hline 
\end{tabular}
\caption{Minimum value of $a$, given $b$ for which Algorithm~\ref{alg:alg1} resolves clusters correctly in ${\rm GBM_1}(\frac{a\log n}{n},\frac{b\log n}{n})$.\label{tab:per}}
\end{table}

As can be anticipated, the connectivity results for RAG applies to the `high dimensional' geometric block model. 
 In many applications, the latent feature space of nodes  are high-dimensional. For example, road networks are two-dimensional whereas the number of features used in a social network may have much higher dimensions.  
 \begin{definition}[The GBM in High Dimensions] Let $t>1$.
Given $V = V_1\sqcup V_2, |V_1|=|V_2| = \frac{n}2$,  choose a random vector $X_u$ independently uniformly distributed in $S^t \subset \reals^{t+1}$ for all $u \in V$.
The geometric block model  ${\rm GBM}_t(r_s, r_d)$ with parameters $r_s> r_d$ is a random graph where an edge exists between vertices $u$ and $v$  if and only if,
\begin{align*}
\norm{X_u-X_v}_2 \le r_s & \text{ when } u, v \in V_1 \text{ or } u,v \in V_2\\
\norm{X_u-X_v}_2 \le r_d & \text{ when } u \in V_1, v \in V_2 \text{ or } u\in V_2,v \in V_1.
\end{align*}
\end{definition}
We extend the algorithmic results to high dimensions (details in Section~\ref{sec:sparse-high}).
\begin{theorem}
\label{theorem:intro-1}
Let $t>1$. If $r_s=\Theta((\frac{\log{n}}{n})^{\frac{1}{t}})$ and \latest{$r_s-r_d=\Theta( (\frac{\log n}{n})^{\frac{1}{t}})$}, there exists a polynomial time efficient algorithm that recovers the partition from ${\rm GBM}_t(r_s, r_d)$ with probability $1-o(1)$ . Moreover, if $r_s-r_d=o( (\frac{\log n}{n})^{\frac{1}{t}})$ or $r_s=o((\frac{\log{n}}{n})^{\frac{1}{t}})$, any  algorithm fails to recover the partition with probability at least $1/2$.
\end{theorem}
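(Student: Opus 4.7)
The plan is to lift the 1D common-neighbor thresholding algorithm behind Theorem~\ref{gbm:upper} into $S^t$, replacing 1D arc calculations by spherical-cap volumes and invoking Theorem~\ref{thm:highdem1} in place of Theorem~\ref{thm:rag} for the final connectivity step. Concretely, for every observed edge $(u,v)$ compute $N(u,v)=|\{w:(u,w),(v,w)\in E\}|$; classify $(u,v)$ as intra-cluster iff $N(u,v)\geq \tau$ for a well-chosen threshold $\tau$; finally return the connected components of the subgraph spanned by the edges labeled intra-cluster.

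For the edge-classification half, I would condition on the Euclidean distance $\rho=\|X_u-X_v\|_2$ and decompose $N(u,v)$ into two independent binomials: the count of $V_1$-vertices in a certain spherical-cap-intersection (``lens''), plus the count of $V_2$-vertices in a second lens. When $u,v$ share a cluster, the two lenses are intersections of two $r_s$-caps and two $r_d$-caps respectively; when $u,v$ lie in different clusters, each lens is the intersection of one $r_s$-cap with one $r_d$-cap. The condition $r_s-r_d=\Omega((\log n/n)^{1/t})$ paired with $r_s=\Theta((\log n/n)^{1/t})$ makes the same-cluster total volume dominate the mixed one by a multiplicative constant, while both means are $\Theta(\log n)$; Chernoff concentration and a union bound over the $O(n^2)$ candidate pairs then yield perfect classification with probability $1-o(1)$. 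The subgraph of (correctly) intra-cluster-labeled edges, restricted to $V_i$, is precisely a random geometric graph on $n/2$ i.i.d.\ uniform points of $S^t$ with radius $r_s$, so Theorem~\ref{thm:highdem1} with $r_1=0$ (for a sufficiently large constant hidden in the $\Theta$) guarantees that each $V_i$ is connected, and the two connected components recover the partition.

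For the information-theoretic lower bound, I would split into the two cases and reduce each to an isolated-vertex argument. If $r_s=o((\log n/n)^{1/t})$, the RGG of radius $r_s$ on all $n$ vertices has an isolated vertex $v$ with probability $1-o(1)$; since $r_d<r_s$, this $v$ has no edge in the GBM either, so its cluster label is statistically independent of the observed graph and recovery must fail with probability at least $1/2$. If instead $r_s-r_d=o((\log n/n)^{1/t})$, I would apply Theorem~\ref{th:lb} to the auxiliary annulus graph $\mathrm{RAG}_t(n,[r_d,r_s])$ to obtain, with probability $1-o(1)$, a vertex $v$ with no other vertex in the distance range $(r_d,r_s]$. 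Flipping the cluster label of $v$ alone then preserves every edge and every non-edge of the GBM, because vertices at distance $\leq r_d$ from $v$ are neighbors regardless of labels and vertices at distance $>r_s$ are non-neighbors regardless; the flipped labeling is therefore statistically indistinguishable from the truth, which forces the claimed $1/2$ error floor.

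The main obstacle I expect is controlling the spherical-cap-intersection volumes cleanly enough to make the Chernoff gap explicit, especially when $\rho$ approaches either $r_s$ or $r_d$ so that the lenses degenerate. A careful linearization of $r_s^t-r_d^t\approx t\,r_s^{t-1}(r_s-r_d)$ around the common scale $(\log n/n)^{1/t}$, combined with the annulus connectivity and isolated-vertex zero-one laws (Theorems~\ref{thm:highdem1} and~\ref{th:lb}), should however be enough to close both the algorithmic and information-theoretic halves.
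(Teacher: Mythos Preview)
Your overall plan mirrors the paper's, but the sufficient-condition argument has a real gap in the edge-classification step. You assert that the same-cluster common-neighbor mean dominates the cross-cluster one ``by a multiplicative constant'' uniformly, giving perfect classification and a surviving intra-cluster graph equal to the full RGG of radius $r_s$. This is false. For an intra-cluster edge at distance $\rho$ close to $r_s$, the mean count is $\tfrac{n}{2}\bigl(\mathcal{V}_t(r_s,r_s,\rho)+\mathcal{V}_t(r_d,r_d,\rho)\bigr)$, which can be far \emph{smaller} than the inter-cluster mean $nB_t(r_d)$ attained near $\rho'=0$. In one dimension with $r_s=a\log n/n$, $r_d=b\log n/n$, the intra mean at $\rho=r_s$ is $\approx(a/2)\log n$ while the inter mean at $\rho'=0$ is $\approx 2b\log n$; take $a=10$, $b=8$ (which satisfies both hypotheses) and the intra count is $5\log n<16\log n$. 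No single threshold separates all intra from all inter edges in this regime, and the paper explicitly notes this obstruction (it is why the earlier single-threshold version required $r_s\ge 4r_d$).

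The paper's remedy, and the missing ingredient in your plan, is to abandon perfect classification. It brackets the entire inter-cluster count range by two thresholds $E_D<E_S$ and deletes any edge whose count lies in $[E_D,E_S]$; the surviving intra-cluster edges are exactly those with $\rho\le\ell_1$ or $\rho\ge\ell_2$ for some $\ell_1,\ell_2=\Theta((\log n/n)^{1/t})$. The connectivity step then invokes Theorem~\ref{thm:highdem1} only on the $\rho\le\ell_1$ portion (Lemma~\ref{RAG:cond}), after checking that $\ell_1$ can be pushed above the RAG connectivity constant while Cond.~1 still holds. Your single-threshold scheme with $\tau=E_S$ would in fact yield the same $\ell_1$ and would succeed, provided you replace ``the full RGG of radius $r_s$'' by ``an RGG of radius $\ell_1$'' and carry out the comparison of $\ell_1$ against the constant in Theorem~\ref{thm:highdem1}; that quantitative step is where the work lies, not in the Chernoff bounds themselves.

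Your impossibility sketch via a label-flip on an isolated vertex of $\mathrm{RAG}_t(n,[r_d,r_s])$ is the right coupling and matches the 1D argument. One caveat: Theorem~\ref{th:lb} requires $a^t-b^t<\psi(t)$, and $r_s-r_d=o((\log n/n)^{1/t})$ translates to this only when $r_s=O((\log n/n)^{1/t})$ (so that $a$ stays bounded and $a^t-b^t\approx ta^{t-1}(a-b)\to 0$); for larger $r_s$ you would need to recompute the annulus volume directly rather than quote the theorem.
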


In the following section, we present the high level ideas of our algorithms to recover the clusters with sub-quadratic running time.

\subsection{Clustering Algorithm}\label{sec:cluster_recovery}

As we have mentioned already, it is well known that triangle based heuristics have good performance on real world graphs for cluster recovery. It turns out that for random graphs generated according to the GBM, simple triangle based algorithms are order optimal. A very simple and intuitive algorithm to cluster the vertices of a graph is as follows: 
\begin{enumerate}
    \item For a particular edge say $(u,v)$, count the number of triangles that contain the edge $(u,v)$ and from this observed statistics, infer based on a threshold if the endpoints $u$ and $v$ belong to the same cluster or different cluster. 
    \item We perform this inference step on all edges of the graph and using the results, we can partition the nodes into their respective clusters. \latest{The analysis of the first step guarantees that our algorithm labels an edge as intra-cluster or inter-cluster with a probability of $1-O(\frac{1}{n\log^2 n})$. On applying union bound, it guarantees that all edges are labelled correctly with probability $1-O(\frac{1}{\log n})$.}
\end{enumerate}
This algorithm was analyzed in a preliminary conference version of this paper \citep{galhotra2017geometric} and although the algorithm was order optimal, the guarantees for cluster recovery in the sparse logarithmic regime i.e. $r_s,r_d = \Theta(\frac{\log n}{n})$ hold only when $r_s \ge 4r_d$. When $r_s \le 4r_d$, it is not possible to infer if the nodes forming an edge belong to the same cluster or not, based on a single threshold on the number of triangles the edge is part of. Therefore,  we proceed to devise an improved algorithm whose main ingredient is still based on triangle counting but uses two thresholds on the triangle count instead of one. At a high level, if the number of triangles formed by an edge is between these two thresholds, we delete the edge because it is not possible to infer correctly whether the endpoints belong to the same or different cluster. We then show that all the remaining edges belong to same cluster and identify the condition when the surviving edges suffice to recover the clusters correctly. The connectivity of surviving edges is shown with the help of corresponding results for the Random Annulus graphs. A  preliminary version of these results were published in \citep{galhotra2019connectivity}.  
\latest{
\begin{remark}
    When the degree of a graph is logarithmic in the number of vertices (i.e., about $n \log n$ edges), a stochastic block model implies the existence of about $\log^3 n$ triangles, whereas in GBM there will be about $n \log^2 n$ triangles. As a result, in that regime, most vertices in SBM are not part of any triangles, and the above triangle counting algorithm will fail. However, in the same regime, the algorithm provably recover the communities in GBM, and also in practice. 
\end{remark}
}

It is possible to generalize the GBM to include different distributions and different metric spaces. It is also possible to construct other type of spatial block models such as the one  put forward in a parallel work \citep{sankararaman2018community} which rely on the random dot product graphs \citep{young2007random}. In \citep{sankararaman2018community}, edges are drawn between vertices randomly and independently as a function  of the distance between the corresponding vertex random variables.  \citep{sankararaman2018community} also considers the recovery scenario where in addition to the graph, values of the vertex random variables are provided. In GBM, we only observe the graph.

The rest of the paper is organized as follows. First, 
in Section~\ref{sec:rag}, the sharp connectivity phase transition results for random annulus graphs are proven (details in Appendix~\ref{sec:VRG-detail}). In Section~\ref{sec:hrag}, the connectivity results are proven for high dimensional random annulus graphs (details in Appendix~\ref{sec:hrag-detail}). In Section~\ref{sec:gbm}, a lower bound for the geometric block model as well as the main recovery algorithm are presented. In Section~\ref{sec:sparse-high}, we present the extension of our recovery algorithm to high dimensional GBM. Finally, Section~\ref{sec:exp} empirically evaluates our techniques on various real world and synthetic datasets.

\section{Connectivity of random annulus Graphs}
\label{sec:rag}
In this section we give a sketch of the proof of  sufficient condition for connectivity of ${\rm RAG}^{\ast}_1$ (as part of proving Theorem~\ref{thm:rag}) since connectivity guarantees in ${\rm RAG}_1^{\ast}$ can be mapped to connectivity properties in ${\rm RAG}_1$ (recall that the ${\rm RAG}_1(n,[2\sin{\frac{\pi b\log n}{n}},2\sin{\frac{\pi a\log n}{n}}])$ is equivalent to ${\rm RAG}^\ast_1(n,[\frac{b\log n}{n},\frac{a\log n}{n}])$). The full details along with the proof of the necessary condition in Theorem \ref{thm:rag} are given in Appendix~\ref{sec:VRG-detail}. 
\subsection{Sufficient condition for connectivity of ${\rm RAG}^\ast_1$}

\begin{theorem*}(Sufficient Condition in Theorem \ref{thm:rag} for ${\rm RAG}_1^{\ast}$)
If $a >1$ and $a - b >0.5$, the random annulus graph ${\rm RAG}^\ast_1(n,[\frac{b\log n}{n},\frac{a\log n}{n}])$  is connected with  probability $1-o(1)$ .
\end{theorem*}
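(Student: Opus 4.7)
My plan is to establish connectivity by combining two structural properties of the random vertex set, each of which is forced by exactly one of the hypotheses, and then to use these properties to rule out disconnection by contradiction. First, I would argue that w.h.p.\ no vertex is isolated: a fixed vertex $v$ has its annulus-neighborhood (two arcs of total length $2(a-b)\log n/n$) empty of other vertices with probability $(1-2(a-b)\log n/n)^{n-1}=(1+o(1))\, n^{-2(a-b)}$, and the condition $a-b>1/2$ together with a union bound over all $n$ vertices yields expected isolated count $n^{1-2(a-b)+o(1)}=o(1)$.

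Second, I would show that w.h.p.\ there is no vertex-empty arc of length $r_2=a\log n/n$. Any such empty arc lies between two circularly consecutive vertices, of which there are only $n$; a fixed consecutive gap exceeds $a\log n/n$ with probability $(1-a\log n/n)^n=n^{-a+o(1)}$, and the condition $a>1$ together with a union bound gives the claim. Consequently, circularly consecutive vertices are at distance at most $r_2$ w.h.p.

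Now suppose for contradiction the graph is disconnected. Walking around the circle I can find two circularly consecutive vertices $u,v$ that lie in different components; let $d=d_L(u,v)$. By the second property $d\le r_2$, and since the edge $uv$ is absent, $d<r_1=b\log n/n$. The fact that $u,v$ share no common neighbor forces the overlap $[u+r_1+d,\,u+r_2]$ of their right annuli and its left-side analogue to be vertex-empty; the first property simultaneously forces $u$ to have some neighbor $w$, which must therefore lie in one of two short ``exclusive'' arcs (each of length at most $d$) near $u\pm r_1$ disjoint from $v$'s annulus. Parameterising over the boundary pair $(u,v)$, the gap $d\in(0,r_1)$, and the positions of the empty/occupied arcs, the probability of such a configuration should be shown to be $o(1)$ by a union bound that uses both hypotheses simultaneously.

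The main obstacle is this final union bound, especially in the regime $a<2b$ where the structural overlap $[u+r_1+d,\,u+r_2]$ is automatically empty once $d>r_2-r_1$ and hence contributes no emptiness-probability factor on its own. In that regime I would instead exploit the exclusive-arc constraint --- the fact that \emph{any} neighbor of $u$ must lie in an arc of length $O(d)$ disjoint from $v$'s annulus --- to force a rare pattern of empty $\Theta(r_2)$-arcs combined with pinpoint occupation. Balancing the $n^{-\alpha}$ factors from these emptiness constraints against the combinatorial factors from the vertex-placement choices, so that both $a>1$ and $a-b>1/2$ are genuinely needed, is the delicate step; passing to a Poissonized model and exploiting independence of disjoint arcs would likely simplify the bookkeeping.
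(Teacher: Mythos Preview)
Your first two observations are correct: $a-b>\tfrac12$ rules out isolated vertices and $a>1$ rules out vertex-free arcs of length $r_2$. But the contradiction step from a boundary pair $(u,v)$ has a real gap, not just delicate bookkeeping. The only emptiness you can extract from ``$u,v$ consecutive and in different components'' is the gap $(u,v)$ of length $d$ together with the two annulus-overlaps of total length $2\max(0,r_2-r_1-d)$; this sum is minimized at $d=r_2-r_1$, where it equals only $(a-b)\tfrac{\log n}{n}$. Integrating over $d$ and taking a union bound over the $n$ choices of $u$ gives probability of order $n^{1-(a-b)}$, which is $o(1)$ only when $a-b>1$. Your proposed exclusive-arc fix does not rescue this: precisely in the regime $a<2b$ with $r_2-r_1<d<r_1$ that you flag as problematic, the annuli of $u$ and $v$ are in fact \emph{disjoint}, so $u$'s ``exclusive'' region is its entire annulus, and the presence of a neighbor there carries no rarity beyond non-isolation. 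Note also that the hypothesis $a>1$ never enters your estimate.

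The paper does not localize to a single boundary pair. It first shows (Lemma~\ref{lem:lemma1}) that every vertex lies on a cycle covering all of $[0,1]$, via a two-hop argument: even if $u$ has no direct right-neighbor, a suitable left-neighbor of $u$ has a right-neighbor beyond $u$, so $u$ reaches the right in two hops. This is implemented by partitioning the relevant neighborhoods into $4L$ small patches and proving, by a union bound that needs exactly $2(a-b)>1$, that no $2L-1$ of them are simultaneously empty; the combinatorics of which patches are occupied then forces the two-hop connection on both sides of every vertex. Separately (Lemma~\ref{lem:lemma2}), a second-moment argument produces one special vertex $u_0$ together with a ladder of $O(1)$ auxiliary vertices at prescribed offsets, and this ladder connects every covering cycle to $u_0$. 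The factor-of-two improvement from $a-b>1$ down to $a-b>\tfrac12$ comes from this multi-hop mechanism, which is the missing idea in your outline.
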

To prove this theorem we use two main technical lemmas that show two different events happen with high probability simultaneously. 

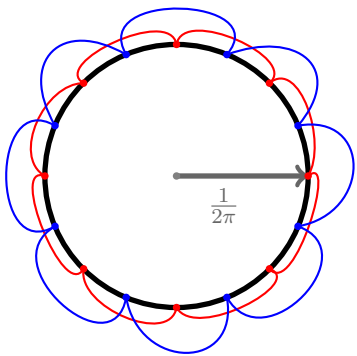
\begin{figure*}[htbp]
\centering
\begin{tikzpicture}[thick, scale=0.5]
 \filldraw[color=black, fill=red!0,  line width=2pt](0,0) circle (3.5);
 \draw [line width=2pt, color=black!60,->](0,0) -- (3.5,0)node[anchor=north east] at (2,0) {$\frac{1 }{2\pi}$};
 \filldraw [gray] (0,0) circle (2pt);
\filldraw [red] (0,3.5) circle (2pt);
\filldraw [red] (3.5,0) circle (2pt);
\filldraw [red] (0,-3.5) circle (2pt);
\filldraw [red] (-3.5,0) circle (2pt);
\filldraw [red] (2.47,2.47) circle (2pt);
\filldraw [red] (2.47,-2.47) circle (2pt);
\filldraw [red] (-2.47,2.47) circle (2pt);
\filldraw [red] (-2.47,-2.47) circle (2pt);
\draw[red]    (0,3.5) to[out=80,in=40] (2.47,2.47);
\draw[red]    (2.47,2.47) to[out=40,in=60](3.5,0) ;
\draw[red]    (3.5,0) to[out=40,in=-50](2.47,-2.47) ;
\draw[red]    (2.47,-2.47) to[out=-20,in=-60](0,-3.5) ;
\draw [red]   (0,-3.5) to[out=-100,in=-100](-2.47,-2.47) ;
\draw [red]   (-2.47,-2.47) to[out=-120,in=-200](-3.5,0) ;
\draw [red]   (-3.5,0) to[out=150,in=-180](-2.47,2.47) ;
\draw [red]   (-2.47,2.47) to[out=150,in=100](0,3.5) ;
\filldraw [blue] (3.23,1.339) circle (2pt);
\filldraw [blue] (3.23,-1.339) circle (2pt);
\filldraw [blue] (-3.23,1.339) circle (2pt);
\filldraw [blue] (-3.23,-1.339) circle (2pt);
\filldraw [blue] (1.339,3.23) circle (2pt);
\filldraw [blue] (1.339,-3.23) circle (2pt);
\filldraw [blue] (-1.339,3.23) circle (2pt);
\filldraw [blue] (-1.339,-3.23) circle (2pt);
\draw[blue]   (3.23,1.339) to[out=10,in=10, , distance=2cm](3.23,-1.339);
\draw[blue]    (3.23,-1.339) to[out=-20,in=-60, distance=2cm](1.339,-3.23) ;
\draw[blue]    (1.339,-3.23) to[out=-80,in=-80,distance=2cm](-1.339,-3.23) ;
\draw[blue]    (-1.339,-3.23) to[out=-100,in=-150,distance=2cm](-3.23,-1.339) ;
\draw[blue]    (-3.23,-1.339) to[out=-150,in=150,distance=2cm](-3.23,1.339) ;
\draw[blue]    (-3.23,1.339) to[out=120,in=150,distance=2cm](-1.339,3.23) ;
\draw[blue]    (-1.339,3.23) to[out=120,in=50,distance=2cm](1.339,3.23) ;
\draw[blue]    (1.339,3.23) to[out=20,in=50,distance=2cm](3.23,1.339) ;
\end{tikzpicture}
\caption{Each vertex having two neighbors on either direction implies the graph is a union of cycles. The cycles can be interleaving in $[0,1]$.\label{fig:arr0}}
\vspace{5pt}
\end{figure*}

\begin{lemma}\label{lem:lemma1}
A set of vertices $\cC \subseteq V$ is called a cover of $[0,1]$, if for any point $y$ in $[0,1]$ there exists a vertex $v\in \cC$ such that $d(v,y) \le \frac{a\log n}{2n}$.  If $a-b >0.5$ and $a >1$,  the random annulus graph ${\rm RAG}^\ast_1(n, [\frac{b\log n}{n}, \frac{a\log n}{n}])$ is a union of cycles  
such that every cycle forms a cover of $[0,1]$  (see Figure~\ref{fig:arr0}) with  probability $1-o(1)$.
\end{lemma}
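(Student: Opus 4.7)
The plan is to construct, with probability $1-o(1)$, a collection of cycles in the random annulus graph ${\rm RAG}^\ast_1(n,[\tfrac{b\log n}{n},\tfrac{a\log n}{n}])$ whose vertex sets each cover $[0,1]$. The approach is to trace cycles via a deterministic \emph{right-successor} map. For each vertex $v$, let $R(v)$ be the other vertex lying to the right of $v$ at greatest distance within the annulus $[\tfrac{b\log n}{n},\tfrac{a\log n}{n}]$, and symmetrically define $L(v)$ to the left. The first stage is to show that, with probability $1-o(1)$, the maps $R$ and $L$ are well-defined on all of $V$.

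Once $R$ is defined everywhere, every step $v\mapsto R(v)$ advances strictly to the right by a distance in $[\tfrac{b\log n}{n},\tfrac{a\log n}{n}]$, so iterating $R$ from any starting vertex produces a forward walk on the circle that, by finiteness of $V$, must eventually close into a directed cycle. A symmetric counting argument using $L$ shows that under $R$ every vertex has in-degree at least $1$; hence the functional digraph splits into vertex-disjoint directed cycles, and each such cycle corresponds to an actual graph cycle of ${\rm RAG}^\ast_1$ since all $(v,R(v))$ pairs are edges by construction. Consecutive vertices on any such cycle are at distance at most $\tfrac{a\log n}{n}$ along the circle, and the cycle winds around $[0,1]$ at least once, so any $y\in[0,1]$ lies within $\tfrac{a\log n}{2n}$ of some cycle vertex, giving the required cover property.

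The main obstacle is the first stage. The probability that the right-annulus of a fixed vertex $v$, an arc of length $\tfrac{(a-b)\log n}{n}$, contains no other vertex is $(1-\tfrac{(a-b)\log n}{n})^{n-1}\approx n^{-(a-b)}$, so a naive union bound over all $n$ vertices is $o(1)$ only when $a-b>1$. To handle the regime $0.5<a-b\le 1$ I would first use a second-moment / sparsification argument to show that the number of ``bad'' vertices (those lacking a direct $R(v)$) is only $O(n^{1-(a-b)+o(1)})=o(n)$ with high probability, and then argue that each such bad vertex can still be absorbed into a covering cycle by means of a short detour whose existence uses the extra hypothesis $a>1$ (under which long edges of length near $\tfrac{a\log n}{n}$ are abundant). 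Making these detours mutually consistent while preserving the clean ``union of cycles covering $[0,1]$'' structure is the delicate geometric step where most of the work would lie.
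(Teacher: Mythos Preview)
Your plan has two genuine gaps, one structural and one substantive.

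\textbf{The structural issue.} Even if $R$ were total, the functional digraph of a single map $R$ on a finite set is a union of \emph{rho-shapes} (one directed cycle per component with in-trees hanging off), not necessarily a union of cycles. Your claim that ``a symmetric counting argument using $L$ shows that under $R$ every vertex has in-degree at least $1$'' does not hold: $L(v)=u$ (i.e.\ $u$ is the farthest vertex in $v$'s left annulus) does not imply $R(u)=v$, since there may be another vertex in $u$'s right annulus strictly to the right of $v$. So even the clean case $a-b>1$ is not correctly handled by your digraph argument. What you actually need (and what the paper proves) is the undirected statement that every vertex has some vertex to its right within distance $\tfrac{a\log n}{n}$ to which it is graph-connected, and likewise to its left; this suffices to put every vertex on a covering cycle without any bijection claim.

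\textbf{The substantive gap.} For $0.5<a-b\le 1$ your proposed fix is to show there are $O(n^{1-(a-b)+o(1)})$ bad vertices by second moment and then ``absorb'' each via a detour using $a>1$. But the number of bad vertices is in fact $\Theta(n^{1-(a-b)})$ whp, which near $a-b=0.5$ is almost $\sqrt{n}$; you give no mechanism for producing these detours, let alone making them mutually consistent with the required cycle structure. The paper bypasses this entirely with a different idea: rather than looking only at the right annulus of $u$, it partitions four regions around $u$ (the right and left annuli $[\pm a,\pm b]\tfrac{\log n}{n}$ together with the intervals $[-(a-b),0]\tfrac{\log n}{n}$ and $[0,a-b]\tfrac{\log n}{n}$) into $4L$ equal patches of width $\theta=(a-b)/L$, and shows that whp every collection of $2L-1$ of these patches contains a vertex. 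The total length of any $2L-1$ patches is at least $\min\{\tfrac{2L-1}{L}(a-b),a\}\tfrac{\log n}{n}$, so a first-moment bound (with $L=L_n\to\infty$) gives the claim once $2(a-b)>1$ and $a>1$. A short combinatorial argument then shows that if every $2L-1$ patches are nonempty, $u$ is connected via at most one intermediate hop to some vertex on its right and some vertex on its left, both within distance $\tfrac{a\log n}{n}$. The factor-of-two gain (from $a-b>1$ down to $a-b>0.5$) comes from the fact that a vertex in the small near-$u$ interval can relay $u$ to the far side of the annulus, effectively doubling the useful region. Your detour heuristic does not capture this mechanism.
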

This lemma also shows effectively the fact that `long-edges' are able to connect vertices over multiple hops.
Note that, the statement of Lemma~\ref{lem:lemma1} would be easier to prove if the condition were $a-b>1$. 
In that case what we prove is that every vertex  has neighbors (in the ${\rm RAG}_1^\ast$) on both of the left and right directions. To  see this
for each vertex $u$ , assign two indicator $\{0,1\}$-random variables $A_{u}^{l}$ and $A_{u}^{r}$, with $A_{u}^{l}=1$ if and only if there is no node $x$ to the left of node $u$ such that $d(u,x) \in [\frac{b\log n}{n},\frac{a\log n}{n}]$. Similarly, let $A_{u}^{r}=1$ if and only if there is no node $x$ to the right of node $u$ such that $d(u,x)\in [\frac{b\log n}{n},\frac{a\log n}{n}]$. Now define $A=\sum_{u}( A_{u}^{l}+A_{u}^{r})$.  We have, 
$$\Pr(A_{u}^{l}=1)=\Pr(A_{u}^{r}=1)=(1-\frac{(a-b)\log n}{n})^{n-1},$$ 
and, 
$$\avg[A]=2n(1-\frac{(a-b)\log n}{n})^{n-1} \le 2n^{1-(a-b)}.$$
 If $a-b >1$ then $\avg[A]=o(1)$ which implies, by invoking Markov inequality, that with high probability every node will have neighbors (connected by an edge in the ${\rm RAG}_1^\ast$) on either side. This results in the interesting conclusion that every vertex will lie in  a cycle that covers  $[0,1]$. This is true for every vertex, hence the graph is simply a union of cycles each of which is a cover of $[0,1]$.
The main technical challenge is to show that this conclusion remains valid even when $a-b >0.5$, which is proved in Lemma~\ref{lem:lemma1} in Appendix~\ref{sec:VRG-detail}. 

\begin{lemma}\label{lem:lemma2}
Set two real numbers $k\equiv \lceil b/(a-b)\rceil+1$ and $\epsilon < \frac1{2k}$. In an ${\rm RAG}^\ast_1(n, [\frac{b\log n}{n}, \frac{a\log n}{n}]), 0 <b <a$, with  probability $1-o(1)$ there exists a vertex $u_0$ and  $k$ nodes $\{u_1, u_2 ,\ldots, u_k\}$ to the right of $u_0$ such that $d(u_0,u_i) \in [\frac{(i(a-b)-2i\epsilon)\log n}{n},\frac{(i(a-b)-(2i-1)\epsilon) \log n}{n}]$ and  $k$ nodes $\{v_1, v_2 ,\ldots, v_k\}$ to the right of $u_0$ such that $d(u_0,v_i) \in [\frac{((i(a-b)+b-(2i-1)\epsilon)\log n}{n},\frac{(i(a-b)+b -(2i-2)\epsilon)\log n}{n}]$, for $i =1,2,\ldots,k$. The arrangement of the vertices is shown  in Figure~\ref{fig:arr} (pg. 18).
\end{lemma}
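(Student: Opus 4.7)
The plan is a short first-moment / union bound argument, made particularly easy by the rotational symmetry of the i.i.d.\ uniform distribution on the circle $[0,1]$. By that symmetry I may fix one arbitrary vertex and call it $u_0$, placing it at $X_{u_0}=0$. Conditional on this, it suffices to show that, with probability $1-o(1)$, each of the $2k$ prescribed intervals (one for each $u_i$ and one for each $v_i$), each of length $\epsilon\log n/n$, contains at least one of the remaining $n-1$ uniform vertices. Selecting one vertex per interval then yields distinct labels $u_1,\ldots,u_k,v_1,\ldots,v_k$ with the required distances to $u_0$; existence of such a $u_0$ is automatic since in fact any fixed vertex works with high probability.

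The probabilistic step itself is immediate. For a single interval $I$ of length $\epsilon\log n/n$, the probability that no other vertex lies in $I$ equals $(1-\epsilon\log n/n)^{n-1}$, which is $O(n^{-\epsilon})$. A union bound over the $2k=O(1)$ intervals then bounds the probability that some interval is empty by $O(n^{-\epsilon})=o(1)$.

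The main obstacle is the preliminary geometric check that the $2k$ intervals are pairwise disjoint, which is required in order for the selected vertices to be distinct. This is the sole purpose of the offsets $-2i\epsilon,\,-(2i-1)\epsilon,\,-(2i-2)\epsilon$ appearing in the endpoints. When the nominal positions $i(a-b)$ (for $u_i$) and $i'(a-b)+b$ (for $v_{i'}$) differ by a constant, disjointness is obvious for $\epsilon$ small enough. The delicate situation is when these nominal positions coincide, which happens exactly when $b/(a-b)$ equals the positive integer $i-i'$; there the offsets $-(2i-1)\epsilon$ and $-(2i'-1)\epsilon$ still separate the two intervals by a positive gap $2(i-i')\epsilon>0$. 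The regime $a<2b$, in which the labels $u_i$ and $v_i$ interleave in spatial order, requires the most careful bookkeeping. The choice $\epsilon<1/(2k)$ with $k=\lceil b/(a-b)\rceil+1$ is sufficient to make all these gaps strictly positive. Once the $2k$ intervals are shown to be disjoint, the union bound above completes the proof.
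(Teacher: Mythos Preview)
Your approach is correct and considerably simpler than the paper's. The paper proves the lemma via the second moment method: it defines an indicator $A_u$ for each vertex $u$ that equals $1$ only when the $2k$ intervals around $u$ each contain \emph{exactly} one vertex and all remaining vertices lie outside, computes $\sum_u\avg[A_u]\approx c\, n^{1-2k\epsilon}(\epsilon\log n)^{2k}$, bounds the variance by splitting into near/far pairs of anchors, and concludes via Chebyshev. The hypothesis $\epsilon<1/(2k)$ is exactly what makes $n^{1-2k\epsilon}\to\infty$, so the paper's argument genuinely needs it. By contrast, you exploit rotational symmetry to fix $u_0$ and then need only that each of the $2k$ intervals of length $\epsilon\log n/n$ is nonempty, which a one-line union bound handles with failure probability $O(kn^{-\epsilon})=o(1)$. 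Your argument in fact works for any fixed $\epsilon>0$ and shows that \emph{every} vertex serves as a valid $u_0$ with high probability---both stronger than what the paper establishes.

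One minor gap: your claim that $\epsilon<1/(2k)$ alone forces all $2k$ intervals to be pairwise disjoint is not correct as stated. For instance, consecutive $u$-intervals are separated by a gap of $(a-b)-3\epsilon$, and the hypothesis $\epsilon<1/(2k)$ need not make this positive (take $a=1$, $b=0.3$: then $k=2$, $1/(2k)=0.25$, but $(a-b)/3\approx 0.233$, so $\epsilon=0.24$ is admissible yet the $u_1$- and $u_2$-intervals overlap). Similar near-collisions can occur between $u_i$ and $v_j$ when $(i-j)(a-b)-b$ is small and positive, not only when it vanishes. This is not a real obstruction, for two reasons: (i) each interval contains $\Theta(\epsilon\log n)\to\infty$ vertices with high probability, so a system of distinct representatives exists regardless of overlaps; (ii) the lemma is only invoked in the proof of Theorem~\ref{thm:rag} with an unspecified small $\epsilon$, so one may freely shrink it. Incidentally, the paper's formula for $\Pr(A_u=1)$ also tacitly assumes disjoint intervals, so this wrinkle is shared.
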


With the help of these two lemmas, we are in a position to  prove the sufficient condition in Theorem~\ref{thm:rag}. The proofs of  the two lemmas are given in Appendix~\ref{sec:VRG-detail} and contain the technical essence of this section.

\begin{proof}[Proof of Sufficient condition in Theorem~\ref{thm:rag}]
We have shown that the two events mentioned in Lemmas \ref{lem:lemma1} and \ref{lem:lemma2} happen with high probability. Therefore they simultaneously happen under the condition $a>1$ and $a-b >0.5$.
Now we will show that these events together imply that the graph is connected. To see this, consider the vertices $u_0,\{u_1, u_2, \ldots, u_k\}$ and $\{v_1, v_2, \ldots, v_k\}$ that satisfy the conditions of Lemma \ref{lem:lemma2}. We can observe that each vertex $v_i$ has an edge with $u_i$ and $u_{i-1}$, $i =1, \ldots,k$. This is because (see Figure~\ref{fig:arr} for a depiction)
$$
d(u_i,v_i) 
 \ge \frac{((i(a-b)+b-(2i-1)\epsilon)\log n}{n} - \frac{i(a-b)-(2i-1)\epsilon) \log n}{n}= \frac{b \log n}{n} \quad \text{and}
$$ 
\begin{align*}
 d(u_i,v_i) &\le 
   \frac{i(a-b)+b -(2i-2)\epsilon\log n}{n} - \frac{(i(a-b)-2i\epsilon)\log n}{n} = \frac{(b + 2\epsilon)\log n}{n}.
\end{align*}

Similarly, 
\vspace{-10pt}
\begin{align*}
d(u_{i-1},v_i)& 
\ge \frac{((i(a-b)+b-(2i-1)\epsilon)\log n}{n} - \frac{(i-1)(a-b)-(2i-3)\epsilon) \log n}{n}\\
& = \frac{(a-2\epsilon)\log n}{n} \quad \text{and}
\end{align*}
\begin{align*}
d(u_{i-1},v_i) 
&\le \frac{i(a-b)+b -(2i-2)\epsilon\log n}{n} - \frac{((i-1)(a-b)-2(i-1)\epsilon)\log n}{n} = \frac{a\log n}{n}.
\end{align*}
 This implies that $u_0$ is connected to $u_i$ and $v_i$ for all $i=1,\dots,k$. Using Lemma \ref{lem:lemma1}, the first event implies  that 
 the connected components are cycles spanning the entire line $[0,1]$.
 Now consider two such disconnected components, one of which consists of the nodes $u_0,\{u_1, u_2, \ldots, u_k\}$ and $\{v_1, v_2, \ldots, v_k\}$. There must exist a node $t$ in the other component (cycle) such that $t$ is on the right of $u_0$ and $d(u_0,t) \equiv \frac{x \log n}{n}  \le \frac{a \log n}{n}$. If $x\leq b$, $\exists i \mid  i\leq k \text{ and } i(a-b)+b - a-(2i-2)\epsilon \le x\leq i(a-b)-(2i-1)\epsilon$ (see Figure~\ref{fig:arr1d}). 
When $x\leq b$, we can calculate the distance between $t$ and $v_i$ as 
\begin{align*}
d(t,v_i)
&\ge \frac{(i(a-b)+b-(2i-1)\epsilon)\log n}{n} - \frac{(i(a-b)-(2i-1)\epsilon) \log n}{n} = \frac{b \log n}{n}
\end{align*}
and 
\begin{align*}
d(t,v_i)
&\le \frac{(i(a-b)+b -(2i-2)\epsilon)\log n}{n} - \frac{(i(a-b)+b-a-(2i-2)\epsilon)\log n}{n} = \frac{a\log n}{n}.
\end{align*}
Therefore  $t$ is connected to $v_i$ when  $x\leq b.$ If $x >b$ then $t$ is already connected to  $u_0$. Therefore the two components (cycles) in question are connected.
This is true for all cycles and hence there is only a single component in the entire graph. Indeed, if we consider the cycles to be disjoint super-nodes, then we have shown  that there must be a star configuration. 
\end{proof}

The following result is an immediate corollary of the connectivity upper bound.
\begin{corollary}\label{cor:patch}
Consider a random  graph $G(V,E)$  is being  generated as a variant of the ${\rm RAG}^{\ast}_1$ where each  $u,v\in V$ forms an edge if and only if  $d(u,v)\in \left[0,c\frac{\log n}{n}\right]\cup \left[b\frac{\log n}{n},a\frac{\log n}{n}\right], 0 <c<b<a$.  If $a-b + c>1$ or if $a-b > 0.5, a>1$, the graph $G$ is connected with  probability $1-o(1)$. \label{cor:patches}
\end{corollary}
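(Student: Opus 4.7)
The plan is to treat the two disjunctive hypotheses separately, both building on the sufficient direction of Theorem~\ref{thm:rag}. Call them (H1) $a-b > 0.5$ and $a > 1$, and (H2) $a - b + c > 1$.

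Hypothesis (H1) is immediate. The subgraph of $G$ obtained by keeping only the edges with distance in $[b\log n/n,\, a\log n/n]$ is, by construction, distributed as ${\rm RAG}_1^\ast(n, [b\log n/n,\, a\log n/n])$, and Theorem~\ref{thm:rag} gives its connectivity with probability $1-o(1)$ under exactly (H1). Since $G$ is a super-graph of this subgraph, $G$ inherits connectivity.

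Under hypothesis (H2), I would adapt the three-step strategy of the proof of Theorem~\ref{thm:rag}'s sufficient direction. First observe that $c < b$ together with $a - b + c > 1$ forces $a > 1$, so the side condition of Lemma~\ref{lem:lemma1} is automatic. \emph{Step 1 (cycle structure):} for any vertex $u$, the Lebesgue measure of the ``right-connecting'' region is $(a - b + c)\log n / n$, so the probability that $u$ has no right neighbor is at most $(1 - (a - b + c)\log n / n)^{n-1} \le n^{-(a - b + c) + o(1)}$; a union bound together with $a - b + c > 1$ shows that w.h.p.\ every vertex has neighbors on both sides, and hence every connected component of $G$ covers $[0,1]$, exactly as in the easy case of Lemma~\ref{lem:lemma1}. \emph{Step 2 (linking configuration):} invoke Lemma~\ref{lem:lemma2}, whose hypotheses only require $0 < b < a$, to obtain the vertices $u_0, u_1, \dots, u_k$ and $v_1, \dots, v_k$ at the required spacings. \emph{Step 3 (merging):} apply verbatim the cycle-merging argument from the proof of Theorem~\ref{thm:rag}, which uses only edges with distances in $[b\log n/n,\, a\log n/n]$ and is therefore blind to the extra short edges of $G$.

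The only subtle point is that, under (H2), our hypothesis $a - b + c > 1$ is the ``easy-regime'' analog of $a - b > 1$ rather than the sharper $a - b > 0.5$ used in Lemma~\ref{lem:lemma1}. This is precisely what allows Step~1 to go through via a direct first-moment computation, so no harder argument is needed; Steps~2 and 3 then transfer from the proof of Theorem~\ref{thm:rag} without modification.
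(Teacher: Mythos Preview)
Your proposal is correct and matches what the paper intends. The paper gives no explicit proof here---it calls the result ``an immediate corollary of the connectivity upper bound''---and your argument is precisely the natural unpacking of that claim: the supergraph observation handles (H1) directly from the statement of Theorem~\ref{thm:rag}, while (H2) re-runs the three-step machinery (Lemma~\ref{lem:lemma1} in its easy $a-b>1$ incarnation, Lemma~\ref{lem:lemma2}, and the cycle-merging step) with the enlarged one-sided connectivity measure $(a-b+c)\log n/n$.
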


The above corollary can be further improved for some regimes of $a,b,c$. In particular, we can get the following result (proof delegated to  the appendix).

\begin{corollary}\label{cor:extra}
Consider a random  graph $G(V,E)$  is being  generated as a variant of the ${\rm RAG}^{\ast}_1$ where each  $u,v\in V$ forms an edge if and only if  $d(u,v)\in \left[0,c\frac{\log n}{n}\right]\cup \left[b\frac{\log n}{n},a\frac{\log n}{n}\right], 0 <c<b<a$. If
any of the following conditions are true:
\begin{enumerate}
\item $2(a-b)+ c/2 > 1 \text{ when } a-b<c \text{ and }b>3c/2$
\item $b-c > 1 \text{ when } a-b<c \text{ and }b\le 3c/2$
\item $a > 1 \text{ when } a-b\ge c \text{ and } b\le 3c/2 $
\item $(a-b)+ 3c/2  > 1 \text{ when } a-b\ge c \text{ and } b>3c/2$,
\end{enumerate}
the graph $G$ is connected with  probability $1-o(1)$.
\end{corollary}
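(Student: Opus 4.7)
My plan is to handle each of the four cases separately by reducing to either the classical random geometric graph connectivity threshold (for the short edges alone) or to Theorem~\ref{thm:rag} applied to an \emph{effective} random annulus graph that embeds in $G$ via short multi-hop paths. Write $r_c = c\log n/n$, $r_b = b\log n/n$, $r_a = a\log n/n$. The key construction is that for any vertex $u$, a 2-hop path consisting of a short edge of length in $[\alpha-\epsilon, \alpha+\epsilon]$ (for $\alpha \in [0,c]$ and small $\epsilon$) followed by an annulus edge produces an ``effective'' edge at distance in $[r_b-\alpha-\epsilon,\,r_a+\alpha+\epsilon]$, provided an appropriate intermediate vertex exists. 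Varying $\alpha$ and the relative orientations of the two hops yields effective edges at various distances; collecting such effective edges produces a wider effective annulus contained (in the connectivity sense) in $G$.

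For Case~2 ($a-b<c,\, b\le 3c/2,\, b-c>1$), combining $b\le 3c/2$ and $b-c>1$ forces $c>2$, so the short edges alone form ${\rm RAG}_1^\ast(n,[0,r_c])$ with $c>1$, which is connected with probability $1-o(1)$ by the classical RGG threshold. For Case~3 ($a-b\ge c,\, b\le 3c/2,\, a>1$), I would argue that 2-hop paths through short edges cover the gap $[r_c, r_b]$: since $r_b \le 3r_c/2 < 2r_c$, for every pair of vertices at distance in $[r_c,r_b]$ an intermediate vertex within distance $r_c$ of both exists with high probability (a density argument over $O(n/\log n)$ positions plus a union bound). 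Hence $G$ effectively contains ${\rm RAG}_1^\ast(n,[0,r_a])$, whose connectivity follows from $a>1$.

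The two interesting cases are~1 and~4. For Case~4 ($a-b\ge c,\, b>3c/2,\, (a-b)+3c/2>1$), I would mimic the proof of Theorem~\ref{thm:rag} (the cycle-cover from Lemma~\ref{lem:lemma1} and the chain of witness vertices from Lemma~\ref{lem:lemma2}), but relax the required annulus width by exploiting the short edges. Specifically, replacing each annulus edge in the chain by an annulus edge preceded or followed by a short edge of controlled length (up to $3c/2$ in total, using a short-edge-short-edge-annulus hybrid that the separation $b>3c/2$ makes well-defined) yields effective edges in a widened annulus, and the inequality $(a-b)+3c/2>1$ is then what emerges as the ``$a>1$'' type condition for the cycle-cover existence. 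For Case~1 ($a-b<c,\, b>3c/2,\, 2(a-b)+c/2>1$), the narrowness $a-b<c$ means a single annulus edge hardly shifts positions, so I would chain two annulus edges of the same direction (giving a net shift of width $2(a-b)$) together with a half-short-edge shift of length up to $c/2$, again producing an effective annulus whose width $2(a-b)+c/2$ must exceed $1$ to guarantee cover.

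The main obstacle I expect is the careful probabilistic bookkeeping in Cases~1 and~4: unlike in Theorem~\ref{thm:rag}, the ``effective'' edges are not independent because they share the randomness of the intermediate vertices, so the analog of Lemma~\ref{lem:lemma1} (every vertex having neighbors on each side) and Lemma~\ref{lem:lemma2} (existence of the specific witness chain) must be re-proved with this correlation in mind. Establishing both events simultaneously with probability $1-o(1)$ requires a union bound over $O(n/\log n)$ positions, Chernoff-type bounds on the number of vertices falling into short subintervals of length $\epsilon\log n/n$ so that the needed intermediate vertices exist, and a parameter tradeoff to verify that the fractional coefficients $c/2$ and $3c/2$ arise as the tight thresholds. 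Once the two structural lemmas are established in the extended setting, the connectivity argument itself proceeds exactly as in the proof of the sufficient condition of Theorem~\ref{thm:rag}.
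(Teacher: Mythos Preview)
Your overall strategy is quite different from the paper's, and in one place (Case~3) it has a concrete gap. The paper does \emph{not} reduce to Theorem~\ref{thm:rag}. Instead it directly adapts the simple four--region indicator argument used in the $a-b>2/3$ warm-up of Lemma~\ref{lem:lemma1}: for each vertex $u$ it defines four indicators $A_u^1,\dots,A_u^4$, where for instance $A_u^2=1$ iff no vertex lies in $[b,a]\cup[0,c]\cup[b-c/2,\,a-c]$ (everything scaled by $\tfrac{\log n}{n}$), and symmetrically for the other three. The four cases in the corollary statement are exactly the four possible overlap patterns of these intervals, governed by whether $a-c<b$ (i.e.\ $a-b<c$) and whether $b-c/2<c$ (i.e.\ $b\le 3c/2$); computing the Lebesgue measure in each pattern yields precisely the thresholds $2(a-b)+c/2$, $b-c$, $a$, and $(a-b)+3c/2$. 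A first-moment argument then shows every vertex has a (possibly 2-hop) neighbour on each side, giving the cycle-cover as in Lemma~\ref{lem:lemma1}; the Lemma~\ref{lem:lemma2}/Theorem~\ref{thm:rag} combination to link the cycles is unchanged because the extra $[0,c]$ edges only help. No ``effective annulus'' and no multi-hop reduction is needed, and the mysterious constants $c/2$ and $3c/2$ fall out mechanically from the interval endpoints rather than from a parameter trade-off.

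The specific gap: in Case~3 you claim that for every pair $u,v$ with $d(u,v)\in[r_c,r_b]$ an intermediate vertex in $B(u,r_c)\cap B(v,r_c)$ exists w.h.p. That intersection has length at least $2r_c-r_b\ge r_c/2$, so the failure probability for a fixed pair is $\approx n^{-c/2}$; a union bound over the $\Theta(n\log n)$ relevant pairs requires $c>2$, which your hypothesis $a>1$ does not give. Thus ``$G$ effectively contains ${\rm RAG}^\ast_1(n,[0,r_a])$'' is not established under the stated condition. (Your Case~2 reduction to $c>2$ is correct, and your honest remarks about Cases~1 and~4 are well taken---but the paper's unified four-region computation sidesteps all of this.)
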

\section{Connectivity of High Dimensional Random Annulus Graphs: Proof of Theorem \ref{thm:highdem1}}
\label{sec:hrag}
In this section we show a proof sketch of Theorem \ref{thm:highdem1} to establish the sufficient condition of connectivity of random annulus graphs. The details of the proof and the necessary conditions are provided in Appendix~\ref{sec:hrag-detail}.


 
Note, here $r_1\equiv   b\left(\frac{\log n}{n}\right)^{1/t}$ and $r_2 \equiv a\left(\frac{\log n}{n}\right)^{1/t}$.
We show the upper bound for connectivity of a Random Annulus Graphs in $t$ dimension as shown in Theorem \ref{thm:highdem1}. 
For this we first define a \emph{pole} as a vertex which is connected to all vertices within a distance of $r_2$ from itself. 
In order to prove  Theorem \ref{thm:highdem1}, we first show the existence of a pole with high probability in Lemma \ref{lem:pole}.
\begin{lemma}\label{lem:pole}
In a ${\rm RAG}_t\left(n, \left[b\left(\frac{\log n}{n}\right)^{1/t},a\left(\frac{\log n}{n}\right)^{1/t}\right]\right), 0 <b <a$, with  probability $1-o(1)$ there exists a pole.
\end{lemma}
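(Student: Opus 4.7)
The plan is to use the second moment method on the number of poles. Begin with the geometric observation: a vertex $v$ is a pole exactly when no other vertex falls in the spherical cap $B(X_v, r_1) \subset S^t$ of Euclidean radius $r_1 = b(\log n/n)^{1/t}$, since any such vertex would lie at distance at most $r_2$ from $v$ but strictly inside the forbidden ``hole'' of radius $r_1$ and would therefore violate the adjacency condition. Writing $I_v$ for the indicator that $v$ is a pole and $V_t(r)$ for the uniform measure on $S^t$ of such a cap, rotational invariance gives
\[
\Pr[I_v = 1] = (1 - V_t(r_1))^{n-1}.
\]

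The first moment now reduces to a cap-volume calculation. Using the standard surface-area integral on $S^t$, for small $r$ one has $V_t(r) = r^t/\psi(t) \cdot (1 + O(r^2))$; verifying that this constant is exactly $1/\psi(t)$, with $\psi(t) = \sqrt{\pi}(t+1)\Gamma((t+2)/2)/\Gamma((t+3)/2)$, is an elementary $\Gamma$-function manipulation and matches the isolated-vertex threshold from Theorem~\ref{th:lb}. Letting $N = \sum_v I_v$, we obtain
\[
\expect[N] = n\,(1 - V_t(r_1))^{n-1} = n^{1 - b^t/\psi(t)}(1 + o(1)),
\]
which tends to infinity in the regime where the lemma is invoked (the hypotheses of Theorem~\ref{thm:highdem1} ensure $b^t < \psi(t)$ in the parameter range of interest).

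To promote this to existence with high probability, I would apply the second moment method. For $u \neq v$, conditioning on $X_u$ and $X_v$,
\[
\Pr[I_u = I_v = 1 \mid X_u, X_v] = \bigl(1 - \mu(B(X_u, r_1) \cup B(X_v, r_1))\bigr)^{n-2}.
\]
When $\|X_u - X_v\|_2 \ge 2r_1$ the caps are disjoint and this equals $(1 - 2V_t(r_1))^{n-2}$, matching the squared marginal up to a $(1+o(1))$ factor. The ``close'' event $\|X_u - X_v\|_2 < 2r_1$ has probability $V_t(2r_1) = O(\log n/n)$, and on it the probability that both are poles is at most $(1 - V_t(r_1))^{n-2}$. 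Summing contributions,
\[
\expect[N^2] \le \expect[N]^2 (1 + o(1)) + O(n \log n) \cdot n^{-b^t/\psi(t)},
\]
so $\Var(N)/\expect[N]^2 = o(1) + O(\log n \cdot n^{b^t/\psi(t) - 1}) = o(1)$, and Chebyshev's inequality yields $\Pr[N = 0] = o(1)$.

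The main obstacle is the cap-volume asymptotics: one must track the error in the expansion of $V_t(r)$ precisely enough that $(1 - V_t(r_1))^{n-1}$ really does behave like $n^{-b^t/\psi(t)}$ and the constant $1/\psi(t)$ comes out exactly, since this lemma's usefulness in the proof of Theorem~\ref{thm:highdem1} depends on tight constants that align with those in the isolated-vertex statement.
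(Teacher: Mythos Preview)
Your argument has a real gap. You characterize a pole as a vertex $v$ whose inner ball $B(X_v,r_1)$ contains no other vertex; this is a \emph{stronger} notion than the paper's (the paper only needs $v$ to be \emph{path}-connected in the graph to every vertex within distance $r_2$, not adjacent). Your stronger event can hold with high probability only when $b^t<\psi(t)$: once $b^t\ge\psi(t)$, a first-moment computation identical to the isolated-vertex one shows that w.h.p.\ \emph{every} vertex has some neighbor inside its $r_1$-ball, so no adjacency-pole exists at all. Your claim that ``the hypotheses of Theorem~\ref{thm:highdem1} ensure $b^t<\psi(t)$'' is false. For $t=2$ one has $\psi(2)=4$; taking $b=10$, $a=28$ gives $(a/2)^2-b^2=96=8(t+1)\psi(t)$ and $a>2b$, yet $b^2=100\gg\psi(2)$. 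So your proof does not cover the lemma as stated, nor even the parameter range where it is actually invoked.

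The paper handles all $0<b<a$ by a genuinely different construction. Instead of asking for an empty $r_1$-ball, it runs a second-moment argument (Lemma~\ref{lem:lemma2pole}) to find a vertex $u_0$ together with $2k$ auxiliary vertices $u_1,\dots,u_k,v_1,\dots,v_k$, where $k=\lceil b/(a-b)\rceil+1$, each lying in a prescribed $\epsilon$-ball along a great circle through $u_0$ at carefully chosen distances. The $v_i$'s are adjacent to both $u_i$ and $u_{i-1}$, so all $2k$ auxiliaries are path-connected to $u_0$; and the translated annuli $B_t(O_{u_i},[r_1+\epsilon,r_2-\epsilon])$, $B_t(O_{v_i},[r_1+\epsilon,r_2-\epsilon])$ together cover the full cap $B_t(u_0,r_2)$. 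Hence any vertex within $r_2$ of $u_0$ is adjacent to some $u_i$ or $v_i$ and therefore path-connected to $u_0$. This chaining through auxiliary nodes is precisely how the annulus structure lets long edges compensate for the forbidden inner ball when $b$ is large---your empty-ball shortcut bypasses that mechanism and so cannot succeed in that regime.
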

 Next, Lemma \ref{lem:high_stuff1} shows that for every vertex $u$ and every hyperplane $L$ passing through $u$ and not too close to the tangent hyperplane at $u$, there will be a neighbor of $u$ on either side of the plane. Therefore, there should be a neighbor towards the direction of the pole. In order to formalize this, let us define a few regions associated with a node $u$ and a hyperplane $L:w^{T}x=\beta$ passing through $u$.
\begin{align*}
\mathcal{R}_{L}^1 &\equiv \{x \in S^t \mid r_1 \le  d(u,x) \le r_2, w^{T}x \le \beta \} \\
\mathcal{R}_{L}^2 &\equiv \{x \in S^t \mid r_1 \le  d(u,x) \le r_2, w^{T}x \ge \beta \} \\
\mathcal{A}_{L} & \equiv \{x \mid x \in \mathcal{S}^t, \quad w^{T}x=\beta \}.
\end{align*}
Informally, $\mathcal{R}_{L}^1$ and $\mathcal{R}_{L}^2$ represent the partition of the annulus on either side of the hyperplane $L$ and $\mathcal{A}_L$ represents the region on the sphere lying on $L$. 
\begin{lemma}\label{lem:high_stuff1}
If we sample $n$ nodes from $S^t$ according to ${\rm RAG}_t\left(n,\left[b\left(\frac{\log n}{n}\right)^{1/t},a\left(\frac{\log n}{n}\right)^{1/t}\right]\right)$, then for every node $u$ and every hyperplane $L$ passing through $u$ such that $\mathcal{A}_L$ is not all within  distance $r_2$ of $u$, node $u$ has a neighbor on both sides of the hyperplane $L$ with probability at least $1-\frac{1}{n}$ provided 
$
(a/2)^t-b^t \ge   \frac{8\sqrt{\pi}(t+1)^2\Gamma(\frac{t+2}{2})}{\Gamma(\frac{t+3}{2})}
$ and $a>2b$.
\end{lemma}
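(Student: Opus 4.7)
The plan is to split the argument into (i) a volume lower bound for each ``half-annulus'' $\mathcal{R}_L^i$, followed by (ii) a VC-dimension based extension of this bound from a single fixed hyperplane $L$ to every hyperplane through $u$, and (iii) a final union bound over the $n$ possible centers $u$. For a single pair $(u, L)$, once I have $\mu(\mathcal{R}_L^i) \geq \rho$ for both $i \in \{1, 2\}$ (where $\mu$ is the uniform probability measure on $S^t$), independence of the remaining $n-1$ samples gives
\[
\Pr[\text{no sample in } \mathcal{R}_L^i ] \leq (1-\rho)^{n-1} \leq e^{-\rho(n-1)},
\]
and the assumption $(a/2)^t - b^t \geq 8(t+1)\psi(t)$ is there precisely to make $\rho n$ logarithmically large enough to survive the two unions.

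The main step is the volume bound. The hypothesis that $\mathcal{A}_L$ is not all within distance $r_2$ of $u$ prevents $L$ from being too close to the tangent hyperplane at $u$, which means that on each side of $L$ the half of the spherical ball $B(u, r_2/2) \cap S^t$ lies entirely inside the annulus $\{x : r_1 \le d(u,x) \le r_2\}$ and entirely on the corresponding side of $L$. The second hypothesis $a > 2b$, i.e.\ $r_1 < r_2/2$, guarantees that this half-ball strictly contains the excluded inner region $B(u, r_1) \cap S^t$ on that side, so subtraction leaves a genuine half-annular shell. Applying the small-cap surface-area expansion on $S^t$ (from which the quantity $\psi(t)$ defined in Theorem~\ref{thm:highdem1} arises naturally) then yields
\[
\mu(\mathcal{R}_L^i) \geq \frac{1 - o(1)}{2\psi(t)} \bigl[(a/2)^t - b^t\bigr] \cdot \frac{\log n}{n}, \qquad i \in \{1,2\},
\]
for every $L$ through $u$ satisfying the non-tangency hypothesis.

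For the uncountable quantifier ``every $L$,'' I would invoke VC dimension. The family $\mathcal{F}_u := \{\mathcal{R}_L^1, \mathcal{R}_L^2 : L \text{ is a hyperplane through } u\}$ consists of intersections of the fixed annulus around $u$ with oriented halfspaces through $u$; since halfspaces in $\reals^{t+1}$ have VC dimension $t+2$, $\mathcal{F}_u$ has VC dimension $O(t)$. The standard VC $\epsilon$-net theorem then gives, with probability at least $1 - n^{-3}$, that every $R \in \mathcal{F}_u$ with $\mu(R) \geq c_0 (t+1) \log n / n$ contains at least one of the other $n-1$ samples, for some absolute constant $c_0$. The hypothesis $(a/2)^t - b^t \geq 8(t+1)\psi(t)$ is calibrated precisely so that the volume bound of the preceding paragraph dominates this threshold. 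A union bound over the $n$ choices of center $u$ then turns this into failure probability at most $n \cdot n^{-3} = n^{-2}$, comfortably inside the stated $1/n$.

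The main obstacle is the geometric content of the second paragraph: the spherical annulus on $S^t$ is a genuinely curved shell, and one has to verify carefully that, as soon as $\mathcal{A}_L$ escapes the $r_2$-ball of $u$, the half-ball $B(u, r_2/2) \cap S^t$ really does lie on one side of $L$, and that the small-cap surface-area formula is accurate enough to produce the prefactor $1/(2\psi(t))$ in the bound. Once this geometric bookkeeping and the VC-net constants are lined up, the constants $8$ and $(t+1)$ in the hypothesis fall out, and the remainder of the proof is routine Chernoff/union-bound machinery.
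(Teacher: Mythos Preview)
Your three-step strategy---volume lower bound for each half-annulus, an $\epsilon$-net argument via VC dimension to handle all hyperplanes through $u$ simultaneously, then a union bound over the $n$ centers---is exactly the paper's approach, down to the role of the constant $8(t+1)\psi(t)$ as the threshold at which the $\epsilon$-net sample-size condition is met.

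One small clarification on the geometric step, where the paper's argument is phrased a bit differently: rather than taking ``half of $B(u,r_2/2)\cap S^t$'' on each side (awkward because $L$ passes through $u$ and the two pieces need not be equal on the curved sphere), the paper observes that the set $\{x\in S^t:\|u-x\|_2\le r_2,\ w^Tx\ge\beta\}$ is itself a spherical cap $B_t(u',r')$ centered at the pole $u'$ of $L$, and in the extremal admissible case (the farthest point of $\mathcal{A}_L$ from $u$ at distance exactly $r_2$) one computes $r'\approx r_2/2$; subtracting the full inner cap $B_t(u,r_1)$ then gives $|\mathcal{R}_L^2|\ge c_t(r_2/2)^t-c_t r_1^t$ directly (Lemma~\ref{conjec:ratio} and Corollary~\ref{cor:ratio}). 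The VC bound the paper uses is $t+1$ via Radon's lemma (Lemma~\ref{lem:VC_dimension}), and the $\epsilon$-net theorem invoked is the Haussler--Welzl version (Lemma~\ref{thm:VC_dim}). The remainder---plugging $\epsilon=\delta c_t(a^t-b^t)\log n/(n|S^t|)$ with $\delta=\frac{(a/2)^t-b^t}{a^t-b^t}$ into the $\epsilon$-net bound and union-bounding over $u$---matches what you describe.
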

We found the proof of this lemma to be challenging. Since, we do not know the location of the pole, we need to show that every point has a neighbor on both sides of the plane $L$ no matter what the orientation of the plane. Since the number of possible orientations is uncountably infinite, we cannot use a union-bound type argument. To show this we have to rely on the VC Dimension of the family of sets $\{x \in S^t \mid r_1 \le  \|u-x\|_2 \le r_2, w^{T}x \ge \beta , \mathcal{A}_{L:w^{T}x=\beta}  \text{ not all within } r_2 \text{ of } u\}$ for all hyperplanes $L$ (which can be shown to be less than $t+1$). We rely on the celebrated result of \citep{haussler1987} (we derived a continuous version of it), see Theorem~\ref{thm:VC_dim}, to deduce our conclusion.

For a node $u$ {and its corresponding location $X_u=(u_1,u_2,\dots,u_{t+1})$}, define the particular hyperplane $L^{\ast}_u : x_1=u_1$ which is normal to the line joining $u_0 \equiv (1,0,\dots,0)$ and the origin and passes through $u$. We now need one more lemma that will help us prove Theorem \ref{thm:highdem1}.
\begin{lemma}\label{lem:cut_twice}
For a particular node $u$ and corresponding hyperplane $L^{\ast}_u$, if every point in $\mathcal{A}_{L^{\ast}_u}$ is within distance $r_2$ from $u$,  then $u$ must be within $r_2$ of $u_0$.
\end{lemma}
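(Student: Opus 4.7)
I would start by explicitly parameterizing $\mathcal{A}_{L_u^{\ast}}$. Write $X_u=(u_1,u_2,\dots,u_{t+1})$ with $\sum_i u_i^2=1$. The hyperplane $L_u^{\ast}=\{x:x_1=u_1\}$ intersects the sphere in the ``latitude circle''
\[
\mathcal{A}_{L_u^{\ast}} = \bigl\{(u_1,y_2,\ldots,y_{t+1}) : y_2^2+\cdots+y_{t+1}^2 = 1-u_1^2\bigr\},
\]
which contains $u$ itself. To exploit the hypothesis I would maximize $\|u-y\|_2$ over $y\in\mathcal{A}_{L_u^{\ast}}$. Since the first coordinate cancels, $\|u-y\|_2^2=\sum_{i\ge 2}(u_i-y_i)^2$ subject to $\sum_{i\ge 2}y_i^2 = 1-u_1^2 = \sum_{i\ge 2}u_i^2$, and Cauchy--Schwarz (or, geometrically, the fact that the longest chord of a $(t{-}1)$-sphere through $u$ is a diameter) identifies the maximizer as $y^{\ast}=(u_1,-u_2,\ldots,-u_{t+1})$ with $\|u-y^{\ast}\|_2^2 = 4(1-u_1^2)$. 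The hypothesis $\mathcal{A}_{L_u^{\ast}}\subseteq B(u,r_2)$ is therefore equivalent to $1-u_1^2\le r_2^2/4$.

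I would then compute the target distance $\|u-u_0\|_2^2 = (1-u_1)^2+\sum_{i\ge 2}u_i^2 = 2(1-u_1)$ and use the factorization $1-u_1^2=(1-u_1)(1+u_1)$. On the hemisphere $u_1\ge 0$ one has $1+u_1\ge 1$, hence $1-u_1\le(1-u_1)(1+u_1)\le r_2^2/4$, and therefore
\[
\|u-u_0\|_2^2 \;=\; 2(1-u_1) \;\le\; r_2^2/2 \;<\; r_2^2,
\]
which is the claim. Every step beyond identifying $\mathcal{A}_{L_u^{\ast}}$ with a latitude circle is a one-line computation.

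The only delicate point in my plan is the sign ambiguity in $u_1$. In the regime $r_2=a(\log n/n)^{1/t}=o(1)$, the bound $1-u_1^2\le r_2^2/4$ only forces $|u_1|$ to lie within $O(r_2^2)$ of $1$, so strictly speaking $u$ could be close to the antipode $-u_0$ rather than to $u_0$. I expect this to be discharged by the surrounding proof of Theorem~\ref{thm:highdem1}: the axis used to define $L_u^{\ast}$ is the line from the origin to the pole $u_0$ supplied by Lemma~\ref{lem:pole}, so one may reorient the sphere per vertex and apply the argument on whichever hemisphere $u$ actually lies in (replacing $u_0$ by $-u_0$ if $u_1<0$). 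Once that sign choice is made the computation above closes the proof.
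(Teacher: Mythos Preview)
Your proof is correct and follows essentially the same route as the paper's: identify the farthest point of $\mathcal{A}_{L_u^{\ast}}$ from $u$ as the antipode $y^{\ast}=(u_1,-u_2,\ldots,-u_{t+1})$, deduce $4(1-u_1^2)\le r_2^2$, compute $\|u-u_0\|_2^2=2(1-u_1)$, and close the inequality. Your final step via the factorization $1-u_1^2=(1-u_1)(1+u_1)$ is a touch cleaner than the paper's algebraic verification that $\sqrt{(4-r_2^2)/4}\ge(2-r_2^2)/2$, and yields the slightly sharper bound $\|u-u_0\|_2^2\le r_2^2/2$; you also explicitly flag the $u_1\ge 0$ hemisphere assumption, which the paper's argument likewise needs but leaves implicit.
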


For now, we assume that the Lemmas \ref{lem:pole}, \ref{lem:high_stuff1} and \ref{lem:cut_twice} are true and show why these lemmas together imply the proof of Theorem \ref{thm:highdem1}.
\begin{proof}[Proof of Theorem \ref{thm:highdem1}] 
We consider an alternate (rotated but not shifted) coordinate system by multiplying every vector by an orthonormal matrix such that the new position of the pole is the $t+1$-dimensional vector $(1,0,\dots,0)$ where only the first co-ordinate is non-zero. Let the $t+1$ dimensional vector describing any node $u$ in this new coordinate system be $\hat{u}=(\hat{u}_1,\hat{u}_2,\dots,\hat{u}_{t+1})$. Now consider the hyperplane $L: x_1=\hat{u}_1$ and if $u$ is not connected to the pole already, then by Lemma \ref{lem:high_stuff1} and Lemma \ref{lem:cut_twice}, the node $u$ has a neighbor $u_2$ which has a higher first coordinate ($\hat{u}_2 >\hat{u}_1$). The same analysis applies for $u_2$ and hence we have a path where the first coordinate of every node is higher than the previous node. Since the number of nodes is finite, this path cannot go on indefinitely and at some point, one of the nodes is going to be within $r_2$ of the pole and will be connected to the pole. Therefore every node is going to be connected to the pole and hence our theorem is proved. 
\end{proof}
\section{The Geometric Block Model in one dimension (${\rm GBM_1}$)}
\label{sec:gbm}
In this section, we prove the necessary condition for exact cluster recovery of ${\rm GBM_1}$ and give an efficient algorithm that matches that within a constant factor. 
\subsection{Immediate consequence of ${\rm RAG}_1^\ast$ connectivity}
The following lower bound for ${\rm GBM_1}$ can be obtained as a  consequence of Theorem~\ref{thm:rag}. 

\begin{theorem*}[Impossibility in ${\rm GBM_1}$ in Theorem \ref{gbm:upper}]
 If $a-b < 0.5$ or $a < 1$, then any algorithm to recover the partition in ${\rm GBM_1}(\frac{a \log n}{n},\frac{b \log n}{n})$ will give incorrect output with probability $1-o(1)$.
\end{theorem*}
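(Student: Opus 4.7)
The plan is to show that, when ${\rm RAG}_1^*(n, [r_d, r_s])$ is disconnected---which, by Theorem~\ref{thm:rag}, happens with probability $1-o(1)$ whenever $a<1$ or $a-b<1/2$---there exist two distinct balanced partitions consistent with the same GBM graph, so no algorithm can recover the true partition with probability $1-o(1)$.

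The key structural observation is that in the GBM the edge between $u$ and $v$ depends on the cluster assignment only when $d_L(X_u,X_v)\in(r_d,r_s]$: for $d_L\le r_d$ the edge is always present and for $d_L>r_s$ it is always absent. So the label-sensitive pairs are exactly the edges of ${\rm RAG}_1^*(n,[r_d,r_s])$, and if $S\subseteq V$ is a union of connected components of this RAG then no label-sensitive pair crosses $\partial S$; flipping the labels inside $S$ preserves the GBM graph. Provided in addition that $|S\cap V_1|=|S\cap V_2|$, the flipped partition is balanced and is a genuinely distinct, valid alternative, so conditioned on the graph any algorithm has error probability at least $1/2$; by exhibiting many such swaps, the error in fact becomes $1-o(1)$.

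I would produce such a balanced swap set separately in the two failure regimes of Theorem~\ref{thm:rag}. When $a-b<1/2$, first and second moment arguments show that ${\rm RAG}_1^*(n,[r_d,r_s])$ has $\Theta(n^{1-2(a-b)})\to\infty$ isolated vertices with high probability; under a uniformly random balanced partition some of them lie in $V_1$ and some in $V_2$, and taking $S$ to consist of one isolated vertex from each cluster gives the two-element swap set, since singletons form their own RAG-components. When $a<1$ isolated RAG-vertices may be absent, but a standard max-gap calculation on the uniform points $X_1,\ldots,X_n$ yields $\Theta(n^{1-a})\to\infty$ disjoint vertex-free arcs of length greater than $r_s$, splitting $V$ into superconstantly many RAG-disconnecting pieces $C_1,\ldots,C_m$; to meet the size constraint I would apply a pigeonhole/subset-sum argument on the component imbalances $d_i=|C_i\cap V_1|-|C_i\cap V_2|$, which sum to zero overall, to extract a nontrivial proper subset summing to zero and take $S$ to be the corresponding union. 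The main obstacle is precisely this balancing step: the isolated-vertex case is immediate, while in the regime $a<1$, $a-b\ge 1/2$, components can be large, and one must argue with high probability over the random balanced partition that a nontrivial balanced union of components exists; the cleanest route is to restrict to components of bounded size (which are still plentiful in this regime) and run a finite-range subset-sum argument on their $O(1)$-valued imbalances.
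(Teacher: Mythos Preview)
Your approach is the paper's: both reduce impossibility to disconnectivity of ${\rm RAG}_1^*(n,[r_d,r_s])$ via Theorem~\ref{thm:rag}, and both exhibit $\omega(1)$ components---isolated vertices when $a-b<\tfrac12$, and $\omega(1)$ pieces separated by long gaps when $a<1$ (the paper phrases the latter as ``vertices with no left neighbor,'' Lemma~\ref{lem:disc}; your max-gap count is the same fact). The one presentational difference is that the paper explicitly hands the algorithm the locations $X_u$ as side information---extra information can only help, so impossibility under this oracle implies impossibility without it---and then notes that, given locations, the label-sensitive pairs are exactly the RAG edges, so each RAG component can be flipped independently. Your flip argument is the same observation stated directly; it is well-defined only because you are implicitly conditioning on the locations.

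Where you go beyond the paper is in worrying about the balance constraint $|V_1|=|V_2|$: a flip of $S$ preserves balance only if $|S\cap V_1|=|S\cap V_2|$. The paper simply asserts ``the connected components can be assigned to any of the two parts'' and does not address this. Your isolated-vertex swap for $a-b<\tfrac12$ handles it cleanly. But your proposed fix for the remaining regime $a<1,\ a-b\ge\tfrac12$ does not work as written: the gap-separated pieces have typical size $\Theta(n^a)$, and bounded-size RAG components are \emph{not} plentiful there---for $a-b>\tfrac12$ a first-moment count gives that the expected number of RAG components of any fixed size $k$ is $O\big(n^{1-2(a-b)}(\log n)^{k-1}\big)=o(1)$. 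So the ``finite-range subset-sum on $O(1)$-valued imbalances'' route is unavailable. A correct completion would have to argue directly that among $\omega(1)$ large components with imbalances summing to zero there is, with high probability over the random balanced partition, a nontrivial zero-sum subfamily; this is plausible but it is exactly the step both you and the paper leave unproved.
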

\begin{proof}
Consider the scenario that not only the geometric block model graph  ${\rm GBM_1}(\frac{a \log n}{n},\frac{b \log n}{n})$ was provided to us, but also the random values $X_u \in [0,1]$ for all vertex $u$ in the graph were provided. We will show that we will still not be able to recover the correct partition of the vertex set $V$ with probability at least $0.5$ (with respect to choices of $X_u,~u, v\in V$ and any randomness in the algorithm).

In this situation, the edge $(u,v)$ where $d_L(X_u,X_v) \le \frac{b \log n}{n}$ does not give any new information than $X_u,X_v$. However the edges $(u,v)$ where $\frac{b \log n}{n} \le d_L(X_u,X_v) \le \frac{a \log n}{n}$ are informative, as existence of such an edge will imply that $u$ and $v$ are in the same part. These edges constitute a one-dimensional random annulus graph ${\rm RAG}^\ast_1(n, [\frac{b \log n}{n},\frac{a \log n}{n}])$. But if there are more than two components in this random annulus graph, then it is impossible to separate out the vertices into the correct two parts, as the connected components can be assigned to any of the two parts and the ${\rm RAG}_1^\ast$ along with the location values ($X_u, u \in V$) will still be consistent. 

What remains to be seen that ${\rm RAG}^\ast_1(n, [\frac{b \log n}{n},\frac{a \log n}{n}])$ will have $\omega(1)$ components with high probability  if $a-b < 0.5$ or $a < 1$. This is certainly true when $a-b < 0.5$ as we have seen in Theorem~\ref{gbm:upper}, there can indeed be $\omega(1)$ isolated nodes with high probability. On the other hand, when $a<1$, just by using an analogous argument it is possible to show that there are $\omega(1)$ vertices that do not have any neighbors on the left direction (counterclockwise). We delegate the proof of this claim as Lemma \ref{lem:disc} in the appendix.  
If there are $k$ such vertices, there must be at least $k-1$ disjoint candidates. This completes the proof.
\end{proof}

Indeed, when the locations $X_u$ associated with every vertex $u$ is provided, it is also possible to recover the partition exactly when $a-b > 0.5$ and $a > 1$, matching the above lower bound exactly (see Theorem~\ref{thm:gbmplus}).

\begin{corollary}[${\rm GBM_1}$ with known vertex locations]\label{thm:gbmplus}
Suppose a geometric block model graph ${\rm GBM_1}(\frac{a \log n}{n},\frac{b \log n}{n})$ is provided along with the associated  values of the locations $X_u$ for every vertex $u$. Any algorithm to recover the partition in ${\rm GBM_1}(\frac{a \log n}{n},\frac{b \log n}{n})$ will give \latest{ i) incorrect output with probability $1-o(1)$ if  $a-b < 0.5$ or $a < 1$ and ii) correct output probability $1-o(1)$ if  $a-b > 0.5$ and $a > 1$}. 
\end{corollary}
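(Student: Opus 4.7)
The plan is to prove the iff by handling the two directions separately, with the observation that the impossibility direction is already in our pocket. Specifically, the proof of the impossibility portion of Theorem~\ref{gbm:upper} presented above was constructed by first handing the learner the locations $\{X_u\}$ and then arguing that the partition is still not determined when $a-b<0.5$ or $a<1$; the same argument therefore establishes the impossibility direction of the corollary verbatim, with no extra work.

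For the achievability direction (assume $a-b > 0.5$ and $a > 1$), I would introduce the auxiliary graph $H$ on $V$ whose edges are exactly the pairs $(u,v)$ with $d_L(X_u,X_v) \in (b\log n/n,\, a\log n/n]$. This graph is computable from the known locations alone and is distributed as ${\rm RAG}_1^\ast(n,[b\log n/n,\, a\log n/n])$. The key observation is that every edge of $H$ carries a reliable same-cluster/different-cluster label: for such a pair the distance strictly exceeds the cross-cluster threshold $b\log n/n$, so the GBM-edge cannot arise from a cross-cluster pair; while the distance is at most the within-cluster threshold $a\log n/n$, so the GBM-edge must be present whenever both endpoints share a cluster. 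Consequently, presence of the GBM-edge along an $H$-edge certifies same-cluster, and absence certifies different-cluster.

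The remainder is routine: invoke Theorem~\ref{thm:rag} to conclude that $H$ is connected with probability $1-o(1)$ under $a > 1$ and $a - b > 0.5$; then run BFS on $H$ from any vertex $u_0$ (placed, say, in $V_1$), and propagate cluster labels edge-by-edge according to the $\pm$ readings above. Consistency of the BFS is automatic because the true cluster assignment already supplies a valid signing of every cycle of $H$, and connectedness of $H$ means every vertex is eventually reached, so the BFS returns the true partition up to the unavoidable global swap $V_1 \leftrightarrow V_2$. I do not expect a real obstacle beyond the conceptual step of matching the information content of $H$-edges to the connectivity statement of Theorem~\ref{thm:rag}; all of the actual probabilistic work is inherited from that theorem.
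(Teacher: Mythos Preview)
Your proposal is correct and follows essentially the same approach as the paper: the impossibility direction is inherited from the location-aware lower bound argument, and the achievability direction constructs the auxiliary ${\rm RAG}^\ast_1$ on the informative distance range, reads off same/different-cluster labels from the presence or absence of GBM edges along those pairs, and invokes Theorem~\ref{thm:rag} for connectivity to ensure every vertex is reached. Your BFS label-propagation description and use of the half-open interval $(b\log n/n,\,a\log n/n]$ are minor elaborations on the paper's terser version, but the argument is the same.
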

\begin{proof}
We need to only prove that  it is possible to recover the partition exactly with probability $1-o(1)$ when $a-b > 0.5$ and $a > 1$, since the other part is immediate from the impossibility guarantee in ${\rm GBM_1}$ in Theorem \ref{gbm:upper} (\latest{When $a-b<0.5$ or $a<1$, there will be an isolated node with high probability. It is impossible to identify the cluster for this node.}). For any pair of vertices $u,v$, we can verify if $d(u,v) \in [\frac{b \log n}{n},\frac{a \log n}{n}]$. If that is the case then by just checking in the ${\rm GBM_1}$ graph whether they are connected by an edge or not we can decide whether they belong to the same cluster or not respectively. What remains to be shown that  all vertices can be covered by this procedure. However that will certainly be the case since ${\rm RAG}^\ast_1(n , [\frac{b \log n}{n},\frac{a \log n}{n}])$ is connected with high probability.
\end{proof}



\subsection{A recovery algorithm for ${\rm GBM_1}$}
We now turn our attention to an efficient recovery algorithm for ${\rm GBM_1}$. Intriguingly, we show a simple triangle counting algorithm works well for ${\rm GBM_1}$ and recovers the clusters in the sparsity regime. Triangle counting algorithms are popular heuristics applied to social networks for clustering \citep{easley2012networks}, however they fail in SBM. Hence, this serves as another validation why ${\rm GBM_1}$ are well-suited to model community structures in social networks.

The algorithm is as follows. Suppose we are given a graph $G=(V:|V|=n,E)$ with two disjoint parts, $V_1, V_2 \subseteq V$ generated according to ${\rm GBM_1}(r_s, r_d)$. The algorithm (Algorithm~\ref{alg:alg1}) goes over  all edges  $(u,v)\in E$. It counts the number of triangles  containing the edge  $(u,v)$ by calling the \texttt{process} function that  
counts the number of common neighbors of $u$ and $v$. 


%
 
    \texttt{process} outputs `true' if it is confident that the nodes $u$ and $v$ belong to the same cluster and `false' otherwise. More precisely, if the count is within some prescribed values $E_S$ and $E_D$, it returns `false'\footnote{Note that, the thresholds $E_S$ and $E_D$ refer to the maximum and minimum value of triangle-count for an `inter cluster' edge.}.The algorithm removes the edge on getting a `false' from \texttt{process} function. After processing all the edges of the network, the algorithm is left with a reduced graphs (with certain edges deleted from the original). It then finds the connected components in the graph and returns them  as the parts $V_1$ and $V_2$. 

 \begin{remark}
The algorithm can iteratively maintain the connected components over the processed edges (the pairs for which process function has been called and it returned true) like the union-find algorithm. This reduces the number of queries as the algorithm does not need to  call the \texttt{process} function for the edges which are present in the same connected component. 
 \end{remark}
\captionof{algorithm}{Cluster recovery in ${\rm GBM_1}$}
\begin{algorithmic}[1]\label{alg:alg1}
{
\REQUIRE ${\rm GBM_1}$ $G = (V,E)$, $r_s, r_d$
\FOR {$(u,v)\in E$}
\IF{{\rm process}($u,v,r_s,r_d$)}
\STATE continue
\ELSE
\STATE $E.remove((u,v))$
\ENDIF
\ENDFOR
\RETURN {\rm connectedComponent}$(V,E)$
}
\end{algorithmic}
 
\captionof{algorithm}{\texttt{process}}
\begin{algorithmic}[1]\label{alg:process}
{
\REQUIRE $u$,$v$, $r_s$, $r_d$
\ENSURE  true/false\\
\COMMENT{Comment: When $a>2b$, $f_1=\min\{f: (2b+f)\log \frac{2b+f}{2b}-f > 1\}, f_2=\min\{f : (2b-f)\log \frac{2b-f}{2b}+f > 1$ and $E_S  = (2b +f_1)\frac{\log n}{n}$ and $E_D  = (2b - f_2)\frac{\log n}{n}$}
\STATE count $\leftarrow |\{z: (z,u)\in E, (z,v)\in E\}|$
\IF{$\frac{\text{count}}{n} \ge E_S(r_d,r_s)$ or $\frac{\text{count}}{n} \le E_D(r_d,r_s)$}
\RETURN true
\ENDIF
\RETURN false
}
\end{algorithmic}

\subsection{Analysis of Algorithm~\ref{alg:alg1}}
~\label{sec:theory}
Given a ${\rm GBM_1}$ graph $G(V,E)$ with two clusters $V = V_1 \sqcup V_2$, and  a pair of vertices $u,v \in V$, the events $\cE^{u,v}_z, z \in V$ of any other vertex $z$ being a common neighbor of both $u$ and $v$ given $(u,v) \in E$ are dependent ; however given the distance between the corresponding random variables  \correct{$d_L(X_u,X_v) =x$}, the events 
are independent. This is a crucial observation which helps us to overcome the difficulty of handling correlated edge formation.       

Moreover, given the distance between two nodes $u$ and $v$ are the same, the probabilities of $\cE^{u,v}_z\mid (u,v) \in E$ are different when $u$ and $v$ are in the same cluster and when they are in different clusters. Therefore the count of the common neighbors are going to be different, and substantially separated with high probability 
 for two vertices in cases when they are from the same cluster or from different clusters. However, this may not be the case, if we do not restrict the distance to be the same and look at the entire range of possible distances. 
 
 First, we quote two simple lemmas about the expected value of the commons neighbors.
 
 \begin{lemma}\label{lem:sep}
For any two vertices $u,v \in V_i: (u,v) \in E, i =1,2$ belonging to the same cluster with  $d_L(X_u,X_v) = x$, the count of common neighbors $C_{u,v} \equiv |\{z\in V: (z,u), (z,v) \in E\}|$ is a random variable distributed according to ${\rm Bin}(\frac{n}{2}-2, 2r_s-x)$ when $r_s \geq x> 2r_d$ and according to ${\rm Bin}(\frac{n}{2}-2,2r_s-x)+{\rm Bin}(\frac{n}{2},2r_d-x)$ when $x \leq \min(2r_d,r_s)$, where ${\rm Bin}(n,p)$ is a binomial random variable with mean $np$.
\end{lemma}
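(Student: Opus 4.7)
The plan is to condition on the random positions $X_u, X_v$ with $d_L(X_u, X_v) = x$, and then use the fact that the positions $\{X_z\}_{z \neq u,v}$ are i.i.d.\ uniform on $[0,1]$, independent of $(X_u, X_v)$. Since $u, v \in V_i$ and $x \le r_s$, the event $(u,v) \in E$ is automatically implied by the conditioning, so it adds no further information.

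Next, I would classify a potential common neighbor $z$ by which cluster it lies in. If $z \in V_i \setminus \{u, v\}$ (the same cluster as $u,v$, containing $n/2 - 2$ such vertices), then $(z,u), (z,v) \in E$ iff $X_z$ lies in the intersection $B_L(X_u, r_s) \cap B_L(X_v, r_s)$, where $B_L$ denotes a ball in the toroidal metric $d_L$. In the logarithmic regime, $r_s = o(1)$ so no wrap-around ambiguity occurs, and this intersection is a union of two arcs of total length $2r_s - x$ whenever $x \le 2r_s$. Hence $\Pr[z \in V_i \text{ is a common neighbor} \mid X_u, X_v] = 2r_s - x$. If instead $z \in V_{3-i}$ (the other cluster, containing $n/2$ vertices), the same calculation with radius $r_d$ gives probability $\max(0, 2r_d - x)$: it equals $2r_d - x$ when $x \le 2r_d$ and $0$ when $x > 2r_d$.

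Finally, I would assemble these into the binomial form. Conditioned on $X_u, X_v$, the indicators $\mathbb{1}[z \text{ is a common neighbor}]$ for distinct $z$ are mutually independent because the $X_z$ are independent, and within a cluster they are i.i.d.\ Bernoulli with the probability above. Summing over $z \in V_i \setminus \{u,v\}$ gives a $\mathrm{Bin}(n/2 - 2, 2r_s - x)$, and summing over $z \in V_{3-i}$ gives a $\mathrm{Bin}(n/2, 2r_d - x)$ (degenerate to zero when $x > 2r_d$). These two sums are independent because the underlying $X_z$ variables in $V_i$ and $V_{3-i}$ are independent. Therefore $C_{u,v}$ splits into the two stated cases depending on whether $x > 2r_d$ or $x \le 2r_d$.

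There is no real obstacle here; the only subtlety worth flagging is checking that (i) conditioning on $(u,v) \in E$ adds nothing once the distance $x \le r_s$ is fixed, so no bias is introduced into the distribution of the $X_z$'s, and (ii) in the toroidal geometry the measure of $B_L(X_u, r) \cap B_L(X_v, r)$ is exactly $2r - x$ (and not some wrap-around correction), which holds because $r_s, r_d = O(\log n / n)$ are far below the diameter $1/2$ of the circle. Both checks are immediate, after which the lemma follows.
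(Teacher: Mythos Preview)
Your proposal is correct and follows essentially the same approach as the paper: condition on the positions of $u,v$, compute $\Pr(\cE^{u,v}_z)$ separately for $z$ in the same cluster ($2r_s-x$) and the other cluster ($2r_d-x$ or $0$), and then use independence of the $X_z$'s to assemble the two binomials. Your write-up is in fact slightly more careful than the paper's, since you explicitly note that conditioning on $(u,v)\in E$ is vacuous once $x\le r_s$ and that the two binomial sums are independent.
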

 
\begin{proof}
Without loss of generality, assume $u,v \in V_1$. For any vertex $z \in V$, let $\cE^{u,v}_z \equiv \{(u,z), (v,z) \in E\}$ be the event that $z$ is a common neighbor.
For $z\in V_1$,
\begin{align*}
&\Pr(\cE^{u,v}_z) =\Pr( (z,u) \in E, (z,v) \in E) \\
& = 2r_s - x,
\end{align*}
since $ \dist(X_u,X_v) =x$.
For $z \in V_2$, we have, 
\begin{align*}
&\Pr(\cE^{u,v}_z) =  \Pr( (z,u), (z,v) \in E ) \\
& =  \begin{cases}2r_d-x & \text{ if } x<2r_d \\ 0 & \text{ otherwise} \end{cases} .
\end{align*}
Now since there are $\frac{n}2-2$ points in $V_1 \setminus \{u,v\}$ and $\frac{n}2$ points in $V_2$, we have the statement of the lemma. 
\end{proof} 

In a similar way, we can prove.
 \begin{lemma}\label{lem:sep2}
For any two vertices $u\in V_1,v \in V_2: (u,v) \in E$ belonging to different clusters with $d_L(X_u,X_v) = x$ , the count of common neighbors $C_{u,v} \equiv |\{z\in V: (z,u), (z,v) \in E\}|$ is a random variable distributed according to ${\rm Bin}(n-2,2r_{d})$ when $r_s > 2r_d$ and according to ${\rm Bin}(n-2,\min(r_{s} + r_d -x,2r_d))$ when $r_s \leq 2r_d$ and $x\leq r_d$.
\end{lemma}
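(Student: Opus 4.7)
My plan is to mirror the argument used for Lemma~\ref{lem:sep}, since once we fix the positions of $u$ and $v$, the common-neighbor count is again a sum of independent Bernoullis; only the geometry of the overlap region differs because $u$ and $v$ now sit in different clusters and so the two ``neighborhood arcs'' have different radii.

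First I would fix $X_u, X_v \in [0,1]$ with $d_L(X_u,X_v)=x$ and observe that, conditional on these, the indicators $\cE^{u,v}_z := \mathbb{1}\{(z,u)\in E,\, (z,v)\in E\}$ for distinct $z\in V\setminus\{u,v\}$ are mutually independent, since each depends only on the independent position variable $X_z$ (and on $z$'s cluster label, which is deterministic). Thus $C_{u,v}$ is a sum of independent Bernoullis and I only need to identify the common success probability. For $z\in V_1\setminus\{u\}$ the event $\cE^{u,v}_z$ requires $d_L(X_z,X_u)\le r_s$ and $d_L(X_z,X_v)\le r_d$, so its probability equals the circular-arc length of the intersection of a $2r_s$-arc around $X_u$ and a $2r_d$-arc around $X_v$. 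For $z\in V_2\setminus\{v\}$ the roles of the radii swap, giving the intersection of a $2r_d$-arc around $X_u$ and a $2r_s$-arc around $X_v$; by symmetry this has the same length.

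Second, I would compute this overlap. On the circle (identifying $[0,1]$ endpoints), the intersection length of two arcs of half-widths $r_s$ and $r_d$ whose centers are at circular distance $x$ equals $2r_d$ when $x\le r_s-r_d$ and $r_s+r_d-x$ when $r_s-r_d < x \le r_s+r_d$. Next I would impose the conditioning $(u,v)\in E$, which, since $u$ and $v$ are in different clusters, forces $x\le r_d$. In the regime $r_s>2r_d$ we have $r_s-r_d>r_d\ge x$, so the overlap is always $2r_d$. In the regime $r_s\le 2r_d$ with $x\le r_d$, the threshold $r_s-r_d$ lies inside $[0,r_d]$ and the overlap is exactly $\min(r_s+r_d-x,\,2r_d)$.

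Finally, since there are $|V_1\setminus\{u\}|+|V_2\setminus\{v\}|=n-2$ candidate common neighbors, each contributing an independent Bernoulli with the same success probability $p$ computed above, we conclude $C_{u,v}\sim \mathrm{Bin}(n-2,\,p)$ with $p=2r_d$ when $r_s>2r_d$ and $p=\min(r_s+r_d-x,\,2r_d)$ when $r_s\le 2r_d$ and $x\le r_d$. I do not anticipate any real obstacle; the only subtlety is bookkeeping the arc-intersection formula at its breakpoint $x=r_s-r_d$, which is precisely what forces the two cases stated in the lemma.
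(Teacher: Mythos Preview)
Your proposal is correct and follows essentially the same approach as the paper, which simply states that Lemma~\ref{lem:sep2} is proved ``in a similar way'' to Lemma~\ref{lem:sep}. Your key observation that the overlap probability is the same for $z\in V_1\setminus\{u\}$ and $z\in V_2\setminus\{v\}$ (by swapping the roles of the radii) is precisely what collapses the count into a single binomial ${\rm Bin}(n-2,p)$ rather than a sum of two binomials as in Lemma~\ref{lem:sep}.
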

 


The distribution of the number of common neighbors given $(u,v)\in E$ and $d(u,v)=x$ is given in Table ~\ref{tab:tab1}. 
As throughout this paper, we have assumed that there are only two clusters of equal size. The functions change when the cluster sizes are different. 
 In the table, $u\sim v$ means $u$ and $v$ are in the same cluster. 

\begin{table*}[htbp]
\begin{center}
\resizebox{\textwidth}{!}{
\begin{tabular}{|c|p{3.2cm}|p{3cm}|p{3cm}|p{3.6cm}|} 
 \hline
$(u,v) \in E$ & \multicolumn{2}{c}{Distribution of count ($r_s>2r_d$)}&    \multicolumn{2}{c|}{Distribution of count ($r_s\le 2r_d$)} \\
 $d(u,v) =x$      & $u \sim v,  x\leq r_s$ & $u \nsim v, x\leq r_d$ &  $u \sim v, x\leq r_s$ & $u \nsim v, x\leq r_d$\\
 \hline
 $z\mid (z,u)\in E, (z,v)\in E$   &  {\correct{${\rm Bin}(\frac{n}{2}-2,2r_s-x)+\mathbb{1}\{x\leq 2r_d\}{\rm Bin}(\frac{n}{2},2r_d-x)$}} & ${\rm Bin}(n-2,2r_{d})$ &  {\correct{${\rm Bin}(\frac{n}{2}-2,2r_s-x)+{\rm Bin}(\frac{n}{2},2r_d-x)$}}  & \correct{${\rm Bin}(n-2, \min(r_{s}+r_d-x,2r_d))$}\\
\hline
\end{tabular}}
\end{center}
\caption{Distribution of triangle count for an edge $(u,v)$ conditioned on the distance between them $d(u,v) = d_L(X_u, X_v) = x$, when there are two equal sized clusters. Here ${\rm Bin}(n,p)$ denotes a binomial random variable with mean $np$.\label{tab:tab1}}
\end{table*}

At this point note  that, in a ${\rm GBM_1}(r_s,r_d)$ for any edge $u,v$ that do not belong to the same part, the expected total number of common neighbors of  $u$ and $v$ does not depend on  their distance. We will next show that in this case the normalized total number of common neighbors is concentrated around $2r_d$. Therefore, when Algorithm~\ref{alg:alg1} finished removing all the edges, with high probability all the `inter-cluster' edges are removed. However, some of the `in-cluster' edges will also be removed in the process. This is similar to the case when from an ${\rm RAG}^\ast_1(n, [0,r_s])$, all the edges that correspond to a distance close to $2r_d$ has been removed. 
This situation is shown for the case when $r_s \ge 2r_d$ in Figure~\ref{fig:gbm}.
Finally we show that the edge-reduced  ${\rm RAG}^\ast_1(n, [0,r_s])$ is still connected under certain condition. In  what follows we will assume the ${\rm GBM_1}(r_s,r_d)$ with $r_s\ge 2r_d$. .The other case of $r_s < 2r_d$ is similar.

In the next lemma, we show a concentration result for the count made in \texttt{process}.

\begin{figure}
\centering
\begin{tikzpicture}


\draw[->] (0,0)--(5,0); 
\draw[->] (0,0)--(0,5); 
\draw[blue, ultra thick] (0,4.2)--(2.4,1.8); 
\draw[blue, ultra thick] (2.4,1.8)--(3,1.5); 
\draw[red, ultra thick, dashed] (0,2.4)--(1.2,2.4); 
\draw[red,<->] (0,-0.5)--(1.2,-0.5); 
\draw[red,<->] (1.2,-0.5)--(2.4,-0.5); 
\draw[violet, <->] (0,-0.7)--(3,-0.7); 
\draw[red,<->] (-0.5,0)--(-0.5,1.2); 
\draw[red,<->] (-0.5,1.2)--(-0.5,2.4); 
\draw[violet,<->] (-1,0)--(-1,3); 
\draw[red,<->] (-1,3)--(-1,4.2); 
\draw[red,<->] (-1.5,3)--(-1.5,1.8); 
\draw (1.2,0.1)--(1.2,-0.1)node at (1.2,-0.25) {$r_d$}; 
\draw (2.4,0.1)--(2.4,-0.1)node at (2.4,-0.25){$2r_d$}; 
\draw (3,0.1)--(3,-0.1)node at (3,-0.25){$r_s$}; 
\draw[gray,dashed] (3,0)--(3,1.5); 
\draw[gray,dashed] (1.2,0)--(1.2,2.4); 
\draw[gray,dashed] (2.4,0)--(2.4,1.8); 

\draw[gray,dashed] (-1.5,1.8)--(2.4,1.8); 
\draw[gray,dashed] (-1,3)--(1.2,3); 
\node at (-0.3,2.6){$2r_d$}; 
\node at (5,-0.2){$x$}; 
\draw (0.1,4.2)--(-0.1,4.2);
\node at (-0.7,4.4){$r_s+r_d$}; 
\draw[<->, red ] (3,3.5)--(3.5,3.5)node at (3.8,3.5) {$r_d$}; 
\draw[<->, violet ] (3,4)--(3.5,4)node at (3.8,4) {$r_s$}; 
\draw[blue, ultra thick ] (3,3)--(3.5,3)node at (7.5,3) {Intra-cluster edge: $ \begin{cases}
r_s+r_d-x, x\leq 2r_d\\
r_s-x/2, 2r_d< x\le r_s
\end{cases}$}; 
\draw[red, ultra thick, dashed ] (3,2)--(3.5,2)node at (6.5,2) {Inter-cluster edge: $2r_d, 0\le x\le r_d$}; 

\end{tikzpicture}
\caption{Average number of common neighbors of $(u,v) \in E$  for varying values of $d(u,v) =x$ when $r_s\ge2r_d$.\label{fig:gbm}}
\end{figure}
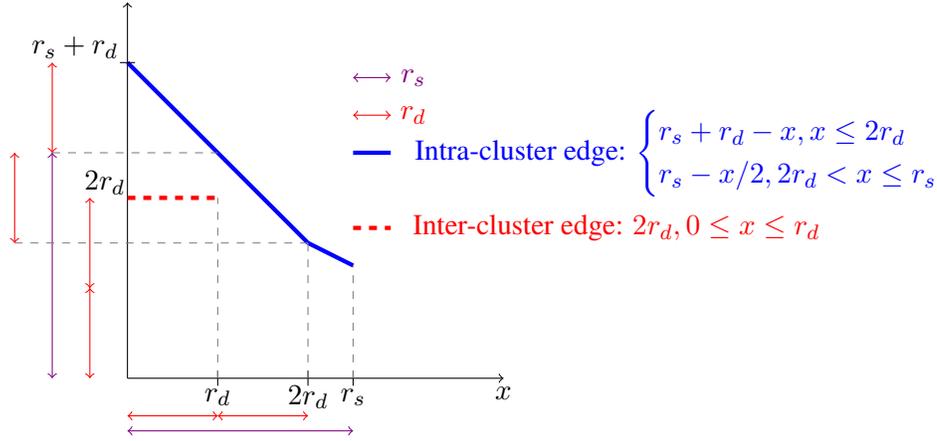

\begin{lemma}\label{lem:defn}
Suppose we are given the graph $G(V,E)$ generated according to ${\rm GBM_1}(r_s\equiv \frac{a\log n}{n},r_d\equiv \frac{b\log n}{n}), a \ge 2b.$ 
Our algorithm with $E_S = (2b + f_1)\frac{\log n}{n}$ and $E_D = (2b - f_2)\frac{\log n}{n}$, removes
all the edges $(u,v)\in E$ such that   $u$ and $v$ are in different parts with probability at least $1-o(1)$, where
\begin{align*}
f_1&=\min\{f: (2b+f)\log \frac{2b+f}{2b}-f > 1\} \\
f_2&=\min\{f : (2b-f)\log \frac{2b-f}{2b}+f > 1\}.
\end{align*}
\end{lemma}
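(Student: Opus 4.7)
The plan is to show that for every inter-cluster edge the count of common neighbors concentrates tightly around its mean $2b\log n$, falling strictly inside $[E_D n,\, E_S n]$ with high probability. First, by Lemma~\ref{lem:sep2}, since $a \ge 2b$ (equivalently $r_s \ge 2r_d$), for any inter-cluster edge $(u,v)$ with $u \in V_1, v \in V_2$ and $d_L(X_u,X_v) = x \le r_d$, the common-neighbor count $C_{u,v}$ is distributed as ${\rm Bin}(n-2,\, 2r_d)$. Crucially, because $r_s - r_d \ge r_d \ge x$, the $r_d$-ball around $v$ lies inside the $r_s$-ball around $u$, so the per-vertex inclusion probability is exactly $2r_d$ regardless of the value of $x$. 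This $x$-independence is what lets a single pair of thresholds work uniformly across all inter-cluster edges.

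Next, I would apply the Chernoff bound in its KL-divergence form: for $X \sim {\rm Bin}(N,p)$,
\begin{equation*}
\Pr[X \ge qN] \le \exp(-N D(q\|p)) \text{ if } q > p, \qquad \Pr[X \le qN] \le \exp(-N D(q\|p)) \text{ if } q < p,
\end{equation*}
where $D(q\|p) = q\log(q/p) + (1-q)\log((1-q)/(1-p))$. Taking $N = n-2$, $p = 2r_d = 2b\log n/n$, and $q = (2b+f_1)\log n/(n-2)$, I would Taylor-expand $(1-q)\log((1-q)/(1-p)) = (p-q) + O((\log n/n)^2)$, using that $p,q = O(\log n/n)$. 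This yields
\begin{equation*}
N \cdot D(q\|p) \;=\; \bigl[(2b+f_1)\log\tfrac{2b+f_1}{2b} - f_1\bigr]\log n + o(1),
\end{equation*}
and an analogous computation for $q = (2b-f_2)\log n/(n-2)$ produces exponent $[(2b-f_2)\log((2b-f_2)/(2b)) + f_2]\log n + o(1)$. By the definitions of $f_1$ and $f_2$, both coefficients strictly exceed $1$; call the smaller one $\kappa > 1$.

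Finally, I would union-bound over all at most $n^2/4$ inter-cluster pairs, weighting each by the unconditional edge probability $2r_d = 2b\log n/n$:
\begin{equation*}
\Pr\bigl[\exists \text{ inter-cluster } (u,v) \in E : C_{u,v} \notin [E_D n,\, E_S n]\bigr] \;\le\; \tfrac{n^2}{4} \cdot \tfrac{2b\log n}{n} \cdot 2 n^{-\kappa} \;=\; O\bigl(n^{1-\kappa}\log n\bigr) \;=\; o(1).
\end{equation*}
Hence with probability $1 - o(1)$ every inter-cluster edge triggers the `between-the-thresholds' branch of \texttt{process}, and Algorithm~\ref{alg:alg1} deletes it.

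The only delicate point is ensuring the $o(1)$ error in the expansion of $N \cdot D(q\|p)$ is genuinely swallowed by the strict inequality $\kappa > 1$; this is precisely why the definitions of $f_1, f_2$ require $>1$ rather than $\ge 1$, giving a fixed positive gap $\kappa - 1$ that absorbs the $\log n$ factor coming from the union bound. A minor bookkeeping point is that conditioning on $(u,v) \in E$ constrains only $X_u, X_v$ (through $d_L(X_u,X_v) \le r_d$), leaving the other $n-2$ positions iid uniform on $[0,1]$, so the binomial model for $C_{u,v}$ genuinely holds in the conditional measure and the per-vertex indicators are independent.
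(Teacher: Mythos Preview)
Your proof is correct and follows essentially the same approach as the paper: invoke Lemma~\ref{lem:sep2} to get $C_{u,v}\sim{\rm Bin}(n-2,2r_d)$ for inter-cluster edges, apply the Chernoff/KL tail bound, expand $nD(\cdot\|\cdot)$ to extract the coefficient $(2b\pm f)\log\frac{2b\pm f}{2b}\mp f$, and union-bound. The one cosmetic difference is in the final step: the paper first argues separately (via a degree Chernoff bound) that the total edge count is $O(n\log n)$ with high probability and then union-bounds over realized edges with per-edge failure $O(1/(n\log^2 n))$, whereas you union-bound directly over all $n^2/4$ potential inter-cluster pairs weighted by the edge probability $2r_d$; your route is slightly more economical since it avoids the auxiliary edge-count lemma, but the two are equivalent in substance.
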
 

\begin{proof}
Here we will use the fact that for $a \geq 1$, the number of edges in ${\rm GBM_1}(r_s\equiv \frac{a\log n}{n},r_d\equiv \frac{b\log n}{n})$ is $O(n\log{n})$ with probability $1-\frac{1}{n^{\Theta(1)}}$. Consider any vertex $u \in V_1$ (symmetrically for $u \in V_2$), since the vertices are thrown uniformly at random in $[0,1]$, the probability that a $v \in V_1$, $v \neq u$, is a neighbor of $u$ is $\frac{a\log{n}}{n}$, and for $v \in V_2$, the corresponding probability is $\frac{b\log{n}}{n}$. Therefore, the expected degree of $u$ is $\frac{(a+b)}{2}\log{n}$. By a simple Chernoff bound argument, the degree of $u$ is therefore $O(\log{n})$ with probability $1-\frac{1}{n^c}$ for $c \geq 2$. By union bound over all the vertices, the total number of edges is $O(n\log{n})$ with probability $1-\frac{1}{n}$.

Let $Z$ denote the random variable that equals the number of common neighbors of two nodes $u,v \in V: (u,v) \in E$ such that $u,v$ are from different parts of the GBM. Using Lemma \ref{lem:sep2}, we know that $Z$ is sampled from the distribution ${\rm Bin}(n-2,2r_d)$, where $r_d = \frac{b\log n}{n}$. Therefore,
\begin{align*}
\Pr(Z \ge nE_S) \le \sum_{i=nE_S}^{n} {n \choose i}(2r_d)^{i}(n-2r_d)^{n-i} \le \exp\Big(-nD\Big((2b + f_1)\frac{\log n}{n}\|\frac{2b\log n}{n}\Big)\Big),
\end{align*}
where $D(p\|q)\equiv p\log\frac{p}{q}+ (1-p)\log \frac{1-p}{1-q}$ is the KL divergence between Bernoulli($p$) and Bernoulli($q$) distributions.
It is easy to see that,
\begin{align*}
nD(\frac{\alpha \log n}{n}||\frac{\beta \log n}{n})= \Big(\alpha\log\frac{\alpha}{\beta}+(\alpha-\beta)\Big)\log n-o(\log n).
\end{align*}
Therefore $\Pr(Z \ge nE_S) \le \frac{1}{n(\log n)^2}$ because 
$
(2b+f_1)\log\frac{2b+f_1}{2b}-f_1 > 1.
$
Similarly, we have that 
\begin{align*}
\Pr(Z \le nE_D) \le \sum_{i=0}^{nE_D} {n \choose i}(2r_d)^{i}(n-2r_d)^{n-i} \le \exp(-nD((2b-t)\frac{\log n}{n}\|\frac{2b\log n}{n}))\le \frac{1}{n(\log n)^2}.
\end{align*}
So all of the inter-cluster edges will be removed by Algorithm~\ref{alg:alg1}   with probability $1 - O(\frac{n \log n}{n(\log n)^2}) =1 -o(1)$, as with probability $1-o(1)$ the total number of edges in the graph is $O(n \log n)$.
\end{proof}
 
After Algorithm~\ref{alg:alg1}  finishes, in the edge-reduced ${\rm GBM_1}(\frac{a\log n}{n},\frac{b\log n}{n})$, all the edges are `in-cluster' edges with high probability. However some of the  
`in-cluster' edges are also deleted, namely, those that has a count of common neighbors between $E_S$ and $E_D$. In the next two lemmas, we show the necessary condition on  the `in-cluster' edges such that they do not get removed by Algorithm~\ref{alg:alg1}.

\begin{lemma}
Suppose we have the graph $G(V,E)$ generated according to ${\rm GBM_1}(r_s \equiv \frac{a\log n}{n},r_d\equiv \frac{b\log n}{n}), a \ge 2b$. Define $f_1,f_2, E_D,E_S$ as in  Lemma~\ref{lem:defn}. Consider an edge $(u,v) \in E$ where $u,v$ belong to the same part of the GBM and let  $d(u,v)\equiv x \equiv \frac{\theta \log n}{n}$. Suppose $\theta$ satisfies {\bf either} of the following conditions:
\begin{enumerate}
\item $
\frac{1}{2}\Big((4b+2f_1)\log \frac{4b+2f_1}{2a-\theta}+2a-\theta-4b-2f_1\Big) > 1 \quad \text{ and } \theta \le 2a-4b-2f_1
$
\item $
\frac{1}{2}\Big((4b-2f_2)\log \frac{4b-2f_2}{2a-\theta}+2a-\theta-4b+2f_2\Big) > 1
\quad \text{and} \quad 
a \ge \theta \ge \max\{2b,2a-4b+2f_2\}.
$
\end{enumerate}
Then Algorithm~\ref{alg:alg1} with $E_S  = (2b +f_1)\frac{\log n}{n}$ and $E_D  = (2b - f_2)\frac{\log n}{n}$ will not remove this edge with probability at least $1- O(\frac1{n (\log n)^2})$.
%
%
\label{lem:esed}
\end{lemma}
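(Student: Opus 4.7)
\medskip

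\noindent\textbf{Proof plan for Lemma~\ref{lem:esed}.}

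\medskip

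\noindent\emph{Step 1 (Setup).} Fix an in-cluster edge $(u,v) \in E$ with $d(u,v) = x = \theta\log n/n$. By Lemma~\ref{lem:sep} (and Table~\ref{tab:tab1}), in the regime $a \ge 2b$, the count $C_{u,v}$ of common neighbors is distributed as
\[
C_{u,v} \;\sim\; \mathrm{Bin}\!\Big(\tfrac{n}{2}-2,\; \tfrac{(2a-\theta)\log n}{n}\Big) \;+\; \mathbb{1}\{\theta \le 2b\}\cdot \mathrm{Bin}\!\Big(\tfrac{n}{2},\; \tfrac{(2b-\theta)\log n}{n}\Big).
\]
The edge is removed by Algorithm~\ref{alg:alg1} only if $nE_D < C_{u,v} < nE_S$, so it suffices to show that under condition~(1) we have $C_{u,v} \ge nE_S$ with probability $1 - O(1/(n\log^2 n))$, and under condition~(2) we have $C_{u,v} \le nE_D$ with the same probability.

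\medskip

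\noindent\emph{Step 2 (Condition (1), upper threshold).} Let $W$ denote the first binomial above. Since the second binomial is non-negative, $C_{u,v} \ge W$, hence $\Pr(C_{u,v} < nE_S) \le \Pr(W < nE_S)$. With $N = n/2 - 2$, $p = (2a-\theta)\log n/n$, and $t = nE_S = (2b+f_1)\log n$, the assumption $\theta \le 2a - 4b - 2f_1$ ensures $Np > t$ (mean exceeds threshold), so the Chernoff lower-tail bound gives
\[
\Pr(W \le t) \;\le\; \exp\!\big(-N\,D(t/N\,\|\,p)\big).
\]
Using the standard expansion $D(p'\|p) = p'\log(p'/p) + (1-p')\log\!\tfrac{1-p'}{1-p} = p'\log(p'/p) - p' + p + O(p^2 + p'^2)$ for small $p,p'$, and substituting $p' = t/N \approx (4b+2f_1)\log n/n$, one obtains
\[
N\,D(t/N\,\|\,p) \;=\; \tfrac{1}{2}\Big((4b+2f_1)\log \tfrac{4b+2f_1}{2a-\theta} + (2a-\theta) - (4b+2f_1)\Big)\log n \;-\; o(\log n),
\]
which is precisely the expression in condition~(1). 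If the coefficient strictly exceeds $1$, the bound is $n^{-(1+\Omega(1))} = o(1/(n\log^2 n))$.

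\medskip

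\noindent\emph{Step 3 (Condition (2), lower threshold).} The hypothesis $\theta \ge 2b$ eliminates the second binomial, so $C_{u,v} = W$ exactly. The hypothesis $\theta \ge 2a - 4b + 2f_2$ forces the mean $Np \approx (2a-\theta)\log n/2$ to lie at or below $nE_D = (2b-f_2)\log n$, so we apply the Chernoff upper-tail bound
\[
\Pr(W \ge nE_D) \;\le\; \exp\!\big(-N\,D(nE_D/N\,\|\,p)\big).
\]
The same KL-expansion as in Step~2, now with $p' = (4b - 2f_2)\log n/n$, yields
\[
N\,D(nE_D/N\,\|\,p) \;=\; \tfrac{1}{2}\Big((4b-2f_2)\log \tfrac{4b-2f_2}{2a-\theta} + (2a-\theta) - (4b-2f_2)\Big)\log n - o(\log n),
\]
matching condition~(2). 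Strict inequality $>1$ again yields $\Pr \le O(1/(n\log^2 n))$.

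\medskip

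\noindent\emph{Main obstacle and remarks.} The technical core is the KL-divergence computation in the logarithmic regime $p = \Theta(\log n/n)$: one must track the first-order term $-p' + p$ of the $(1-p')\log\tfrac{1-p'}{1-p}$ expansion to recover the extra $(2a-\theta) - (4b\pm 2f_{\cdot})$ in the bracket. A minor subtlety is that in Case~1 one cannot use the equality $C_{u,v} = W$ when $\theta \le 2b$; the monotone-coupling argument ($C_{u,v} \ge W$) suffices because the extra binomial only helps the upper-threshold event. Finally, the strict inequalities in the conditions are needed to convert the $\exp(-(1+\Omega(1))\log n)$ bound into the stated $O(1/(n\log^2 n))$; the subsequent union bound over the $O(n\log n)$ edges (as used already in the proof of Lemma~\ref{lem:defn}) is then immediate.
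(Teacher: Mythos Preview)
Your proposal is correct and follows essentially the same approach as the paper: both reduce the common-neighbor count to the single $\mathrm{Bin}(n/2-2,(2a-\theta)\log n/n)$ component (you via the monotone coupling $C_{u,v}\ge W$ for condition~(1) and via $\theta\ge 2b$ for condition~(2), the paper by writing out both cases and then invoking the same binomial tail), and both finish with the Chernoff/KL-divergence bound together with the first-order expansion $D(\alpha\log n/n\,\|\,\beta\log n/n)=(\alpha\log(\alpha/\beta)-\alpha+\beta)\log n/n - o(\log n/n)$. Your handling of the $\theta\le 2b$ subcase in Step~2 via stochastic domination is slightly cleaner than the paper's explicit case split, but the underlying argument is identical.
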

\begin{proof}
Let $Z$ be the number of common neighbors of $u,v$. 
Recall that, $u$ and $v$ are in the same cluster. 
We know from Lemma \ref{lem:sep2} that $Z$ is sampled from the distribution ${\rm Bin}(\frac{n}{2}-2,2r_s-x)+{\rm Bin}(\frac{n}{2},2r_d-x)$ when $x \le 2r_d$, and from the distribution ${\rm Bin}(\frac{n}{2}-2,2r_s-x)$ when $x \ge 2r_d$. 
We have,
\begin{align*}
&\Pr(Z \le nE_S)\\
&=
\begin{cases}
\sum_{i=0}^{nE_S}{\frac{n}{2}-2 \choose i} (2r_s-x)^{i}(1-2r_s+x)^{\frac{n}{2}-i-2}\sum_{j=0}^{nE_S-i}{\frac{n}{2} \choose j} (2r_d-x)^{j}(1-2r_d+x)^{\frac{n}{2}-j} \text{ if $x\le 2r_d$ }\\
\sum_{i=0}^{nE_s}{\frac{n}{2}-2 \choose i} (2r_s-x)^{i}(1-2r_s+x)^{\frac{n}{2}-i} \text{ otherwise} \\
\end{cases} \\
& \le e^{-\frac{n}{2}D(2E_S|| \frac{(2a-\theta)\log n}{n})} \text{ since  } 2a-\theta \ge 4b+2f_1\\
& \le e^{-\frac{n}{2}D(\frac{(4b+2f_1)\log n}{n}|| \frac{(2a-\theta)\log n}{n})} \le \frac{1}{n \log^2 n},
\end{align*}
because of condition 1 of this lemma. Therefore, this edge will not be deleted with high probability.

Similarly, let us find the probability of $Z \ge n E_D = (2b-f_2) \log n.$
Let us just assume the worst case when  $\theta \le 2b$: that the edge is being deleted (see condition 2, this is prohibited if that condition is satisfied).
%
Otherwise, $\theta > 2b$ and,
%
\begin{align*}
\Pr(Z \ge nE_D)&=\sum_{i=nE_D}^{n}{\frac{n}{2}-2 \choose i} (2r_s-x)^{i}(1-2r_s+x)^{\frac{n}{2}-i-2} \\
&\le e^{-\frac{n}{2}D(2E_D\|\frac{(2a-\theta)\log n}{n})} \text{   if   } 2a-\theta \le 4b-2f_2\\
&= e^{-\frac{n}{2}D(\frac{(4b-2f_2)\log n}{n}\|\frac{(2a-\theta)\log n}{n})} \le \frac{1}{n \log^2 n},
\end{align*}
because of condition 2 of this lemma.
\end{proof}

Now we are in a position to prove our main theorem from this part. Let us restate this theorem.

\begin{theorem*}(Sufficient Condition in Theorem \ref{gbm:upper})~
Suppose we have the graph $G(V,E)$ generated according to ${\rm GBM_1}(r_s \equiv \frac{a\log n}{n},r_d\equiv \frac{b\log n}{n}), a \ge 2b$. 
Define,
\begin{align*}
f_1&=\min\{f: (2b+f)\log \frac{2b+f}{2b}-f > 1\} \\
f_2&=\min\{f : (2b-f)\log \frac{2b-f}{2b}+f > 1\}\\
\theta_1 &= \max\{\theta:\frac{1}{2}\Big((4b+2f_1)\log \frac{4b+2f_1}{2a-\theta}+2a-\theta-4b-2f_1\Big) > 1  \text{ and } 0 \le \theta \le 2a-4b-2f_1\}\\
\theta_2 &= \min\{\theta: \frac{1}{2}\Big((4b-2f_2\log \frac{4b-2f_2}{2a-\theta}+2a-\theta-4b+2f_2\Big) > 1
 \text{ and }  
a \ge \theta \ge \max\{2b,2a-4b+2f_2\}\}.
\end{align*}
Then, if   $a-\theta_2+\theta_1>2$ or $a-\theta_2 > 1, a>2$, Algorithm~\ref{alg:alg1} with $E_S  = (2b +f_1)\frac{\log n}{n}$ and $E_D  = (2b - f_2)\frac{\log n}{n}$ will recover the correct partition  in the GBM with  probability $1-o(1)$  . 
\end{theorem*}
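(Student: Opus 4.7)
The plan is to combine the two edge-filtering lemmas with the connectivity result for random annulus graphs with ``patches'' (Corollary~\ref{cor:patch}), applied separately to each of the two clusters. The argument proceeds in three stages.

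First, I would establish that after Algorithm~\ref{alg:alg1} finishes, with probability $1-o(1)$ the surviving edges are exactly a subset of the intra-cluster edges whose endpoint-distances lie in a specific union of intervals. Concretely, Lemma~\ref{lem:defn} already shows that every inter-cluster edge is removed with probability $1-o(1)$. For the intra-cluster edges, Lemma~\ref{lem:esed} says that an intra-cluster edge $(u,v)$ with $d(u,v) = \tfrac{\theta \log n}{n}$ is retained with probability $1 - O(1/(n \log^2 n))$ whenever $\theta \in [0,\theta_1] \cup [\theta_2, a]$. Since the total number of edges in the graph is $O(n \log n)$ with probability $1-o(1)$ (a standard Chernoff bound on degrees), a union bound gives that with probability $1 - o(1)$ the following holds simultaneously: every surviving edge is intra-cluster, and every intra-cluster edge $(u,v)$ with $d(u,v) \in [0, \tfrac{\theta_1 \log n}{n}] \cup [\tfrac{\theta_2 \log n}{n}, \tfrac{a \log n}{n}]$ survives.

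Second, I would argue that under this event, the two connected components of the reduced graph coincide with $V_1$ and $V_2$. Focus on $V_1$ (the argument for $V_2$ is identical). The $n/2$ points of $V_1$ are i.i.d.\ uniform on $[0,1]$, and after the filtering the induced subgraph on $V_1$ contains all edges whose endpoint-distance lies in $[0, \tfrac{\theta_1 \log n}{n}] \cup [\tfrac{\theta_2 \log n}{n}, \tfrac{a \log n}{n}]$. Writing $m = n/2$ and using $\tfrac{\log n}{n} = \tfrac{\log m}{m}(1+o(1))$, this is distributionally the variant-${\rm RAG}_1^\ast$ model of Corollary~\ref{cor:patch} on $m$ vertices with parameters $c' = \theta_1/2$, $b' = \theta_2/2$, $a' = a/2$ (up to vanishing perturbations in the constants, which do not affect the strict inequalities).

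Third, I would plug into Corollary~\ref{cor:patch}: the resulting graph on $V_1$ is connected with probability $1-o(1)$ provided $a' - b' + c' > 1$, or $a' - b' > 1/2$ and $a' > 1$. Substituting $a' = a/2$, $b' = \theta_2/2$, $c' = \theta_1/2$ yields exactly the two hypotheses $a - \theta_2 + \theta_1 > 2$ or ($a - \theta_2 > 1$ and $a > 2$) assumed in the theorem. The same conclusion holds for $V_2$ by symmetry. Combined with Stage 1, with probability $1-o(1)$ the algorithm's output \texttt{connectedComponent}$(V,E)$ returns precisely $\{V_1, V_2\}$.

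The main obstacle, beyond careful bookkeeping, is the translation in Stage 2 between the $n$ scaling used in the original ${\rm GBM_1}$ parameterization and the $m = n/2$ scaling under which Corollary~\ref{cor:patch} must be invoked, while keeping the strict inequalities $a' - b' + c' > 1$ etc.\ intact. This is handled by absorbing the $1 + o(1)$ factor between $\log n / n$ and $\log m / m$ into the slack of the strict inequalities in the theorem hypothesis. The other subtle point is the conditional independence needed in Lemma~\ref{lem:esed} between different edges sharing a vertex when taking the union bound; but since each deletion event is determined by a triangle count whose failure probability is $o(1/(n \log n))$, a crude union bound over the $O(n \log n)$ potential edges suffices and no independence across edges is required.
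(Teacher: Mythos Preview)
Your proposal is correct and follows essentially the same argument as the paper: combine Lemma~\ref{lem:defn} (all inter-cluster edges removed), Lemma~\ref{lem:esed} with a union bound over the $O(n\log n)$ edges (intra-cluster edges with distance in $[0,\theta_1]\cup[\theta_2,a]$ retained), and then apply Corollary~\ref{cor:patch} to each cluster of size $n/2$ to obtain the connectivity conditions $a-\theta_2+\theta_1>2$ or $(a-\theta_2>1,\ a>2)$. Your explicit handling of the $n$-to-$n/2$ rescaling and the remark that a crude union bound obviates any independence across edges are details the paper glosses over, but the structure is identical.
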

\begin{proof}
From Lemma~\ref{lem:defn}, we know that after Algorithm~\ref{alg:alg1} goes over all the edges, the edges with end-points being in different parts of the GBM are all removed with probability $1-o(1)$. 
There are $O(n \log n)$ edges in the GBM with  probability $1-o(1).$
From Lemma~\ref{lem:esed}, 
we can say that no edge with both ends at the same part is deleted with probability at least $1-o(1)$ (by simply applying a union bound).

After Algorithm~\ref{alg:alg1} goes over all the edges, the remaining edges from a disjoint union of two random annulus graphs of $\frac{n}{2}$ vertices each. For any two vertices $u,v$ in the same part, there will be an edge if $d(u,v) \in [0,\theta_1] \cup [\theta_2, a]$. From Corollary \ref{cor:patch}, it is evident that each of these two parts (each part  is of size $\frac{n}{2}$) will be connected if either $a-\theta_2+\theta_1>2$ or $a-\theta_2 > 1, a>2$.
\end{proof}


 It is also possible to incorporate the result of Corollary \ref{cor:extra} as well to get somewhat stronger recovery guarantee for our algorithm.

\sloppy
\section{High Dimensional GBM: Proof of Sufficient Condition in Theorem \ref{theorem:intro-1}}
\label{sec:sparse-high}
In this section, we show that our algorithm for recovery of clusters in GBM, i.e., Algorithm \ref{alg:alg1} extends to higher dimensions. Recall the precise definition of the high-dimensional GBM:
\begin{definition}[The GBM in High Dimensions]
Given $V = V_1\sqcup V_2, |V_1|=|V_2| = \frac{n}2$,  choose a random vector $X_u$ independently uniformly distributed in $S^t$ for all $u \in V$.
The geometric block model  ${\rm GBM}_t(r_s, r_d)$ with parameters $r_s> r_d$ is a random graph where an edge exists between vertices $u$ and $v$  if and only if,
\begin{align*}
\norm{X_u-X_v}_2 \le r_s & \text{ when } u, v \in V_1 \text{ or } u,v \in V_2\\
\norm{X_u-X_v}_2 \le r_d & \text{ when } u \in V_1, v \in V_2 \text{ or } u\in V_2,v \in V_1.
\end{align*}
\end{definition}

Indeed, for the higher dimensional case the algorithm remains exactly the same, except the value of $E_D$ and $E_S$ in the subroutine \texttt{process} needs to be changed.
Recall that the algorithm proceeds by checking each edge and counting the number of triangle the edge is part of. 
If the count is between $E_D$ and $E_S$ the edge is removed. In this process  we claim to remove all inter-cluster edges with high probability. The main difficulty lies in proving that the original communities remain connected in the redacted graph. For that we crucially use the connectivity results of the high dimensional random annulus graphs (from Section~\ref{sec:hrag}) in somewhat different way that what we do for the one dimensional case.

%


\subsection{Analysis of Algorithm \ref{alg:alg1} in High Dimension}

Let us define a few more terminologies to simplify the expressions for high dimensional space. The volume of a $t$-sphere with unit radius is $ a_t=\frac{2\pi^{t+1/2}}{\Gamma(\frac{t+1}{2})}$. Let the spherical cap ${B}_t(O,r) \subset S^t$ define a region on the surface of this $t$-sphere $S^t$ such that every point $u
\in {B_t}(O,r)$ satisfies $\|u-O\|_2 \le  r$. Let us denote the volume of the spherical cap ${B_t}(O,r)$ normalized with $a_t$  by
$B_t(r)$. Similarly ${B_t}(O,[r_1,r_2])$ refers to a region on the $t$-sphere such that every point $u \in {B}_t(O,[r_1,r_2])$ satisfies $r_1 \le \|u-O\|_2 \le r_2$ and ${B_t}(r_1,r_2)$ refers to the volume normalized by $a_t$.  Now consider two such spherical caps ${B_t}(O_1,r_1)$ and ${B}_t(O_2,r_2)$ such that $d(O_1,O_2)=\ell$. In that case let us define the volume of the intersection of the two aforementioned spherical caps (again normalized by $a_t$) by $\mathcal{V}_t(r_1,r_2,\ell)$.

Let us use $u \sim v$ ($u \nsim v$) to denote $u$ and $v$ belong to the same cluster (different clusters). Let $\cE^{u,v}_z$ denote the event that $z$ is a common neighbor of $u$ and $v$ and $e(u,v)$ denote the event that there is an edge between $u$ and $v$.  Following are some simple observations.

\begin{observation}
\label{obs:1}
$\Pr(e(u,v) \mid u\sim v)=B_t(r_s)$ and $\Pr(e(u,v) \mid u\nsim v)=B_t(r_d)$.
\end{observation}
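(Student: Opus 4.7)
My plan is to unpack the definitions and invoke rotational invariance of the uniform distribution on $S^t$. By the definition of ${\rm GBM}_t(r_s,r_d)$, the locations $X_u, X_v$ are independent and uniformly distributed on $S^t$, and whether an edge $e(u,v)$ appears depends \emph{only} on whether $\|X_u-X_v\|_2$ lies below the appropriate threshold ($r_s$ if $u\sim v$, $r_d$ if $u\nsim v$). So it suffices to show that for independent uniform $X_u,X_v\in S^t$ and any $r\in[0,2]$,
\[
\Pr(\|X_u-X_v\|_2\le r) \;=\; B_t(r).
\]

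To establish this, I would first condition on $X_u=x_0$ for an arbitrary fixed $x_0\in S^t$. Conditioned on this, $X_v$ remains uniform on $S^t$, and the event $\{\|X_u-X_v\|_2\le r\}$ becomes $\{X_v\in \mathcal{B}_t(x_0,r)\}$, i.e.\ $X_v$ lies in the spherical cap of radius $r$ centered at $x_0$. Since $X_v$ is uniform, this conditional probability equals the surface area of $\mathcal{B}_t(x_0,r)$ divided by the total surface area $a_t$ of $S^t$, which by rotational invariance does not depend on the choice of $x_0$ and equals $B_t(r)$ by the definition of $B_t(\cdot)$ given just before the observation. Integrating out $X_u$ against the uniform measure preserves this value, giving $\Pr(\|X_u-X_v\|_2\le r)=B_t(r)$.

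Applying this with $r=r_s$ yields $\Pr(e(u,v)\mid u\sim v)=B_t(r_s)$, and applying it with $r=r_d$ yields $\Pr(e(u,v)\mid u\nsim v)=B_t(r_d)$, which are exactly the two claims. There is no real obstacle here: the observation is essentially a restatement of the generative model together with the normalization convention for $B_t$, and its only role is to fix notation for the subsequent triangle-count calculations in the high-dimensional analysis.
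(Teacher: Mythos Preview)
Your proposal is correct and matches the paper's treatment: the paper states this as an observation without proof, since it is an immediate consequence of the definition of ${\rm GBM}_t$ and the normalization convention for $B_t(\cdot)$, and your argument via conditioning on $X_u$ and invoking rotational invariance is exactly the natural unpacking of those definitions.
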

\begin{observation}
\label{obs:2}
$\Pr(\cE^{u,v}_z \mid z\sim u, u\sim v \text{ and } \|u-v\|_2=\ell)=\cV_t(r_s,r_s,\ell)$ and $\Pr(\cE^{u,v}_z \mid z\nsim u, u\sim v \text{ and } \|u-v\|_2=\ell)=\cV_t(r_d,r_d,\ell)$.
\end{observation}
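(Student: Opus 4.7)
My plan is to reduce the conditional probability that $z$ is a common neighbor of $u$ and $v$ to the normalized volume of an intersection of two spherical caps, using the definition of the ${\rm GBM}_t$ together with the rotational invariance of the uniform distribution on $S^t$.

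First, I would unpack the event $\cE^{u,v}_z$: by definition it is the joint event $\{e(u,z), e(v,z)\}$. Under the conditioning in the first case ($z\sim u$ and $u\sim v$, so that $z$, $u$, $v$ all lie in the same cluster), the defining rule of the GBM says that $z$ is adjacent to $u$ iff $\|X_z-X_u\|_2\le r_s$ and adjacent to $v$ iff $\|X_z-X_v\|_2\le r_s$. Equivalently, $\cE^{u,v}_z$ reduces to the geometric event $X_z \in B_t(X_u,r_s)\cap B_t(X_v,r_s)$. In the second case ($z\nsim u$ while $u\sim v$), since there are only two clusters we automatically have $z\nsim v$ as well, so both of the edges $(u,z)$ and $(v,z)$ are inter-cluster and the event becomes $X_z \in B_t(X_u,r_d)\cap B_t(X_v,r_d)$.

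Next, I would use that $X_z$ is uniform on $S^t$ and independent of $(X_u,X_v)$: conditioning on an arbitrary but fixed realization of $(X_u,X_v)$ with $\|X_u-X_v\|_2=\ell$, the conditional probability of $\cE^{u,v}_z$ equals the normalized surface measure (volume divided by $a_t$) of the intersection of the two caps of common radius $r\in\{r_s,r_d\}$ centered at $X_u$ and $X_v$. Because the uniform measure on $S^t$ is invariant under orthogonal transformations, this normalized intersection volume depends on $(X_u,X_v)$ only through the Euclidean distance $\ell$ between the two centers, and by definition this is $\cV_t(r,r,\ell)$. Since the value does not depend on the particular realization of $(X_u,X_v)$ with $\|X_u-X_v\|_2=\ell$, integrating out the conditional law of $(X_u,X_v)$ given $\|X_u-X_v\|_2=\ell$ preserves the value, yielding the two claimed identities.

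There is no real obstacle here beyond carefully checking that the two cases correspond to the two different radii $r_s$ and $r_d$ (via the two-cluster structure ensuring $z\nsim u$ forces $z\nsim v$ when $u\sim v$) and that the rotational symmetry argument is applied at the level of the conditional law; both are standard and immediate from the definitions above.
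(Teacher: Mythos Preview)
Your proposal is correct and is exactly the natural argument. The paper does not prove this statement at all---it is labeled an ``Observation'' and is treated as immediate from the definitions of ${\rm GBM}_t$ and of $\cV_t(r_1,r_2,\ell)$---so your write-up simply makes explicit the independence and rotational-invariance reasoning the authors take for granted.
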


In the following proof, we assume $r_s \leq 2r_d$. The other situation where the gap between $r_s$ and $r_d$ is higher is only easier to handle.

\begin{lemma}\label{lem:sepd}
For any two vertices $u,v \in V_i: (u,v) \in E, i =1,2$ such that $d(u,v)=\ell$ belonging to the same cluster, the count of common neighbors $C_{u,v} \equiv |\{z\in V: (z,u), (z,v) \in E\}|$ is a random variable distributed according to ${\rm Bin}(\frac{n}{2}-2,\mathcal{V}_t(r_s,r_s,\ell))$  when $r_s \geq \ell> 2r_d$ and according to ${\rm Bin}(\frac{n}{2}-2,\mathcal{V}_t(r_s,r_s,\ell))+{\rm Bin}(\frac{n}{2},\mathcal{V}_t(r_d,r_d,\ell)$ when $\ell \le 2r_d$.
\end{lemma}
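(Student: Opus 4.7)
The plan is to mirror the argument of Lemma~\ref{lem:sep}, extending it to the spherical setting via the volume quantities $B_t(\cdot)$ and $\mathcal{V}_t(\cdot,\cdot,\cdot)$ already introduced. Without loss of generality I fix $u,v\in V_1$ and condition on the event $\{d(u,v)=\ell\}$, which determines the edge $(u,v)\in E$ deterministically whenever $\ell\le r_s$. The point is that, after this conditioning, the random vectors $\{X_z\}_{z\in V\setminus\{u,v\}}$ are still independent of one another and each uniformly distributed on $S^t$, because each $X_z$ was sampled independently of $X_u,X_v$ in the generative model.

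Next I classify the potential common neighbors by their cluster membership. For $z\in V_1\setminus\{u,v\}$, the event $\mathcal{E}^{u,v}_z$ holds iff $X_z\in B_t(X_u,r_s)\cap B_t(X_v,r_s)$, and by the definition of $\mathcal{V}_t$ together with Observation~\ref{obs:2} this event has probability exactly $\mathcal{V}_t(r_s,r_s,\ell)$, regardless of the particular positions $X_u,X_v$ (by rotational invariance of the uniform distribution on $S^t$, only the gap $\ell$ matters). For $z\in V_2$, the event $\mathcal{E}^{u,v}_z$ holds iff $X_z\in B_t(X_u,r_d)\cap B_t(X_v,r_d)$. If $\ell>2r_d$ the two caps of radius $r_d$ are disjoint, so the probability is $0$ and no vertex of $V_2$ can be a common neighbor. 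If $\ell\le 2r_d$, the probability equals $\mathcal{V}_t(r_d,r_d,\ell)$ by the same rotational symmetry argument.

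Finally, since the $X_z$ are mutually independent (even after conditioning on $X_u,X_v$), the indicators $\mathbf{1}\{\mathcal{E}^{u,v}_z\}$ over $z\in V_1\setminus\{u,v\}$ are i.i.d.\ Bernoulli$(\mathcal{V}_t(r_s,r_s,\ell))$, contributing a $\mathrm{Bin}(n/2-2,\mathcal{V}_t(r_s,r_s,\ell))$ term; and over $z\in V_2$ they are i.i.d.\ Bernoulli$(\mathcal{V}_t(r_d,r_d,\ell))$, contributing $\mathrm{Bin}(n/2,\mathcal{V}_t(r_d,r_d,\ell))$ when $\ell\le 2r_d$ and nothing when $\ell>2r_d$. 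The two binomials involve disjoint index sets $V_1\setminus\{u,v\}$ and $V_2$, so they are independent, which gives the stated distribution of $C_{u,v}$.

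There is no genuine obstacle: the one thing to be careful about is the conditioning step, to make sure that fixing $d(u,v)=\ell$ does not bias the positions of the remaining $n-2$ vertices; this follows because in the GBM generative model the $X_w$'s are drawn independently before any edges are declared, so conditioning on $X_u,X_v$ (or only on $\|X_u-X_v\|_2$) leaves the law of the other $X_z$'s unchanged. Everything else reduces to identifying the two pairwise-intersection volumes $\mathcal{V}_t(r_s,r_s,\ell)$ and $\mathcal{V}_t(r_d,r_d,\ell)$, which are exactly the quantities defined just before the lemma.
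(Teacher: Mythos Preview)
Your proposal is correct and follows essentially the same approach as the paper: fix $u,v\in V_1$, split potential common neighbors by cluster, identify each indicator's success probability as the relevant cap-intersection volume via Observation~\ref{obs:2}, and use the triangle inequality (disjointness of the $r_d$-caps when $\ell>2r_d$) to rule out contributions from $V_2$ in that regime. If anything, your write-up is more careful than the paper's in spelling out the conditioning step and the independence of the two binomial summands.
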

\begin{lemma}\label{lem:sep2d}
For any two vertices $u\in V_1,v \in V_2: (u,v) \in E$ such that $\|u-v\|_2=\ell$ belonging to different clusters, the count of common neighbors $C_{u,v} \equiv |\{z\in V: (z,u), (z,v) \in E\}|$ is a random variable distributed according to ${\rm Bin}(n-2,B_t(r_d))$ when $r_s > 2r_d$ and according to ${\rm Bin}(n-2,\min(\mathcal{V}_t(r_s,r_d,\ell),B_t(r_d)))$ when $r_s \leq 2r_d$ and $\ell \leq r_d$.
\end{lemma}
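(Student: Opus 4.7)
The plan is to condition on the positions $X_u$ and $X_v$ with $\|X_u - X_v\|_2 = \ell$, and compute, for each of the remaining $n-2$ vertices $z$, the probability that $z$ is a common neighbor of $u$ and $v$. Since the locations $X_z$ are i.i.d.\ uniform on $S^t$, the indicators $\mathbb{1}[\cE^{u,v}_z]$ for $z \ne u, v$ are mutually independent Bernoulli random variables conditional on $X_u, X_v$. The key observation driving the Binomial structure is that the per-vertex success probability turns out to be the \emph{same} for $z \in V_1 \setminus \{u\}$ and for $z \in V_2 \setminus \{v\}$; hence the $n/2 - 1$ trials from each side collapse into a single $\mathrm{Bin}(n-2, \cdot)$.

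First I would verify this symmetry. For $z \in V_1$, $z$ is a common neighbor iff $\|X_z - X_u\|_2 \le r_s$ (same-cluster edge to $u$) and $\|X_z - X_v\|_2 \le r_d$ (cross-cluster edge to $v$); conditional on $\|X_u - X_v\|_2 = \ell$, this is the normalized intersection volume of a spherical cap of radius $r_s$ at $X_u$ with a cap of radius $r_d$ at $X_v$, namely $\mathcal{V}_t(r_s, r_d, \ell)$. For $z \in V_2$ the roles of $r_s$ and $r_d$ swap, but by commutativity of intersection the volume is $\mathcal{V}_t(r_d, r_s, \ell) = \mathcal{V}_t(r_s, r_d, \ell)$. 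So the per-vertex success probability is $\mathcal{V}_t(r_s, r_d, \ell)$ in either case.

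Next I would simplify $\mathcal{V}_t(r_s, r_d, \ell)$ under the two case hypotheses, using the fact that $\ell \le r_d$ (which is forced by $(u,v) \in E$ since $u,v$ lie in different clusters). If $r_s > 2 r_d$, then for any $w \in S^t$ with $\|w - X_v\|_2 \le r_d$, the triangle inequality gives $\|w - X_u\|_2 \le r_d + \ell \le 2 r_d < r_s$, so the $r_d$-cap around $X_v$ sits entirely inside the $r_s$-cap around $X_u$. The intersection therefore equals the smaller cap, $\mathcal{V}_t(r_s, r_d, \ell) = B_t(r_d)$, giving $\mathrm{Bin}(n-2, B_t(r_d))$. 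In the case $r_s \le 2 r_d$ containment may fail, but the intersection is always contained in the smaller cap, so its normalized volume is bounded by $B_t(r_d)$, and the success probability equals $\min(\mathcal{V}_t(r_s, r_d, \ell), B_t(r_d))$, matching the statement.

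The only subtlety, more bookkeeping than obstacle, is the independence step: one must condition on both $X_u$ and $X_v$ (equivalently, on $(u,v) \in E$ together with $\|X_u - X_v\|_2 = \ell$) so that the remaining $X_z$ are still i.i.d.\ uniform and the common-neighbor indicators are genuinely independent, not merely identically distributed across $V_1$ and $V_2$. This is precisely what allows the two sub-populations to merge into a single $\mathrm{Bin}(n-2, \cdot)$, rather than the sum of two distinct binomials that appeared in the same-cluster case of Lemma~\ref{lem:sepd}.
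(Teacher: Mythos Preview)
Your proposal is correct and follows essentially the same approach the paper takes (the paper only says ``The proof of Lemma~\ref{lem:sep2d} is similar'' to that of Lemma~\ref{lem:sepd}, which in turn mirrors the 1D argument of Lemmas~\ref{lem:sep}--\ref{lem:sep2}). In fact you make explicit the key symmetry step---that $\mathcal{V}_t(r_s,r_d,\ell)=\mathcal{V}_t(r_d,r_s,\ell)$, so vertices from $V_1$ and $V_2$ have the same common-neighbor probability and the two sub-binomials merge into a single $\mathrm{Bin}(n-2,\cdot)$---which the paper leaves implicit.
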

\begin{proof}[Proof of Lemma \ref{lem:sepd}]
Without loss of generality, assume $u,v \in V_1$. In order for $(u,v) \in E$, we must have $r_s \geq \ell$. Now there are two cases to consider, $\ell > 2r_d$ and $\ell \leq 2r_d$. In case 1, for $z$ to be a common neighbor of $u$ and $v$, $z$ must be in $V_1$ by triangle inequality. Since, there are $\frac{n}{2}-2$ points in $V_1 \setminus \{u,v\}$, from Observation~\ref{obs:2}, $\cE^{u,v}_z) \sim {\rm Bin}(\frac{n}{2}-2,\mathcal{V}_t(r_s,r_s,\ell))$. In case 2, $z$ can also be part of $V_2$ and there are $\frac{n}{2}$ points in $V_2$, thus again from Observation~\ref{obs:2}, $\cE^{u,v}_z) \sim
{\rm Bin}(\frac{n}{2}-2,\mathcal{V}_t(r_s,r_s,\ell))+{\rm Bin}(\frac{n}{2},\mathcal{V}_t(r_d,r_d,\ell)$. 
\end{proof}
The proof of Lemma \ref{lem:sep2d} is similar. We now use the following version of the Chernoff bound to estimate the deviation on the number of common neighbors in the two cases: $u \sim v$ and $u \nsim v$.

\begin{lemma}[Chernoff Bound]
Let $X_1, \ldots, X_n$ be iid random variables in $\{0,1\}$. Let $X$ denote the sum of these $n$ random variables. Then for any $\delta>0$,
\[
 \begin{cases}
\Pr(X > (1+\delta) \avg(X)) \le e^{-\delta^2\avg(X)/3}  = \frac{1}{n\log^2 n}, \text{ when } \delta = \sqrt{\frac{3(\log{n}+2\log\log n)}{\avg(X)}},\\
\Pr(X < (1-\delta)\avg(X)) \leq e^{-\delta^2\avg(X)/2} = \frac{1}{n\log^2 n},\text{ when } \delta = \sqrt{\frac{2(\log{n}+2\log\log n)}{\avg(X)}}.
\end{cases}
\]
\end{lemma}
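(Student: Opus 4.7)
The plan is to invoke the standard exponential moment (Chernoff's) method to derive both the upper and lower multiplicative tail bounds, and then verify that the prescribed choice of $\delta$ equates $e^{-\delta^2 \avg(X)/3}$ (resp.\ $e^{-\delta^2 \avg(X)/2}$) with $1/(n\log^2 n)$.

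First, for the upper tail, I would apply Markov's inequality to the non-negative random variable $e^{sX}$, with parameter $s>0$:
\[
\Pr(X > (1+\delta)\mu) = \Pr\bigl(e^{sX} > e^{s(1+\delta)\mu}\bigr) \le e^{-s(1+\delta)\mu}\, \avg[e^{sX}],
\]
where $\mu = \avg(X)$. By independence, $\avg[e^{sX}] = \prod_{i=1}^n \avg[e^{sX_i}]$, and for each Bernoulli $X_i$ with mean $p_i$ we have $\avg[e^{sX_i}] = 1 + p_i(e^s-1) \le e^{p_i(e^s-1)}$. Multiplying gives $\avg[e^{sX}] \le e^{\mu(e^s-1)}$. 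Choosing $s = \ln(1+\delta)$ to optimize yields the classical multiplicative form $\Pr(X > (1+\delta)\mu) \le \exp\bigl(-\mu[(1+\delta)\ln(1+\delta) - \delta]\bigr)$, and the elementary calculus inequality $(1+\delta)\ln(1+\delta) - \delta \ge \delta^2/3$ valid on $\delta \in [0,1]$ then yields the desired $\Pr(X > (1+\delta)\mu) \le e^{-\delta^2\mu/3}$. The lower tail is handled symmetrically with $s<0$, using the companion inequality $(1-\delta)\ln(1-\delta) + \delta \ge \delta^2/2$ on $\delta \in [0,1]$, giving $\Pr(X < (1-\delta)\mu) \le e^{-\delta^2\mu/2}$.

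Once these two bounds are in hand, the identities in the lemma are obtained by direct substitution: imposing $\delta^2 \mu/3 = \log n + 2\log\log n$ forces $\delta = \sqrt{3(\log n + 2\log\log n)/\mu}$ and makes the upper-tail bound exactly $\exp\bigl(-(\log n + 2\log\log n)\bigr) = \tfrac{1}{n}\cdot\tfrac{1}{\log^2 n}$; similarly $\delta^2 \mu/2 = \log n + 2\log\log n$ gives $\delta = \sqrt{2(\log n + 2\log\log n)/\mu}$ for the lower tail.

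There is no genuine obstacle here since this is a textbook result; the only point of care is that the elementary inequalities bounding $(1\pm\delta)\ln(1\pm\delta) \mp \delta$ from below by $\delta^2/3$ and $\delta^2/2$ require $\delta \in [0,1]$, which is the regime in which this lemma will be invoked in the subsequent Section~\ref{sec:sparse-high} analysis of the high-dimensional GBM (where $\avg(X)$ grows at least logarithmically with $n$, so the displayed $\delta$ indeed tends to $0$).
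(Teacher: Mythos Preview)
Your proof is correct and follows the standard exponential-moment derivation of the multiplicative Chernoff bound. The paper itself does not prove this lemma at all; it is simply stated as a known result and used as a tool in the subsequent analysis, so there is nothing to compare your argument against.
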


We take $E_S = c^{(t)}_s\cdot (B_t(r_d)n+\sqrt{6B_t(r_d) n\log n})$ and $E_D=c^{(t)}_d \cdot( n\mathcal{V}_t(r_s,r_d,r_d)-\sqrt{2nB_t(r_d)\log n })$ where $c^{(t)}_s \geq 1$ and $c^{(t)}_d \leq 1$ are suitable constants that depend on $t$.

\begin{lemma}
For any pair of nodes $(u,v)=e\in E,\  u\nsim v$, the \texttt{process} algorithm removes the edge $e$ with a probability of $1-O\left(\frac{1}{n\log^2 n}\right)$ when $E_S \geq B_t(r_d)n+\sqrt{6B_t(r_d) n\log n}$ and $E_D \leq n\mathcal{V}_t(r_s,r_d,r_d)-\sqrt{2nB_t(r_d)\log n }$. \label{lem:diff}
\end{lemma}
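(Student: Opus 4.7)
The plan is to apply the two Chernoff inequalities stated just before the lemma to the conditional binomial distribution of $C_{u,v}$ from Lemma~\ref{lem:sep2d}. Fix an inter-cluster edge $(u,v)$ and condition on $\ell := \|X_u - X_v\|_2$; since $(u,v) \in E$ with $u \nsim v$, we have $\ell \leq r_d$. By Lemma~\ref{lem:sep2d}, $C_{u,v} \sim {\rm Bin}(n-2, p(\ell))$, where $p(\ell) = B_t(r_d)$ when $r_s > 2r_d$ and $p(\ell) = \min(\mathcal{V}_t(r_s,r_d,\ell), B_t(r_d))$ otherwise. Writing $\mu := (n-2)\,p(\ell)$, my first task is to establish the deterministic envelope
\[
(n-2)\,\mathcal{V}_t(r_s,r_d,r_d) \;\leq\; \mu \;\leq\; (n-2)\,B_t(r_d),
\]
where the upper bound is immediate, and the lower bound uses that $\mathcal{V}_t(r_s,r_d,\ell)$ is non-increasing in $\ell \in [0, r_d]$ (as two spherical caps whose centers drift apart can only have shrinking intersection) together with $\mathcal{V}_t(r_s,r_d,\ell) \leq B_t(r_d)$ when $r_s > 2r_d$ forces equality at $\ell = r_d$ in the easy regime.

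Next, I would apply the upper Chernoff bound with $\delta_1 := \sqrt{3(\log n + 2\log\log n)/\mu}$ to obtain
\[
\Pr\!\bigl[C_{u,v} > \mu + \sqrt{3\mu(\log n + 2\log\log n)}\bigr] \;\leq\; \frac{1}{n\log^2 n}.
\]
Substituting the upper envelope $\mu \leq B_t(r_d)\,n$ in both the linear term and under the square root and absorbing the $\log\log n$ contribution into the constant (valid for $n$ large), this right-hand deviation is at most $B_t(r_d)\,n + \sqrt{6 B_t(r_d)\,n \log n} \leq E_S$. Symmetrically, applying the lower Chernoff tail with $\delta_2 := \sqrt{2(\log n + 2\log\log n)/\mu}$ yields
\[
\Pr\!\bigl[C_{u,v} < \mu - \sqrt{2\mu(\log n + 2\log\log n)}\bigr] \;\leq\; \frac{1}{n\log^2 n}.
\]
Using the lower envelope $\mu \geq (n-2)\mathcal{V}_t(r_s,r_d,r_d)$ on the leading term and the upper envelope $\mu \leq B_t(r_d)\,n$ inside the square root (which only makes the deviation larger), the left-hand threshold is at least $n\,\mathcal{V}_t(r_s,r_d,r_d) - \sqrt{2 B_t(r_d)\,n \log n} - O(\mathcal{V}_t(r_s,r_d,r_d)) \geq E_D$ by hypothesis, the $O(\mathcal{V}_t)$ slack coming from the $n$-vs-$(n-2)$ discrepancy which is negligible in the relevant regime.

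A union bound over the two tails shows that $E_D \leq C_{u,v} \leq E_S$ with probability $1 - O(1/(n \log^2 n))$ conditional on any feasible $\ell$, hence unconditionally as well. On this event, \texttt{process}$(u,v,r_s,r_d)$ returns \emph{false} and the edge is removed, proving the lemma.

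The one non-routine step is the monotonicity claim $\partial \mathcal{V}_t(r_s,r_d,\ell)/\partial \ell \leq 0$ used to justify the lower envelope on $\mu$; this is geometrically clear on the sphere but needs a brief justification (e.g.\ by writing $\mathcal{V}_t$ as an integral over the lens-shaped intersection of two spherical caps and noting the integration domain shrinks monotonically as $\ell$ grows). Everything else is a mechanical application of the Chernoff bounds already quoted in the paper, together with the distributional characterization of $C_{u,v}$ provided by Lemma~\ref{lem:sep2d}.
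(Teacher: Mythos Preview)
Your proof is correct and follows essentially the same route as the paper: both condition on $\ell$, invoke Lemma~\ref{lem:sep2d} for the binomial distribution of the common-neighbor count, bound the mean by $(n-2)\mathcal{V}_t(r_s,r_d,r_d)\le\mu\le(n-2)B_t(r_d)$ using that $\mathcal{V}_t(r_s,r_d,\ell)$ is decreasing in $\ell$ and bounded above by $B_t(r_d)$, and then apply the two Chernoff tails with the stated $\delta$'s. Your write-up is slightly more explicit about handling both regimes $r_s>2r_d$ and $r_s\le 2r_d$ and about why monotonicity in $\ell$ holds, but there is no substantive difference in method.
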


\begin{proof}
Let $Z$ denote the random variable for the number of common neighbors of two nodes $u,v \in V: (u,v) \in E, \|u-v\|_2 = \ell, u \nsim v$. From Lemma~\ref{lem:sep2d}, $E[Z]\leq (n -2)B_t(r_d)$. Using the Chernoff bound we know that with a probability of at least $1-\frac{1}{n\log^2 n}$
\begin{align*}
Z\le F_{\nsim}=&\  (n -2)B_t(r_d) + \sqrt{3(\log n+ 2\log \log n)(n-2)B_t(r_d) }\\
=& \  B_t(r_d)n+\sqrt{3B_t(r_d) n\log n}+o(1)\\
\le& \  E_S.
\end{align*}

Moreover again from Lemma~\ref{lem:sep2d}, $E[Z]=(n-2)\min(\mathcal{V}_t(r_s,r_d,\ell),B_t(r_d))$ as we assume $r_s \leq 2r_d$. Hence, with probability of at least  $1-\frac{1}{n\log^2 n}$

\begin{align*}
Z\ge f_{\nsim}= &\ \min_{\ell: \ell \le r_d,r_s \le 2r_d} ( ( n -2)\min(\mathcal{V}_t(r_s,r_d,\ell),B_t(r_d)) - \\
  & \sqrt{2(\log n+ 2\log \log n)(n-2)\min(\mathcal{V}_t(r_s,r_d,\ell),B_t(r_d)) }) \\
&\geq \ \min_{\ell: \ell \le r_d,r_s \le 2r_d} ( ( n -2)\min(\mathcal{V}_t(r_s,r_d,\ell),B_t(r_d)) - \\
  & \sqrt{2(\log n+ 2\log \log n)(n-2)B_t(r_d) }) \,\text{since } \cV_t(r_s,r_d,\ell) \subseteq B_t(r_d) \\
>&\  n\mathcal{V}_t(r_s,r_d,r_d)-\sqrt{2nB_t(r_d)\log n }\, \text{since } \cV_t(r_s,r_d,\ell) \text{ is a decreasing function of } \ell \\
\geq & \  E_D.
\end{align*}

Hence, $E_S \le Z\le E_D$ with a probability of $1-\frac{2}{n\log^2 n}$ for $(u,v)\in E, u \nsim v$. Hence $(u,v)$ gets removed with high probability by the algorithm.
\end{proof}
Applying a union bound, we therefore can assume all inter-cluster edges are removed with probability $1-o(1)$ as there is $O(n\log{n})$ edges. 

In the next two lemmas, we provide two different conditions on $\|u-v\|_2$ when $u \sim v$ such that our algorithm does not remove the edge $(u,v)$.  Then we obtain a sufficient condition for the two communities to remain connected by the edges that are not removed. 


\begin{lemma}
\label{lem:essame-1}
Given a pair of nodes $u,v$ belonging to the same cluster such that $(u,v) \in E$, the \texttt{process} algorithm does not remove the edge $e$ with probability of $1-O\left( \frac{1}{n\log^2 n}\right)$ when $\|u-v\|_2=\ell$ (say) satisfies the following:
\begin{align*}
\frac{n}{2}\Big(\mathcal{V}_t(r_s,r_s,\ell)+\mathcal{V}_t(r_d,r_d,\ell)\Big)- \sqrt{2n\log n}\Big(\sqrt{B_t(r_s)} + \sqrt{B_t(r_d)} \Big) > E_S. 
\end{align*}  
\end{lemma}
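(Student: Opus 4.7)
}
The plan is to show that under the stated condition, the triangle count $Z = C_{u,v}$ exceeds $E_S$ with the required high probability, so that the \texttt{process} routine returns \emph{true} and the edge $(u,v)$ is retained. Because $u\sim v$ and we are in the regime $r_s\le 2r_d$, Lemma~\ref{lem:sepd} tells us that $Z$ is distributed as the independent sum
\[
Z \;=\; X_1+X_2, \qquad X_1\sim \mathrm{Bin}\bigl(\tfrac{n}{2}-2,\mathcal{V}_t(r_s,r_s,\ell)\bigr),\quad X_2\sim \mathrm{Bin}\bigl(\tfrac{n}{2},\mathcal{V}_t(r_d,r_d,\ell)\bigr),
\]
whose mean is $\tfrac{n}{2}\bigl(\mathcal{V}_t(r_s,r_s,\ell)+\mathcal{V}_t(r_d,r_d,\ell)\bigr) - 2\mathcal{V}_t(r_s,r_s,\ell)$, which matches the first term in the hypothesized inequality up to a lower-order additive term.

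The next step is to apply the Chernoff lower tail separately to $X_1$ and $X_2$. Using the version quoted in the preceding Chernoff bound lemma with $\delta_i=\sqrt{2(\log n + 2\log\log n)/\mathbb{E}[X_i]}$, we obtain with probability at least $1-\tfrac{1}{n\log^2 n}$,
\[
X_i \;\ge\; \mathbb{E}[X_i] \;-\; \sqrt{2(\log n + 2\log\log n)\,\mathbb{E}[X_i]}, \qquad i=1,2.
\]
Since $\mathcal{V}_t(r_s,r_s,\ell)\le B_t(r_s)$ and $\mathcal{V}_t(r_d,r_d,\ell)\le B_t(r_d)$ (the intersection of a cap with another cap is contained in the cap itself), we have $\mathbb{E}[X_1]\le \tfrac{n}{2}B_t(r_s)$ and $\mathbb{E}[X_2]\le \tfrac{n}{2}B_t(r_d)$, so the two deviation terms combine, via a union bound, to at most $\sqrt{2n\log n}\bigl(\sqrt{B_t(r_s)}+\sqrt{B_t(r_d)}\bigr) + o(1)$.

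Summing the two inequalities gives, with probability $1-O\!\bigl(1/(n\log^2 n)\bigr)$,
\[
Z \;\ge\; \frac{n}{2}\bigl(\mathcal{V}_t(r_s,r_s,\ell)+\mathcal{V}_t(r_d,r_d,\ell)\bigr) - \sqrt{2n\log n}\bigl(\sqrt{B_t(r_s)}+\sqrt{B_t(r_d)}\bigr) - o(1),
\]
and the hypothesis of the lemma is precisely the statement that this lower bound strictly exceeds $E_S$. Consequently $Z\ge E_S$ with the claimed probability, so \texttt{process}$(u,v,r_s,r_d)$ returns \emph{true} and the edge is not deleted.

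The calculation is routine once the distributional form from Lemma~\ref{lem:sepd} is in hand; the only subtle point is the variance bookkeeping, namely replacing the exact (and $\ell$-dependent) intersection volumes $\mathcal{V}_t(\cdot,\cdot,\ell)$ by the uniform cap-volume upper bounds $B_t(\cdot)$ inside the Chernoff deviation terms. This is what decouples the deviation bound from $\ell$ and yields the clean form on the right-hand side of the hypothesis. No additional geometric arguments are needed beyond the containment $\mathcal{V}_t(r,r,\ell)\subseteq B_t(r)$.
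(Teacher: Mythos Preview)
Your proposal is correct and follows essentially the same approach as the paper: invoke Lemma~\ref{lem:sepd} to write $Z$ as a sum of two independent binomials, apply the Chernoff lower tail to each, bound the deviation terms via $\mathcal{V}_t(r,r,\ell)\le B_t(r)$, and combine by a union bound to conclude $Z>E_S$ with the required probability. Your writeup is in fact a bit more explicit than the paper's about the separate application to $X_1,X_2$ and the role of the containment $\mathcal{V}_t(r,r,\ell)\subseteq B_t(r)$, but the argument is the same.
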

\begin{proof}
Let $Z$ denote the random variable corresponding to the number of common neighbors of $u,v$. Let $\mu_s(\ell)= \avg(Z|u \sim v, d(u,v)=\ell)$. From Lemma~\ref{lem:sepd}, $\mu_s(\ell)=(\frac{n}{2}-2)\cV_t(r_s,r_s,l)+\frac{n}{2}\cV_t(r_d,r_d,l)$.

Using the Chernoff bound, with a probability of $1-O\Big(\frac{1}{n\log^2 n}\Big)$

\begin{align*}
Z &>  
 (n/2-2)\mathcal{V}_t(r_s,r_s,\ell)+n/2\mathcal{V}_t(r_d,r_d,\ell)  - \sqrt{2(\log n+2\log\log n)\mathcal{V}_t(r_s,r_s,\ell)n/2}    \\&- \sqrt{2(\log n+2\log\log n)\mathcal{V}_t(r_d,r_d,\ell)n/2}\\
&\ge n/2\mathcal{V}_t(r_s,r_s,\ell)+n/2\mathcal{V}_t(r_d,r_d,\ell) -\Big(\sqrt{B_t(r_s)} + \sqrt{B_t(r_d)} \Big)\sqrt{2n\log n}\\
&=\frac{n}{2}\Big(\mathcal{V}_t(r_s,r_s,\ell)+\mathcal{V}_t(r_d,r_d,\ell)\Big)- \sqrt{2n\log n}\Big(\sqrt{B_t(r_s)} + \sqrt{B_t(r_d)} \Big).
\end{align*}

Therefore,  Algorithm \ref{alg:alg1} will not delete $e$ if 
\begin{align*}
\frac{n}{2}\Big(\mathcal{V}_t(r_s,r_s,\ell)+\mathcal{V}_t(r_d,r_d,\ell)\Big)- \sqrt{2n\log n}\Big(\sqrt{B_t(r_s)} + \sqrt{B_t(r_d)} \Big) > E_S.
\end{align*}

Note that there exists a maximum value of distance (referred to as $\ell_1$) such that whenever $\|u-v\| \leq \ell_1$, the condition will be satisfied.
\end{proof}

%


\begin{lemma}
\label{lem:edsamed-1}
Given a pair of nodes $u,v$ belonging to the same cluster such that $(u,v) \in E$, the \texttt{process} algorithm does not remove the edge $e$ with probability of $1-O\left( \frac{1}{n\log^2 n}\right)$ when $\ell \equiv \|u-v\|_2$ (say) satisfies the following:
$$\frac{n}{2}\Big(\mathcal{V}_t(r_s,r_s,\ell+\mathcal{V}_t(r_d,r_d,\ell)\Big))+ \sqrt{n\log{n}}\sqrt{[\mathcal{V}_t(r_s,r_s,\ell)+\mathcal{V}_t(r_d,r_d,\ell)]} \leq E_D.$$\end{lemma}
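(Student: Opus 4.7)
The plan is to mirror the strategy of Lemma \ref{lem:essame-1}, except here I want to push the count of common neighbors \emph{down} past the lower threshold $E_D$ rather than up past $E_S$, so I apply the upper tail of Chernoff to $Z = C_{u,v}$ instead of the lower tail. First, by Lemma \ref{lem:sepd}, conditioned on $u \sim v$ and $\|u-v\|_2 = \ell$ the random variable $Z$ is a sum of two independent binomials, and in particular
\[
\mu_s(\ell) \equiv \mathbb{E}[Z \mid u\sim v,\, \|u-v\|_2=\ell] \;=\; \bigl(\tfrac{n}{2}-2\bigr)\mathcal{V}_t(r_s,r_s,\ell) + \tfrac{n}{2}\mathcal{V}_t(r_d,r_d,\ell)
\;\le\; \tfrac{n}{2}\bigl(\mathcal{V}_t(r_s,r_s,\ell)+\mathcal{V}_t(r_d,r_d,\ell)\bigr).
\]
(When $\ell > 2r_d$ the second term just vanishes, so this expression is valid throughout the regime $(u,v)\in E$.)

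Next I invoke the multiplicative Chernoff upper bound stated just before Lemma \ref{lem:essame-1}: with $\delta = \sqrt{3(\log n + 2\log\log n)/\mu_s(\ell)}$ one obtains
\[
\Pr\Bigl(Z > \mu_s(\ell) + \sqrt{3\mu_s(\ell)(\log n + 2\log\log n)}\Bigr) \;\le\; \frac{1}{n\log^2 n}.
\]
Bounding $\mu_s(\ell)$ above by $\tfrac{n}{2}(\mathcal{V}_t(r_s,r_s,\ell)+\mathcal{V}_t(r_d,r_d,\ell))$ inside both the mean term and the square-root absorbs the factor of $3$ and the $\log\log n$ correction (possibly after adjusting the constant implicit in $E_D$), giving, with probability $1-O(1/(n\log^2 n))$,
\[
Z \;\le\; \tfrac{n}{2}\bigl(\mathcal{V}_t(r_s,r_s,\ell)+\mathcal{V}_t(r_d,r_d,\ell)\bigr) + \sqrt{n\log n}\,\sqrt{\mathcal{V}_t(r_s,r_s,\ell)+\mathcal{V}_t(r_d,r_d,\ell)}.
\]

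By the hypothesis of the lemma the right-hand side is at most $E_D$, so $Z \le E_D$ with probability $1-O(1/(n\log^2 n))$. Looking at the \texttt{process} routine, an edge is removed only when the triangle count lies strictly between $E_D$ and $E_S$; since $Z \le E_D$ here, the routine returns \textbf{true} and the edge $(u,v)$ survives. This completes the proof. No serious obstacle is expected: the only mild subtlety is making sure the same bound handles both the $\ell \le 2r_d$ regime (two independent binomials) and the $\ell > 2r_d$ regime (one binomial) uniformly, which is automatic because $\mathcal{V}_t(r_d,r_d,\ell)=0$ for $\ell>2r_d$, and being a bit careful about the constants absorbed into $E_D$.
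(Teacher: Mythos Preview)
Your proposal is correct and follows essentially the same argument as the paper: invoke Lemma~\ref{lem:sepd} to identify the mean of $Z$, apply the upper-tail Chernoff bound to get $Z \le \tfrac{n}{2}(\mathcal{V}_t(r_s,r_s,\ell)+\mathcal{V}_t(r_d,r_d,\ell)) + \sqrt{n\log n}\sqrt{\mathcal{V}_t(r_s,r_s,\ell)+\mathcal{V}_t(r_d,r_d,\ell)}$ with the stated probability, and conclude that the hypothesis forces $Z \le E_D$ so \texttt{process} returns true. Your remark that the two regimes $\ell \le 2r_d$ and $\ell > 2r_d$ are handled uniformly because $\mathcal{V}_t(r_d,r_d,\ell)=0$ in the latter is a nice clarification the paper leaves implicit.
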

\begin{proof}

Let $Z$ denote the random variable corresponding to the number of common neighbors of $u,v$. Let $\mu_s(\ell)= \avg(Z|u \sim v, \|u-v\|_2=\ell)$. 
From Lemma~\ref{lem:sepd}, $\mu_s(\ell)=(\frac{n}{2}-2)\cV_t(r_s,r_s,l)+\frac{n}{2}\cV_t(r_d,r_d,l)$.

Using the Chernoff bound, with a probability of $1-O\Big(\frac{1}{n\log^2 n}\Big)$
\begin{align*}
Z &<  n/2\mathcal{V}_t(r_s,r_s,\ell)+n/2\mathcal{V}_t(r_d,r_d,\ell)  + \sqrt{2(\log n+2\log\log n)[\mathcal{V}_t(r_s,r_s,\ell)+\mathcal{V}_t(r_d,r_d,\ell)](n/2)}   \\
 &\le \frac{n}{2}\Big(\mathcal{V}_t(r_s,r_s,\ell+\mathcal{V}_t(r_d,r_d,\ell)\Big))+ \sqrt{n\log{n}}\sqrt{[\mathcal{V}_t(r_s,r_s,\ell)+\mathcal{V}_t(r_d,r_d,\ell)]}.
\end{align*}
The  \texttt{process} algorithm will not remove $e$ if 
$$\frac{n}{2}\Big(\mathcal{V}_t(r_s,r_s,\ell+\mathcal{V}_t(r_d,r_d,\ell)\Big))+ \sqrt{n\log{n}}\sqrt{[\mathcal{V}_t(r_s,r_s,\ell)+\mathcal{V}_t(r_d,r_d,\ell)]} \leq E_D.$$ 
Note that there exists a minimum value of distance (referred to as $\ell_2$) such that whenever $\|u-v\|_2 \geq \ell_2$, the condition will be satisfied.
\end{proof}

%

\begin{lemma}\label{lem:main1d}
Algorithm \ref{alg:alg1} can identify all edges $(u,v)$ correctly for which $\ell\equiv \|u-v\|_2$ satisfies either of  the following:
\begin{align*}
{\rm Cond. 1:} & \quad \frac{n}{2}\Big(\mathcal{V}_t(r_s,r_s,\ell)+\mathcal{V}_t(r_d,r_d,\ell)\Big)- \sqrt{2n\log n}\Big(\sqrt{B_t(r_s)} + \sqrt{B_t(r_d)} \Big) > E_S   
\end{align*}
or
\begin{align*}
{\rm Cond. 2:} & \quad \frac{n}{2}\Big(\mathcal{V}_t(r_s,r_s,\ell+\mathcal{V}_t(r_d,r_d,\ell)\Big))+ \sqrt{n\log{n}}\sqrt{[\mathcal{V}_t(r_s,r_s,\ell)+\mathcal{V}_t(r_d,r_d,\ell)]} \leq E_D. 
\end{align*}
with probability at least $1-O\Big(\frac{1}{\log n}\Big)$.
\end{lemma}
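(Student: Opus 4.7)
The plan is to combine the three preceding per-edge lemmas (Lemma \ref{lem:diff}, Lemma \ref{lem:essame-1}, and Lemma \ref{lem:edsamed-1}) via a union bound over all edges of the graph. Each of these lemmas says that \emph{for a fixed pair} $(u,v)\in E$, the \texttt{process} routine reaches the correct verdict with probability $1-O(1/(n\log^2 n))$. Turning a per-edge statement into a simultaneous statement over all edges requires that the total number of edges be at most $O(n\log n)$ with high probability, since then the union bound loses only an $O(1/\log n)$ factor.

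First I would control the number of edges. In the logarithmic regime $r_s, r_d = \Theta((\log n/n)^{1/t})$, each vertex has expected degree $\Theta(\log n)$: indeed Observation~\ref{obs:1} gives $\Pr(e(u,v)\mid u\sim v)=B_t(r_s)$ and $\Pr(e(u,v)\mid u\nsim v)=B_t(r_d)$, both of order $(\log n)/n$. A Chernoff bound on the degree of a fixed vertex gives concentration around this mean with probability at least $1-n^{-c}$ for any constant $c\ge 2$, and a union bound over the $n$ vertices shows that $|E|=O(n\log n)$ with probability $1-o(1)$.

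Next I would classify the edges by type and apply the appropriate per-edge lemma. For every edge $(u,v)$ with $u\nsim v$, Lemma \ref{lem:diff} guarantees that \texttt{process} returns \texttt{false} (so that the edge is removed) with probability $1-O(1/(n\log^2 n))$, provided the chosen thresholds $E_S, E_D$ satisfy the inequalities stated there. For every edge $(u,v)$ with $u\sim v$ whose length $\ell=\|X_u-X_v\|_2$ satisfies Cond. 1, Lemma \ref{lem:essame-1} gives that \texttt{process} returns \texttt{true} (edge retained) with the same failure probability; Lemma \ref{lem:edsamed-1} handles the symmetric case of Cond. 2. Summing the failure probabilities over all $O(n\log n)$ edges then yields an overall failure probability of $O(n\log n)\cdot O(1/(n\log^2 n))=O(1/\log n)$, which matches the bound in the statement.

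The only non-routine point is conceptual rather than computational: one must verify that the choices $E_S = c^{(t)}_s(B_t(r_d)n+\sqrt{6B_t(r_d)n\log n})$ and $E_D = c^{(t)}_d(n\cV_t(r_s,r_d,r_d)-\sqrt{2nB_t(r_d)\log n})$ simultaneously meet the hypotheses of Lemma \ref{lem:diff} (so inter-cluster edges are killed) and are compatible with the definitions of $\ell_1, \ell_2$ implicit in Lemmas \ref{lem:essame-1} and \ref{lem:edsamed-1} (so at least one of Cond. 1, Cond. 2 is satisfied by a useful range of intra-cluster distances $\ell$). Given the setup already established in the three preceding lemmas this reduces to plugging the chosen thresholds into each of their hypotheses, after which the union bound above closes the argument.
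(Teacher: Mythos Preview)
Your proposal is correct and follows essentially the same approach as the paper: the paper's own proof is a one-liner that simply says the result ``follows from combining Lemma~\ref{lem:essame-1} and Lemma~\ref{lem:edsamed-1}, and noting that in the connectivity regime, the number of edges is $O(n\log n)$,'' which is exactly the per-edge guarantee plus union-bound structure you spelled out. Your extra invocation of Lemma~\ref{lem:diff} for inter-cluster edges and your final paragraph about checking threshold compatibility are not needed for this particular lemma as stated (the conditions Cond.~1 and Cond.~2 are precisely the hypotheses of Lemmas~\ref{lem:essame-1} and~\ref{lem:edsamed-1}, so the lemma is implicitly about intra-cluster edges being retained), but they do no harm.
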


\begin{proof}
Follows from combining Lemma~\ref{lem:essame-1} and Lemma \ref{lem:edsamed-1}, and noting that in the connectivity regime, the number of edges is $O(n\log{n})$. 
\end{proof}

Let $\ell_1$ be the maximum value of $\|u-v\|_2$ such that Cond 1 is satisfied and $\ell_2$ is the minimum value of $\|u-v\|_2$ such that Cond 2 is satisfied. Also note that $\ell_1 \leq \ell_2$. We now give a condition on $\ell_1$ and $\ell_2$ such the two communities are each connected by the edges $(u,v)$ that satisfy either $\|u-v\|_2 \leq \ell_1$ or $\ell_2 \leq \|u-v\|_2 \leq r_s$.

\begin{lemma}
\label{RAG:cond}
If $(\ell_1/2)^t >  {8(t+1)\psi(t)} \frac{ \log n}{ n}$ 
 then the edges $e$ that satisfy $\|u-v\|_2 \leq \ell_1$ 
  constitute two disjoint connected components corresponding to the two original communities.
\end{lemma}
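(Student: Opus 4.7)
My plan is to reduce this to Theorem~\ref{thm:highdem1} applied separately to each of the two communities. First I will observe that, after the \texttt{process} subroutine has been run on every edge, the surviving edges of length at most $\ell_1$ are, with probability $1-o(1)$, precisely those pairs $(u,v)$ lying in a common community with $\|X_u-X_v\|_2 \le \ell_1$. Indeed, Lemma~\ref{lem:diff} guarantees that every inter-cluster edge is removed with failure probability $O(1/(n\log^2 n))$, and Lemma~\ref{lem:essame-1} guarantees that every intra-cluster edge with $\|X_u-X_v\|_2 \le \ell_1$ is retained with the same failure probability. Since ${\rm GBM}_t(r_s,r_d)$ in the connectivity regime has $O(n\log n)$ edges with high probability, a union bound over all edges establishes the claim.

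Next, since the latent vectors $\{X_u : u\in V_i\}$ are i.i.d.\ uniform on $S^t$, the surviving short-edge graph restricted to community $V_i$ is distributed exactly as an ${\rm RAG}_t(n/2,[0,\ell_1])$. Writing $\ell_1 = a'\bigl(\log(n/2)/(n/2)\bigr)^{1/t}$, the hypothesis $(\ell_1/2)^t > 8(t+1)\psi(t)\,\log n/n$ translates, up to a factor tending to $1$ from replacing $n$ by $n/2$, into $(a'/2)^t \ge 8(t+1)\psi(t)$. This is precisely the sufficient condition of Theorem~\ref{thm:highdem1} applied with $b=0$; the side condition $a'>2b$ is automatic, and the ``$t>1$'' hypothesis is inherited from the high-dimensional setting. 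Theorem~\ref{thm:highdem1} therefore yields that each of the two ${\rm RAG}_t(n/2,[0,\ell_1])$ subgraphs is connected with probability $1-o(1)$. A final union bound over the two communities shows that $V_1$ and $V_2$ are each internally connected by surviving short edges, while no surviving edge crosses between them, so the surviving graph has exactly two connected components, one per community.

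The main obstacle I foresee is purely bookkeeping: matching the constant $8(t+1)\psi(t)$ in the lemma hypothesis (which is stated in terms of $n$) against the constant appearing in Theorem~\ref{thm:highdem1} when we instantiate it on $n/2$ vertices. The ratio of the relevant quantities is $\tfrac{n/2}{\log(n/2)} \big/ \tfrac{n}{\log n} = \tfrac12(1+o(1))$, which is a fixed constant factor. Since Theorem~\ref{thm:highdem1}'s condition is a strict lower bound on a fixed quantity, any such factor can be absorbed by either slightly inflating the constant in the lemma statement or by using $\log(n/2) = \log n - O(1)$ together with the strict inequality in the hypothesis; the asymptotics are unaffected. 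No deeper obstruction arises, because all the heavy lifting (VC-dimension arguments and the pole construction) has already been done in proving Theorem~\ref{thm:highdem1}, and this lemma is essentially a black-box corollary of it together with the deletion guarantees already established.
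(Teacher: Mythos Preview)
Your proposal is correct and follows essentially the same approach as the paper, which gives only a one-line proof citing Theorem~\ref{thm:highdem1}. You supply considerably more detail than the paper does---including the inter-cluster edge removal from Lemma~\ref{lem:diff}, the identification of each community's surviving subgraph with ${\rm RAG}_t(n/2,[0,\ell_1])$, and the $n$ versus $n/2$ constant discrepancy---all of which the paper leaves implicit since the surrounding Theorem~\ref{theorem:intro-1} is stated only at the $\Theta$-level.
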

\begin{proof}
Proof of this lemma follows from the result of connectivity of random annulus graphs (RAG) in dimension $t$, i.e., Theorem~\ref{thm:highdem1}. 
\end{proof}

We now find out the values of $r_s$ and $r_d$ such that $\ell_1$ and $\ell_2$ satisfy the condition of Lemma~\ref{RAG:cond} as well as Cond. 1 and Cond 2. respectively.

\begin{theorem*}(Sufficient Condition in Theorem \ref{theorem:intro-1})
If $r_s=\Theta((\frac{\log{n}}{n})^{\frac{1}{t}})$ and $r_s-r_d=\Omega( (\frac{\log n}{n})^{\frac{1}{t}})$, Algorithm \ref{alg:alg1} recovers the clusters with probability $1-o(1)$. 
\end{theorem*}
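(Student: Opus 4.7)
The plan is to combine the per-edge analysis already established in Lemmas \ref{lem:diff} and \ref{lem:main1d} with the high-dimensional random annulus graph connectivity result Theorem \ref{thm:highdem1} (packaged as Lemma \ref{RAG:cond}). The argument splits into two parts: (a) every inter-cluster edge is removed by \texttt{process} and every sufficiently short intra-cluster edge is retained, with high probability; and (b) the retained short edges inside each cluster form a connected subgraph spanning $n/2$ vertices.

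Part (a) is essentially immediate once the thresholds $E_S, E_D$ are set as in Lemma \ref{lem:diff}. Since $r_s = \Theta\bigl((\log n/n)^{1/t}\bigr)$, a standard Chernoff bound gives that the graph contains $O(n \log n)$ edges with probability $1 - o(1)$. Combined with Lemma \ref{lem:diff} and a union bound, every inter-cluster edge fails the triangle test; Lemma \ref{lem:main1d} then guarantees that every intra-cluster edge $(u,v)$ with $\|u-v\|_2 \le \ell_1$ survives, where $\ell_1$ is the threshold from Cond.~1.

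The substantive work is part (b): I need to lower-bound $\ell_1$ so that each cluster's retained-edge subgraph, which is distributed as ${\rm RAG}_t(n/2, [0,\ell_1])$, satisfies the hypothesis of Theorem \ref{thm:highdem1}, namely $(\ell_1/2)^t \ge 8(t+1)\psi(t)\,\log(n/2)/(n/2)$. To estimate $\ell_1$, I would substitute $r_s = a(\log n/n)^{1/t}$, $r_d = b(\log n/n)^{1/t}$, and the candidate value $\ell = c_1 (\log n/n)^{1/t}$ into Cond.~1. Using the two geometric facts that $B_t(r) = \Theta(r^t)$ for small $r$, and $B_t(r) - \mathcal{V}_t(r,r,\ell) = O(\ell \cdot r^{t-1})$ whenever $\ell \le r$ (two spherical caps of radius $r$ whose centers are at distance $\ell$ fail to overlap only in two crescents of spherical volume of order $\ell \cdot r^{t-1}$), Cond.~1 becomes a polynomial inequality whose dominant term is $\tfrac{n}{2}\bigl(B_t(r_s) - B_t(r_d)\bigr) = \Theta\bigl((a^t - b^t)\log n\bigr)$. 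All remaining contributions, including the concentration slack $\sqrt{n B_t(r_s) \log n}$, the tolerance built into $E_S$, and the first-order decrease of $\mathcal{V}_t(r_s,r_s,\ell)$ in $\ell$, are $O(c_1 \log n)$ with explicit dimension-dependent constants.

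The main obstacle is the quantitative balancing: $c_1$ must simultaneously satisfy $(c_1/2)^t > 8(t+1)\psi(t)$, the RAG connectivity constant, while the surplus $\Theta((a^t-b^t)\log n)$ must dominate the sum of $O(c_1 \log n)$ error terms. This forces the hidden constant in $r_s - r_d = \Omega((\log n/n)^{1/t})$ to exceed a dimension-dependent constant $C_t$ that absorbs both the concentration slack from Lemma \ref{lem:main1d} and the RAG threshold from Theorem \ref{thm:highdem1}. With that choice, $\ell_1^t \ge 16(t+1)\psi(t)\,\log n / n$ holds, Lemma \ref{RAG:cond} yields two disjoint connected components corresponding exactly to the planted parts, and the recovery claim follows.
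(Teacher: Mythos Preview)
Your proposal is correct and follows essentially the same route as the paper: parametrize everything as constants times $(\log n/n)^{1/t}$, use the asymptotics $B_t(r)=\Theta(r^t)$ and $\mathcal{V}_t(r,r,\ell)=\Theta(r^t)$ for $\ell\le r$ to verify Cond.~1 holds at some $\ell_1=\Theta((\log n/n)^{1/t})$, and then invoke Lemma~\ref{RAG:cond} for connectivity of each part. Two small remarks: (i) the paper also verifies Cond.~2 for a corresponding $\ell_2$, but since Lemma~\ref{RAG:cond} needs only $\ell_1$, your restriction to Cond.~1 is a legitimate simplification; (ii) your first-order crescent estimate $B_t(r)-\mathcal{V}_t(r,r,\ell)=O(\ell\, r^{t-1})$ is finer than the paper's blunt $\mathcal{V}_t(r,r,\ell)=\Theta(r^t)$, but both suffice for the constant-balancing you describe.
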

\begin{proof}
Let us take $r_s=a_t \Big(\frac{\log n}{n}\Big)^{\frac{1}{t}}$, $r_d \leq b_t \Big(\frac{\log n}{n}\Big)^{\frac{1}{t}}$ for some large constants $a_t$ and $b_t$ that depends on $t$. Then to satisfy Lemma~\ref{RAG:cond}, we can take $\ell_1=a'_t \Big(\frac{\log n}{n}\Big)^{\frac{1}{t}}$ and $\ell_2=b'_t \Big(\frac{\log n}{n}\Big)^{\frac{1}{t}}$ again for suitable constants $a'_t$ and $b'_t$. While it is possible to concisely compute 
$B_t(r)$ and $\cV_t(r_1,r_2,x)$ \citep{li2011concise,ellis2007random}, for the purpose of analysis it is sufficient to know $B_t(r)=\Theta(r^t)$ for fixed $t$. Moreover, $\cV_t(r_1,r_2,x)=\Theta((r_1+r_2-x)^t)$ if $r_1+r_2\geq x \geq \max(r_1,r_2)$, $\Theta(\min\{r_1,r_2\}^t)$ if $x \leq \max(r_1,r_2)$ and $0$ otherwise.

Then, Cond 1. requires 
$$\frac{n}{2}\Big(\mathcal{V}_t(r_s,r_s,\ell_1)+\mathcal{V}_t(r_d,r_d,\ell_1)\Big)- \sqrt{2n\log n}\Big(\sqrt{B_t(r_s)} + \sqrt{B_t(r_d)} \Big) > E_S $$
and Cond 2. requires
$$\frac{n}{2}\Big(\mathcal{V}_t(r_s,r_s,\ell_2+\mathcal{V}_t(r_d,r_d,\ell_2)\Big))+ \sqrt{n\log{n}}\sqrt{[\mathcal{V}_t(r_s,r_s,\ell)+\mathcal{V}_t(r_d,r_d,\ell)]} \leq E_D.$$

By selecting the constants $c^{(t)}_s$ and $c^{t}_d$ involving $E_S$ and $E_D$ suitably, both the conditions are satisfied.
\end{proof}


\section{Experimental Results}\label{sec:exp}
In addition to validation experiments in Section~\ref{subsec:m1}, we also conducted an in-depth experimentation of our proposed model and techniques over a set of synthetic and real world networks. Additionally, we compared the efficacy and efficiency of our triangle-counting algorithm with the popular spectral clustering algorithm using normalized cuts\footnote{http://scikit-learn.org/stable/modules/clustering.html\#spectral-clustering} and the correlation clustering algorithm~\citep{bansal2004correlation}.
\begin{table*}[htbp]
\centering
\small{
 \resizebox{\textwidth}{!}{\begin{tabular}{||c| c| c| c| c| c| c|c|c||} 
 \hline
 Dataset &Total no.  & $T_1$ & $T_2$ & $T_3$ &\multicolumn{2}{c|}{Accuracy}  & \multicolumn{2}{c||}{Running Time (sec)}\\ 
   &of nodes &  &  &  &Triangle-Counting & Spectral clustering & Triangle-Counting&Spectral clustering\\ [0.5ex] 
\hline\hline
Political Blogs & 1222 & 20 & 2 & 1&  \textbf{0.788}& 0.53 & 1.62 & \textbf{0.29} \\
DBLP & 12138    & 10 & 1 & 2& \textbf{0.675}&0.63 & \textbf{3.93}&18.077\\
LiveJournal & 2366    & 20 & 1 & 1& \textbf{0.7768}&0.64 & \textbf{0.49} & 1.54\\ 
 \hline
\end{tabular}}}
\caption{\small Performance on real world networks}
\label{tab:realtab}
\end{table*}

\noindent{\bf Real Datasets.} We use three real datasets described below.

\begin{itemize}[noitemsep,leftmargin=5pt]
\item \textbf{Political Blogs.}~\citep{politicalblog} It contains a list of political blogs from 2004 US Election classified as liberal or conservative, and links between the blogs. The clusters are of roughly the same size with a total of 1200 nodes and 20K edges.

\item \textbf{DBLP.}~\citep{communityreal} The DBLP dataset is a collaboration network where the ground truth communities are defined by the research community. The original graph consists of roughly 0.3 million nodes. We process it to extract the top two communities of size $\sim$ 4500 and 7500 respectively. This is given as input to our algorithm.

\item \textbf{LiveJournal.}~\citep{livejournal} The LiveJournal dataset is a free online blogging social network of around 4 million users. Similar to DBLP, we extract the top two clusters of sizes 930 and 1400 which consist of around 11.5K edges.
\end{itemize}
  We have not used the academic collaboration (Section~\ref{subsec:m1}) dataset here because it is quite sparse and below the connectivity threshold regime of both GBM and SBM.

\noindent{\bf Synthetic Datasets.} We generate synthetic datasets of different sizes according to the GBM with $t=2, k=2$ and for a wide spectrum of  values of $r_s$ and $r_d$, specifically we focus on the sparse region where $r_s = \frac{a\log{n}}{n}$ and $r_d = \frac{b\log{n}}{n}$ with variable values of $a$ and $b$.

\noindent{\bf Experimental Setting.} For real networks,  it is difficult to calculate an exact threshold as the exact values of $r_s$ and $r_d$ are not known. Hence, we follow a three step approach. Using a somewhat large threshold $T_1$ we sample a subgraph $S$ such that $u,v$ will be in $S$ if there is an edge between $u$ and $v$, and they have at least $T_1$ common neighbors. We now attempt to recover the subclusters inside this subgraph by following our algorithm with a small threshold $T_2$. Finally, for nodes that are not part of $S$, say $x \in V \setminus S$, we select each $u \in S$ that  $x$ has an edge with and use a threshold of $T_3$ to decide if $u$ and $x$ should be in the same cluster. The final decision is made by taking a majority vote. We can employ sophisticated methods over this algorithm to improve the results further, which is beyond the scope of this work. 

We use the popular f-score metric which is the harmonic mean of precision (fraction of number of pairs correctly classified to total number of pairs classified into clusters) and recall (fraction of number of pairs correctly classified to the total number of pairs in the same cluster for ground truth), as well as the node error rate for performance evaluation. A node is said to be misclassified if it belongs to a cluster where the majority comes from a different ground truth cluster (breaking ties arbitrarily). Following this, we use the above described metrics to compare the performance of different techniques on various datasets.

\begin{figure*}[t!]
  \centering
 \begin{subfigure}[t]{0.3 \textwidth}
   \includegraphics[height=1.6in]{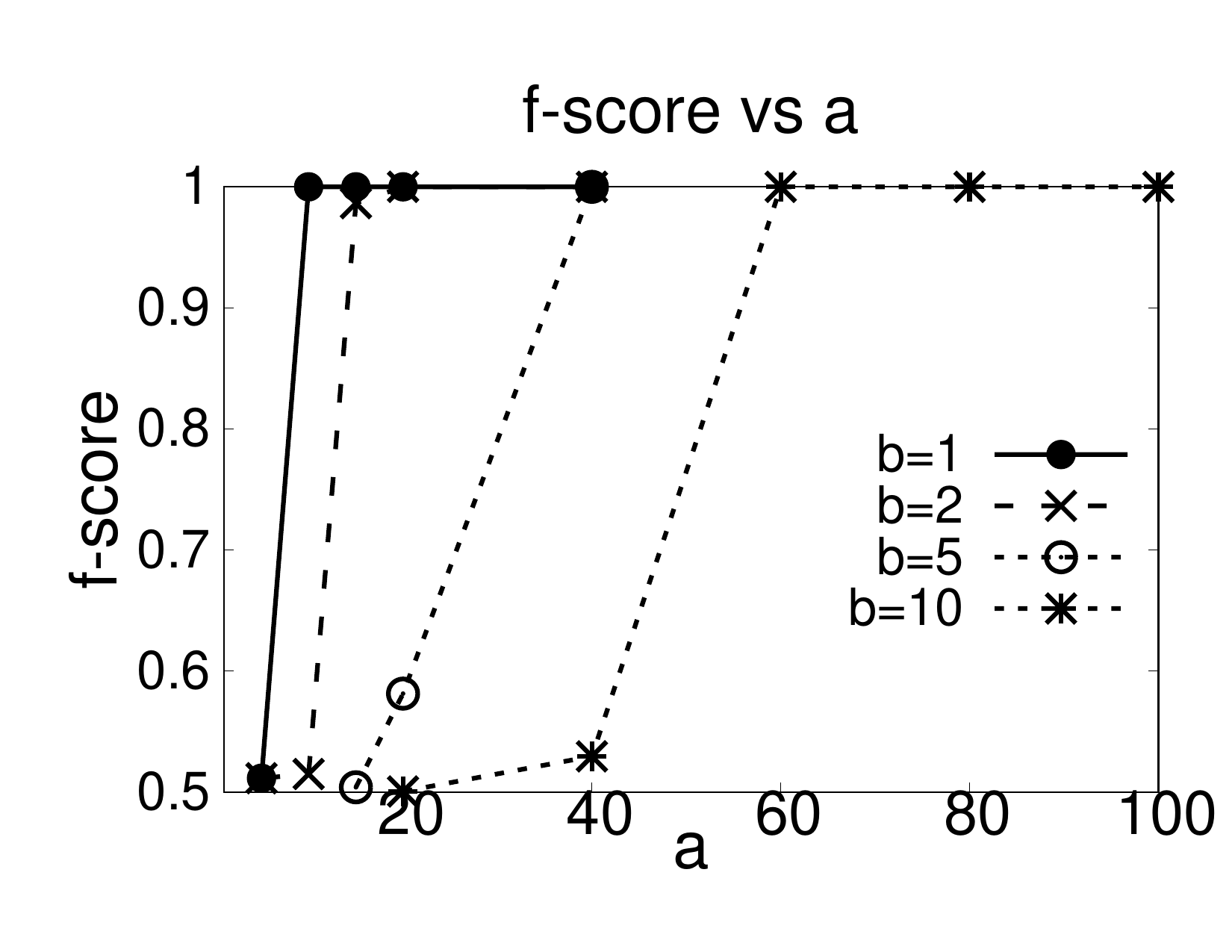}
   \caption{\small f-score with varying $a$, fixing $b$.}
       ~\label{fig:varya}
 \end{subfigure}%
 \begin{subfigure}[t]{0.3\textwidth}
    \includegraphics[height=1.6in]{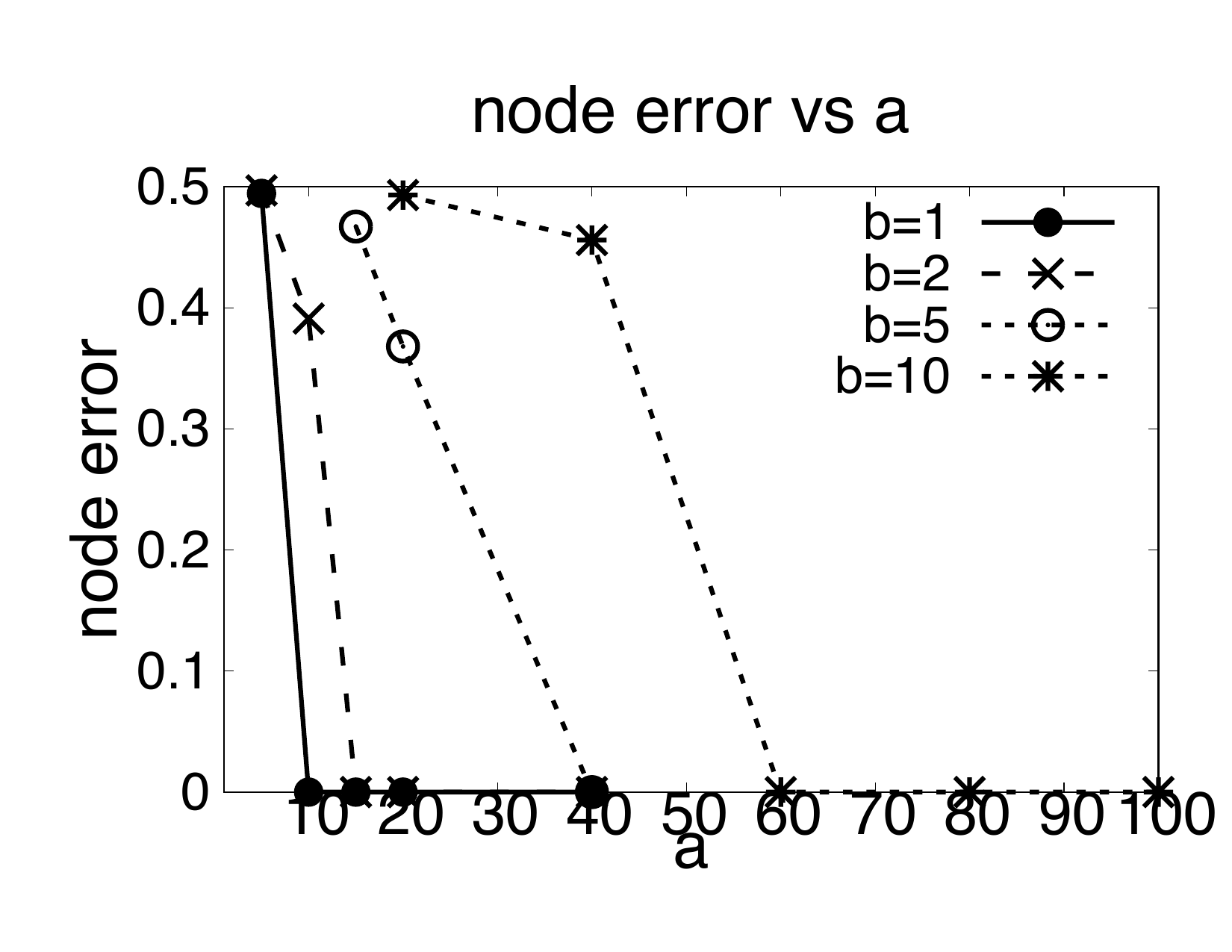}
    \caption{Fraction of nodes misclassified.}
          ~\label{fig:node}
  \end{subfigure}
 \caption{\small Results of the triangle-counting algorithm on a synthetic dataset with $5000$ nodes.}
\end{figure*}

\noindent{\bf Results.} We  compared our algorithm with the spectral clustering algorithm where we extracted two eigenvectors in order to extract two communities. Table~\ref{tab:realtab} shows that our algorithm gives an accuracy as high as $78\%$. The spectral clustering  performed worse  compared to our algorithm for all real world datasets. It obtained the worst accuracy of 53\% on political blogs dataset. The correlation clustering algorithm generates various small sized clusters leading to a very low recall, performing much worse than the triangle-counting algorithm for the whole spectrum of parameter values.

We can observe in Table~\ref{tab:realtab} that our algorithm is much faster than the spectral clustering algorithm for larger datasets (LiveJournal and DBLP). This  confirms that triangle-counting algorithm is more scalable than the spectral clustering algorithm.  The spectral clustering algorithm also works very well on synthetically generated SBM networks even in the sparse regime \citep{lei2015consistency,rohe2011spectral}. The superior performance of the simple triangle clustering algorithm over the real networks provide a further validation of GBM over SBM. Correlation clustering  takes 8-10 times longer as compared to triangle-counting algorithm for the various range of its parameters.  We also compared our algorithm with the Newman algorithm \citep{newman} that performs really well for the LiveJournal dataset (98\% accuracy). But it is extremely slow and performs much worse on other datasets. This is because the LiveJournal dataset has two well defined subsets of vertices with very few intercluster edges. The reason for the worse performance of our algorithm is the sparseness of the graph. If we create a subgraph by removing all nodes of degrees $1$ and $2$, we get $100\%$ accuracy with our algorithm. Finally, our algorithm is easily parallelizable to achieve better improvements. This clearly establishes the efficiency and effectiveness of triangle-counting.

\begin{figure}[t]
  \centering
   \begin{subfigure}[c]{0.3 \textwidth}
   \includegraphics[width=\textwidth]{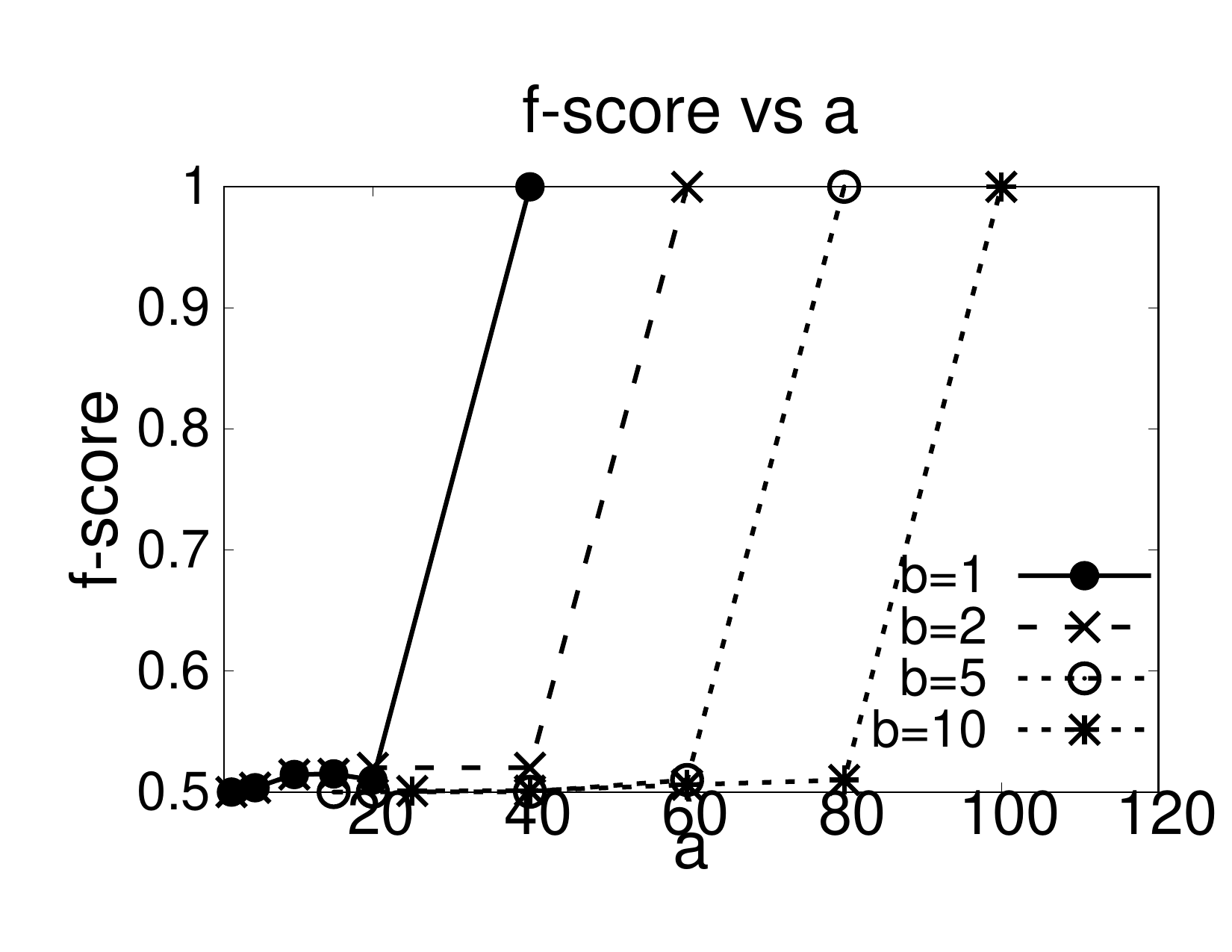}
    \caption{f-score with varying $a$, fixed $b$.}
      ~\label{fig:varya1}
 \end{subfigure}%
 \begin{subfigure}[c]{0.3\textwidth}
    \includegraphics[width=\textwidth]{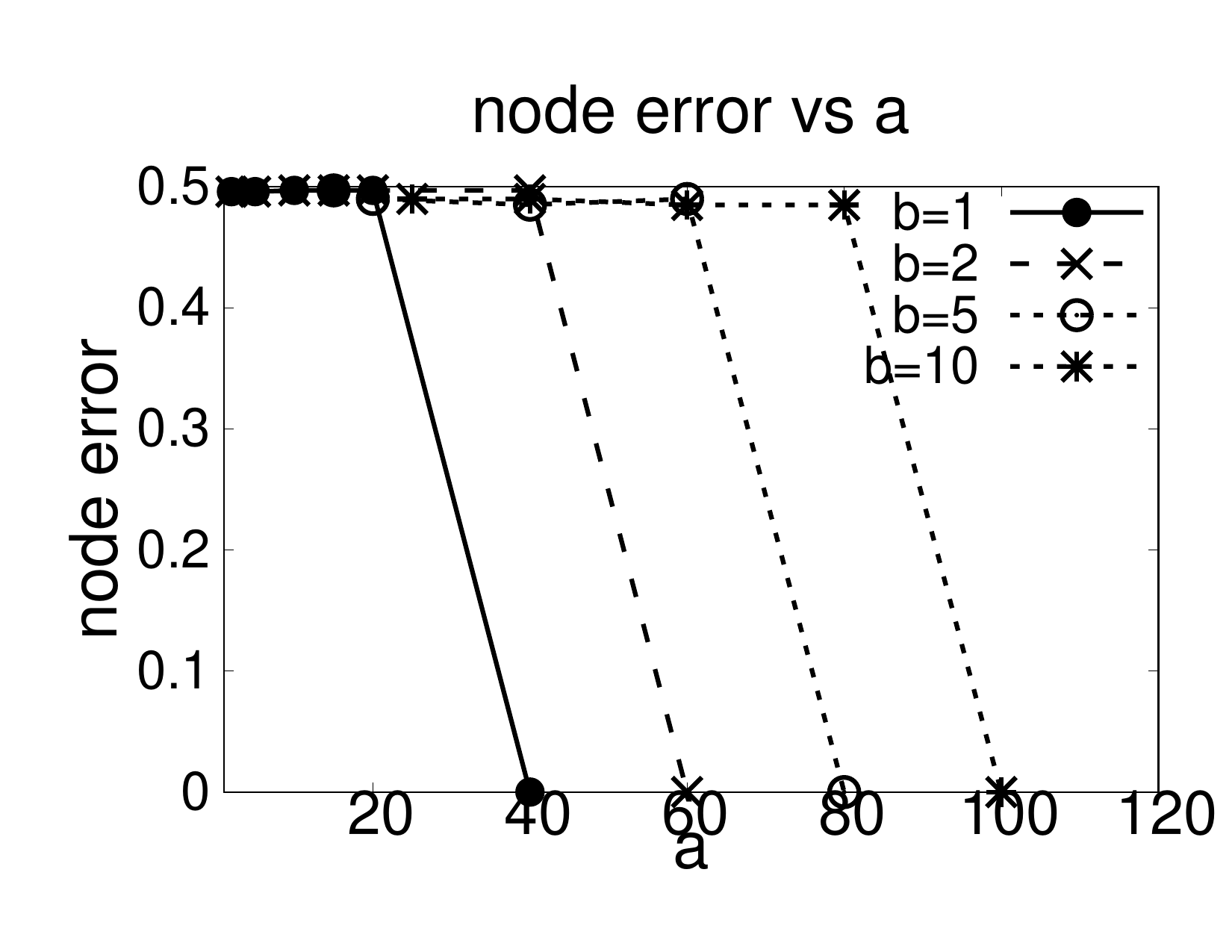}
     \caption{Fraction of  nodes misclassified.}
     ~\label{fig:fscore-1}
  \end{subfigure}
   \vspace{-0.1in}
 \caption{\small Results of the spectral clustering on a synthetic dataset with $5000$ nodes.}
\end{figure}

We observe similar gains on synthetic datasets. 
Figures~\ref{fig:varya} and \ref{fig:node} report results on the synthetic datasets with $5000$ nodes.  Empirically, our results demonstrate \correct{much superior performance of our algorithm as compared to theoretical guarantees. The empirical results are much better than the theoretical bounds because the concentration inequalities assume the worst value of the distance between the pair of vertices that are under consideration.}  We also see a clear threshold behavior on both f-score and node error rate in Figures~\ref{fig:varya} and \ref{fig:node}. We have also performed spectral clustering on this 5000-node synthetic dataset (Figures~\ref{fig:varya1} and \ref{fig:fscore-1}). Compared to the plots of figures \ref{fig:varya} and \ref{fig:node}, they show suboptimal performance, indicating the relative ineffectiveness of spectral clustering in GBM compared to the triangle counting algorithm.
\begin{figure}
    \centering
    \includegraphics[width=0.35\textwidth]{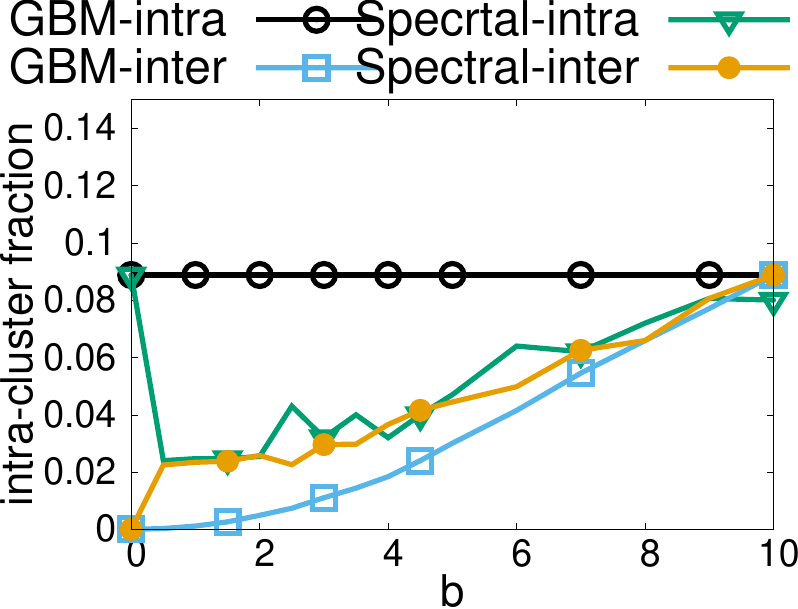}
    \caption{Fraction of triangles (intra-cluster and inter-cluster) in the ground truth of a graph generated according to ${\rm GBM_1}$ and the clusters returned by spectral clustering for varying $r_d = \frac{b\log n}{n}$. The number of nodes is $1000$ and $r_s=10\log n/n$ is fixed.
    Since $r_s$ is fixed, the fraction of intra-cluster triangles in the ground truth remains unchanged.
    }
    \label{fig:spectral}
\end{figure}

We ran another experiment to test whether spectral clustering techniques capture varied number of intra-cluster and inter-cluster triangles and if it can be used for datasets where higher order structures are more prevalent.
We tested the fraction of intra-cluster triangles returned by spectral clustering for datasets generated by GBM with varying inter-cluster connectivity threshold ($r_d$) and compared it against the fraction of intra-cluster triangles in the ground truth clustering. Figure~\ref{fig:spectral} shows that the clusters returned by spectral clustering have a very low fraction of intra-cluster triangles and thus it is not designed for clustering based on triangles (or higher-order structures in general). In fact the clusters returned by spectral clustering have similar fraction of intra-cluster and itner-cluster triangles. This evaluation also shows that GBM indeed has a large disparity between inter and intra cluster triangles and it may be useful to model datasets that contain higher order structures. 



\paragraph*{Acknowledgements:} This work is supported by NSF CCF awards 2133484, 2217058, 2223282.


\appendix

\section{Connectivity of one dimensional Random Annulus graphs: Details}
\label{sec:VRG-detail}
In this section, we prove the necessary and sufficient condition for connectivity of ${\rm RAG}^\ast_1$ in full details. We provide the necessary conditions for connectivity in Theorem \ref{thm:rag} for ${\rm RAG}^\ast_1$ because of the equivalence between ${\rm RAG}^\ast_1$ and ${\rm RAG}_1$.
\subsection{Necessary condition for connectivity of ${\rm RAG}^\ast_1$}

\begin{theorem*}[${\rm RAG}^\ast_1$ connectivity lower bound in Theorem \ref{thm:rag}]
If $a <	1$ or $a-b < 0.5$, the random annulus graph ${\rm RAG}^\ast_1(n,[\frac{b\log n}{n},\frac{a\log n}{n}])$  is not connected
with  probability $1-o(1)$ . 
\end{theorem*}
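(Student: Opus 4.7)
The plan is to prove disconnection under each of the two failure conditions by exhibiting explicit structural obstructions with high probability, and then turning a divergent first moment into a likely event via the second moment method.

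For the regime $a-b<0.5$, I will argue directly that isolated vertices must exist. For each vertex $u$, let $I_u$ be the indicator that no other vertex lies in the annulus of inner radius $\frac{b\log n}{n}$ and outer radius $\frac{a\log n}{n}$ around $X_u$; this annulus has total Lebesgue measure $\frac{2(a-b)\log n}{n}$ on the circle. A routine computation gives
\begin{align*}
\Pr(I_u=1) = \Bigl(1 - \tfrac{2(a-b)\log n}{n}\Bigr)^{n-1} = n^{-2(a-b)+o(1)},
\end{align*}
so $\expect\bigl[\sum_u I_u\bigr] = n^{1-2(a-b)+o(1)} \to \infty$ whenever $a-b<0.5$. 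A second moment calculation, using that conditioning on $I_u=1$ perturbs the distribution of the remaining $n-2$ points in only a lower-order way, gives $\Pr(I_u=I_v=1) \le (1+o(1))\Pr(I_u=1)\Pr(I_v=1)$ for $u\neq v$. The variance is therefore $O\bigl(\expect[\sum_u I_u]\bigr)$ and Chebyshev yields $\Pr\bigl(\sum_u I_u \ge 1\bigr) \to 1$. An isolated vertex is automatically its own component, so the graph is disconnected.

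For the regime $a<1$, isolated vertices may be absent (precisely when $a-b\ge 0.5$), and instead I will exploit the fact that every edge of $\mathrm{RAG}_1^\ast$ has length at most $\frac{a\log n}{n}$, so two disjoint empty arcs of greater length on the circle immediately disconnect it (they split the circle into two nonempty pieces no edge can bridge). Let $G_1,\dots,G_n$ be the consecutive circular gaps between the uniform points and let $N=|\{i:G_i>a\log n/n\}|$; then
\begin{align*}
\expect[N] = n\Bigl(1-\tfrac{a\log n}{n}\Bigr)^{n-1} = n^{1-a+o(1)} \to \infty.
\end{align*}
Because the gaps are exchangeable subject to a deterministic sum constraint they are pairwise negatively correlated, so $\Var(N)\le\expect[N]$, and Chebyshev forces $N \ge \tfrac{1}{2}\expect[N]\gg 2$ with high probability. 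The resulting two (or more) large empty arcs furnish the desired disconnection.

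The principal obstacle will be the careful second moment calculations, and in particular, in the $a<1$ case, correctly accounting for the fact that on a circle a single large empty arc does not on its own disconnect the graph, so the argument genuinely requires producing at least two such arcs rather than merely one. Beyond these points I expect no serious difficulty: the recipe is identical in both cases --- bound the first moment from below by $\omega(1)$, control the variance by $O(\text{first moment})$, and invoke Chebyshev's inequality.
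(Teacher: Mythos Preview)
Your proposal is correct, and for the case $a-b<0.5$ it is essentially identical to the paper's proof: both exhibit isolated vertices via a first/second moment argument on the indicators $I_u$, with the variance controlled by splitting according to whether $u,v$ are far apart (so their annuli are disjoint) or close.

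For the case $a<1$ the two approaches diverge. The paper dispatches this case in one line: since ${\rm RAG}^\ast_1(n,[\frac{b\log n}{n},\frac{a\log n}{n}])$ is a subgraph of the ordinary random geometric graph ${\rm RAG}^\ast_1(n,[0,\frac{a\log n}{n}])$, and the latter is known to be disconnected for $a<1$ \citep{muthukrishnan2005bin,penrose2003random}, the annulus graph is a fortiori disconnected. You instead give a self-contained argument by counting long gaps between consecutive points and showing at least two occur with high probability. Your argument is essentially the proof of the cited result (and indeed coincides with the paper's Lemma~\ref{lem:disc}, used later for the GBM lower bound), so the content is the same --- the paper simply outsources it. The paper's route is shorter; yours is self-contained and makes explicit the geometric reason (two long gaps) that the graph falls apart. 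One small point: your justification that the gap indicators are negatively correlated (``exchangeable subject to a sum constraint'') is not quite enough as stated --- negative correlation of the $G_i$ does not automatically transfer to $\mathbb{1}[G_i>c]$ --- but this follows from the classical fact that uniform spacings are negatively associated, or alternatively by the direct pairwise computation the paper carries out in Lemma~\ref{lem:disc}.
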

\begin{proof}
First of all,  it is known that ${\rm RAG}^\ast_1(n,[0,\frac{a\log n}{n}])$ is not connected  with  high probability when $a <1$ \citep{muthukrishnan2005bin,penrose2003random}. Therefore  ${\rm RAG}^\ast_1(n,[\frac{b\log n}{n},\frac{a\log n}{n}])$ must not be connected with high probability when $a<1$ as the connectivity interval is a strict subset of the previous case, and ${\rm RAG}^\ast_1(n,[\frac{b\log n}{n},\frac{a\log n}{n}])$ can be obtained from ${\rm RAG}^\ast_1(n,[0,\frac{a\log n}{n}])$ by deleting all the edges that has the two corresponding random variables separated by distance less than $\frac{b\log n}{n}$.

Next we will show that if $a-b < 0.5$ then there exists an isolated vertex with high probability. It would be easier to think of each vertex as a uniform random point in $[0,1]$. 
Define an indicator variable $A_{u}$ for every node $u$ which is 1 when node $u$ is isolated and $0$ otherwise. We have,
$$\Pr(A_{u}=1)=\bigg(1-\frac{2(a-b)\log n}{n}\bigg)^{n-1}.$$ 
Define $A=\sum_{u} A_{u}$, and hence
$$\avg[A]=n\Big(1-\frac{2(a-b)\log n}{n} \Big)^{n-1}= n^{1-2(a-b)-o(1)}.$$ Therefore, when $a-b < 0.5$, $\avg[A]=\Omega(1)$. To prove this statement with high probability we can show that the variance of $A$ is bounded. Since $A$ is a sum of indicator random variables, we have that 
$${\rm Var}(A) \le \avg[A]+\sum_{u \neq v} {\rm Cov}(A_u,A_v)=\avg[A]+\sum_{u \neq v} (\Pr(A_u=1 \cap A_v=1)-\Pr(A_u=1)\Pr(A_v=1)).$$ 
Now, consider the scenario when the  vertices $u$ and $v$ are at a distance more than $\frac{2a \log n}{n}$ apart (happens with probability $1-\frac{4a\log n}{n})$. Then the region in $[0,1]$ that is between  distances $\frac{b \log n}{n}$ and $\frac{a\log n}{n}$ from both of the vertices is empty and therefore $\Pr(A_u=1 \cap A_v=1) =  
\Big(1- \frac{4(a-b) \log n}{n}\Big)^{n-2}.$
When the vertices  are within distance $\frac{2a \log n}{n}$ of one another, then
$
\Pr(A_u=1 \cap A_v=1) \le \Pr(A_u=1).
$
Therefore,
\begin{align*}
\Pr(A_u=1 \cap A_v=1) \le &(1-\frac{4a\log n}{n}) \Big(1- \frac{4(a-b) \log n}{n}\Big)^{n-2} + \frac{4a\log n}{n}\Pr(A_u=1)\\ 
&\le  (1-\frac{4a\log n}{n}) n^{-4(a-b)+o(1)}+ \frac{4a\log n}{n} n^{-2(a-b)+o(1)}.
\end{align*}
Consequently for large enough $n$,
\begin{align*}
\Pr(A_u=1 \cap A_v=1)-\Pr(A_u=1)\Pr(A_v=1) &\le (1-\frac{4a\log n}{n}) n^{-4(a-b)+o(1)} \\+ \frac{4a\log n}{n} n^{-2(a-b)+o(1)}  -& n^{-4(a-b) +o(1)} 
\le  \frac{8a\log n}{n}\Pr(A_u=1).
\end{align*}
Now,
$$
{\rm Var}(A) \le \avg[A] + \binom{n}{2}\frac{8a\log n}{n}\Pr(A_u=1) \le \avg[A](1+ 4a\log n).
$$
By using Chebyshev bound, with probability at least $1-\frac{1}{\log n}$, 
$$A >n^{1-2(a-b)}-\sqrt{n^{1-2(a-b)}(1+4a\log n)\log n},$$
which imply for $a-b <0.5$, there will exist isolated nodes with high probability.
\end{proof}

\subsection{Proof of Lemma \ref{lem:lemma1} and Lemma \ref{lem:lemma2}}

\begin{lemma*}[Restatement of Lemma \ref{lem:lemma1}]
A set of vertices $\cC \subseteq V$ is called a cover of $[0,1]$, if for any point $y$ in $[0,1]$ there exists a vertex $v\in \cC$ such that $d(v,y) \le \frac{a\log n}{2n}$. If $a-b >0.5$ and $a >1$, then a random annulus graph ${\rm RAG}^\ast_1(n, [\frac{b\log n}{n}, \frac{a\log n}{n}])$ is a union of cycles  
such that every cycle forms a cover of $[0,1]$  (see Figure~\ref{fig:arr0}) with  probability $1-o(1)$.
\end{lemma*}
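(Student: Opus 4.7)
The plan is to split the argument into the easier regime $a-b>1$ and the harder regime $\tfrac12 < a-b \le 1$, which is flagged in the preceding text as the technical crux.

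\textbf{Warm-up: the case $a-b>1$.} For each vertex $u$, let $A_u^L$ (resp.\ $A_u^R$) indicate the event that the left (resp.\ right) annulus window $[u-\tfrac{a\log n}{n},u-\tfrac{b\log n}{n}]$ (resp.\ the right analogue) contains no other vertex. As already computed in the excerpt, $\expect\big[\sum_u(A_u^L+A_u^R)\big]\le 2n^{1-(a-b)}=o(1)$, so Markov gives that w.h.p.\ every vertex has an annulus neighbor on both sides. Let $\ell(u)$ be the nearest left-annulus neighbor and $r(u)$ the nearest right-annulus neighbor. The pair of maps $(\ell,r)$ defines a $2$-regular spanning subgraph $H\subseteq G$, which decomposes into vertex-disjoint cycles. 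Because $r(\cdot)$ always strictly advances to the right by at most $\tfrac{a\log n}{n}$ and the underlying space is a circle, each such cycle must wrap once around $[0,1]$, and consecutive cycle vertices are within $\tfrac{a\log n}{n}$ of each other. Hence every $y\in[0,1]$ lies within $\tfrac{a\log n}{2n}$ of some cycle vertex, which gives the cover property.

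\textbf{The harder case $\tfrac12<a-b\le 1$.} Now $\expect[\sum_u A_u^L]=n^{1-(a-b)+o(1)}\to\infty$, so the direct Markov approach breaks down and we cannot insist that every vertex has a one-sided annulus neighbor. My plan is a second-moment/substitution argument. First, I would compute $\Var(\sum_u A_u^L)$ by splitting pairs $(u,v)$ into the ``far'' regime $d(u,v)>\tfrac{2a\log n}{n}$ (where $A_u^L,A_v^L$ are independent) and the ``close'' regime (where the empty-window events overlap, inducing positive correlation whose size can be read off from an overlap-of-intervals calculation, mirroring the computation in the lower-bound proof). This should yield that $|B|=O(n^{1-(a-b)}\log n)$ with probability $1-o(1)$, where $B$ is the set of vertices with at least one empty annulus window. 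Second, and more importantly, I would show by a first-moment argument that no two vertices of $B$ lie within $\tfrac{2a\log n}{n}$ of each other w.h.p., since the expected number of close bad pairs is of order $n\cdot(a\log n/n)\cdot n^{-2(a-b)}=n^{1-2(a-b)+o(1)}=o(1)$ precisely when $a-b>\tfrac12$.

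Given this ``well-separation'' of bad vertices, I would construct the $2$-regular skeleton $H$ as follows: for $u\notin B$ use $(\ell(u),r(u))$ as before; for $u\in B$ with empty left window, use the fact that every vertex $v$ in $[u,u+\tfrac{b\log n}{n}]$ is non-bad (by well-separation) and there are $\Theta(\log n)$ such candidates $v$, and argue that at least one such $v$ furnishes a left annulus neighbor $w$ of $v$ whose distance to $u$ also lies in $[\tfrac{b\log n}{n},\tfrac{a\log n}{n}]$ (so that $(u,w)$ is an actual edge of $G$ that can serve as the missing left-side link of $u$). The condition $a>1$ enters exactly here: it guarantees enough global density so that these substitute edges actually exist and that no vertex is truly isolated in $G$. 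Once $H$ is constructed, the same edge-length argument as in the warm-up shows its cycles cover $[0,1]$.

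\textbf{Main obstacle.} The routine piece is the moment computation for $|B|$ and the separation of bad pairs; the delicate piece is the \emph{simultaneous} construction of substitute edges in step~two. One must rule out conflicts (e.g.\ two bad vertices competing for the same substitute, or a substitute edge used as both left and right link of different bad vertices) and verify that the resulting multigraph really is $2$-regular, not just an arbitrary subgraph with sufficient edges. I expect this combinatorial bookkeeping, together with careful case analysis when the empty window of $u$ nearly abuts the next vertex (making the distance $d(u,w)$ hover near the endpoints of the annulus), to constitute the technical heart of the proof; this is presumably why the authors phrase Lemma~\ref{lem:lemma1} tightly at the threshold $a-b>\tfrac12$ rather than giving the easier statement $a-b>1$.
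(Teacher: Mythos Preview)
Your warm-up for $a-b>1$ is fine and matches the discussion preceding the lemma. But your plan for $\tfrac12<a-b\le 1$ has a fatal gap in the substitute-edge step that is independent of the second-moment bookkeeping. Suppose $u$ has an empty left window $[u-\tfrac{a\log n}{n},u-\tfrac{b\log n}{n}]$ and you pick $v\in[u,u+\tfrac{b\log n}{n}]$. Any left-annulus neighbor $w$ of $v$ lies in $[v-\tfrac{a\log n}{n},v-\tfrac{b\log n}{n}]\subset[u-\tfrac{a\log n}{n},u]$. For $(u,w)$ to be an RAG edge you would need $w\in[u-\tfrac{a\log n}{n},u-\tfrac{b\log n}{n}]$, which is precisely the window you assumed empty; the remaining range $w\in(u-\tfrac{b\log n}{n},u]$ is inside the forbidden inner ball of $u$. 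Placing $v$ to the left of $u$ instead pushes $w$ past distance $\tfrac{a\log n}{n}$. So a nearby non-bad vertex can never lend $u$ a usable left edge; the annulus constraint blocks exactly this maneuver. Separately, your well-separation estimate $n^{1-2(a-b)+o(1)}$ is also incorrect: close same-side bad pairs have overlapping empty windows, so at distance $y\tfrac{\log n}{n}$ the joint emptiness probability is $\approx n^{-(a-b+y)}$ rather than $n^{-2(a-b)}$, and integrating over $y\in[0,a-b]$ yields $\Theta(n^{1-(a-b)})$ such pairs, which diverges throughout $a-b<1$.

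The paper's argument is structurally different and avoids any notion of ``bad vertex.'' For each $u$ it defines several \emph{compound} regions, each the union of an annulus window with a carefully placed secondary interval, arranged so that (i) every compound region has measure close to $2(a-b)\tfrac{\log n}{n}$ (or $a\tfrac{\log n}{n}$ when the pieces overlap---this is where $a>1$ enters), hence is nonempty for all $u$ by a single first-moment/union bound; and (ii) whenever the annulus window itself is empty, the occupied secondary intervals furnish a vertex $v_2$ that is already a \emph{direct} RAG neighbor of $u$ on one side and is in turn adjacent to a vertex $v_1$ on $u$'s other side, giving a two-hop link. For $a-b>\tfrac23$ four such compound regions suffice; to reach $a-b>\tfrac12$ the paper subdivides each of four base intervals into $L=2^c$ patches and shows that among any $2L-1$ of the $4L$ patches at least one is occupied, then extracts the two-hop link by a short patch-by-patch case analysis. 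The idea your approach is missing is that the intermediate vertex of the two-hop path must itself lie in $u$'s annulus (hence be an actual neighbor of $u$), not in the forbidden inner ball around $u$.
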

 
 \begin{proof}[Proof of Lemma~\ref{lem:lemma1}]
 The proof of this lemma is somewhat easily explained if we consider a weaker result (a stronger condition) with $a -b >2/3$. Let us first briefly describe this case.

Consider a node $u$ and assume without loss of generality that the position of $u$ is $0$ (i.e. $X_u=0$). Associate four indicator $\{0,1\}$-random variables $A_u^i, i =1,2,3,4$ which take the value of $1$ if and only if there does not exist any node $x$ such that 
\begin{enumerate}
\item $d(u,x) \in [b\frac{\log n}{n},a\frac{\log n}{n}] \cup [0,\frac{a-b}{2}\frac{\log n}{n}]\} \text{  for  } i=1$
\item $d(u,x) \in [b\frac{\log n}{n},a\frac{\log n}{n}] \cup [\frac{-a-b}{2}\frac{\log n}{n},-b\frac{\log n}{n}]\}  \text{  for  } i=2$
\item $d(u,x) \in [-a\frac{\log n}{n},-b\frac{\log n}{n}] \cup [\frac{-a+b}{2}\frac{\log n}{n},0]\}  \text{  for  } i=3$
\item $d(u,x) \in [-a\frac{\log n}{n},-b\frac{\log n}{n}] \cup [b\frac{\log n}{n},\frac{a+b}{2}\frac{\log n}{n}]\}  \text{  for  } i=4.$
\end{enumerate}
The intervals representing these random variables are shown in Figure~\ref{fig:four}.

Notice that $\Pr(A^i_u=1)=\max\{\Big(1-1.5(a-b)\frac{\log n}{n}\Big)^{n-1},\Big(1-a\frac{\log n}{n}\Big)^{n-1}\}$ and therefore $\sum_{i,u} \avg A^i_u \approx 4\max\{n^{1-1.5(a-b)},n^{1-a}\}$.  This means that for $a-b \ge 0.67$ and $a \ge 1$, $\sum_{i,u} \avg A^i_u=o(1)$.  Hence there exist vertices in all the regions described above for every node $u$ with high  probability.
 
Now, $A_u^1$ and $A_u^2$ being zero implies that either there is a vertex in $ [b\frac{\log n}{n},a\frac{\log n}{n}] $ or there exists two vertices $v_1,v_2$ in $ [0,\frac{a-b}{2}\frac{\log n}{n}]$ and $[\frac{-a-b}{2}\frac{\log n}{n},-b\frac{\log n}{n}]$ respectively (see, Figure~\ref{fig:four}). In the second case, $u$ is connected to $v_2$ and $v_2$ is connected to $v_1$. Therefore $u$ has nodes on left ($v_2$) and right ($v_1$) and $u$ is connected to both of them through one hop in the graph. 

Similarly, $A_u^3$ and $A_u^4$ being zero implies that either there exists a vertex in $[-a\frac{\log n}{n},-b\frac{\log n}{n}]$ or again $u$ will have vertices on left and right and will be connected to them. So, when all the four $A_u^i, i=1,2,3,4$ are zero together:
{
\begin{itemize}
    \item $A_u^1=A_u^2=0$ implies there is a neighbor of $u$ on either sides or there is a single node in $[b\frac{\log n}{n},a\frac{\log n}{n}]$
    \item $A_u^3=A_u^4=0$ implies there is a neighbor of $u$ on either sides or there is a single node in $[-a\frac{\log n}{n},-b\frac{\log n}{n}]$
\end{itemize}
This shows that when $A_u^1=A_u^2=0$ and  $A_u^3=A_u^4=0$ guarantee a node on only one side of $u$, there are nodes in $[b\frac{\log n}{n},a\frac{\log n}{n}] $ and $[-a\frac{\log n}{n},-b\frac{\log n}{n}]$.} But in that case $u$ has direct neighbors on both its left and right. 
We can conclude that every vertex $u$ is connected to  a vertex  $v$ on  its right and a vertex  $w$ on its left  such that $d(u,v) \in [0,a \frac{\log n}{n}]$ and  $d(u,w) \in [-a\frac{\log n}{n},0]$; therefore every vertex is part of a cycle that covers $[0,1]$.

 \begin{figure}
   \centering
\begin{tikzpicture}[thick, scale=0.9]
\draw [gray,<->] (0,1.5) -- (0.5,1.5)node at (1,1.5) {$\frac{a-b}{2}$};;
\draw[pattern=north west lines,pattern color=red][preaction={fill=blue!20}] (3.5,1.9) -- (4.5,1.9) -- (4.5,1.7) -- (3.5,1.7) -- cycle node at (4.8,1.8) {$A_u^3$};;
\draw[pattern=vertical lines,pattern color=red][preaction={fill=blue!50}]  (3.5,1.4) -- (4.5,1.4) -- (4.5,1.2) -- (3.5,1.2) -- cycle node at (4.8,1.2) {$A_u^4$};;
\draw[pattern=dots,pattern color=blue] [preaction={fill=gray}]  (1.5,1.9) -- (2.5,1.9) -- (2.5,1.7) -- (1.5,1.7) -- cycle node at (2.8,1.8) {$A_u^1$};;
\draw[pattern=grid,pattern color=violet][preaction={fill=red!50}]   (1.5,1.4) -- (2.5,1.4) -- (2.5,1.2) -- (1.5,1.2) -- cycle node at (2.8,1.2) {$A_u^2$};;

\draw[pattern=dots,pattern color=blue][preaction={fill=gray}]  (3.5,0) -- (4.5,0) -- (4.5,-0.3) -- (3.5,-0.3) -- cycle;
\draw[pattern=grid,pattern color=violet] [preaction={fill=red!50}]  (3.5,-0.3) -- (4.5,-0.3) -- (4.5,-0.6) -- (3.5,-0.6) -- cycle;
\draw[pattern=north west lines,pattern color=red][preaction={fill=blue!20}]  (-3.5,0) -- (-4.5,0) -- (-4.5,-0.3) -- (-3.5,-0.3) -- cycle;
\draw[pattern=vertical lines,pattern color=red][preaction={fill=blue!50}]   (-3.5,-0.3) -- (-4.5,-0.3) -- (-4.5,-0.6) -- (-3.5,-0.6) -- cycle;
\draw[pattern=dots,pattern color=blue][preaction={fill=gray}]  (0,0) -- (.5,0) -- (.5,-0.3) -- (0,-0.3) -- cycle;
\draw[pattern=north west lines,pattern color=red][preaction={fill=blue!20}]  (0,0) -- (-.5,0) -- (-.5,-0.3) -- (0,-0.3) -- cycle;
\draw[pattern=vertical lines,pattern color=red][preaction={fill=blue!50}]   (3.5,-0.6) -- (4,-0.6) -- (4,-0.9) -- (3.5,-0.9) -- cycle;
\draw[pattern=grid,pattern color=violet] [preaction={fill=red!50}]  (-3.5,0) -- (-3,0) -- (-3,-0.3) -- (-3.5,-0.3) -- cycle;

\draw [gray,<->] (3.5,-1.3) -- (4,-1.3);
\draw [gray,<->] (0,-1.3) -- (0.5,-1.3);
\draw [gray,<->] (0,-1.3) -- (-.5,-1.3);
\draw [gray,<->] (-3.5,-1.3) -- (-3,-1.3);

\draw [gray,<->] (-5,0) -- (5,0);
\draw [gray,dashed] (3.5,0) -- (3.5,-1.4);
\draw [gray,dashed] (0.5,0) -- (0.5,-1.4);
\draw [gray,dashed] (0,0) -- (0,-1.4);
\draw [gray,dashed] (-0.5,0) -- (-0.5,-1.4);
\draw [gray,dashed] (-3.5,0) -- (-3.5,-1.4);
\draw [gray,dashed] (-3,0) -- (-3,-1.4);
\draw [gray,dashed] (4,0) -- (4,-1.4);
\draw node[anchor=south] at (3.5,0) {$b$};
\draw node[anchor=south] at (4.5,0) {$a$};
\draw node[anchor=south] at (-3.5,0) {$-b$};
\draw node[anchor=south] at (-4.5,0) {$-a$};

%
%
%
%
\filldraw [black] (0,0) circle (3pt)node[anchor=south] at (0,0) {$u$};
\end{tikzpicture}
\caption{Representation of four different random variables for Lemma \ref{lem:lemma1}. \label{fig:four}}
\end{figure}
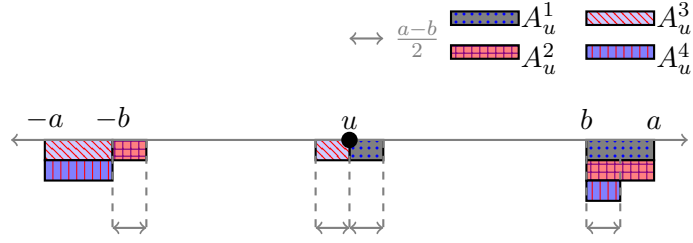

We can  now extend this proof to the case when $a-b> 0.5.$

Let $c$ be large number to  be chosen specifically later.
Consider a node $u$ and assume that the position of $u$ is $0$. Now consider the {four different regions $[-a\frac{\log n}{n},-b\frac{\log n}{n}]$, $[-(a-b)\frac{\log n}{n},0]$, $[b\frac{\log n}{n},a\frac{\log n}{n}]$ and $[0,a-b\frac{\log n}{n}]$ around $u$ each divided into $L\equiv 2^{c}$ patches (intervals) of size $\theta=\frac{a-b}{2^c}$  in  the following way}:
\begin{enumerate}
\item $I_u^i=[\frac{(-a + (i-1)\theta )\log n}{n},\frac{(-a + i\theta)  \log n}{n}]$
\item $J_u^i=[\frac{(-(a-b) + (i-1)\theta )\log n}{n},\frac{(-(a-b) + i\theta)  \log n}{n}]$
\item $K_u^i= [\frac{(b + (i-1)\theta )\log n}{n},\frac{(b + i\theta)  \log n}{n}]$
\item $M_u^i= [\frac{( (i-1)\theta )\log n}{n},\frac{ i\theta  \log n}{n}]$
\end{enumerate}
where $i=1,2,3,\ldots,L$. Note that any vertex in $\cup I_u^i \cup K_u^i$ is connected to $u$. See, Figure~\ref{fig:patch} for a depiction.

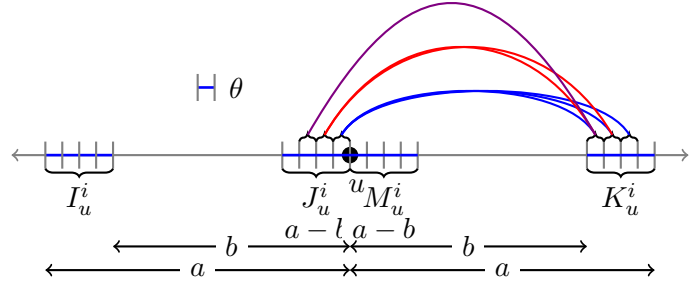
\begin{figure}
\centering
\begin{tikzpicture}[thick, scale=0.9]
\draw [gray,<->] (-5,0) -- (5,0);\filldraw [black] (0,0) circle (3pt)node[anchor=south] at (0.1,-0.7) {$u$};
\draw[<->] (0,-1.1) -- node[midway,fill=white] {$a-b$} (-1,-1.1);
\draw[<->] (0,-1.1) -- node[midway,fill=white] {$a-b$} (1,-1.1);
\draw[<->] (0,-1.35) -- node[midway,fill=white] {$b$} (-3.5,-1.35);
\draw[<->] (0,-1.7) -- node[midway,fill=white] {$a$} (-4.5,-1.7);
\draw[<->] (0,-1.35) -- node[midway,fill=white] {$b$} (3.5,-1.35);
\draw[<->] (0,-1.7) -- node[midway,fill=white] {$a$} (4.5,-1.7);

\draw [blue, line width=1pt] (-4.5,0) -- (-4.25,0);
\draw [blue, line width=1pt] (-4.25,0) -- (-4,0);
\draw [blue, line width=1pt] (-4,0) -- (-3.75,0);
\draw [blue, line width=1pt] (-3.75,0) -- (-3.5,0);

\draw [blue, line width=1pt] (-2.00,1) -- (-2.25,1) node [black,midway,xshift=0.4cm,yshift=0cm] {$\theta$};;
\draw [gray] (-2.00,1.2) -- (-2.00,0.8);
\draw [gray] (-2.25,1.2) -- (-2.25,0.8);

\draw [gray] (4.5,0.2) -- (4.5,-.2);
\draw [gray] (4.25,0.2) -- (4.25,-.2);
\draw [gray] (4,0.2) -- (4,-.2);
\draw [gray] (3.75,0.2) -- (3.75,-.2);
\draw [gray] (3.5,0.2) -- (3.5,-.2);

\draw [gray] (-4.5,0.2) -- (-4.5,-.2);
\draw [gray] (-4.25,0.2) -- (-4.25,-.2);
\draw [gray] (-4,0.2) -- (-4,-.2);
\draw [gray] (-3.75,0.2) -- (-3.75,-.2);
\draw [gray] (-3.5,0.2) -- (-3.5,-.2);

\draw [gray] (-1,0.2) -- (-1,-.2);
\draw [gray] (-.75,0.2) -- (-.75,-.2);
\draw [gray] (-0.5,0.2) -- (-0.5,-.2);
\draw [gray] (-.25,0.2) -- (-.25,-.2);
\draw [gray] (0,0.2) -- (0,-.2);
\draw [gray] (1,0.2) -- (1,-.2);
\draw [gray] (.75,0.2) -- (.75,-.2);
\draw [gray] (0.5,0.2) -- (0.5,-.2);
\draw [gray] (.25,0.2) -- (.25,-.2);
\draw [decorate,decoration={brace,amplitude=3pt,mirror},xshift=0pt,yshift=0pt](3.5,-0.2) -- (4.5,-0.2) node [black,midway,xshift=0cm,yshift=-0.4cm] {$K_u^i$};
\draw [decorate,decoration={brace,amplitude=3pt},xshift=0pt,yshift=0pt](0,-0.2) -- (-1,-0.2) node [black,midway,xshift=0cm,yshift=-0.4cm] {$J_u^i$};
\draw [decorate,decoration={brace,amplitude=3pt,mirror},xshift=0pt,yshift=0pt](0,-0.2) -- (1,-0.2) node [black,midway,xshift=0cm,yshift=-0.4cm] {$M_u^i$};

\draw [decorate,decoration={brace,amplitude=3pt},xshift=0pt,yshift=0pt](-3.5,-0.2) -- (-4.5,-0.2) node [black,midway,xshift=0cm,yshift=-0.4cm] {$I_u^i$};

\draw [blue, line width=1pt] (-1,0) -- (-0.75,0);
\draw [blue, line width=1pt] (-0.75,0) -- (-0.5,0);
\draw [blue, line width=1pt] (-0.5,0) -- (-0.25,0);
\draw [blue, line width=1pt] (-0.25,0) -- (0,0);

\draw [blue, line width=1pt] (1,0) -- (0.75,0);
\draw [blue, line width=1pt] (0.75,0) -- (0.5,0);
\draw [blue, line width=1pt] (0.5,0) -- (0.25,0);
\draw [blue, line width=1pt] (0.25,0) -- (0,0);

\draw [blue, line width=1pt] (4.5,0) -- (4.25,0);
\draw [blue, line width=1pt] (4.25,0) -- (4,0);
\draw [blue, line width=1pt] (4,0) -- (3.75,0);
\draw [blue, line width=1pt] (3.75,0) -- (3.5,0);

\draw [decorate,decoration={brace,amplitude=3pt},xshift=0pt,yshift=0pt](3.5,0.2) -- (3.75,0.2);
\draw [decorate,decoration={brace,amplitude=3pt},xshift=0pt,yshift=0pt](3.75,0.2) -- (4,0.2);
\draw [decorate,decoration={brace,amplitude=3pt},xshift=0pt,yshift=0pt](4,0.2) -- (4.25,0.2);
\draw [decorate,decoration={brace,amplitude=3pt},xshift=0pt,yshift=0pt](-0.75,0.2) -- (-0.5,0.2);
\draw [decorate,decoration={brace,amplitude=3pt},xshift=0pt,yshift=0pt](-0.5,0.2) -- (-0.25,0.2);
\draw [decorate,decoration={brace,amplitude=3pt},xshift=0pt,yshift=0pt](-0.25,0.2) -- (0,0.2);

\draw[blue]    (-0.125,0.3) to[out=60,in=120,distance=1cm](3.625,0.3) ;
\draw[blue]    (-0.125,0.3) to[out=60,in=120,distance=1cm](3.875,0.3) ;
\draw[blue]    (-0.125,0.3) to[out=60,in=120,distance=1cm](4.125,0.3) ;

\draw[red]    (-0.375,0.3) to[out=60,in=120,distance=2cm](3.875,0.3) ;
\draw[red]    (-0.375,0.3) to[out=60,in=120,distance=2cm](3.625,0.3) ;

\draw[violet]    (-0.625,0.3) to[out=60,in=120,distance=3cm](3.625,0.3) ;

\end{tikzpicture}
\caption{Pictorial representation of $I_u^i, J_u^i, K_u^i, M_u^i$ and their connectivity as described in Lemma \ref{lem:lemma1}. The colored lines show the regions that are connected to each other.\label{fig:patch}}
\end{figure}

Consider a $\{0,1\}$-indicator random variable $X_u$ that is $1$ if and only if there {does not exist any node in a region formed by union of any $2L-1$ patches amongst the ones described above.} Notice that { when $a<2b$, the patches do not overlap and the total size of $2L-1$ patches is $\frac{2^{c+1}-1}{2^c}\frac{(a-b)\log n}{n}$ and when $a\geq 2b$, the patches can overlap and the total size of the $2L-1$ patches is going to be more than $\min\{\frac{2^{c+1}-1}{2^c}\frac{(a-b)\log n}{n},\frac{a \log n}{n}\}$. }
Since there are ${4L \choose 2L-1}\le n^{\frac{4L}{\log n}}$ possible regions that consists of $2L-1$ patches, 
\begin{align*}
 \sum_u \avg X_u &\leq n {4L \choose 2L-1}\Big(1-\min\{\frac{2^{c+1}-1}{2^c}\frac{(a-b)\log n}{n},\frac{a \log n}{n}\}\Big)^{n-1}\\
 & \le \max\{n^{1-\frac{2^{c+1}-1}{2^c}(a-b) +\frac{4L}{\log n}}, n^{1-a+\frac{4L}{\log n}}\}.
\end{align*}
At this point we can choose $c= c_n = o(\log n)$ such that $\lim_n c_n = \infty$.
 Hence when $a-b > \frac12$ 
 and $a > 1$, for every vertex $u$ there exists at least one patch amongst every $2L-1$ patches in $\cup I_u^i \cup J_u^j \cup K_u^k, i,j,k =1,2, \dots, L$ that contains a vertex.

Consider a collection of patches $\cup_iI_u^i \cup_j J_u^j,  i,j=1, 2, \dots, L$. We know that there exist two patches amongst these $I_u^i$s and $K_u^j$s that contain at least one vertices. If one of $I_u^i$s and one of $K_u^j$s contain two vertices, we found one neighbor of $u$ on both left and right directions (see, Figure~\ref{fig:patch}).

We consider the other case now.
Without loss of generality assume that there are no vertex in all $I_u^i$s and there exist at least two patches in $K_u^i$s that contain at least one vertex each. Hence, there exists at least one of $\{K_u^i \mid i\in \{1,2, \ldots, L-1\}\}$ that contains a vertex. Similarly, we can also conclude in this case that there exists at least one of $\{J_u^i \mid i\in \{2,3 \ldots, L\}\}$ which contain a node. Assume $J_u^{\phi}$ to be the left most patch in $\cup J_u^i \mid i \in \{1, 2, \dots, L\}$ that contains a vertex (see, Figure~\ref{fig:patch}) . From our previous observation, we can conclude that $\phi \ge 2$.

We can observe that any vertex in $J_u^j$ is connected to the vertices in patches $K_u^k,  \forall k<j$. 
This is because for two vertices $v\in J_u^j$ and $w\in K_u^k$, we have
\begin{align*}
d(v,w) &\ge \frac{(b + (k-1)\theta )\log n}{n} - \frac{(-(a-b) + j\theta)  \log n}{n}
= \frac{(a + (k-j-1)\theta )\log n}{n}; \\
d(v,w) &\le \frac{(b + k\theta)  \log n}{n} - \frac{(-(a-b) + (j-1)\theta )\log n}{n}
= \frac{(a + (k-j+1)\theta )\log n}{n}. 
\end{align*}


Consider a collection of $2L-1$ patches $\{\cup I_u^i  \cup J_u^j \cup K_u^k \mid i,j,k\in \{1,\dots, L\}, j>\phi, k\le \phi-1\}$ where $\phi \ge 2$. This is a collection of $2L-1$ patches out of which one must have a vertex and since none of $\{J_u^j \mid j>\phi\}$ and $I_u^i$ can contain a vertex, one of $\{K_u^k \mid k \le \phi -1\}$  must contain the vertex. Recall that the vertex in $J_u^{\phi}$ is connected to any node in $K_u^k$ for any $k\le \phi-1$ and therefore $u$ has a node to the right direction and left direction that are connected to $u$. Therefore every vertex is part of a cycle and each of the circles  covers $[0,1]$.
 \end{proof}

\begin{lemma*}[Restatement of Lemma \ref{lem:lemma2}]
Set two real numbers $k\equiv \lceil b/(a-b)\rceil+1$ and $\epsilon < \frac1{2k}$. In an ${\rm RAG}^\ast_1(n, [\frac{b\log n}{n}, \frac{a\log n}{n}]), 0 <b <a$, with  probability $1-o(1)$ there exists a vertex $u_0$ and  $k$ nodes $\{u_1, u_2 ,\ldots, u_k\}$ to the right of $u_0$ such that $d(u_0,u_i) \in [\frac{(i(a-b)-2i\epsilon)\log n}{n},\frac{(i(a-b)-(2i-1)\epsilon) \log n}{n}]$ and  $k$ nodes $\{v_1, v_2 ,\ldots, v_k\}$ to the right of $u_0$ such that $d(u_0,v_i) \in [\frac{((i(a-b)+b-(2i-1)\epsilon)\log n}{n},\frac{(i(a-b)+b -(2i-2)\epsilon)\log n}{n}]$, for $i =1,2,\ldots,k$. The arrangement of the vertices is shown  in Figure~\ref{fig:arr}.
\end{lemma*}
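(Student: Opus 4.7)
The plan is to fix an arbitrary vertex to play the role of $u_0$ and then show, by a direct first-moment / union-bound estimate, that the $2k$ required small windows around $u_0$'s position each contain at least one of the remaining vertices with probability $1-o(1)$. First I would use the rotational symmetry of the uniform distribution on the unit-circumference circle to condition on $X_{u_0}=0$; the remaining $n-1$ coordinates are then i.i.d.\ uniform on $[0,1]$.

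Next I would check that the $2k$ target windows
$$
I_i=\Big[\tfrac{(i(a-b)-2i\epsilon)\log n}{n},\tfrac{(i(a-b)-(2i-1)\epsilon)\log n}{n}\Big],\quad
J_i=\Big[\tfrac{(i(a-b)+b-(2i-1)\epsilon)\log n}{n},\tfrac{(i(a-b)+b-(2i-2)\epsilon)\log n}{n}\Big],
$$
for $i=1,\dots,k$, are pairwise disjoint. Consecutive window midpoints (in cyclic order on $[0,1]$) differ by at least $\min(b,|a-2b|,a-b)\cdot\tfrac{\log n}{n}$ for generic $a,b$, so the choice $\epsilon<1/(2k)$ is enough to keep the width-$\epsilon\tfrac{\log n}{n}$ windows from overlapping. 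In the degenerate case when $b/(a-b)$ is an integer so that, e.g., the midpoints of $I_k$ and $J_1$ both sit near $\frac{a\log n}{n}$, their exact separation works out to $(2k-1)\epsilon\cdot\tfrac{\log n}{n}>\epsilon\tfrac{\log n}{n}$ since $k\ge 2$, so disjointness still holds; one may optionally strengthen the constraint to $\epsilon<\min(1/(2k),(a-b)/6)$ to sidestep such bookkeeping without affecting downstream use.

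Given disjointness, the argument is a one-shot union bound. For any fixed window $W$ of length $\epsilon\log n/n$, the probability that none of the $n-1$ remaining vertices lands in $W$ is
$$
(1-\epsilon\log n/n)^{n-1}=\exp\!\big(-\epsilon\log n\,(1+o(1))\big)=n^{-\epsilon+o(1)}.
$$
Summing over the $2k$ windows, the probability that at least one window is empty is at most $2k\cdot n^{-\epsilon+o(1)}=o(1)$, since $k$ and $\epsilon$ depend only on $a,b$. Therefore, with probability $1-o(1)$ each of $I_1,\dots,I_k,J_1,\dots,J_k$ contains at least one vertex, and choosing any such vertex from each window yields distinct (by disjointness) points $u_1,\dots,u_k,v_1,\dots,v_k$ at the prescribed distances from $u_0$.

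There is essentially no hard step: each window has expected occupancy $\epsilon\log n\to\infty$, so existence of a vertex in each is overwhelmingly likely and the union bound loses nothing. The only genuinely delicate part is the disjointness verification, and that is precisely what the calibration $\epsilon<1/(2k)$ is designed to guarantee; no second-moment or correlation argument is needed.
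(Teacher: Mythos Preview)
Your approach is correct and takes a genuinely simpler route than the paper. The paper does \emph{not} fix $u_0$ in advance; it lets $A_u$ be the indicator that the $2k$ windows around $u$ each contain exactly one vertex and no other vertex lies in their union, computes $\avg\big[\sum_u A_u\big]\asymp n^{1-2k\epsilon}(\epsilon\log n)^{2k}$, and then runs a full second-moment/Chebyshev argument on $\sum_u A_u$ to conclude $\sum_u A_u\ge 1$ with high probability. The hypothesis $\epsilon<1/(2k)$ is precisely what makes that expectation diverge; in your argument it plays no role, since a single fixed vertex already works as $u_0$ with probability $1-2k\,n^{-\epsilon+o(1)}$. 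So your proof is strictly more elementary and in fact establishes the stronger statement that \emph{every} vertex is a valid $u_0$.

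One loose end: your disjointness claim is not quite right as written. The quantity $\min(b,|a-2b|,a-b)$ can vanish (take $a=2b$), and even when positive it is not controlled by $1/(2k)$; for instance with $a=2.1$, $b=1$, $\epsilon=0.04$ one has $\epsilon<1/(2k)=1/4$ yet $I_2=[2.04,2.08]$ and $J_1=[2.06,2.10]$ overlap. You also cannot add the side constraint $\epsilon<(a-b)/6$ without altering the lemma's hypothesis. The clean fix is to abandon disjointness: each window has ${\rm Bin}(n-1,\epsilon\log n/n)$ occupancy, so a Chernoff bound shows that with probability $1-o(1)$ every window simultaneously holds at least $(\epsilon/2)\log n\gg 2k$ vertices, and distinct representatives $u_1,\dots,u_k,v_1,\dots,v_k$ can then be selected greedily regardless of overlaps.
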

\begin{proof}[Proof of Lemma~\ref{lem:lemma2}]
Recall that we want to show that
 there exists a node $u_0$ and  $k$ nodes $\{u_1, u_2 ,\ldots, u_k\}$ to the right of $u_0$ such that $d(u_0,u_i) \in [\frac{(i(a-b)-2i\epsilon)\log n}{n},\frac{(i(a-b)-(2i-1)\epsilon) \log n}{n}]$ and exactly $k$ nodes $\{v_1, v_2 ,\ldots, v_k\}$ to the right of $u_0$ such that $d(u_0,v_i) \in [\frac{((i(a-b)+b-(2i-1)\epsilon)\log n}{n},\frac{(i(a-b)+b -(2i-2)\epsilon)\log n}{n}]$, for $i =1,2,\ldots,k$ and $\epsilon$ is a constant less than $\frac1{2k}$ (see Figure~\ref{fig:arr} for a depiction). 
 Let $A_u$ be an indicator $\{0,1\}$-random variable for every node $u$ which is $1$ if $u$ satisfies the above conditions and $0$ otherwise. We will show $\sum_{u} A_u \ge 1$ with high probability. 
 
 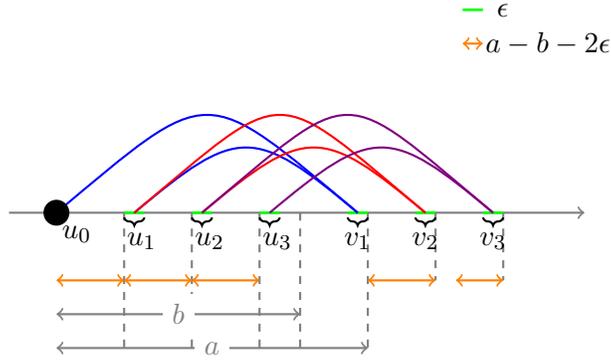
\begin{figure}
 \centering
\begin{tikzpicture}[thick, scale=0.9]

\draw [gray,->] (-1,0) -- (7.5,0);
\draw [gray,dashed] (0.7,0) -- (0.7,-2);
\draw [gray,dashed] (1.7,0) -- (1.7,-2);
\draw [gray,dashed] (2.7,0) -- (2.7,-2);
\draw [gray,dashed] (3.3,0) -- (3.3,-2);
\draw [gray,dashed] (4.3,0) -- (4.3,-2);
\draw [gray,dashed] (5.3,0) -- (5.3,-1);
\draw [gray,dashed] (6.3,0) -- (6.3,-1);

\draw [green, line width=1pt] (4,0) -- (4.3,0);
\draw [decorate,decoration={brace,amplitude=3pt,mirror},xshift=0pt,yshift=-2pt](4.,0) -- (4.3,0) node [black,midway,xshift=0cm,yshift=-0.31cm] {$v_1$};

\draw [green, line width=1pt] (5,0) -- (5.3,0);
\draw [decorate,decoration={brace,amplitude=3pt,mirror},xshift=0pt,yshift=-2pt](5,0) -- (5.3,0) node [black,midway,xshift=0cm,yshift=-0.31cm] {$v_2$};

\draw [green, line width=1pt] (6,0) -- (6.3,0);
\draw [decorate,decoration={brace,amplitude=3pt,mirror},xshift=0pt,yshift=-2pt](6,0) -- (6.3,0) node [black,midway,xshift=0cm,yshift=-0.31cm] {$v_3$};

\draw [green, line width=1pt] (0.7,0) -- (1,0);
\draw [decorate,decoration={brace,amplitude=3pt,mirror},xshift=0pt,yshift=-2pt](0.7,0) -- (1,0) node [black,midway,xshift=0.1cm,yshift=-0.31cm] {$u_1$};

\draw [green, line width=1pt] (1.7,0) -- (2,0);
\draw [decorate,decoration={brace,amplitude=3pt,mirror},xshift=0pt,yshift=-2pt](1.7,0) -- (2,0) node [black,midway,xshift=0.1cm,yshift=-0.31cm] {$u_2$};

\draw [green, line width=1pt] (2.7,0) -- (3,0);
\draw [decorate,decoration={brace,amplitude=3pt,mirror},xshift=0pt,yshift=-2pt](2.7,0) -- (3,0) node [black,midway,xshift=0.1cm,yshift=-0.31cm] {$u_3$};

\draw [orange,<->] (-0.3,-1) -- (0.7,-1);
\draw [orange,<->] (0.7,-1) -- (1.7,-1);
\draw [orange,<->] (1.7,-1) -- (2.7,-1);
\draw [orange,<->] (4.3,-1) -- (5.3,-1);
\draw [orange,<->] (5.6,-1) -- (6.3,-1);

\draw[gray,<->] (-0.3,-2) -- node[midway,fill=white] {${a}$} (4.3,-2);
\draw[gray,<->] (-0.3,-1.5) -- node[midway,fill=white] {${b}$} (3.3,-1.5);

\draw [green, line width=1pt] (5.7,3) -- (6,3)node [black,midway,xshift=0.4cm,yshift=0cm] {$\epsilon$};
\draw [orange,<->] (5.7,2.5) -- (6,2.5)node [black,midway,xshift=1cm,yshift=0cm] {$a-b-2\epsilon$};


\draw[blue]    (-0.3,0) to[out=40,in=140,distance=3cm](4.15,0) ;
\draw[blue]    (0.85,0) to[out=40,in=140,distance=2cm](4.15,0) ;
\draw[red]    (0.85,0) to[out=40,in=140,distance=3cm](5.15,0) ;
\draw[red]    (1.85,0) to[out=40,in=140,distance=2cm](5.15,0) ;
\draw[violet]    (1.85,0) to[out=40,in=140,distance=3cm](6.15,0) ;
\draw[violet]    (2.85,0) to[out=40,in=140,distance=2cm](6.15,0) ;

\filldraw [black] (-0.3,0) circle (5pt)node[anchor=south] at (0,-0.6) {$u_0$};
\end{tikzpicture}
\caption{The location of $u_i$ and $v_i$ relative to $u$ scaled by $\frac{\log n}{n}$ in Lemma \ref{lem:lemma2}. Edges stemming put of  $v_1,v_2, v_3$ are shown as blue, red and violet respectively. \label{fig:arr}}
\end{figure}

 We  have,
\begin{align*}
\Pr(A_u=1) & = n(n-1)\dots (n-(2k-1))\Big(\frac{\epsilon \log n}{n}\Big)^{2k} \Big(1-2k\epsilon \frac{\log n}{n}\Big)^{n-2k}\\
& = c_0 n^{-2k\epsilon} (\epsilon \log n)^{2k}  \prod_{i=0}^{2k-1} (1-i/n)\\
&= c_1 n^{-2k\epsilon} (\epsilon \log n)^{2k}
\end{align*}
where $c_0,c_1$ are just absolute constants independent of $n$ (recall $k$ is a constant). 
Hence,
\begin{align*}
\sum_{u} \avg A_u= c_1 n^{1-2k\epsilon} (\epsilon \log n)^{2k} \ge 1
\end{align*} 
as long as $\epsilon \leq \frac{1}{2k}$.  Now, in order to prove $\sum_{u}  A_u\ge 1$ with high probability, we will show that the variance of $\sum_{u}  A_u$  is bounded from above.  This calculation is very similar to the one in the proof of the connectivity lower bound in  Theorem~\ref{thm:rag}. 
Recall that if $A =\sum_{u}  A_u$ is a sum of indicator random variables, we must have 
$${\rm Var}(A) \le \avg[A]+\sum_{u \neq v} {\rm Cov}(A_u,A_v)=\avg[A]+\sum_{u \neq v} \Pr(A_u=1 \cap A_v=1)-\Pr(A_u=1)\Pr(A_v=1).$$ 
Now first consider the case when vertices $u$ and $v$ are at a distance of at least $\frac{2(a+b) \log n}{n}$ apart (happens with probability $1- \frac{4(a+b)\log n}{n}$). 
Then the region in $[0,1]$ that is within distance $\frac{(a+b)\log n}{n}$ from both $u$ and $v$ is the empty-set. In this case, $\Pr(A_u=1 \cap A_v=1) = n(n-1)\dots (n-(4k-1))\Big(\frac{\epsilon \log n}{n}\Big)^{4k} \Big(1-4k\epsilon \frac{\log n}{n}\Big)^{n-4k} = c_2 n^{-4k\epsilon} (\epsilon \log n)^{4k},$ where $c_2$ is a constant.

In all other cases, $\Pr(A_u=1 \cap A_v=1) \le \Pr(A_u =1)$.
Therefore,
\begin{align*}
\Pr(A_u=1 \cap A_v=1)\leq \Big(1-\frac{4(a+b) \log n}{n}\Big) c_2 n^{-4k\epsilon} (\epsilon \log n)^{4k}+ \frac{4(a+b) \log n}{n} c_1n^{-2k\epsilon} (\epsilon \log n)^{2k} 
\end{align*}
and
\begin{align*}
{\rm Var(A)} &\le c_1n^{1-2k\epsilon} (\epsilon \log n)^{2k} +{n \choose 2}\Big(\Pr(A_u=1 \cap A_v=1)-\Pr(A_u=1)\Pr(A_v=1)\Big) \\
&\le c_1n^{1-2k\epsilon} (\epsilon \log n)^{2k}+ c_3 n^{1-2k\epsilon} (\log n)^{2k+1}\\
& \le c_4  n^{1-2k\epsilon} (\log n)^{2k+1}
\end{align*}
where $c_3,c_4$ are constants. Again invoking Chebyshev's inequality, with probability at least $1-\frac{1}{\log n}$ 
$$
A > c_1n^{1-2k\epsilon} (\epsilon \log n)^{2k} - \sqrt{c_4  n^{1-2k\epsilon} (\log n)^{2k+2}}.
$$  

\end{proof}

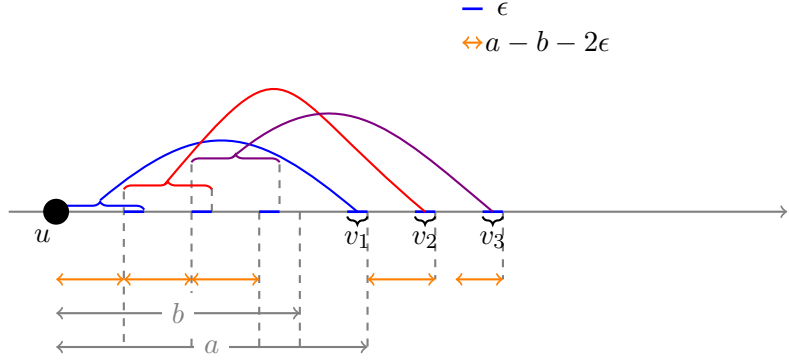
\begin{figure}
\centering
\begin{tikzpicture}[thick, scale=0.9]

\draw [gray,->] (-1,0) -- (10.5,0);
\draw [gray,dashed] (0.7,0.3) -- (0.7,-2);
\draw [gray,dashed] (1.7,0.7) -- (1.7,-2);
\draw [gray,dashed] (2,0.35) -- (2,0);
\draw [gray,dashed] (3,0.7) -- (3,0);
\draw [gray,dashed] (2.7,0) -- (2.7,-2);
\draw [gray,dashed] (3.3,0) -- (3.3,-2);
\draw [gray,dashed] (4.3,0) -- (4.3,-2);
\draw [gray,dashed] (5.3,0) -- (5.3,-1);
\draw [gray,dashed] (6.3,0) -- (6.3,-1);

\draw [blue, line width=1pt] (4,0) -- (4.3,0);
\draw [decorate,decoration={brace,amplitude=3pt,mirror},xshift=0pt,yshift=-2pt](4.,0) -- (4.3,0) node [black,midway,xshift=0cm,yshift=-0.31cm] {$v_1$};

\draw [blue, line width=1pt] (5,0) -- (5.3,0);
\draw [decorate,decoration={brace,amplitude=3pt,mirror},xshift=0pt,yshift=-2pt](5,0) -- (5.3,0) node [black,midway,xshift=0cm,yshift=-0.31cm] {$v_2$};

\draw [blue, line width=1pt] (6,0) -- (6.3,0);
\draw [decorate,decoration={brace,amplitude=3pt,mirror},xshift=0pt,yshift=-2pt](6,0) -- (6.3,0) node [black,midway,xshift=0cm,yshift=-0.31cm] {$v_3$};

\draw [blue, line width=1pt] (0.7,0) -- (1,0);
\draw [blue,decorate,decoration={brace,amplitude=3pt},xshift=0pt,yshift=-2pt](-0.3,0.1) -- (1,0.1);
\draw [red,decorate,decoration={brace,amplitude=3pt},xshift=0pt,yshift=-2pt](.7,0.4) -- (2,0.4);
\draw [violet,decorate,decoration={brace,amplitude=3pt},xshift=0pt,yshift=-2pt](1.7,0.8) -- (3,0.8);

\draw [blue, line width=1pt] (1.7,0) -- (2,0);

\draw [blue, line width=1pt] (2.7,0) -- (3,0);

\draw [orange,<->] (-0.3,-1) -- (0.7,-1);
\draw [orange,<->] (0.7,-1) -- (1.7,-1);
\draw [orange,<->] (1.7,-1) -- (2.7,-1);
\draw [orange,<->] (4.3,-1) -- (5.3,-1);
\draw [orange,<->] (5.6,-1) -- (6.3,-1);

\draw[gray,<->] (-0.3,-2) -- node[midway,fill=white] {${a}$} (4.3,-2);
\draw[gray,<->] (-0.3,-1.5) -- node[midway,fill=white] {${b}$} (3.3,-1.5);

\draw [blue, line width=1pt] (5.7,3) -- (6,3)node [black,midway,xshift=0.4cm,yshift=0cm] {$\epsilon$};
\draw [orange,<->] (5.7,2.5) -- (6,2.5)node [black,midway,xshift=1cm,yshift=0cm] {$a-b-2\epsilon$};


\draw[blue]    (0.35,0.17) to[out=40,in=140,distance=2cm](4.15,0) ;
\draw[red]    (1.3,0.4) to[out=50,in=140,distance=3cm](5.15,0) ;

\draw[violet]    (2.3,0.8) to[out=40,in=140,distance=2cm](6.15,0) ;

\filldraw [black] (-0.3,0) circle (5pt)node[anchor=south] at (-0.5,-0.6) {$u$};
\end{tikzpicture}
\caption{The line segments where $v_1,v_2,v_3$ can have neighbors (scaled by $\frac{\log n}{n}$) in the proof of Lemma \ref{lem:lemma2}. The point $t$ has to lie in one of these regions.\label{fig:arr1d}}
\end{figure}


\begin{proof}[Proof of Corollary \ref{cor:extra}]
Consider a node $u$ and assume that the position of $u$ is $0$. Associate a random variable $A_u^i$ for $i \in \{1,2,3,4\}$ which takes the value of $1$ when there does not exist any node $x$ such that 

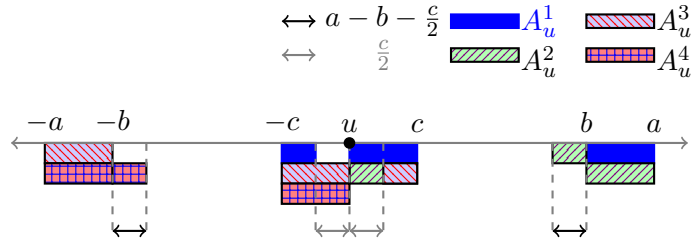
\begin{figure}[h]
\centering
\begin{tikzpicture}[thick, scale=0.9]
\draw [black,<->] (-0.5,1.8) -- (-1,1.8)node at (0.5,1.8) {$a-b-\frac{c}{2}$};;
\draw [gray,<->] (-0.5,1.3) -- (-1,1.3)node at (0.5,1.3) {$\frac{c}{2}$};;
\draw[pattern=north west lines,pattern color=red][preaction={fill=blue!20}] (3.5,1.9) -- (4.5,1.9) -- (4.5,1.7) -- (3.5,1.7) -- cycle node at (4.8,1.8) {$A_u^3$};;
\draw[pattern=grid,pattern color=blue][preaction={fill=red!50}]   (3.5,1.4) -- (4.5,1.4) -- (4.5,1.2) -- (3.5,1.2) -- cycle node at (4.8,1.2) {$A_u^4$};;
\draw[fill=blue,blue]   (1.5,1.9) -- (2.5,1.9) -- (2.5,1.7) -- (1.5,1.7) -- cycle node at (2.8,1.8) {$A_u^1$};;
\draw[pattern=north east lines,pattern color=violet][preaction={fill=green!30}] (1.5,1.4) -- (2.5,1.4) -- (2.5,1.2) -- (1.5,1.2) -- cycle node at (2.8,1.2) {$A_u^2$};;

\draw[fill=blue,blue]  (3.5,0) -- (4.5,0) -- (4.5,-0.3) -- (3.5,-0.3) -- cycle;
\draw[fill=blue,blue]  (0,0) -- (1,0) -- (1,-0.3) -- (0,-0.3) -- cycle;
\draw[fill=blue,blue]  (-0.5,0) -- (-1,0) -- (-1,-0.3) -- (-0.5,-0.3) -- cycle;

\draw[pattern=north east lines,pattern color=violet][preaction={fill=green!30}] (3.5,-0.6) -- (4.5,-0.60) -- (4.5,-0.3) -- (3.5,-0.3) -- cycle;
\draw[pattern=north east lines,pattern color=violet][preaction={fill=green!30}]  (0,-0.6) -- (1,-0.6) -- (1,-0.3) -- (0,-0.3) -- cycle;
\draw[pattern=north east lines,pattern color=violet][preaction={fill=green!30}]  (3,0) -- (3.5,0) -- (3.5,-0.3) -- (3,-0.3) -- cycle;

\draw[pattern=north west lines,pattern color=red][preaction={fill=blue!20}] (-3.5,0) -- (-4.5,0) -- (-4.5,-0.3) -- (-3.5,-0.3) -- cycle;
\draw[pattern=north west lines,pattern color=red][preaction={fill=blue!20}] (0,-0.60) -- (-1,-0.60) -- (-1,-0.3) -- (0,-0.3) -- cycle;
\draw[pattern=north west lines,pattern color=red][preaction={fill=blue!20}] (0.5,-0.6) -- (1,-0.6) -- (1,-0.3) -- (0.5,-0.3) -- cycle;

\draw[pattern=grid,pattern color=blue][preaction={fill=red!50}](-3.5,-0.6) -- (-4.5,-0.6) -- (-4.5,-0.3) -- (-3.5,-0.3) -- cycle;
\draw[pattern=grid,pattern color=blue][preaction={fill=red!50}] (0,-0.6) -- (-1,-0.6) -- (-1,-0.9) -- (0,-0.9) -- cycle;
\draw[pattern=grid,pattern color=blue][preaction={fill=red!50}]  (-3,-0.3) -- (-3.5,-0.3) -- (-3.5,-0.6) -- (-3,-0.6) -- cycle;

\draw [gray,<->] (0,-1.3) -- (0.5,-1.3);
\draw [gray,<->] (0,-1.3) -- (-.5,-1.3);
\draw [black,<->] (-3.50,-1.3) -- (-3,-1.3);
\draw [black,<->] (3.50,-1.3) -- (3,-1.3);

\draw [gray,<->] (-5,0) -- (5,0);
\draw [gray,dashed] (0.5,0) -- (0.5,-1.4);
\draw [gray,dashed] (0,0) -- (0,-1.4);
\draw [gray,dashed] (-0.5,0) -- (-0.5,-1.4);
\draw [gray,dashed] (-3,0) -- (-3,-1.4);
\draw [gray,dashed] (-3.5,0) -- (-3.5,-1.4);
\draw [gray,dashed] (3.5,0) -- (3.5,-1.4);


\draw [gray,dashed] (3,0) -- (3,-1.4);
\draw node[anchor=south] at (3.5,0) {$b$};
\draw node[anchor=south] at (4.5,0) {$a$};
\draw node[anchor=south] at (-3.5,0) {$-b$};
\draw node[anchor=south] at (-4.5,0) {$-a$};
\draw node[anchor=south] at (-1,0) {$-c$};

\draw node[anchor=south] at (1,0) {$c$};

%
%
%
%
\filldraw [black] (0,0) circle (2pt)node[anchor=south] at (0,0) {$u$};
\end{tikzpicture}
\caption{The representation of different intervals corresponding to each random variable as described in Corollary \ref{cor:extra}}
\end{figure}

\begin{enumerate}
\item $d(u,x) \in [b\frac{\log n}{n},a\frac{\log n}{n}] \cup [0,c\frac{\log n}{n}]  \cup [\frac{-c\log n}{n},\frac{-c/2\log n}{n}]\} \text{ for } i=1$
\item $d(u,x) \in [b\frac{\log n}{n},a\frac{\log n}{n}] \cup [0,c\frac{\log n}{n}]  \cup [\frac{b-c/2\log n}{n},\frac{(a-c)\log n}{n}]\} \text{  for  } i=2$
\item $d(u,x) \in [-a\frac{\log n}{n},-b\frac{\log n}{n}] \cup [-c\frac{\log n}{n},0]  \cup  [\frac{c/2\log n}{n},\frac{c\log n}{n}]\} \text{  for  } i=3$
\item $d(u,x) \in [-a\frac{\log n}{n},-b\frac{\log n}{n}] \cup [-c\frac{\log n}{n},0]  \cup [\frac{(c-a)\log n}{n},\frac{(c/2-b)\log n}{n}]\} \text{  for  } i=4$
\end{enumerate}

\begin{align*}
\Pr(A_u^i=1)&=
\begin{cases}
\Big(1-(c+a-b+(a-b-c/2))\frac{\log n}{n}\Big)^{n} \text{ when }a-c<b \text{ and } b-c/2 > c\\
\Big(1-(b-c)\frac{\log n}{n}\Big)^{n} \text{ when }a-c<b \text{ and } b-c/2 < c\\
\Big(1-(a)\frac{\log n}{n}\Big)^{n} \text{ when }a-c\ge b \text{ and } b-c/2 < c\\
\Big(1-(c+a-b+c/2)\frac{\log n}{n}\Big)^{n} \text{ when }a-c \ge b \text{ and } b-c/2 \ge c\\
\end{cases}
\end{align*}
Notice that $A_u^1$ and $A_u^2$ being zero implies that either there is a node in $\{x \mid d(u,x) \in [b\frac{\log n}{n},a\frac{\log n}{n}]\cup [0,c\frac{\log n}{n}] \}$ or there exists nodes $(v_1,v_2)$ in $\{x \mid d(u,x) \in [\frac{-c\log n}{n},\frac{-c/2\log n}{n}]\}$ and $\{x \mid d(u,x) \in  [\frac{b-c/2\log n}{n},\frac{(a-c)\log n}{n}]\}$. In the second case, $u$ is connected to $v_1$ and $v_1$ is connected to $v_2$. Therefore $u$ has nodes on left and right and $u$ is connected to both of them although not directly. Similarly $A_u^3$ and $A_u^4$ being zero implies that there exist nodes in $\{x \mid d(u,x) \in  [-a\frac{\log n}{n},-b\frac{\log n}{n}]\cup [-c\frac{\log n}{n},0]\}$ or again $u$ will have nodes on left and right and will be connected to them. So , when all the $4$ events happen together, the only exceptional case is when there are nodes in $\{x \mid d(u,x) \in [b\frac{\log n}{n},a\frac{\log n}{n}]\cup [0,c\frac{\log n}{n}] \}$ and $\{x \mid d(u,x) \in  [-a\frac{\log n}{n},b\frac{\log n}{n}]\cup [-c\frac{\log n}{n},0] \}$. But in that case $u$ has direct neighbors on both its left and right. So, we can conclude that for every node $u$, there exists a node $v$ such that $d(u,v) \in[0,\frac{a\log n}{n}]$ and a node $w$ such that $d(u,w) \in[\frac{-a\log n}{n},0]$ such that $u$ is connected to both $v$ and $w$. This implies that every node $u$ has neighbors on both its left and right  and therefore every node is part of a cycle that covers $[0,1]$. 
\end{proof}

 \section{Connectivity of High Dimensional Random Annulus Graphs: Detailed Proofs of Theorems \ref{th:lb} and \ref{thm:highdem1}}
\label{sec:hrag-detail}

In this section we first prove an impossibility result on the connectivity of random annulus graphs in $t$ dimensions by showing a sufficient condition of existence of isolated nodes.
 Next, we show that if the  gap between $r_1$ and $r_2$ is large enough then the RAG is fully connected. We will start by introducing a few notations. Let us define the regions $B_t(u,r)$ and $B_t(u,[r_1,r_2])$ for the any  $u\in S^t$ in the following way:
\begin{align*}
B_t(u,r)&=\{x \in S^t \mid \|u-x\|_2 \le r \} \\
B_t(u,[r_1,r_2])&=\{x \in S^t \mid  r_1 \le \|u-x\|_2 \le r_2 \}.
\end{align*}
First, we calculate  $|B_t(u,r)|$ and show that it is proportional to $r^t$. 
\begin{lemma}
$|B_t(u,r)|=c_tr^t$ for $r=o(1)$ where $c_t \approx \frac{\pi^{t/2}}{\Gamma (\frac{t}{2}+1) }$.
\label{lem:area}
\end{lemma}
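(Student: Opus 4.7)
The plan is to exploit rotational symmetry and then reduce the spherical-cap area to a one-dimensional integral that can be Taylor-expanded in the small parameter $r$. First, I would note that the uniform measure on $S^t$ is invariant under orthogonal transformations, so I can assume without loss of generality that $u=(1,0,\ldots,0)$. Under this choice $\|u-x\|_2^2 = 2(1-x_1)$, so $B_t(u,r)$ is exactly the spherical cap $\{x \in S^t : x_1 \ge 1 - r^2/2\}$.

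Next, I would introduce the polar angle $\phi \in [0,\pi]$ with respect to the $x_1$-axis, writing points of $S^t$ as $(\cos\phi, \sin\phi\cdot y)$ with $y\in S^{t-1}$. The surface measure then factors as $dA_t = \sin^{t-1}\phi\,d\phi\,dA_{t-1}(y)$, and the cap condition becomes $\phi \le \Phi_r := \arccos(1-r^2/2)$. Integrating out the $S^{t-1}$ factor gives
\[
|B_t(u,r)| \;=\; a_{t-1}\int_0^{\Phi_r}\sin^{t-1}\phi\,d\phi,
\]
where $a_{t-1} = 2\pi^{t/2}/\Gamma(t/2)$ is the surface area of the unit $(t-1)$-sphere.

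Finally, I would Taylor-expand in the small parameter. Since $r=o(1)$, one checks that $\Phi_r = r + O(r^3)$ and $\sin\phi = \phi(1+O(\phi^2))$ uniformly on $[0,\Phi_r]$, so
\[
\int_0^{\Phi_r}\sin^{t-1}\phi\,d\phi \;=\; \frac{\Phi_r^t}{t}(1+o(1)) \;=\; \frac{r^t}{t}(1+o(1)).
\]
Multiplying by $a_{t-1}$ and using the identity $t\,\Gamma(t/2) = 2\,\Gamma(t/2+1)$ gives $|B_t(u,r)| = c_t r^t(1+o(1))$ with $c_t = \pi^{t/2}/\Gamma(t/2+1)$, which is the claimed expression.

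I do not foresee a real obstacle here; the content is essentially that a small spherical cap is well approximated by a flat $t$-dimensional Euclidean ball of radius $r$, whose volume is exactly $c_t r^t$. The only care required is in tracking the two $(1+o(1))$ error terms — one from $\arccos(1-r^2/2)=r+O(r^3)$ and one from $\sin\phi = \phi+O(\phi^3)$ — and in verifying the Gamma-function simplification for the normalizing constant, both of which are purely computational.
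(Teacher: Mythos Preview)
Your argument is correct and follows essentially the same strategy as the paper: write the cap area as a one-dimensional integral determined by the half-angle $\Phi_r=\arccos(1-r^2/2)$, then Taylor-expand for small $r$. The only difference is cosmetic: the paper quotes an integral formula from \cite{larsen2017improved,li2011concise} of the form $\int_0^{\tan\theta} S_{t-1}(\rho)(1+\rho^2)^{-2}\,d\rho$ and bounds the $(1+\rho^2)^{-2}$ factor between $1$ and $(1+o(1))^{-1}$, whereas you use the more elementary polar parametrization $\int_0^{\Phi_r}\sin^{t-1}\phi\,d\phi$ and expand $\sin\phi$ directly. Your route is slightly more self-contained since it avoids the external citation; both yield the same constant after the Gamma-function simplification.
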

\begin{proof}
We use the following fact from (\cite{larsen2017improved,li2011concise}) for the proof. For a $t$-dimensional unit sphere, the hyperspherical cap of angular radius $\theta= \max_{x \in B_t(u,r)} \arccos \langle x,u \rangle$ has a surface area $C_t(\theta)$ given by 
\begin{align*}
C_t(\theta)=\int_{0}^{\tan \theta} \frac{S_{t-1}(r)}{(1+r^{2})^{2}}dr 
\end{align*}
where $S_{t-1}(\theta)=\frac{t\pi^{t/2}}{\Gamma (\frac{t}{2}+1) }\theta^{t-1}$. Note that $C_t(\theta)$ is nothing but $|B_t(u,r)|$ where $\cos \theta=1-\frac{r^{2}}{2}$ and therefore $\tan \theta=\frac{r \sqrt{4-r^{2}}}{2-r^{2}} \approx r$ for small $r$. Now since $r=o(1)$ and $1+r^{2}$ is an increasing function of $r$, we must have that 
\begin{align*}
\int_{0}^{r} \frac{t\pi^{t/2}}{(1+o(1))\Gamma (\frac{t}{2}+1) }\theta^{t-1} d\theta  < C_t(\theta) <  \int_{0}^{r} \frac{t\pi^{t/2}}{\Gamma (\frac{t}{2}+1) }\theta^{t-1} d\theta
\end{align*}
and therefore $C_t(\theta)$ can be expressed as $c_t r^t$ where $c_t$ lies in $\Big(\frac{\pi^{t/2}}{(1+o(1))\Gamma (\frac{t}{2}+1) },\frac{\pi^{t/2}}{\Gamma (\frac{t}{2}+1) } \Big)$.
\end{proof}

\subsection{Impossibility Result}
The following theorem proves the impossibility result for the connectivity of a random annulus graph by proving a tight threshold for the presence of an isolated node with high probability.
\begin{theorem*}(Restatement of Theorem \ref{th:lb})
For a random annulus graph ${\rm RAG}_t(n,[r_1,r_2])$ where $r_1=b \Big(\frac{ \log n}{n}\Big)^{\frac{1}{t}}$ and $r_2=a\Big(\frac{\log n}{n}\Big)^{\frac{1}{t}}$, there exists isolated nodes with high probability if and only if 
$a^t -b^t < \frac{\sqrt{\pi}(t+1)\Gamma(\frac{t+2}{2})}{\Gamma(\frac{t+3}{2})}$.
\end{theorem*}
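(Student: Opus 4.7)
The plan is to apply a standard first- and second-moment analysis to the number of isolated vertices, building on the volume computation of Lemma~\ref{lem:area}.

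First I would compute the probability that a fixed vertex $v$ is isolated. Since both $r_1,r_2=o(1)$, Lemma~\ref{lem:area} gives
\begin{align*}
p \;:=\; \Pr\bigl(X \in B_t(v,[r_1,r_2])\bigr) \;=\; \frac{c_t(r_2^t - r_1^t)}{a_t} \;=\; \frac{c_t(a^t-b^t)}{a_t}\cdot\frac{\log n}{n}.
\end{align*}
Using $c_t = \pi^{t/2}/\Gamma(t/2+1)$ and the standard surface-area identity $a_t = 2\pi^{(t+1)/2}/\Gamma((t+1)/2)$, together with $\Gamma((t+2)/2)=(t/2)\Gamma(t/2)$ and $\Gamma((t+3)/2)=((t+1)/2)\Gamma((t+1)/2)$, a direct calculation shows $a_t/c_t$ simplifies to exactly $\psi(t)$. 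Thus $p = (a^t-b^t)\log n/(\psi(t)\,n)$. Letting $A_v$ be the indicator that $v$ is isolated and $N = \sum_v A_v$, we get
\begin{align*}
\avg[N] \;=\; n(1-p)^{n-1} \;=\; n^{\,1-(a^t-b^t)/\psi(t) + o(1)}.
\end{align*}

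For the ``only if'' direction (no isolated vertex with high probability when $a^t-b^t > \psi(t)$), the exponent is negative so $\avg[N]=o(1)$ and Markov's inequality finishes. For the ``if'' direction (isolated vertex exists with high probability when $a^t-b^t < \psi(t)$), set $\epsilon = 1 - (a^t-b^t)/\psi(t) > 0$ so that $\avg[N] = n^{\epsilon - o(1)}$. I would then apply Chebyshev's inequality, which requires a second moment estimate similar in spirit to the one carried out in the lower bound of Theorem~\ref{thm:rag}.

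For the covariance, consider distinct vertices $u,v$ and condition on whether $\|X_u - X_v\|_2 > 2r_2$. If yes, the annuli $B_t(u,[r_1,r_2])$ and $B_t(v,[r_1,r_2])$ are disjoint, so conditional on $(X_u,X_v)$ the events $\{A_u=1\}$ and $\{A_v=1\}$ decouple into $(1-2p)^{n-2}$. The complementary event has probability at most $c_t(2r_2)^t/a_t = O(\log n/n)$, and there one trivially bounds $\Pr(A_u=A_v=1)\le \Pr(A_u=1)$. Combining,
\begin{align*}
\mathrm{Cov}(A_u,A_v) \;\le\; (1-2p)^{n-2} - (1-p)^{2(n-1)} \;+\; O\!\left(\tfrac{\log n}{n}\right)\Pr(A_u=1),
\end{align*}
and the first two terms cancel up to a $1+o(1)$ factor since $(1-2p)^{n-2}/(1-p)^{2(n-1)} = 1+o(1)$. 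Summing over $\binom{n}{2}$ pairs yields $\Var(N) = O(\log n)\cdot\avg[N]$, whence Chebyshev gives $\Pr(N=0) \le \Var(N)/\avg[N]^2 = O(\log n/\avg[N]) \to 0$ since $\avg[N]$ grows polynomially. The main obstacle is cleanly verifying the identity $a_t/c_t = \psi(t)$, which pins down the precise threshold; everything else is a routine second-moment argument modeled on the lower bound proof of Theorem~\ref{thm:rag}.
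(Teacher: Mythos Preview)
Your proposal is correct and mirrors the paper's own proof essentially step for step: both compute the isolation probability via Lemma~\ref{lem:area}, identify $|S^t|/c_t = \psi(t)$ to pin down the threshold, and carry out the same first/second moment argument with the covariance split according to whether $\|X_u-X_v\|_2 > 2r_2$. The paper's write-up of the variance bound is slightly less explicit about the cancellation $(1-2p)^{n-2}=(1+o(1))(1-p)^{2(n-1)}$, but the underlying computation is the same.
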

\begin{proof}
Consider the random annulus graph ${\rm RAG}_t(n,[r_1,r_2])$ in $t$ dimensions. In this graph, a node $u$ is isolated if there are no nodes $v$ such that $ r_1 \le \|u-v\|_2 \le r_2$. Since all nodes are uniformly and randomly distributed on $S^t$, the probability of a node $v$ being connected to a node $u$ is the volume of $B_t(u,[r_1,r_2])$. Define the indicator random variable $A_u \in \{0,1\}$ which is $1$ if and only if the node $u$ is isolated. Also define the random variable $A=\sum_u A_u$ which denotes the total number of isolated nodes. Since $|B_t(u,[r_1,r_2])|=c_t(r_2^t-r_1^t)$, we must have
\begin{align*}
\Pr(A_u=1)=\Bigg(1-\frac{c_t\Big(r_2^t-r_1^t \Big)}{|S^t|}\Bigg)^{n-1}.
\end{align*}
Now, we know from (\cite{surfacensphere}) that  
$|S^t|=\frac{(t+1)\pi^{\frac{t+1}{2}}}{\Gamma(\frac{t+3}{2})}$. Plugging in, we get that $\frac{c_t}{|S^t|}=\frac{\Gamma(\frac{t+3}{2})}{\sqrt{\pi}(t+1)\Gamma(\frac{t+2}{2})}$.
Hence, the expected number of isolated nodes $\avg A$ is going to be 
\begin{align*}
n\Bigg(1-\Big(a^t-b^t\Big)\frac{c_t}{|S^t|}\frac{\log n}{n}\Bigg)^{n-1} \approx n^{1- \frac{c_t(a^t-b^t)}{|S^t|}}.
\end{align*}
Therefore $\avg [A] \ge 1$ if $a^t-b^t <\frac{|S^t|}{c_t}$. In order to show that $A =\omega(1)$ with high probability we are going to show that the variance of $A$ is bounded from above. Since $A$ is a sum of indicator random variables, we have that 

$${\rm Var}(A) \le \avg[A]+\sum_{u \neq v} {\rm Cov}(A_u,A_v)=\avg[A]+\sum_{u \neq v} (\Pr(A_u=1 \cap A_v=1)-\Pr(A_u=1)\Pr(A_v=1)).$$ 
Now, consider the scenario when the vertices $u$ and $v$ are at a distance more than $2r_2$ apart which happens with probability $1-\frac{c_t(2r_2)^t}{|S^t|}$. Then the region  in which every point is within a distance of $r_2$ and $r_1$ from both $u,v$ is empty and therefore $\Pr(A_u=1 \cap A_v=1) =  
\Bigg(1-\frac{2c_t}{|S^t|}\Big(a^t-b^t\Big)\frac{\log n}{n}\Bigg)^{n-2}$.
When the vertices  are within distance $2r_2$ of one another, then
$
\Pr(A_u=1 \cap A_v=1) \le \Pr(A_u=1).
$
Therefore,
\begin{align*}
\Pr(A_u=1 \cap A_v=1) \le &\Big(1-\frac{c_t(2r_2)^t}{|S^t|} \Big) \Big(1- \frac{2c_t}{|S^t|}\Big(a^t-b^t\Big)\frac{\log n}{n}\Big)^{n-2} + \frac{c_t(2r_2)^t}{|S^t|} \Pr(A_u=1)\\ 
&\le  (1-\frac{c_t(2r_2)^t}{|S^t|}) n^{-\frac{2c_t}{|S^t|}(a^t-b^t)+o(1)}+ \frac{c_t(2r_2)^t}{|S^t|} n^{-\frac{c_t(a^t-b^t)}{|S^t|}+o(1)}.
\end{align*}
Consequently for large enough $n$,
\begin{align*}
\Pr(A_u=1 \cap A_v=1)-\Pr(A_u=1)\Pr(A_v=1) &\le (1-\frac{c_t(2r_2)^t}{|S^t|}) n^{-\frac{2c_t(a^t-b^t)}{|S^t|}+o(1)} \\+ \frac{c_t(2r_2)^t}{|S^t|} n^{-\frac{c_t(a^t-b^t)}{|S^t|}+o(1)}  -& n^{-\frac{2c_t(a^t-b^t)}{|S^t|} +o(1)} 
\le  \frac{2c_t(2r_2)^t}{|S^t|}\Pr(A_u=1).
\end{align*}
Now,
$$
{\rm Var}(A) \le \avg[A] + 2\binom{n}{2}\frac{c_t(2r_2)^t}{|S^t|}\Pr(A_u=1) \le \avg[A](1+ \frac{c_t(2a)^t}{|S^t|}  \log n).
$$
By using Chebyshev bound, with probability at least $1-O\Big(\frac{1}{\log n}\Big)$, 
$$A >n^{1-\frac{c_t(a^t-b^t)}{|S^t|}}-\sqrt{n^{1-\frac{c_t(a^t-b^t)}{|S^t|}}(1+\frac{c_t(2a)^t}{|S^t|} \log n)\log n},$$
which implies that for $a^t-b^t < \frac{|S^t|}{c_t}$, $A>1$ and hence there will exist isolated nodes with high probability.  
\end{proof}

 \subsection{Connectivity Bound}
We show the upper bound for connectivity of a Random Annulus Graphs in $t$ as per Theorem \ref{thm:highdem1}, rewritten below.

\begin{theorem*}(Restatement of Theorem \ref{thm:highdem1})
For $t$ dimensional random annulus graph ${\rm RAG}_t(n,[r_1,r_2])$ where $r_2=a\Big(\frac{\log n}{n}\Big)^{t}$ and $r_1=b\Big(\frac{\log n}{n}\Big)^{t}$ with $a\ge b$ and $t$ is a constant, the graph is connected with high probability if 
\begin{align*}
a^t-b^t \ge \frac{8|S^t|(t+1)}{c_t\Big(1 -  \frac{1}{{2^{1+1/t} - 1}}\Big)} \text{  and  }  a>2^{{1}+\frac{1}{t}}b.
\end{align*} 
\end{theorem*}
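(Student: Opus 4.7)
The plan is to deduce the theorem from the three supporting lemmas established earlier in the section: existence of a pole (Lemma \ref{lem:pole}), the two-sided-neighbor property across arbitrary hyperplanes (Lemma \ref{lem:high_stuff1}), and the geometric observation of Lemma \ref{lem:cut_twice} that rules out the degenerate tangent-plane case. The overall strategy is a \emph{monotone walk toward the pole}: show that from any vertex that is not yet adjacent to the pole, one can always hop to a neighbor lying strictly closer (in a well-chosen one-dimensional coordinate) to the pole, and then iterate.

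Concretely, I would first condition on the high-probability event guaranteed by Lemma \ref{lem:pole} that a pole $u_0$ exists, then rotate the coordinate system by an orthonormal transformation so that $u_0$ sits at $(1,0,\ldots,0)$; this does not affect distances or the distribution of the remaining vertices. Now consider an arbitrary vertex $u$ with rotated coordinates $\hat u=(\hat u_1,\hat u_2,\ldots,\hat u_{t+1})$. If $u$ is already within $r_2$ of $u_0$ we are done by definition of the pole. Otherwise, look at the hyperplane $L^{\ast}_u : x_1 = \hat u_1$. Lemma \ref{lem:cut_twice} tells us that because $u$ is not within $r_2$ of $u_0$, not every point of $\mathcal{A}_{L^{\ast}_u}$ lies inside $B_t(u,r_2)$, so the hypothesis of Lemma \ref{lem:high_stuff1} is met. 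Applying that lemma yields a neighbor $u'$ of $u$ on the side $\{x_1 > \hat u_1\}$, that is, with $\hat u'_1 > \hat u_1$.

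Iterating this argument starting from any vertex $u$ produces a path $u\to u'\to u''\to\cdots$ along which the first coordinate is strictly increasing. Since the vertex set is finite, the path must terminate, and the only possible termination point is a vertex adjacent to the pole $u_0$. This gives a path in the graph from every $u$ to $u_0$, proving connectivity with probability $1-o(1)$, provided the two-sided-neighbor event of Lemma \ref{lem:high_stuff1} holds simultaneously at every vertex — which is where the parameter condition $a^t - b^t \ge$ constant and $a > 2^{1+1/t}\,b$ (sufficient to guarantee an annulus volume large enough to beat the uniform-convergence tail) enters.

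The principal obstacle, and the reason for the concrete numerical constant in the hypothesis, is Lemma \ref{lem:high_stuff1} itself. It asserts the two-sided-neighbor property \emph{uniformly} over the uncountable family of hyperplanes $L$ through $u$ (and even over all $u$). A pointwise Chernoff estimate followed by a naive union bound does not work because the family is continuous. The right tool is a VC-dimension / uniform-convergence argument: one shows that the set system $\{\, B_t(u,[r_1,r_2])\cap\{x:w^{\top}x \ge \beta\} : (w,\beta)\in\reals^{t+1}\times\reals \,\}$ has VC dimension at most $t+1$ (the constraint $w^{\top}x\ge \beta$ alone has VC dimension $t+1$, and intersecting with the fixed annulus does not increase it), and then invokes a continuous version of the Haussler--Welzl $\epsilon$-net theorem to deduce that for every hyperplane the empirical count of points in each half-annulus is within a small multiplicative factor of its expectation. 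The constant $\psi(t)$ and the factor $8(t+1)$ in the theorem's hypothesis are precisely what is needed so that the expected number of neighbors in the relevant half-annulus dominates this uniform deviation term, forcing the count to be strictly positive for every vertex and every orientation.
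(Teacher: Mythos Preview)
Your proposal is correct and follows essentially the same approach as the paper: rotate so the pole sits at $(1,0,\ldots,0)$, then for any vertex not yet within $r_2$ of the pole use Lemma~\ref{lem:cut_twice} to verify the non-degeneracy hypothesis of Lemma~\ref{lem:high_stuff1}, obtain a neighbor with strictly larger first coordinate, and iterate until the finite path terminates at a vertex connected to the pole. Your discussion of the VC-dimension/$\epsilon$-net argument underlying Lemma~\ref{lem:high_stuff1} also matches the paper's treatment.
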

Let us define a \emph{pole} to be a vertex  which is connected to all vertices within a distance of $r_2$ from itself. 
In order to prove  Theorem \ref{thm:highdem1}, we first show the existence of a pole with high probability in Lemma \ref{lem:pole}. Next, Lemma \ref{lem:high_stuff1} shows that for every vertex $u$ and every hyperplane $L$ passing through $u$ and not too close to the tangential hyperplane at $u$, there will be a neighbor of $u$ on either side of the plane. In order to formalize this, let us define a few regions associated with a node $u$ and a hyperplane $L:w^{T}x=\beta$ passing through $u$.

\begin{align*}
\mathcal{R}_{L}^1 &\equiv \{x \in S^t \mid b\Big(\frac{\log n}{n}\Big)^{1/t} \le  \|u-x\|_2 \le a\Big(\frac{\log n}{n}\Big)^{1/t}, w^{T}x \le \beta \} \\
\mathcal{R}_{L}^2 &\equiv \{x \in S^t \mid b\Big(\frac{\log n}{n}\Big)^{1/t} \le  \|u-x\|_2 \le a\Big(\frac{\log n}{n}\Big)^{1/t}, w^{T}x \ge \beta \} \\
\mathcal{A}_{L} & \equiv \{x \mid x \in \mathcal{S}^t, \quad w^{T}x=\beta \}.
\end{align*}

Informally, $\mathcal{R}_{L}^1$ and $\mathcal{R}_{L}^2$ represents the partition of the region $B_t(u,[r_1,r_2])$ on either side of the hyperplane $L$ and $\mathcal{A}_L$ represents the region on the sphere lying on $L$. 

\begin{lemma*}(Restatement of Lemma \ref{lem:pole})
In  ${\rm RAG}_t\left(n, \left[b\left(\frac{\log n}{n}\right)^{1/t},a\left(\frac{\log n}{n}\right)^{1/t}\right]\right), 0 <b <a$, with  probability $1-o(1)$ there exists a pole.
\end{lemma*}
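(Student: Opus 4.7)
The plan is to apply the second-moment method to the indicator variables counting poles, mirroring the pattern of the isolated-vertex argument used for Theorem~\ref{th:lb}. The key observation is that vertex $u$ is a pole iff no other vertex lies in the inner spherical cap $B_t(u,r_1)$: any vertex $v$ with $\|u-v\|_2 \le r_2$ is connected to $u$ by an edge of the RAG precisely when $\|u-v\|_2 \ge r_1$, so $u$ is a pole exactly when all other vertices are excluded from $B_t(u,r_1)$. I write $A_u$ for the indicator that $u$ is a pole and set $A = \sum_u A_u$.

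By Lemma~\ref{lem:area}, the exclusion ball has volume $|B_t(u,r_1)| = c_t b^t (\log n)/n$, so
\[
\Pr(A_u=1) = \Bigl(1 - \frac{c_t b^t}{|S^t|}\cdot \frac{\log n}{n}\Bigr)^{n-1} = n^{-c_t b^t/|S^t| + o(1)},
\]
giving $\avg[A] = n^{1 - b^t/\psi(t) + o(1)}$ via the identity $c_t/|S^t| = 1/\psi(t)$ already computed inside the proof of Theorem~\ref{th:lb}. This expectation grows polynomially in $n$ precisely when $b^t < \psi(t)$, which is the regime implicitly required for the lemma (and is compatible with the parameter conditions of Theorem~\ref{thm:highdem1}).

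For the second moment, I would carry out the same pairwise covariance analysis as in the proof of Theorem~\ref{th:lb}. For two distinct vertices $u, v$: if $\|u-v\|_2 \le r_1$ then $A_u A_v = 0$ deterministically, since each rules out the other as a pole; if $\|u-v\|_2 > 2r_1$ the two exclusion balls are disjoint, and the joint probability factorizes as $(1-2c_t r_1^t/|S^t|)^{n-2} \approx \Pr(A_u=1)\Pr(A_v=1)$; and if $r_1 < \|u-v\|_2 \le 2r_1$ I would use the crude bound $\Pr(A_u=1,A_v=1) \le \Pr(A_u=1)$. Since $\Pr(\|u-v\|_2 \le 2r_1) = c_t(2b)^t(\log n)/(n|S^t|) = O(\log n/n)$, summing over all $\binom{n}{2}$ pairs yields total covariance at most $O(\log n)\cdot \avg[A]$. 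Thus $\Var(A) = O(\log n \cdot \avg[A])$, and Chebyshev's inequality gives $\Pr(A=0) = O(\log n/\avg[A]) = o(1)$.

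The main obstacle is just the bookkeeping around the condition $b^t < \psi(t)$: this is what ensures $\avg[A]$ grows faster than $\log n$ so that Chebyshev closes the argument. Apart from this, the proof is a direct replica of the second-moment computation in Theorem~\ref{th:lb}, with the annular exclusion region $B_t(u,[r_1,r_2])$ there replaced by the inner ball $B_t(u,r_1)$ here.
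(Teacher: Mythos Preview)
Your argument has a genuine gap rooted in a misreading of the definition of \emph{pole}. In the paper, ``connected to'' means \emph{lies in the same connected component as}, not \emph{is adjacent to}; this is clear from the proof of Theorem~\ref{thm:highdem1}, where the pole is used as a sink that every vertex eventually reaches along a path, and from the paper's own proof of the lemma, which establishes multi-hop connectivity through auxiliary vertices. Consequently, the event ``no other vertex lies in $B_t(u,r_1)$'' is only a \emph{sufficient} condition for $u$ to be a pole, not an equivalence. Your second-moment computation is correct for that stronger event, but it yields $\avg[A]=n^{1-b^t/\psi(t)+o(1)}$, which tends to zero whenever $b^t>\psi(t)$. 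The lemma, however, is stated for all $0<b<a$, and the hypotheses of Theorem~\ref{thm:highdem1} (namely $(a/2)^t-b^t\ge 8(t+1)\psi(t)$ and $a>2b$) place no upper bound on $b^t$; for instance $t=1$, $b=100$, $a=400$ satisfies them while $b^t\gg\psi(t)$. In that regime your argument proves nothing, since with high probability \emph{no} vertex has an empty inner cap.

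The paper's route avoids this by never requiring an empty inner ball. Instead, via a second-moment argument on a more elaborate event (Lemma~\ref{lem:lemma2pole}), it exhibits a vertex $u_0$ together with $2k$ auxiliary vertices $u_1,\dots,u_k,v_1,\dots,v_k$ lying in prescribed $\epsilon$-balls along a geodesic through $u_0$, with $k=\lceil b/(a-b)\rceil+1$. Triangle-inequality bookkeeping then shows each $v_i$ is adjacent to $u_{i-1}$ and $u_i$, so all $2k$ vertices are in the component of $u_0$; and the translated annuli $B_t(O_{u_i},[r_1+\epsilon,r_2-\epsilon])$ and $B_t(O_{v_i},[r_1+\epsilon,r_2-\epsilon])$ together cover all of $B_t(u_0,r_2)$, so every vertex in that ball is adjacent to some $u_i$ or $v_i$ and hence path-connected to $u_0$. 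This works for arbitrary $b$, at the cost of a more intricate configuration whose existence is what the second-moment method is applied to. Your simpler argument is valid only as a special case for small $b$.
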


The proof of Lemma \ref{lem:pole} is delegated to Appendix \ref{sec:helper}.

\begin{lemma*}(Restatement of Lemma \ref{lem:high_stuff1})
If we sample $n$ nodes from $S^t$ according to ${\rm RAG}_t\left(n,\left[b\left(\frac{\log n}{n}\right)^{1/t},a\left(\frac{\log n}{n}\right)^{1/t}\right]\right)$, then for every node $u$ and every hyperplane $L$ passing through $u$ such that $\mathcal{A}_L \not \subset B_t(u,a\left(\frac{\log n}{n}\right)^{1/t})$, node $u$ has a neighbor on both sides of the hyperplane $L$ with probability at least $1-\frac{1}{n}$ provided 
\begin{align*}
a^t-b^t \ge \frac{8|S^t|(t+1)}{c_t\Big(1 -  \frac{1}{{2^{1+1/t} - 1}}\Big)}
\end{align*}  and $a>2^{1+\frac{1}{t}}b$.
\end{lemma*}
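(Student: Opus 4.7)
Fix a vertex $u$. The goal is to show that simultaneously for every admissible hyperplane $L$ through $u$ (i.e., $\mathcal{A}_L \not\subset B_t(u,r_2)$) and each side $i\in\{1,2\}$, the region $\mathcal{R}_L^i$ contains at least one of the other $n-1$ sample points. Because the admissible hyperplanes form an uncountable family parameterized by unit normals in $S^t$, a naive union bound is unavailable. My plan is to combine (i) a uniform geometric lower bound on $\min(|\mathcal{R}_L^1|,|\mathcal{R}_L^2|)$ over every admissible $L$, with (ii) the continuous Haussler--Welzl $\epsilon$-net theorem (Theorem \ref{thm:VC_dim}) applied to a family of bounded VC dimension that contains every $\mathcal{R}_L^i$.

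\textbf{Step 1: Geometric volume lower bound.} I will establish
\[
 \min\!\bigl(|\mathcal{R}_L^1|,|\mathcal{R}_L^2|\bigr) \;\ge\; \frac{c_t}{2}\Bigl(1 - \tfrac{1}{2^{1+1/t}-1}\Bigr)(a^t - b^t)\,\frac{\log n}{n}
\]
for every admissible $L$. A hyperplane through $u$ bisects the annulus $B_t(u,[r_1,r_2])$ into two halves that are congruent in the flat Euclidean approximation valid near $u$, and curvature of $S^t$ can skew this split only when $L$ is close to the tangent hyperplane $T_u$ at $u$; the admissibility condition is exactly what excludes those degenerate near-tangent orientations. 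Parameterizing $L$ by the angle $\theta(L)$ it makes with $T_u$, the smaller-side volume is monotone in $\theta$, so it suffices to evaluate the critical orientation at which $\mathcal{A}_L$ is internally tangent to $\partial B_t(u,r_2)$; a direct integration in that borderline case yields the factor $1 - 1/(2^{1+1/t}-1)$. The hypothesis $a > 2^{1+1/t} b$ is precisely what keeps this factor strictly positive.

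\textbf{Step 2: VC dimension and $\epsilon$-net.} Every $\mathcal{R}_L^i$ lies in the family
\[
 \mathcal{F}_u \;=\; \{\, H \cap B_t(u,[r_1,r_2]) : H \text{ a closed half-space in } \mathbb{R}^{t+1} \text{ bounded by a hyperplane through } u\,\}.
\]
Half-spaces through a fixed point in $\mathbb{R}^{t+1}$ have VC dimension at most $t+1$, and intersection with a fixed set does not increase VC dimension, so $\mathrm{vc}(\mathcal{F}_u) \le t+1$. Applying the continuous $\epsilon$-net theorem to the uniform distribution on $S^t$ with $n-1$ samples, $\delta = n^{-2}$, and $\epsilon$ equal to the Step~1 bound normalized by $|S^t|$, the hypothesis $a^t - b^t \ge 8(t+1)|S^t|/\bigl(c_t(1 - 1/(2^{1+1/t}-1))\bigr)$ ensures that $\epsilon$ is large enough for the sample-size requirement $n-1 \gtrsim (t+1)\log(1/\epsilon\delta)/\epsilon$ to be met. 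Consequently every admissible $\mathcal{R}_L^i$ contains a sample point with probability $\ge 1-n^{-2}$ for the fixed $u$, and a union bound over the $n$ choices of $u$ yields the lemma with probability $\ge 1 - 1/n$.

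\textbf{Main obstacle.} The nontrivial ingredient is Step~1. Volumes on the curved sphere $S^t$ do not reduce to elementary Euclidean formulas, so the argument must proceed by a monotonicity-in-$\theta(L)$ argument together with a careful identification of the borderline orientation. Extracting the explicit constant $1 - 1/(2^{1+1/t}-1)$ requires tracking how $L$ cuts both the inner boundary $\partial B_t(u,r_1)$ and the outer boundary $\partial B_t(u,r_2)$, and relating the "uncovered" portion of the smaller half-annulus to the ratio $a/b$ via the admissibility threshold. Once this volume bound is in place, the VC portion of the argument is routine.
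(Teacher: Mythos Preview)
Your overall architecture---a uniform lower bound on $\min(|\mathcal{R}_L^1|,|\mathcal{R}_L^2|)$ over admissible hyperplanes, a VC-dimension bound of $t+1$ on the family $\mathcal{F}_u$, and the Haussler--Welzl $\epsilon$-net theorem with a union bound over vertices---is exactly the paper's route (Lemma~\ref{conjec:ratio}/Corollary~\ref{cor:ratio}, Lemma~\ref{lem:VC_dimension}, and Lemma~\ref{thm:VC_dim} respectively).

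The divergence is entirely in Step~1, and there your plan is both vaguer and more elaborate than what the paper actually does. The paper does \emph{not} parametrize by an angle to $T_u$, argue monotonicity, or integrate over a borderline configuration. Instead, for any admissible $L$ it argues crudely: the smaller side of $B_t(u,r_2)$ cut by $L$ is a spherical cap of radius at least $r_2/2$ (this is a two-line Pythagorean computation at the extremal orientation where $\mathcal{A}_L$ just grazes $\partial B_t(u,r_2)$; the cap center is the foot of the perpendicular from the origin to $L$, and its height $h$ satisfies $(1-h)^2+r_2^2/4=1$, giving $r'\approx r_2/2$). From that cap the paper simply subtracts the \emph{entire} inner ball $B_t(u,r_1)$, obtaining $|\mathcal{R}_L^2|\ge c_t\bigl((r_2/2)^t - r_1^t\bigr)$, hence $\delta = ((a/2)^t - b^t)/(a^t-b^t)$, which is positive precisely when $a>2b$. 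No curvature analysis or integration is needed.

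One small confusion to flag: you write that the hypothesis $a>2^{1+1/t}b$ is ``precisely what keeps the factor $1-1/(2^{1+1/t}-1)$ strictly positive,'' but that factor is a pure constant in $t$ and is positive for all $t\ge 1$ regardless of $a,b$. In the paper's argument the role of the threshold on $a/b$ is to keep $(a/2)^t-b^t>0$, not to control that numerical factor. (The paper itself states the lemma with two slightly different constant packages; the proof it actually gives yields the cleaner $(a/2)^t-b^t$ form with $a>2b$.)
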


For a node $u \equiv (u_1,u_2,\dots,u_{t+1})$, define the particular hyperplane $L^{\star}_u : x_1=u_1$ which is normal to the line joining $u_0 \equiv (1,0,\dots,0)$ and the origin and passes through $u$. We now have the following lemma.
\begin{lemma*}(Restatement of Lemma \ref{lem:cut_twice})
For a particular node $u$ and corresponding hyperplane $L^{\star}_u$, if $\mathcal{A}_{L^{\star}_u}  \subseteq B_t(u,r_2)$ then $u$ must be within $r_2$ of $u_0$.
\end{lemma*}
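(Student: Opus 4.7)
\textbf{Proof plan for Lemma \ref{lem:cut_twice}.} The argument is purely geometric. The hyperplane $L^{\ast}_u:x_1=u_1$ intersects $S^t$ in the $(t-1)$-dimensional sphere
\[
\mathcal{A}_{L^{\ast}_u}=\bigl\{(u_1,y_2,\dots,y_{t+1}) : y_2^2+\cdots+y_{t+1}^2 = 1-u_1^2\bigr\},
\]
which has radius $\sqrt{1-u_1^2}$ and lies in the affine hyperplane $x_1=u_1$. Note that $u$ itself is a point of $\mathcal{A}_{L^{\ast}_u}$.

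First, I would identify the farthest point of $\mathcal{A}_{L^{\ast}_u}$ from $u$. By the rotational symmetry of $\mathcal{A}_{L^{\ast}_u}$ around the axis through $(u_1,0,\dots,0)$, this farthest point is the antipode of $u$ on the $(t-1)$-sphere, namely $u'=(u_1,-u_2,\dots,-u_{t+1})$. A direct calculation gives $\|u-u'\|_2^2 = \sum_{i=2}^{t+1}(2u_i)^2 = 4(1-u_1^2)$. The hypothesis $\mathcal{A}_{L^{\ast}_u}\subseteq B_t(u,r_2)$ forces $\|u-u'\|_2\le r_2$, which is equivalent to
\[
u_1^2\ \ge\ 1-\frac{r_2^{\,2}}{4}.
\]
Since the induction in the proof of Theorem~\ref{thm:highdem1} moves strictly toward the pole $u_0=(1,0,\dots,0)$ (the walk is along increasing first coordinate, and we only need to check the stopping condition near the pole), we may take $u_1>0$, so $u_1\ge\sqrt{1-r_2^{\,2}/4}$.

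Next, I would compute the distance from $u$ to the pole. Using $u\in S^t$, we have $\sum_{i=2}^{t+1}u_i^{2}=1-u_1^{2}$, and therefore
\[
\|u-u_0\|_2^{\,2}\;=\;(u_1-1)^2+\sum_{i=2}^{t+1}u_i^{2}\;=\;u_1^2-2u_1+1+(1-u_1^2)\;=\;2(1-u_1).
\]
Substituting the lower bound on $u_1$ gives $\|u-u_0\|_2^{\,2}\le 2\bigl(1-\sqrt{1-r_2^{\,2}/4}\bigr)$. Finally I would apply the elementary inequality $\sqrt{1-x}\ge 1-x$ for $x\in[0,1]$ (valid since $r_2=o(1)$ in the relevant regime) to conclude
\[
\|u-u_0\|_2^{\,2}\;\le\;2\cdot\frac{r_2^{\,2}}{4}\;=\;\frac{r_2^{\,2}}{2}\;\le\;r_2^{\,2},
\]
which yields $\|u-u_0\|_2\le r_2$, as desired.

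The only real subtlety is the sign issue: the inequality $u_1^2\ge 1-r_2^{\,2}/4$ is symmetric in $u_1\mapsto-u_1$, so a priori $u$ could lie near the antipodal pole $(-1,0,\dots,0)$ rather than near $u_0$. This is the step I would be most careful about. In the flow of the argument for Theorem~\ref{thm:highdem1} the pole is fixed and the path of increasing first coordinate cannot terminate at such an antipodal configuration without contradicting Lemma~\ref{lem:high_stuff1} at some earlier step, so restricting attention to $u_1\ge 0$ is consistent with how the lemma is invoked; alternatively one observes that the statement really says $u$ is within $r_2$ of one of the two antipodal ``poles'' $\pm u_0$, and the induction in Theorem~\ref{thm:highdem1} only ever terminates at the one closer to the path's direction. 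The rest of the proof is then the short chain of inequalities above.
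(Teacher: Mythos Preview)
Your proof is correct and follows essentially the same route as the paper's: identify the antipode $(u_1,-u_2,\dots,-u_{t+1})$ as the farthest point of $\mathcal{A}_{L^{\star}_u}$ from $u$, deduce $u_1^2\ge 1-r_2^{\,2}/4$, compute $\|u-u_0\|_2^{\,2}=2(1-u_1)$, and close with an elementary inequality. The only minor differences are that the paper verifies $\sqrt{1-r_2^{\,2}/4}\ge 1-r_2^{\,2}/2$ directly rather than via $\sqrt{1-x}\ge 1-x$, and leaves the sign of $u_1$ implicit where you address it explicitly.
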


For now, we assume that the Lemmas \ref{lem:pole}, \ref{lem:high_stuff1} and \ref{lem:cut_twice} are true and show why these lemmas together imply the proof of Theorem \ref{thm:highdem1}.
\begin{proof}[Proof of Theorem \ref{thm:highdem1}] 
We consider an alternate (rotated but not shifted) coordinate system by multiplying every vector by a orthonormal matrix $R$ such that the new position of the pole is the $t+1$-dimensional vector $(1,0,\dots,0)$ where only the first co-ordinate is non-zero. Let the $t+1$ dimensional vector describing a node $u$ in this new coordinate system be $\hat{u}$. Now consider the hyperplane $L: x_1=\hat{u}_1$ and if $u$ is not connected to the pole already, then by Lemma \ref{lem:high_stuff1} and Lemma \ref{lem:cut_twice} the node $u$ has a neighbor $u_2$ which has a higher first coordinate. The same analysis applies for $u_2$ and hence we have a path where the first coordinate of every node is higher than the previous node. Since the number of nodes is finite, this path cannot go on indefinitely and at some point, one of the nodes is going to be within $r_2$ of the pole and will be connected to the pole. Therefore every node is going to be connected to the pole and hence our theorem is proved. 

\end{proof}
We show the proofs of Lemma \ref{lem:pole}, \ref{lem:high_stuff1} and \ref{lem:cut_twice} in the following sections.

\subsection{Proof of Lemma \ref{lem:pole}}{\label{sec:helper}}
Lemma \ref{lem:pole_helper} is a helper lemma that shows the region of connectivity for a small ball of radius $\epsilon (\frac{\log n}{n})^{1/t}$. Lemma \ref{lem:lemma2pole} uses this  lemma  to show the existence of a point $u_0$ which is connected to various balls of radius $\epsilon (\frac{\log n}{n})^{1/t}$.
\begin{lemma}\label{lem:pole_helper}
For a $t$ dimensional random annulus graph ${\rm RAG}_t(n,[r_1,r_2])$ where $r_1=b\left(\frac{\log n}{n}\right)^{1/t}, r_2=a\left(\frac{\log n}{n}\right)^{1/t}$ and $a\ge b$, consider the region ${B_t}(O,\theta)$ centered at $O$ and radius $\theta = \epsilon \left(\frac{\log n}{n}\right)^{1/t}$. Then, every vertex in ${B_t}(O,\theta)$ is connected to all vertices present in  ${B_t}(O,[\theta_1,\theta_2])$ where $\theta_1 = (b+\epsilon)\left(\frac{\log n}{n}\right)^{1/t}$ and $\theta_2 = (a-\epsilon)\left(\frac{\log n}{n}\right)^{1/t}$.
\end{lemma}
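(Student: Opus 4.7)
The statement is deterministic: once the vertex positions are fixed, membership in $B_t(O,\theta)$ and $B_t(O,[\theta_1,\theta_2])$ determines edge presence through the Euclidean condition $r_1 \le \|u-v\|_2 \le r_2$ from Definition~\ref{defn:high}. So no probabilistic tail bound is needed; the entire plan is a direct application of the triangle inequality in the Euclidean norm, applied twice with opposite signs.

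Fix arbitrary $u \in B_t(O,\theta)$ and $v \in B_t(O,[\theta_1,\theta_2])$, so that $\|u-O\|_2 \le \theta$ and $\theta_1 \le \|v-O\|_2 \le \theta_2$. The first step is the upper bound: the forward triangle inequality yields $\|u-v\|_2 \le \|u-O\|_2 + \|O-v\|_2 \le \theta + \theta_2$, and by construction $\theta + \theta_2 = \bigl(\epsilon + (a-\epsilon)\bigr)\left(\tfrac{\log n}{n}\right)^{1/t} = r_2$. The second step is the lower bound: the reverse triangle inequality yields $\|u-v\|_2 \ge \|O-v\|_2 - \|u-O\|_2 \ge \theta_1 - \theta$, and again by construction $\theta_1 - \theta = \bigl((b+\epsilon) - \epsilon\bigr)\left(\tfrac{\log n}{n}\right)^{1/t} = r_1$. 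Taken together these place $\|u-v\|_2$ inside the annular edge window $[r_1,r_2]$, so $(u,v)$ is an edge of the random annulus graph.

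There is no substantive obstacle here: the values $\theta_1$ and $\theta_2$ are engineered precisely so that the two triangle inequalities become tight at exactly $r_1$ and $r_2$, respectively. The only point worth flagging is that throughout this argument the distance is the ambient Euclidean $\ell_2$ norm (as stipulated in Definition~\ref{defn:high} for $t>1$) rather than a geodesic metric on $S^t$, so the standard Minkowski-style triangle inequality applies directly without any need to convert between geodesic and chord length, which is what allows the bounds to close up exactly.
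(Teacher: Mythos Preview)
Your proof is correct and follows essentially the same approach as the paper: apply the forward triangle inequality to get $\|u-v\|_2 \le \theta + \theta_2 = r_2$ and the reverse triangle inequality to get $\|u-v\|_2 \ge \theta_1 - \theta = r_1$, concluding that $(u,v)$ lies in the edge window $[r_1,r_2]$.
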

\begin{proof}
For any point $A\in {B_t}(O,\theta)$, we have $0<\|A-O\|_2\leq \theta$ and for any point $X \in {B_t}(O,[\theta_1,\theta_2])$, we must have $\theta_1 \leq \|X-O\|_2\leq \theta_2$. Hence 
\begin{align*}
\|A-X\|_2&\leq \|A-O\|_2 + \|X-O\|_2   \\
&\leq \theta + \theta_2\\
&=  a \left(\frac{\log n}{n}\right)^{1/t},\\
\|A-X\|_2&\geq \|X-O\|_2 - \|A-O\|_2\\
&\geq \theta_1 -  \theta\\
&=  b \left(\frac{\log n}{n}\right)^{1/t},
\end{align*}
and therefore the claim of the lemma is proved.
\end{proof}

\begin{figure}
\vspace{-20pt}
\centering
\begin{tikzpicture}[thick, scale=0.5]
 \filldraw[color=black, fill=blue!20,  line width=2pt](0,0) circle (4.5);
 \filldraw[color=black, fill=red!0,  line width=2pt](0,0) circle (3.5);
 \filldraw[color=gray, fill=red!10,  line width=2pt](0,0) circle (1);
  \filldraw [gray] (0,0) circle (2pt)node[anchor=north ] at (0,0){O};
  \draw[line width=1pt,gray,->] (0,0)--(-1,0)node[anchor=north ] at (-0.5,0.2){$\epsilon$};
    \draw[gray,->] (0,0)--(-3.3,1)node[anchor=north ] at (-2.3,0.9){$b+\epsilon$};
      \draw[gray,->] (0,0)--(-4,2)node[anchor=north ] at (-2.1,2.2){$a-\epsilon$};
\end{tikzpicture}
\caption{Any node in the red region is connected to any node in the blue region.\label{fig:arr0high}}
\end{figure}
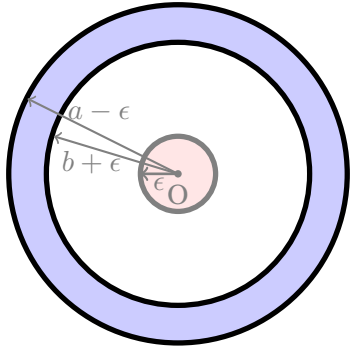


\begin{lemma}
Set two real numbers $k\equiv \lceil b/(a-b)\rceil+1$ and $\epsilon < \left( \frac{|S_t|}{2kc_t} \right)^{1/t}$. In an ${\rm RAG}_t\left(n, \left[b\left(\frac{\log n}{n}\right)^{1/t},a\left(\frac{\log n}{n}\right]\right)^{1/t}\right), 0 <b <a$, with  probability $1-o(1)$ there exists a vertex $u_0 \in S^t$  with the following property. 
Consider a homogeneous hyperplane $L$ in $\reals^{t+1}$ that pass through $u_0$.
There are $k$ nodes $\cA = \{u_1, u_2 ,\ldots, u_k\}$  with $u_i \in {B_t}\left(O_{u_i}, \epsilon \left(\frac{\log n}{n}\right)^{1/t}\right)$ for some $O_{u_i}\in L \cap S^t$ such that $\|O_{u_i}-u_0\|_2 = (i(a-b)-(4i-1)\epsilon)\left(\frac{\log n}{n}\right)^{1/t}$ and  $k$ nodes $\cB = \{v_1, v_2 ,\ldots, v_k\}$ with $v_i \in {B_t}\left(O_{v_i}, \epsilon \left(\frac{\log n}{n}\right)^{1/t}\right)$ for some $O_{v_i}\in L\cap S^t$ such that $\|O_{v_i}-u_0\|_2= (i(a-b)+b-(4i-3)\epsilon)\left(\frac{\log n}{n}\right)^{1/t}$, for $i =1,2,\ldots,k$
with $\cA$ and $\cB$ separated by $L$.
~\label{lem:lemma2pole}
\end{lemma}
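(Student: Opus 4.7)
The plan is to adapt the second-moment argument from the proof of Lemma~\ref{lem:lemma2} to the higher-dimensional sphere. First, I would fix a canonical choice of hyperplane for every potential center: for each $u \in V$ with embedding $X_u$, let $L(u)$ be the (unique, for generic reference vector) homogeneous hyperplane spanned by $X_u$ and a fixed reference direction $e \in S^t$. The intersection $L(u) \cap S^t$ is a great circle, on which I pick an oriented "right" direction measured from $u$. This determines the $2k$ target centers: $O_{u_i}$ and $O_{v_i}$ are the points on $L(u) \cap S^t$ at Euclidean distance $(i(a-b) - (4i-1)\epsilon)(\log n/n)^{1/t}$ and $(i(a-b)+b-(4i-3)\epsilon)(\log n/n)^{1/t}$ from $u$, respectively. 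Let $A_u$ be the indicator that every one of the $2k$ balls ${B_t}(O_{u_i}, \epsilon (\log n/n)^{1/t})$ and ${B_t}(O_{v_i}, \epsilon (\log n/n)^{1/t})$ contains at least one vertex.

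The first moment is computed via Lemma~\ref{lem:area}. Each target ball has $|S^t|$-normalized volume $\mu \log n / n$ with $\mu = c_t \epsilon^t / |S^t|$. By construction, the $2k$ target balls are pairwise disjoint (consecutive centers along the great circle are separated by at least $2\epsilon(\log n/n)^{1/t}$). Lower-bounding $A_u$ by the event that exactly one other vertex lands in each of the $2k$ balls and no other vertex lands in their union (mirroring Lemma~\ref{lem:lemma2}), one obtains
\[
\Pr(A_u = 1) \ge n(n-1)\cdots(n-2k+1)\left(\tfrac{\mu \log n}{n}\right)^{2k}\!\!\left(1 - \tfrac{2k \mu \log n}{n}\right)^{n-2k}\!\! \sim\, c_1\, n^{-2k\mu}(\log n)^{2k},
\]
so $\avg[A] = \sum_u \avg A_u \sim c_1\, n^{1 - 2k\mu}(\log n)^{2k}$, which diverges precisely under the assumed bound $\epsilon^t < |S^t|/(2k c_t)$.

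For the second moment, I would split pairs $(u,v)$ according to whether $\|X_u - X_v\|_2 > C(a,b)(\log n/n)^{1/t}$ for a constant $C$ chosen so that the union of the $2k$ target balls for $u$ is disjoint from that for $v$; this happens with probability $1 - O(\log n / n)$ by Lemma~\ref{lem:area}. In the disjoint case, conditional on $X_u, X_v$ the two events are independent over the remaining $n-2$ vertices and the covariance is negligible; in the complementary case, we use the trivial bound $\Pr(A_u = 1 \cap A_v = 1) \le \Pr(A_u = 1)$. Summing as in Lemma~\ref{lem:lemma2} yields $\text{Var}(A) \le \avg[A]\,(1 + C' \log n)$, and Chebyshev's inequality then gives $A \ge 1$ with probability $1 - O(1/\log n)$.

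The only genuinely new ingredient, and the main source of technical friction, is ensuring that the canonical hyperplane choice $u \mapsto L(u)$ and the associated "right" orientation are measurable and consistent, and that the spherical geodesic/Euclidean distinction is harmless at the relevant scale $(\log n/n)^{1/t} = o(1)$ (where Euclidean and geodesic distances differ by a factor $1 + o(1)$, absorbable into $\epsilon$). Everything else is a direct transcription of the 1-D argument, with intervals of length $\epsilon (\log n)/n$ on $[0,1]$ replaced by spherical caps of radius $\epsilon(\log n/n)^{1/t}$ on $S^t$, and with the scaling $\epsilon$ versus $1/k$ replaced by $\epsilon^t$ versus $|S^t|/(k c_t)$.
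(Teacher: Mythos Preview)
Your proposal is correct and follows essentially the same approach as the paper: a second-moment argument with indicator variables $A_u$, first-moment computation via the normalized volume $c_t\epsilon^t/|S^t|$ of each of the $2k$ disjoint caps, and a covariance bound split according to whether $u$ and $v$ are far enough apart that their respective configurations of $2k$ caps are disjoint. The only notable difference is that you are more explicit than the paper about fixing a canonical hyperplane $L(u)$ and orientation for each $u$ (via a reference direction $e$) and about the Euclidean/geodesic distinction at scale $o(1)$; the paper's proof simply writes down the first- and second-moment computations without commenting on these points.
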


\begin{proof}[Proof of Lemma~\ref{lem:lemma2pole}]
 Let $A_u$ be an indicator $\{0,1\}$-random variable for every node $u$ which is $1$ if $u$ satisfies the conditions stated in the lemma and $0$ otherwise. We will show $\sum_{u} A_u \ge 1$ with high probability.

We  have,
\begin{align*}
\Pr(A_u=1) & = \frac{1}{2k!} n(n-1)\dots (n-(2k-1))\Big(\frac{\epsilon^tc_t \log n}{n|S^t|}\Big)^{2k} \Big(1-2kc_t\epsilon^t\frac{\log n}{n|S^t|}\Big)^{n-2k}\\
& = c_1 n^{-2k\epsilon^t c_t/|S^t|} (\epsilon^t \log n)^{2k}  \prod_{i=0}^{2k-1} (1-i/n)\\
&= c_2 n^{-2k\epsilon^t c_t/|S^t|} (\epsilon^t \log n)^{2k}
\end{align*}
where $c_1=\frac{c_t^{2k}}{2k!|S^t|^{2k}},c_2$ are just absolute constants independent of $n$ (recall $k$ is a constant). 
Hence,
\begin{align*}
\sum_{u} \avg A_u= c_2 n^{1-2k\epsilon^t c_t/|S^t|} (\epsilon^t \log n)^{2k} \ge 1
\end{align*} 
as long as $\epsilon \leq \left( \frac{|S^t|}{2kc_t} \right)^{1/t}$.  Now, in order to prove $\sum_{u}  A_u\ge 1$ with high probability, we will show that the variance of $\sum_{u}  A_u$  is bounded from above.  
Recall that if $A =\sum_{u}  A_u$ is a sum of indicator random variables, we must have 
$${\rm Var}(A) \le \avg[A]+\sum_{u \neq v} {\rm Cov}(A_u,A_v)=\avg[A]+\sum_{u \neq v} \Pr(A_u=1 \cap A_v=1)-\Pr(A_u=1)\Pr(A_v=1).$$ 
Now first consider the case when vertices $u$ and $v$ are at a distance of at least $2(a+b) \left(\frac{ \log n}{n}\right)^{1/t}$ apart (happens with probability $1- 4^t(a+b)^t c_{t} \left(\frac{\log n}{n|S^t|}\right)$). 
Then the region that is within distance $(a+b) \left(\frac{\log n}{n}\right)^{1/t}$ from both $u$ and $v$ is the empty-set. In that case, $\Pr(A_u=1 \cap A_v=1) = n(n-1)\dots (n-(4k-1)) c_3 \Big(\frac{\epsilon^t c_t \log n}{n|S^t|}\Big)^{4k} \Big(1-4k\epsilon^t \frac{c_t\log n}{n|S^t|}\Big)^{n-4k} = c_4 n^{-4k\epsilon^tc_t/|S^t|} (\epsilon^t \log n)^{4k},$ where $c_3,c_4$ are constants.

In all other cases, $\Pr(A_u=1 \cap A_v=1) \le \Pr(A_u =1)$.
Therefore,\\
\begin{align*}
\Pr(A_u=1 \cap A_v=1)&\leq &\Big(1- 4^t(a+b)^t c_t \left(\frac{\log n}{n|S^t|}\right)\Big) c_4 n^{-4k\epsilon^tc_t/|S^t|} (\epsilon^t \log n)^{4k}+\\ 
&&\frac{4^t(a+b)^tc_t \log n}{n|S^t|} c_2 n^{-2k\epsilon^t c_t/|S^t|} (\epsilon^t \log n)^{2k}
\end{align*}
and
\begin{align*}
{\rm Var(A)} &\le c_2n^{1-2k\epsilon^tc_t/|S^t|} (\epsilon^t \log n)^{2k} +{n \choose 2}\Big(\Pr(A_u=1 \cap A_v=1)-\Pr(A_u=1)\Pr(A_v=1)\Big) \\
&\le c_2n^{1-2k\epsilon^tc_t/|S^t|} (\epsilon^t \log n)^{2k}+ c_5 n^{1-2k\epsilon^tc_t/|S^t|} (\log n)^{2k+1}\\
& \le c_6  n^{1-2k\epsilon^tc_t/|S^t|} (\log n)^{2k+1}
\end{align*}
where $c_5,c_6$ are constants. Again invoking Chebyshev's inequality, with probability at least $1-O\Big(\frac{1}{\log n}\Big)$ 
$$
A > c_2n^{1-2k\epsilon^tc_t/|S^t|} (\epsilon^t \log n)^{2k} - \sqrt{c_6  n^{1-2k\epsilon^tc_t/|S^t|} (\log n)^{2k+2}}
$$  
which implies that $A>1$ with high probability.
\end{proof}

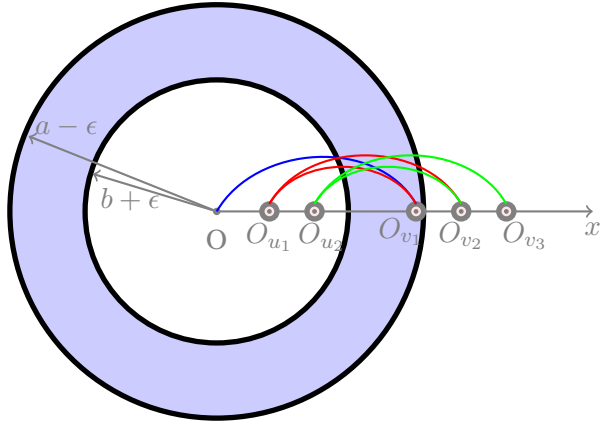
\begin{figure}
\vspace{-20pt}
\centering
\begin{tikzpicture}[thick, scale=0.5]

 \filldraw[color=black, fill=blue!20,  line width=2pt](0,0) circle (5.5);
 \filldraw[color=black, fill=red!0,  line width=2pt](0,0) circle (3.5);
    \draw[gray,->] (0,0)--(-3.3,1)node[anchor=north ] at (-2.3,0.9){$b+\epsilon$};
      \draw[gray,->] (0,0)--(-5,2)node[anchor=north ] at (-4,2.8){$a-\epsilon$};
   \draw[gray,->] (0,0)--(10,0)node[anchor=north ] at (10,0){$x$};

   \filldraw[color=gray, fill=red!10,  line width=2pt](5.3,0) circle (0.2)node[anchor=north ] at (4.9,0.15){$O_{v_1}$};
      \filldraw[color=gray, fill=red!10,  line width=2pt](6.5,0) circle (0.2)node[anchor=north ] at (6.5,0.05){$O_{v_2}$};
            \filldraw[color=gray, fill=red!10,  line width=2pt](7.7,0) circle (0.2)node[anchor=north ] at (8.2,0.05){$O_{v_3}$};
      \filldraw[color=gray, fill=red!10,  line width=2pt](1.4,0) circle (0.2)node[anchor=north ] at (1.4,0){$O_{u_1}$};
	\filldraw[color=gray, fill=red!10,  line width=2pt](2.6,0) circle (0.2)node[anchor=north ] at (2.8,0){$O_{u_2}$};
    \filldraw [gray] (5.3,0) circle (1pt);
      \filldraw [gray] (0,0) circle (2pt)node[anchor=north ] at (0,-0.2){O};
      \filldraw [gray] (1.4,0) circle (1pt);
            \filldraw [gray] (2.6,0) circle (1pt);
                        \filldraw [gray] (6.5,0) circle (1pt);
                                    \filldraw [gray] (7.7,0) circle (1pt);
              \draw[blue]    (0,0) to[out=60,in=120] (5.3,0.2);
  \draw[red]    (1.4,0.2) to[out=60,in=120] (5.3,0.2);
    \draw[red]    (1.4,0.2) to[out=60,in=120] (6.5,0.2);
        \draw[green]    (2.6,0.2) to[out=60,in=120] (6.5,0.2);
                \draw[green]    (2.6,0.2) to[out=60,in=120] (7.7,0.2);
  

\end{tikzpicture}
\caption{Representation of $u_i$ and $v_i$ in the $t+1$-dimensional sphere with respect to $u_0$.\label{fig:arr1}}
\end{figure}

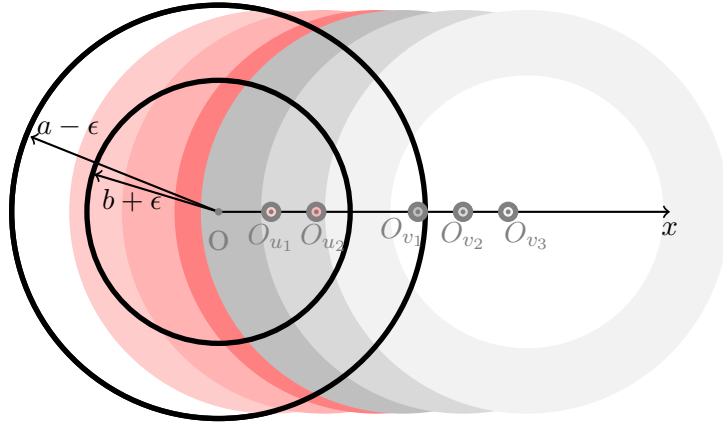
\begin{figure}
\vspace{-10pt}
\centering
\begin{tikzpicture}[thick, scale=0.5]

 \filldraw[color=black, fill=none,  line width=2pt](0,0) circle (5.5);
 
  \filldraw[color=red!20, fill=red!20,  line width=2pt] (1.4,0) circle (5.3);
 \filldraw[color=red!20, fill=red!0,  line width=2pt] (1.4,0) circle (3.7);
  
  \filldraw[color=red!30, fill=red!30,  line width=2pt] (2.8,0) circle (5.3);
    \filldraw[color=red!30, fill=red!0,  line width=2pt] (2.8,0) circle (3.7);
    
        \filldraw[color=red!50, fill=red!50,  line width=2pt] (4.2,0) circle (5.3);    
        \filldraw[color=red!50, fill=red!0,  line width=2pt] (4.2,0) circle (3.7);

                \filldraw[color=gray!50, fill=gray!50,  line width=2pt] (4.9,0) circle (5.3);  
                \filldraw[color=gray!50, fill=gray!0,  line width=2pt] (4.9,0) circle (3.7);
                
                                                \filldraw[color=gray!30, fill=gray!30,  line width=2pt] (6.5,0) circle (5.3);
                                \filldraw[color=gray!30, fill=gray!0,  line width=2pt] (6.5,0) circle (3.7);
                                
                                                                                \filldraw[color=gray!10, fill=gray!10,  line width=2pt] (8.2,0) circle (5.3);
                                \filldraw[color=gray!10, fill=blue!0,  line width=2pt] (8.2,0) circle (3.7);
                                
 \filldraw[color=black, fill=none,  line width=2pt](0,0) circle (3.5);

 \filldraw[color=black, fill=none,  line width=2pt](0,0) circle (5.5);

    \draw[black,->] (0,0)--(-3.3,1)node[anchor=north ] at (-2.3,0.9){$b+\epsilon$};
      \draw[black,->] (0,0)--(-5,2)node[anchor=north ] at (-4,2.8){$a-\epsilon$};
   \draw[black,->] (0,0)--(12,0)node[anchor=north ] at (12,0){$x$};

   \filldraw[color=gray, fill=gray!50,  line width=2pt](5.3,0) circle (0.2)node[anchor=north ] at (4.9,0.15){$O_{v_1}$};
      \filldraw[color=gray, fill=gray!30,  line width=2pt](6.5,0) circle (0.2)node[anchor=north ] at (6.5,0.05){$O_{v_2}$};
            \filldraw[color=gray, fill=gray!10,  line width=2pt](7.7,0) circle (0.2)node[anchor=north ] at (8.2,0.05){$O_{v_3}$};
      \filldraw[color=gray, fill=red!20,  line width=2pt](1.4,0) circle (0.2)node[anchor=north ] at (1.4,0){$O_{u_1}$};
	\filldraw[color=gray, fill=red!30,  line width=2pt](2.6,0) circle (0.2)node[anchor=north ] at (2.8,0){$O_{u_2}$};
    \filldraw [gray] (5.3,0) circle (1pt);
      \filldraw [gray] (0,0) circle (2pt)node[anchor=north ] at (0,-0.2){O};
      \filldraw [gray] (1.4,0) circle (1pt);
            \filldraw [gray] (2.6,0) circle (1pt);
                        \filldraw [gray] (6.5,0) circle (1pt);
                                    \filldraw [gray] (7.7,0) circle (1pt);
\end{tikzpicture}
\caption{Shaded regions represent the region of connectivity with $u_i$ and $v_i$ (red for $u_i$'s and gray for $v_i$'s).\label{fig:arr2}}
\end{figure}
\begin{lemma*}(Restatement of Lemma \ref{lem:pole})
In a ${\rm RAG}_t\left(n, \left[b\left(\frac{\log n}{n}\right)^{1/t},a\left(\frac{\log n}{n}\right)^{1/t}\right]\right), 0 <b <a$, with  probability $1-o(1)$ there exists a vertex $u_0$ such that any node $v$ that satisfies $\|u-v\|_2 \le a\left(\frac{\log n}{n}\right)^{1/t}$ is connected to $u_0$.
\end{lemma*}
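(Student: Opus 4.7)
My plan is to take the vertex $u_0$ produced by Lemma \ref{lem:lemma2pole} and verify that it acts as a pole in the sense of the connected-component reachability implicit in the proof of Theorem \ref{thm:highdem1}, namely that every sampled vertex $w$ with $\|u_0-w\|_2 \le r_2$ lies in the connected component of $u_0$. Applying Lemma \ref{lem:lemma2pole} with $k = \lceil b/(a-b)\rceil + 1$ and $\epsilon$ sufficiently small produces, with probability $1-o(1)$, a vertex $u_0$, a homogeneous hyperplane $L$ through $u_0$, and interleaved auxiliary vertices $u_1,\ldots,u_k$ and $v_1,\ldots,v_k$ sitting inside the small balls $B_t(O_{u_i}, \epsilon(\log n/n)^{1/t})$ and $B_t(O_{v_i}, \epsilon(\log n/n)^{1/t})$ whose centers lie on $L$ at the prescribed radial offsets from $u_0$. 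The first substep is to plug these radii into the triangle inequality and verify that the consecutive pairs $(u_0,v_1)$, $(v_i,u_i)$, and $(u_i,v_{i+1})$ for $i=1,\ldots,k-1$ all satisfy $r_1 \le \|\cdot\|_2 \le r_2$---exactly as in the chain argument of Lemma \ref{lem:lemma2} in the one-dimensional case---so that $u_0, u_1, \ldots, u_k, v_1, \ldots, v_k$ all lie in a single connected subgraph.

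Next, for any sampled vertex $w$ with $\|w-u_0\|_2 \in [r_1, r_2]$, the edge $(u_0,w)$ already exists and there is nothing to prove. For the remaining case $\|w - u_0\|_2 < r_1$, I would invoke Lemma \ref{lem:pole_helper}: since each $v_i$ (resp.\ $u_i$) sits inside a small ball of radius $\epsilon(\log n/n)^{1/t}$ around $O_{v_i}$ (resp.\ $O_{u_i}$), every sampled vertex lying in the annulus $B_t(O_{v_i},[(b+\epsilon)(\log n/n)^{1/t}, (a-\epsilon)(\log n/n)^{1/t}])$ has an edge to $v_i$, and similarly for $u_i$. It then suffices to show that the union of these annuli covers the punctured ball $B_t(u_0, r_1)\setminus \{u_0\}$, because any $w$ there would be edge-connected to some scaffolding vertex and hence to $u_0$.

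The hardest step, I expect, is this covering claim in dimension $t>1$. In the one-dimensional Lemma \ref{lem:lemma2} argument the scaffolding sits on a single line and the covering is elementary, but in higher dimensions the scaffolding is concentrated on the single hyperplane $L$ while $w$ may be displaced perpendicularly to $L$. My approach is to decompose $w - u_0 = p + q$ with $p$ lying along $L$ and $q$ perpendicular to $L$, to match $\|p\|_2$ to the $O_{v_i}$ (or $O_{u_i}$) whose radial offset from $u_0$ best matches it, and then to use the triangle inequality together with $\|q\|_2 \le r_2$ to verify that $\|w - v_i\|_2$ (or $\|w-u_i\|_2$) lies inside the admissible annulus $[(b+\epsilon), (a-\epsilon)](\log n/n)^{1/t}$. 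This last verification is tight and is precisely where the hypotheses $a > 2b$ and the lower bound on $(a/2)^t - b^t$ from Theorem \ref{thm:highdem1} should be exploited, since they provide the necessary slack in the annulus widths to absorb the perpendicular component $q$ and to guarantee that the radial spacing of the $O_{v_i}, O_{u_i}$ leaves no uncovered strip along $L$.
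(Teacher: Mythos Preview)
Your plan coincides with the paper's proof: invoke Lemma~\ref{lem:lemma2pole} to obtain the scaffold, chain $u_0,v_1,u_1,v_2,\ldots$ via exactly the triangle-inequality checks you list, and then argue that the annuli $B_t\bigl(O_{u_i},[(b+\epsilon),(a-\epsilon)](\log n/n)^{1/t}\bigr)$ and $B_t\bigl(O_{v_i},\cdot\bigr)$ jointly cover $B_t(u_0,r_2)$, so every sampled $w$ in that ball is adjacent to some scaffold vertex and hence lies in the component of $u_0$. One correction to your last paragraph: the paper does \emph{not} use $a>2b$ or the $(a/2)^t-b^t$ lower bound in this lemma---those hypotheses enter only in Lemma~\ref{lem:high_stuff1}. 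The covering here works for all $0<b<a$: each annulus has radial width $a-b-2\epsilon$ while consecutive centers along the geodesic are spaced only $a-b-4\epsilon$ apart, and the choice $k=\lceil b/(a-b)\rceil+1$ pushes the furthest center past distance $a$ from $u_0$; with $\|w_\perp\|\le\|w-u_0\|<b$ in your decomposition this already gives enough slack, so no extra constraint on $a/b$ is needed.
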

\begin{proof}
Consider the vertices $u_0,\{u_1, u_2, \ldots, u_k\}$ and $\{v_1, v_2, \ldots, v_k\}$ that satisfy the conditions of Lemma \ref{lem:lemma2pole} as shown in Fig \ref{fig:arr1}. We can observe that each vertex $v_i$ has an edge with $u_i$ and $u_{i-1}$, $i =1, \ldots,k$. 
\begin{align*}
\|u_i-v_i\|_2 &\ge \|u_i-O_{v_i}\|_2 - \|v_i-O_{v_i}\|_2\\
&\ge \|O_{v_i}-O_{u_i}\|_2  - \|u_i-O_{u_i}\|_2 - \|v_i-O_{v_i}\|_2\\
& \ge (b+2\epsilon)\left(\frac{\log n}{n}\right)^{1/t} - 2\epsilon \left( \frac{\log n}{n}\right)^{1/t}= b\left(\frac{ \log n}{n}\right)^{1/t} \quad \text{and}
\end{align*}
\begin{align*}
 \|u_i-v_i\|_2 &\le  \|O_{v_i}-O_{u_i}\|_2 + \|u_i-O_{u_i}\|_2 + \|v_i-O_{v_i}\|_2\\
  & =  (b+2\epsilon)\left(\frac{\log n}{n}\right)^{1/t} + 2\epsilon  \left(\frac{\log n}{n}\right)^{1/t}= (b+4\epsilon)\left(\frac{ \log n}{n} \right)^{1/t}
\end{align*}

Similarly, 
\vspace{-10pt}
\begin{align*}
\|u_{i-1}-v_i\|_2& \ge \|u_{i-1}-O_{v_i}\|_2 - \|v_i-O_{v_i}\|_2\\
&\ge \|O_{v_i}-O_{u_{i-1}}\|_2 - \|u_{i-1}-O_{u_{i-1}}\|_2 - \|v_i-O_{v_i}\|_2\\
&\ge (a-2\epsilon)\left( \frac{\log n}{n}\right)^{1/t} - 2\epsilon\left(\frac{ \log n}{n}\right)^{1/t}\\
& = (a-4\epsilon)\left(\frac{\log n}{n}\right)^{1/t} \quad \text{and}
\end{align*}
\begin{align*}
\|u_{i-1}-v_i\|_2 &\le \|O_{v_i}-O_{u_{i-1}}\|_2  + \|u_{i-1}-O_{u_{i-1}}\|_2 + \|v_i-O_{v_i}\|\\
&\le (a-2\epsilon)\left(\frac{\log n}{n}\right)^{1/t} - 2\epsilon\left(\frac{ \log n}{n}\right)^{1/t}= a\left(\frac{\log n}{n}\right)^{1/t}.
\end{align*}
 This implies that $u_0$ is connected to $u_i$ and $v_i$ for all $i=1,\dots,k$.  Next, we show that any point in the region ${B_t}\Big(u_0,r_s=a(\frac{\log n}{n})^{1/t}\Big)$ is connected to $u_0$. Now recall that any point in the region ${B_t}\Big(x,\left[(b+\epsilon)\Big(\frac{\log n}{n}\Big)^{t},(a-\epsilon)\Big(\frac{\log n}{n}\Big)^{t}\right]\Big)$ is connected to any point in the region ${B_t}(x,\epsilon)$.
 We can observe that the nodes $u_1,\ldots, u_k,v_1$,\ldots, $v_k$ form a cover of ${B_t}\Big(u_0,r_s=a(\frac{\log n}{n})^{1/t}\Big)$ in the form of these annulus regions (A region corresponding to a particular node implies the portion of the hypersphere such that any other node in that region is connected to it) translated by $(a-b-4\epsilon)\left(\frac{\log n}{n}\right)^{1/t}$. This is because any node in ${B_t}\Big(O_{u_i},\left[(b+\epsilon)\Big(\frac{\log n}{n}\Big)^{t},(a-\epsilon)\Big(\frac{\log n}{n}\Big)^{t}\right]\Big)$ is connected to $u_i$ and any node in ${B_t}\Big(O_{v_i},\left[(b+\epsilon)\Big(\frac{\log n}{n}\Big)^{t},(a-\epsilon)\Big(\frac{\log n}{n}\Big)^{t}\right]\Big)$  is connected to $v_i$ respectively (Figure\ref{fig:arr2}). Therefore any node falling in any of the aforementioned regions is connected with $u_0$. Since the width of each region is $(a-b-2\epsilon)\left(\frac{\log n}{n}\right)^{1/t}$, the regions overlap with each other. Additionally, the inner radius of a particular region is $(b+\epsilon)\left(\frac{\log n}{n}\right)^{1/t}$ which is greater than $b\left(\frac{\log n}{n}\right)^{1/t}$. Hence, there can not exist any point in ${B_t}\Big(u_0,r_2=a(\frac{\log n}{n})^{1/t}\Big)$ which is not covered by the union of these regions. 
\end{proof}

\subsection{Proofs of Lemma \ref{lem:high_stuff1} and Lemma \ref{lem:cut_twice}} 
Assume that the $(t+1)$-dimensional space is described by a coordinate system whose center coincides with the center of the sphere. In this coordinate system, let us denote the point $(1,0,0,\dots,0)$ by $u_0$. 

Lemma \ref{conjec:ratio} shows that for any plane $L$ with  $\mathcal{A}_{L} \not \subset B_t(u,r_2)$, the region of connectivity of $u$ on both sides differ by a constant fraction.
\begin{lemma}\label{conjec:ratio}
For a particular node $u$ in ${\rm RAG}_t\left(n,\left[r_1,r_2\right]\right)$ where $r_1=b\Big(\frac{\log n}{n}\Big)^{1/t}$, $r_2=a\Big(\frac{\log n}{n}\Big)^{1/t}$, consider a hyperplane $L$ passing through $u$ such that $\mathcal{A}_{L} \not \subset B_t(u,a\Big(\frac{\log n}{n}\Big)^{1/t})$, then $ \frac{\min (|R_{L}^{1}|,|R_{L}^{2}|)}{|R_{L}^{1}|+|R_{L}^{2}|} \ge \delta$ if $a > 2b$, where $\delta$ is a constant.
\end{lemma}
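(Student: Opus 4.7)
The plan is to reduce to a tangent-plane computation. After a rotation of $S^t$ I would place $u=e_1$ and parametrize nearby sphere points by their orthogonal projection $y\in\reals^t$ onto the tangent hyperplane at $u$, so that $x(y)=(\sqrt{1-\|y\|^2},y)=u+(0,y)-\tfrac12\|y\|^2 e_1+O(\|y\|^4)$; since $\|y\|\le r_2=o(1)$ on the annulus, both the spherical area element and the signed distance to $L$ differ from their flat counterparts by relative $O(r_2^2)$ terms that I will absorb at the end. Writing $\hat w$ for the unit normal of $L$ at $u$, $\cos\alpha=\hat w_1$ and $\hat w_\perp=(\hat w_2,\ldots,\hat w_{t+1})$ with $\|\hat w_\perp\|=\sin\alpha$, the set $\mathcal A_L=L\cap S^t$ is a $(t-1)$-sphere through $u$ of diameter $2\sin\alpha$, so the hypothesis $\mathcal A_L\not\subset B_t(u,r_2)$ is exactly $\sin\alpha>r_2/2$.

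Expanding, $\hat w^{T}(x(y)-u)=\hat w_\perp^{T}y-\tfrac{\cos\alpha}{2}\|y\|^2+O(\|y\|^4)$, so after dividing by $\sin\alpha>0$ and setting $\hat v=\hat w_\perp/\sin\alpha$, $y_1'=\hat v^{T}y$, the side of $L$ is governed by the sign of $y_1'-\|y\|^2/(2\tan\alpha)$, and the dividing locus in the tangent plane is the $(t-1)$-sphere
\[
\mathcal C=\{y:(y_1'-\tan\alpha)^2+\|y-y_1'\hat v\|^2=\tan^2\alpha\}
\]
of radius $\tan\alpha$ tangent to $\{y_1'=0\}$ at $u$; $\mathcal R_L^{2}$ corresponds (up to curvature corrections) to its interior $\{y_1'\ge\|y\|^2/(2\tan\alpha)\}$ and $\mathcal R_L^{1}$ to its exterior, restricted to the flat annulus $A=\{r_1\le\|y\|\le r_2\}$.

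The exterior fraction is at least $\tfrac12$ by reflection symmetry, since the half-annulus $\{y_1'\le0\}$ lies entirely outside $\mathcal C$ and has volume $|A|/2$. For the interior, which is the main obstacle, I would pass to spherical coordinates $y=r\hat s$, $\hat s\in S^{t-1}$, to get
\[
\frac{|\mathrm{inside}\cap A|}{|A|}=\frac{\int_{r_1}^{r_2}r^{t-1}F(r/(2\tan\alpha))\,dr}{\int_{r_1}^{r_2}r^{t-1}\,dr},
\]
where $F(c)=|\{\hat s\in S^{t-1}:s_1\ge c\}|/|S^{t-1}|$ is the spherical cap fraction. On the shell $r\in[r_1,r_2/2]$ one has $r/(2\tan\alpha)<1/2$ because $2\tan\alpha\ge 2\sin\alpha>r_2$, so the integrand is at least $F(1/2)\,r^{t-1}$; the hypothesis $a>2b$ makes this shell nontrivial ($r_1<r_2/2$), and integrating yields an interior fraction at least $F(1/2)\bigl(2^{-t}-(b/a)^t\bigr)\bigl/\bigl(1-(b/a)^t\bigr)>0$, a constant $\delta_1=\delta_1(a,b,t)$. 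The delicate point is precisely this: although $\mathcal C$ bends inward into the annulus, the separation $a>2b$ is exactly what keeps $\mathcal C$ from being squeezed past the inner boundary $\|y\|=r_1$ on the shell where the cap-fraction $F$ is bounded below; the calculation fails without that separation. Finally, I would transfer back to $S^t$: the $O(\|y\|^4)$ remainder shifts the true dividing hypersurface from $\mathcal C$ by at most $O(r_2^4/\tan\alpha)=O(r_2^3)=o(r_2)$, and the flat-to-spherical volume Jacobian is $1+O(r_2^2)$, so each side-fraction is perturbed by $o(1)$. Combining with the exterior bound yields $\min(|\mathcal R_L^{1}|,|\mathcal R_L^{2}|)/(|\mathcal R_L^{1}|+|\mathcal R_L^{2}|)\ge\delta$ for a constant $\delta(a,b,t)>0$, as claimed.
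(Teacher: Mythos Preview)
Your argument is correct and constitutes a genuinely different proof from the paper's. The paper proceeds by a direct geometric identification of the extremal hyperplane: when $\sin\alpha=r_2/2$ exactly (the borderline case of the hypothesis $\mathcal A_L\not\subset B_t(u,r_2)$), the region $\{x\in S^t:\|u-x\|\le r_2,\ \hat w^Tx\ge\beta\}$ is precisely a spherical cap of Euclidean radius $\approx r_2/2$ centered at $\hat w$; they then lower-bound $|\mathcal R_L^2|$ by that cap's area minus the \emph{entire} inner ball $|B_t(u,r_1)|$, obtaining $c_t\bigl((r_2/2)^t-r_1^t\bigr)$ and hence the explicit constant $\delta=\frac{(a/2)^t-b^t}{a^t-b^t}$. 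Monotonicity in $\alpha$ (which you effectively prove in the tangent picture, since the tangent sphere $\mathcal C$ only grows) handles the non-extremal hyperplanes.

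Your tangent-plane linearization plus the polar-coordinate integral over the shell $[r_1,r_2/2]$ arrives at the same endpoint---the condition $a>2b$ is exactly what makes $(r_2/2)^t-r_1^t>0$ in their argument and what makes your shell nonempty---but yields the weaker constant $F(1/2)\frac{2^{-t}-(b/a)^t}{1-(b/a)^t}$. For the lemma as stated this is fine, but note that the paper's Corollary~\ref{cor:ratio} and the downstream connectivity constant in Theorem~\ref{thm:highdem1} use the specific value $\frac{(a/2)^t-b^t}{a^t-b^t}$; your route would propagate a looser numerical constant. The paper's approach is shorter and gives the sharper $\delta$; yours is more systematic about the curvature corrections and makes the $\alpha$-monotonicity explicit rather than implicit.
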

\begin{proof}

First, for a node $u$ and a given hyperplane $L:w^{T}x=\beta$ passing through $u$, we try to evaluate the surface area of the region corresponding to
\begin{align*}
\{x \in S^t \mid  \|u-x\|_2 \le r_2=a\Big(\frac{\log n}{n}\Big)^{1/t}, w^{T}x \ge \beta \}
\end{align*}
such that  the farthest point from $u$ on the plane $L$ and $S^t$ is at distance $r_2$. This region is a spherical cap corresponding to $B_t(u',r')$ where $u'$ is the intersection of $S^t$ with the normal from the origin to the plane and $r'=\|u-u'\|_2$. Suppose $h$ is the height of this cap (perpendicular distance from $u'$ to the hyperplane $L$). Using pythagoras theorem, we can see that $r'^2=h^2+\left(r_2/2\right)^2$ and $(1-h)^2+ r_2^2/4 = 1$. Simplifying this, we get $h=\frac{r'^2}{2}$ and hence $r'\approx  \frac{r_2}{2}$.
Hence the area of this region is $ c_t \left(r_2/{2}\right)^t$.



Without loss of generality, assume $|\mathcal{R}_{L}^1| \ge |\mathcal{R}_{L}^2|$. Now,
\begin{eqnarray*}
|\mathcal{R}_{L}^1| &=& |\{x \in S^t \mid b\Big(\frac{\log n}{n}\Big)^{1/t} \le  \|u-x\|_2 \le a\Big(\frac{\log n}{n}\Big)^{1/t}, w^{T}x \le \beta \}|\\
&=&  |\{x \in S^t \mid b\Big(\frac{\log n}{n}\Big)^{1/t} \le  \|u-x\|_2 \le a\Big(\frac{\log n}{n}\Big)^{1/t} \}| - |\mathcal{R}_{L}^2|\\
&=& c_t(r_2^t-r_1^t) -  |\mathcal{R}_{L}^2|.\\\\
|\mathcal{R}_{L}^2| &\equiv& |\{x \in S^t \mid b\Big(\frac{\log n}{n}\Big)^{1/t} \le  \|u-x\|_2 \le a\Big(\frac{\log n}{n}\Big)^{1/t}, w^{T}x \ge \beta \}|\\
&\equiv& |\{x \in S^t \mid  \|u-x\|_2 \le a\Big(\frac{\log n}{n}\Big)^{1/t}, w^{T}x \ge \beta \}| - |\{x \in S^t \mid   \|u-x\|_2 \le b\Big(\frac{\log n}{n}\Big)^{1/t}, w^{T}x \ge \beta \}|\\
 &\ge&   |\{x \in S^t \mid  \|u-x\|_2 \le a\Big(\frac{\log n}{n}\Big)^{1/t}, w^{T}x \ge \beta \}| -\\
&& \left[ |\{x \in S^t \mid   \|u-x\|_2 \le b\Big(\frac{\log n}{n}\Big)^{1/t},w^{T}x < \beta \}| + |\{x \in S^t \mid   \|u-x\|_2 \le b\Big(\frac{\log n}{n}\Big)^{1/t},w^{T}x \ge \beta \}|\right]\\
&\equiv& |\{x \in S^t \mid  \|u-x\|_2 \le a\Big(\frac{\log n}{n}\Big)^{1/t}, w^{T}x \ge \beta \}| - |\{x \in S^t \mid   \|u-x\|_2 \le b\Big(\frac{\log n}{n}\Big)^{1/t} \}|\\
&\equiv& c_t \left(r_2/2\right)^t -c_tr_1^t
\end{eqnarray*}
If $c_t \left(r_2/{2}\right)^t -c_tr_1^t > 0$, then,
\begin{eqnarray*}
1\le \frac{|\mathcal{R}_{L}^1|}{|\mathcal{R}_{L}^2|} &\le&  \frac{c_t(r_2^t-r_1^t)}{c_t \left(r_2/{2}\right)^t -c_tr_1^t}-1\\
&=& \frac{ a^t -b^t}{ \left(a/{2}\right)^t -b^t} -1 = \delta'\\
\end{eqnarray*}
Hence, 
\begin{eqnarray*}
2\le \frac{|\mathcal{R}_{L}^1|+ |\mathcal{R}_L^2|}{|\mathcal{R}_{L}^2|} &\le&  1+\delta'\\
\end{eqnarray*}
This gives us that $\frac{\min (|\mathcal{R}_{L}^{1}|,|\mathcal{R}_{L}^{2}|)}{|\mathcal{R}_{L}^{1}|+|\mathcal{R}_{L}^{2}|} =\frac{|\mathcal{R}_{L}^2|}{|\mathcal{R}_{L}^1|+|\mathcal{R}_{L}^2|} \geq \frac{1}{1+\delta'} = \delta$. 
Hence, the claim of the lemma is satisfied   if $\left(a/{2}\right)^t -b^t > 0$ i.e. $a>2b.$
\end{proof}

\begin{corollary}\label{cor:ratio}
For a particular node $u$ in ${\rm RAG}_t\left(n,\left[b\Big(\frac{\log n}{n}\Big)^{1/t},a\Big(\frac{\log n}{n}\Big)^{1/t}\right]\right)$, consider a hyperplane $L$ passing through $u$ such that $\mathcal{A}_{L} \not \subset B_t(u,a\Big(\frac{\log n}{n}\Big)^{1/t})$, then $ \frac{\min (|\mathcal{R}_{L}^{1}|,|\mathcal{R}_{L}^{2}|)}{|\mathcal{R}_{L}^{1}|+|\mathcal{R}_{L}^{2}|} \ge  \frac{ \left(a/2\right)^t -b^t}{ a^t -b^t}$.
\end{corollary}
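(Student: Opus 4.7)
The plan is to extract the quantitative estimate that is already implicit in the proof of Lemma~\ref{conjec:ratio}, bypassing the intermediate simplification that introduces the constant $\delta'$. Specifically, in the proof of Lemma~\ref{conjec:ratio} two bounds are established without any assumption on the relative sizes of $a$ and $b$:
\begin{align*}
|\mathcal{R}_L^1| + |\mathcal{R}_L^2| &= c_t(r_2^t - r_1^t),\\
\min(|\mathcal{R}_L^1|,|\mathcal{R}_L^2|) &\ge c_t\bigl((r_2/2)^t - r_1^t\bigr),
\end{align*}
where the second inequality uses only the fact that the spherical cap of radius $r_2$ on the side of $L$ opposite to $u$ has area $c_t(r_2/2)^t$ (as computed via the height calculation $h \approx r_2^2/2$ in the proof of Lemma~\ref{conjec:ratio}), and that subtracting the entire inner ball $B_t(u,r_1)$ of area $c_t r_1^t$ is a safe lower bound on the annular intersection.

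Given these two ingredients, the proof of the corollary is a one-line division. First I would verify that nothing in the derivation of these two estimates required $a > 2b$: the hyperplane hypothesis $\mathcal{A}_L \not\subset B_t(u,r_2)$ is exactly what makes the cap-area computation valid, and no other geometric constraint enters. Next I would divide:
\begin{equation*}
\frac{\min(|\mathcal{R}_L^1|,|\mathcal{R}_L^2|)}{|\mathcal{R}_L^1| + |\mathcal{R}_L^2|} \;\ge\; \frac{c_t\bigl((r_2/2)^t - r_1^t\bigr)}{c_t(r_2^t - r_1^t)} \;=\; \frac{(a/2)^t - b^t}{a^t - b^t},
\end{equation*}
which, upon substituting $r_1 = b(\log n/n)^{1/t}$ and $r_2 = a(\log n/n)^{1/t}$, is precisely the claim.

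There is essentially no obstacle; the only point worth flagging is that when $(a/2)^t \le b^t$ the right-hand side is nonpositive and the conclusion is vacuous, so the corollary is informative only in the regime $a > 2b$ (which matches the hypothesis of Lemma~\ref{conjec:ratio}). Thus the corollary is best viewed as making explicit the constant $\delta$ appearing in Lemma~\ref{conjec:ratio}, and its proof is simply the observation that $\delta = 1/(1+\delta') = ((a/2)^t - b^t)/(a^t - b^t)$ via the algebraic identity $\delta' = (a^t - b^t)/((a/2)^t - b^t) - 1$ established in that lemma.
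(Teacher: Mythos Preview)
Your proposal is correct and matches the paper's proof, which is the one-line observation that $\delta = 1/(1+\delta')$ with $\delta' = (a^t-b^t)/((a/2)^t-b^t)-1$ from Lemma~\ref{conjec:ratio}. Your additional unpacking of the two volume bounds and the direct division is just a more explicit rendering of the same argument, and your remark that the bound is vacuous unless $a>2b$ is a helpful sanity check the paper leaves implicit.
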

\begin{proof}
Using Lemma \ref{conjec:ratio}, $ \frac{\min (|\mathcal{R}_{L}^{1}|,|\mathcal{R}_{L}^{2}|)}{|\mathcal{R}_{L}^{1}|+|\mathcal{R}_{L}^{2}|} \ge \delta = \frac{1}{1+ \delta'}$ where $\delta' = \frac{ \left(a\right)^t -b^t}{ \left(a/{2}\right)^t -b^t} -1 $.
\end{proof}

For a node $u$, recall that the hyperplane $L^{\star}_u : x_1=u_1$ is normal to the line joining $u_0$ and the origin and passes through $u$. We now have the following lemma, which tries to show that if the plane satisfies $\mathcal{A}_{L^{\star}_u}  \subseteq B_t(u,r_2)$ then the node $u$ must be within $r_2$  distance of $u_0$. 
\begin{lemma*}(Restatement of Lemma \ref{lem:cut_twice})
For a particular node $u$ and corresponding hyperplane $L^{\star}_u$, if $\mathcal{A}_{L^{\star}_u}  \subseteq B_t(u,r_2)$  then $u$ must be within $r_2$ of $u_0$.
\end{lemma*}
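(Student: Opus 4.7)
The plan is to convert the hypothesis $\mathcal{A}_{L^{\star}_u} \subseteq B_t(u,r_2)$ into a constraint on the first coordinate $u_1$ of $u$, and then verify that this constraint places $u$ inside $B_t(u_0,r_2)$. The key geometric observation is to identify the point of $\mathcal{A}_{L^{\star}_u}$ that is farthest from $u$. Since $\mathcal{A}_{L^{\star}_u} = \{x \in S^t : x_1 = u_1\}$ is the $(t-1)$-sphere sitting in the affine hyperplane $\{x_1 = u_1\}$, centered at $(u_1,0,\dots,0)$ with radius $\sqrt{1-u_1^2}$, and $u$ itself lies on it, the antipodal point $u^{\star} = (u_1, -u_2, \dots, -u_{t+1})$ achieves the maximum Euclidean distance $\|u - u^{\star}\|_2 = 2\sqrt{1 - u_1^2}$.

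Given this, the hypothesis forces $2\sqrt{1-u_1^2} \le r_2$, i.e., $u_1^2 \ge 1 - r_2^2/4$. Focusing on the relevant case $u_1 \ge \sqrt{1 - r_2^2/4}$, I compute
\[
\|u - u_0\|_2^2 \;=\; (u_1-1)^2 + u_2^2 + \cdots + u_{t+1}^2 \;=\; 2 - 2u_1 \;\le\; 2 - 2\sqrt{1 - r_2^2/4}.
\]
A short algebraic check then gives $2 - 2\sqrt{1-r_2^2/4} \le r_2^2$: squaring $(2 - r_2^2) \le 2\sqrt{1 - r_2^2/4}$ (legitimate because $r_2 = o(1) < \sqrt{2}$) reduces this to $r_2^4 \le 3 r_2^2$, which is immediate. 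Hence $\|u - u_0\|_2 \le r_2$, as claimed.

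The one place that requires attention is the alternative root $u_1 \le -\sqrt{1-r_2^2/4}$, which would place $u$ near the antipode $-u_0$ rather than near $u_0$. I expect to handle it by a short auxiliary observation rather than a new calculation: when $u_1$ is this negative, spherical curvature forces the whole annulus $B_t(u,[r_1,r_2])$ to lie strictly in the half-space $\{x_1 > u_1\}$ (a direct consequence of $\|x - u\|_2^2 = 2 - 2\langle x, u\rangle$ and $u_1 \approx -1$), so every neighbor of $u$ already has strictly larger first coordinate. This means the induction step in the proof of Theorem~\ref{thm:highdem1} that invokes Lemma~\ref{lem:cut_twice} still succeeds, and the lemma is used only in the regime $u_1 \ge 0$ where the bound I derived above applies. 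This antipodal bookkeeping is the only delicate point; the core of the proof is the one-line identification of $u^{\star}$ as the farthest point on $\mathcal{A}_{L^{\star}_u}$, after which the rest is elementary algebra.
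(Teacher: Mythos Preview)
Your core argument is exactly the paper's: identify the antipodal point $u^{\star}=(u_1,-u_2,\dots,-u_{t+1})$ as the farthest point of $\mathcal{A}_{L^{\star}_u}$ from $u$, deduce $u_1^2 \ge 1 - r_2^2/4$, compute $\|u-u_0\|_2^2 = 2-2u_1$, and reduce $2 - 2\sqrt{1-r_2^2/4} \le r_2^2$ to $r_2^4 \le 3r_2^2$ by squaring. The paper does precisely this, with the same one-line algebra.

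You go beyond the paper by flagging the negative root $u_1 \le -\sqrt{1-r_2^2/4}$, which the paper's own proof silently discards (it implicitly takes $u_1 = +\sqrt{(4-r_2^2)/4}$). That is a fair observation; however, your proposed patch is not correct as stated. Your claim that in this regime the whole annulus $B_t(u,[r_1,r_2])$ lies in $\{x_1 > u_1\}$ fails whenever $r_1 < 2\sqrt{1-u_1^2} \le r_2$: then the antipodal point $u^{\star}$ itself has first coordinate $u_1$ and $\|u^{\star}-u\|_2 = 2\sqrt{1-u_1^2} \in (r_1,r_2]$, so $u^{\star}$ sits in the annulus, and by continuity so do nearby points with first coordinate strictly below $u_1$. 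The identity $\|x-u\|_2^2 = 2 - 2\langle x,u\rangle$ alone does not force $x_1 > u_1$. So the antipodal bookkeeping you single out remains a genuine (if minor) loose end---one that the paper shares.
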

\begin{proof}
The reflection of  $u$ in x-axis (say $v$) is the farthest from $u$ that lies on both $\mathcal{A}_{L^{\star}_u}$ and $\mathcal{S}^t$.  Now we want to show that $v = (u_1,-u_2,\ldots,-u_{t+1})$ has the following property:  if $\|u-v\|_2\le r_2$ then $\|u-u_0\|_2\le r_2$.
We are given that
\begin{eqnarray*}
u_1^2+u_2^2+\dots+u_{t+1}^2 = 1\\
d(u,v) = \sqrt{4(u_2^2+\ldots+u_{t+1}^2)}\le r_2\\
4(1-u_1^2)\leq r_2^2
\end{eqnarray*}
We need to show that, 
\begin{eqnarray*}
d(u,u_0)^2 &=& {(1-u_1)^2 + (u_2^2+\ldots+u_{t+1}^2) }\\
&=&(1-u_1)^2 + (1-u_1^2) \\
&=&  {2-2u_1}\\
&\leq & r_2^2
\end{eqnarray*}
which holds if $\sqrt{\frac{4-r_2^2}{4}}\ge \frac{2-r_2^2}{2}$. Notice that, 
\begin{eqnarray*}
&\sqrt{\frac{4-r_2^2}{4}}\ge \frac{2-r_2^2}{2}\\
 &\implies 4-r_2^2 \geq4-4r_2^2+r_2^4\\
 & \implies r_2^4-3r_2^2\leq 0 \\
 & \implies r_2^2(r_2^2-3) \le 0
\end{eqnarray*}
which is true since $0 \le  r_2 \le 1$.
\end{proof}
Since, we do not know the location of the pole, we need to show that every point has a neighbor on both sides of the plane $L$ no matter what the orientation of the plane given that $\mathcal{A}_{L} \not \subset B_t(u,r_2)$. For this we need to introduce the concept of VC Dimension. Define $(X,R)$ to be a range space if $X$ is a set (possibly infinite) and $R$ is a family of subsets of $X$. For any set $A \subseteq X$, we define $P_{R}(A)=\{r \cap A \mid r \in R\}$ to be the projection of $R$ on $A$. Finally we define the VC dimension $d$ of a range space $(X,R)$ to be $d=\sup_{A \subseteq X} \{|A| \mid |P_R(A)|=2^{A} \}$. Next we give a modified version of a well-known theorem about VC-dimension (\cite{haussler1987}).

\begin{lemma}\label{thm:VC_dim}
Let $(X,R)$ be a range space of VC dimension $d$ and let $U$ be a uniform probability measure defined on $X$. In that case, if we sample a set $\mathcal{M}$ of $m$ points according to $U$ such that
\begin{align*}
m \ge \max \Big(\frac{8d}{\epsilon}\log \frac{8d}{\epsilon}, \frac{4}{\epsilon}\log \frac{2}{\eta} \Big)
\end{align*}
then with probability $1-\eta$ for any set $r \in R$ such that $\Pr_{x \sim_U X}(x \in r) \ge \epsilon $, we have $|r \cap \mathcal{M}| \neq \Phi$.  \\
\end{lemma}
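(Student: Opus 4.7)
The plan is to follow the classical $\epsilon$-net theorem proof of Haussler-Welzl via a double-sampling/symmetrization argument, adapted to the continuous setting. I would draw two independent samples $\mathcal{M}_1,\mathcal{M}_2 \sim U^m$ and define two events: $E_1$, the ``bad'' event that there exists $r \in R$ with $U(r) \geq \epsilon$ yet $r \cap \mathcal{M}_1 = \emptyset$; and $E_2$, the strengthening requiring additionally $|r \cap \mathcal{M}_2| \geq \epsilon m/2$. The conclusion of the lemma is precisely $\Pr(E_1) \leq \eta$, so it suffices to sandwich $\Pr(E_1)$ between a constant multiple of $\Pr(E_2)$ (easy direction) and an explicit counting bound (hard direction).

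First I would show $\Pr(E_2) \geq \tfrac{1}{2}\Pr(E_1)$. Conditional on $E_1$ and on a specific witnessing $r$, the count $|r \cap \mathcal{M}_2|$ is $\mathrm{Bin}(m, U(r))$ with mean at least $\epsilon m$; a Chebyshev bound then forces $\Pr(|r \cap \mathcal{M}_2| \geq \epsilon m/2) \geq 1/2$ provided $m \geq 8/\epsilon$, which is implied by the hypothesis. Integrating over the conditional distribution yields the desired inequality.

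Next I would bound $\Pr(E_2)$ directly by conditioning on the union $\mathcal{T} = \mathcal{M}_1 \cup \mathcal{M}_2$ and invoking exchangeability: given $\mathcal{T}$, the split into $\mathcal{M}_1$ and $\mathcal{M}_2$ is a uniformly random bipartition of $2m$ points. For any fixed range $r$ with $|r \cap \mathcal{T}| = k$, the probability that every one of these $k$ points lands in $\mathcal{M}_2$ is $\binom{2m-k}{m}/\binom{2m}{m} \leq 2^{-k}$, which is at most $2^{-\epsilon m/2}$ on the event $E_2$. Now the decisive step: although $R$ may be uncountable, the number of distinct projections $\{r \cap \mathcal{T} : r \in R\}$ is finite and, by the Sauer-Shelah lemma, at most $(2em/d)^d$. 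A union bound then gives $\Pr(E_2) \leq (2em/d)^d \cdot 2^{-\epsilon m/2}$, and the lemma follows by setting this at most $\eta/2$ and checking algebraically that the stated threshold $m \geq \max\bigl(\tfrac{8d}{\epsilon}\log\tfrac{8d}{\epsilon}, \tfrac{4}{\epsilon}\log\tfrac{2}{\eta}\bigr)$ suffices.

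The main obstacle I anticipate is the continuous-setting adaptation: one must verify measurability of the suprema over the uncountable family $R$ so that $E_1$ and $E_2$ are genuine events, and the symmetrization step must be justified despite $X$ being uncountable. The saving grace is that Sauer-Shelah only bounds projections onto a fixed finite sample, so the combinatorial core is untouched by continuity; once measurability is dispatched via the standard device of assuming $R$ admits a countable dense core (which holds for the annulus-hyperplane families used later in this paper), the remaining work is a careful tightening of the Chernoff/Sauer-Shelah arithmetic to land on the explicit constants in the statement.
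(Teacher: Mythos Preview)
Your proposal is correct and follows essentially the same double-sampling/symmetrization argument as the paper: define the bad event $E_1$ and the strengthened event $E_2$, show $\Pr(E_2)\ge \tfrac12\Pr(E_1)$ via a second-moment bound, then bound $\Pr(E_2)$ by conditioning on the pooled sample, using Sauer--Shelah to control the number of distinct projections, and applying a union bound with the $2^{-\epsilon m/2}$ tail. The only cosmetic differences are that the paper uses the cruder Sauer--Shelah bound $(2m)^d$ in place of your $(2em/d)^d$ and does not explicitly address the measurability issue you flag.
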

\begin{proof}
Define a set $r \in R$ to be \emph{heavy} if $\Pr_{x \sim_{U} X} (x \in r) \ge \epsilon$. We pick two random samples $N$ and $T$ each of size $m$ according to the uniform distribution defined on $X$. Consider the event $E_1$ (bad event) for which there exists a heavy $r \in R$ such that $r \cap N=\Phi$. Consider another event 
$E_2$ for which there exists a heavy $r \in R$ such that $r \cap N=\Phi$ and $|r \cap T| \ge \frac{\epsilon m}{2}$. Now, since $r$ is heavy, assume that $\Pr_{x \sim_{U} X} (x \in r)=\alpha$ such that $\alpha>\epsilon$. In that case, $|r \cap T|$ is a Binomial random variable with mean $\alpha m$ and variance at most $\alpha m$ as well. Hence, we have that 
\begin{align*}
\Pr (E_2 \mid E_1)&=\Pr(|r \cap T| \ge \frac{\epsilon m}{2})=1- \Pr(|r \cap T| \le \frac{\epsilon m}{2})  \\
&\ge 1- \Pr(|r \cap T| \le \frac{\alpha m}{2})  \ge 1- \frac{\alpha m}{ (\frac{\alpha m}{2})^{2}} \ge 1-\frac{4}{m \alpha} 
\end{align*} 
Now for $m \ge \frac{8}{\epsilon} \ge \frac{8}{\alpha}$, we conclude that $\Pr (E_2 \mid E_1) \ge \frac{1}{2}$. Now consider the same experiment in a different way. Consider picking $2m$ samples according to the uniform distribution from $X$ and then equally partition them randomly between $N$ and $T$. Consider the following event for a particular set $r$.
\begin{align*}
E_r: r \cap N= \Phi \text{ and } |r \cap T| \ge \frac{\epsilon m}{2} 
\end{align*}
and therefore
\begin{align*}
E_2=\bigcup _{r:\textup{ heavy}} E_r 
\end{align*}
Let us fix $N \cup T$ and define $p=|r \cap (N \cup T)|$. In that case, we have 
\begin{align*}
\Pr(r \cap N= \Phi \mid |r \cap (N \cup T)| \ge \frac{\epsilon m}{2})=\frac{(2m-p)(2m-p-1)\dots(m-p+1)}{2m(2m-1) \dots (2m-p+1)} \le 2^{-p} \le 2^{-\frac{\epsilon m}{2}}
\end{align*}
The last inequality holds since $p \ge \frac{\epsilon m}{2}$. Now, since the VC dimension of the range space $(X,R)$ is $d$, the cardinality of the set $\{ r \cap (N \cup T) \mid r \in R\}$ is at most $\sum_{i \le d} {2m \choose i} \le (2m)^{d}$ ( see \cite{shalev2014understanding}). Notice that
\begin{align*}
\Pr(E_r)=\Pr(r \cap N= \Phi \mid |r \cap (N \cup T)| \ge \frac{\epsilon m}{2}) \Pr(|r \cap (N \cup T)| \ge \frac{\epsilon m}{2}) \le 2^{-\frac{\epsilon m}{2}}
\end{align*}
Therefore by using the union bound over the possible number of distinct events $E_r$, we have
\begin{align*}
\Pr(E_2) \le (2m)^{d} 2^{-\frac{\epsilon m}{2}}
\end{align*}
Since $\Pr(E_2 \mid E_1) \ge \frac{1}{2}$ and $\Pr( E_1 \mid E_2)=1$, we must have 
\begin{align*}
\Pr(E_1) \le 2(2m)^{d} 2^{-\frac{\epsilon m}{2}} \le \delta
\end{align*}
which is ensured by the statement of the theorem.
\end{proof}

In order to use this theorem consider the range space $(X,\mathcal{R}_{u})$ where $X$ is the set of points in $\mathcal{S}^t$ and $\mathcal{R}_u$ be the family of sets $\{x \in S^t \mid b\Big(\frac{\log n}{n}\Big)^{1/t} \le  \|u-x\|_2 \le a\Big(\frac{\log n}{n}\Big)^{1/t}, w^{T}x \ge \beta , \mathcal{A}_{L:w^{T}x=\beta} \not \subset B_t(u,r_2)\}$. We now have the following lemma about the VC Dimension of the above range space which is a straightforward extension of VC dimension of half-spaces  (\cite{shalev2014understanding}):
\begin{lemma}\label{lem:VC_dimension}
VC dimension of the range space $(X,\mathcal{R}_u) \le t+1$. 
\end{lemma}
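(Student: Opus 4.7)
The plan is to reduce the range space $(X,\mathcal{R}_u)$ to a family of homogeneous half-spaces via translation, and then invoke two elementary monotonicity properties of VC dimension. Specifically, I will use: (a) the VC dimension of homogeneous half-spaces in $\mathbb{R}^{t+1}$ (i.e., sets of the form $\{y \in \mathbb{R}^{t+1} : w^T y \ge 0\}$) equals $t+1$; (b) restricting a family to a sub-family never increases the VC dimension; and (c) intersecting every set in a family with a common fixed set never increases the VC dimension. Facts (b) and (c) are immediate from the definition of shattering, since a shattered subset $S$ must satisfy $S \subseteq A$ for any fixed $A$ used in the intersection, and the restriction to $S$ of any witnessing family is no richer than the restriction to $S$ of the larger family.

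First I would note that every hyperplane $L: w^T x = \beta$ arising in the definition of $\mathcal{R}_u$ is constrained to pass through $u$, so $\beta = w^T u$. Writing the associated half-space as $\{x \in \mathbb{R}^{t+1} : w^T(x-u) \ge 0\}$ and applying the translation $y = x - u$, we obtain exactly the family of homogeneous half-spaces in $\mathbb{R}^{t+1}$, whose VC dimension is $t+1$ (this is the standard result for linear classifiers through the origin in dimension $t+1$; see e.g. \citep{shalev2014understanding}). Since a translation is a bijection of $\mathbb{R}^{t+1}$, shattering is preserved, so the family $\mathcal{H}_u := \{\{x \in \mathbb{R}^{t+1} : w^T(x-u)\ge 0\} : w \in \mathbb{R}^{t+1}\}$ has VC dimension exactly $t+1$.

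Next I would write each $R \in \mathcal{R}_u$ as
\[
R \;=\; A \,\cap\, H_w \,\cap\, S^t, \qquad A \;:=\; \{x \in \mathbb{R}^{t+1} : r_1 \le \|u-x\|_2 \le r_2\},
\]
where $H_w \in \mathcal{H}_u$ and the additional condition $\mathcal{A}_L \not\subset B_t(u,r_2)$ merely selects a sub-family $\mathcal{H}_u' \subseteq \mathcal{H}_u$. Since $A \cap S^t$ is a fixed subset of $X = S^t$, property (c) gives that the VC dimension of $\{A \cap S^t \cap H : H \in \mathcal{H}_u\}$ is at most that of $\mathcal{H}_u$, and property (b) gives that restricting to $\mathcal{H}_u' \subseteq \mathcal{H}_u$ cannot increase it further. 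Chaining these two observations yields $\mathrm{VCdim}(X,\mathcal{R}_u) \le \mathrm{VCdim}(\mathcal{H}_u) = t+1$, which is exactly the claim.

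I do not anticipate a major obstacle here: the only subtle point is making sure that the passage to homogeneous half-spaces is legitimate, which relies on the (already recorded) convention that $L$ passes through $u$. The rest is a direct application of textbook monotonicity properties of VC dimension, together with the standard VC-dimension computation for linear separators through the origin.
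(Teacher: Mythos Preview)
Your argument is correct. You reduce each range to the intersection of a fixed set $A\cap S^t$ with a half-space whose bounding hyperplane passes through $u$, translate so that these become homogeneous half-spaces in $\reals^{t+1}$, and then invoke the standard fact that such half-spaces have VC dimension $t+1$ together with the elementary monotonicity properties (b) and (c). All steps are valid, and the translation is exactly what exploits the constraint $\beta = w^T u$.

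The paper's own proof proceeds differently: it argues directly, via Radon's lemma, that no set of $t+2$ points can be shattered by half-spaces. Your route is more modular---you black-box the VC dimension of linear separators through the origin and use monotonicity---whereas the paper re-derives the non-shatterability from scratch. Your approach has the advantage of making explicit where the bound $t+1$ (rather than $t+2$) comes from: it is precisely the constraint that every hyperplane $L$ passes through $u$ that lets you pass to \emph{homogeneous} half-spaces. Without that observation one would only get the VC bound $t+2$ for affine half-spaces in $\reals^{t+1}$, and indeed the paper's Radon argument, as literally written for general $w_0$, is the standard proof of the $t+2$ bound; your translation step is what tightens it to $t+1$.
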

\begin{proof}
In order to show this, consider a set $\mathcal{S}$ of $t+2$ points. Recall that the convex hull of a set $S$ of points $\{x_i\}_{i=1}^{n}$ is the set 
\begin{align*}
C(S)=\{ \sum \lambda_{i}x_{i} \mid \sum \lambda_i=1, \lambda_i \ge 0  \}.
\end{align*}
By Radon's lemma (\cite{shalev2014understanding}) we have that the set of points $\mathcal{S}$ can be partitioned into two sets $\mathcal{S}_1$ and $\mathcal{S}_2$ such that their convex hulls intersect.
Let $p \in \mathcal{S}_1$ be a point in that intersection. Assume there exist a hyperplane such that
\begin{align*}
w^{T}x_i  \le w_0,  \forall x_i  \in \mathcal{S}_1 \\
 w^{T}x_i \ge w_0, \forall x_i  \in \mathcal{S}_2.
\end{align*}
Since $p$ is in the convex hull of $\mathcal{S}_1$ we must have that $w ^{T}p \le w_0$. But then,
\begin{align*}
w^{T}p= \sum_{i:x_i \in \mathcal{S}_2} \lambda_i w^{T} x_i  >( \sum_{i \in \mathcal{S}_2} \lambda_i ) \min_{i: x_i \in \mathcal{S}_2} w^{T}x_i= \min (w^{T} x_i)>w_0.
\end{align*}
which is a contradiction. Hence it is not possible to shatter $t+2$ elements and therefore the VC dimension of this range space is at most $t+1$.
\end{proof}

Using the results shown above, we are ready to prove the following Lemma.

\begin{lemma*}(Restatement of Lemma \ref{lem:high_stuff1})
If we sample $n$ nodes from $S^t$ according to ${\rm RAG}_t(n,[r_1,r_2])$ with $r_1=b\Big(\frac{\log n}{n}\Big)^{1/t}$, $r_2=a\Big(\frac{\log n}{n}\Big)^{1/t}$ , then for every node $u$ and every hyperplane $L$ passing through $u$ such that $\mathcal{A}_L \not \subset B(u,r_2)$, node $u$ has a neighbor on both sides of the hyperplane $L$ with probability at least $1-\frac{1}{n}$ provided 
\begin{align*}
(a/2)^t-b^t \ge \frac{8|S^t|(t+1)}{c_t}
\end{align*} and $a > 2b.$
\end{lemma*}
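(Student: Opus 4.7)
The plan is to combine the lower bound on the measure of the two annulus-halves from Corollary~\ref{cor:ratio} with the VC-dimension based $\epsilon$-net argument of Theorem~\ref{thm:VC_dim} applied to the range space of Lemma~\ref{lem:VC_dimension}, then take a union bound over the $n$ vertices. Fix a vertex $u$ and any hyperplane $L$ through $u$ satisfying the non-tangency assumption $\mathcal{A}_L \not\subset B_t(u,r_2)$. The two half-annulus regions $\mathcal{R}^1_L$ and $\mathcal{R}^2_L$ partition the annulus $B_t(u,[r_1,r_2])$, whose normalized spherical measure equals $c_t(r_2^t-r_1^t)/|S^t|=c_t(a^t-b^t)(\log n)/(n|S^t|)$. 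By Corollary~\ref{cor:ratio}, under the hypothesis $a>2b$, the smaller of the two halves has normalized measure at least
\[
\epsilon \;\equiv\; \frac{c_t\bigl((a/2)^t-b^t\bigr)\log n}{n\,|S^t|}.
\]

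Next, consider the range space $(X,\mathcal{R}_u)$ defined just before Lemma~\ref{lem:VC_dimension}, whose elements are precisely the half-annulus regions $\mathcal{R}^i_L$ arising from all admissible hyperplanes through $u$. Lemma~\ref{lem:VC_dimension} bounds its VC-dimension by $d=t+1$. The remaining $n-1$ vertices are i.i.d.\ uniform on $S^t$, so we apply Theorem~\ref{thm:VC_dim} to this sample with the above $\epsilon$ and with $\eta = 1/n^2$. The required sample size is
\[
\max\!\Bigl(\tfrac{8(t+1)}{\epsilon}\log\tfrac{8(t+1)}{\epsilon},\ \tfrac{4}{\epsilon}\log\tfrac{2}{\eta}\Bigr)
\;=\; O\!\left(\tfrac{(t+1)\,n\,|S^t|}{c_t\bigl((a/2)^t-b^t\bigr)\log n}\cdot \log n\right),
\]
which is at most $n-1$ precisely under the hypothesis $(a/2)^t-b^t \geq 8|S^t|(t+1)/c_t$ (the logarithmic factor cancels and the constant $8$ is chosen to absorb the $\log(8(t+1)/\epsilon)$ term). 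Theorem~\ref{thm:VC_dim} then guarantees that, with probability at least $1-1/n^2$, every range in $\mathcal{R}_u$ of measure at least $\epsilon$ contains at least one sample point. Applied to both $\mathcal{R}^1_L$ and $\mathcal{R}^2_L$ (each has measure $\geq \epsilon$), this says $u$ has a neighbor strictly on each side of $L$, simultaneously for all admissible $L$.

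Finally, I union-bound the failure event over all $n$ vertices $u$, which costs only a factor of $n$ and still yields failure probability $O(1/n)$, as claimed. The main conceptual obstacle, and the reason we cannot use a naive union bound directly, is that the set of admissible hyperplanes $L$ through a given $u$ is uncountably infinite; this is exactly what the VC-dimension machinery bypasses by collapsing the continuum of half-annulus ranges into a single uniform guarantee via Lemma~\ref{lem:VC_dimension} and Theorem~\ref{thm:VC_dim}. The bookkeeping obstacle is verifying that the constant $8(t+1)|S^t|/c_t$ in the hypothesis is exactly what makes the sample size inequality of Theorem~\ref{thm:VC_dim} hold with $n-1$ samples once the $\log(8(t+1)/\epsilon)=\Theta(\log n)$ cancels against the $\log n$ in $\epsilon$; the $a>2b$ hypothesis is used solely to ensure the denominator $(a/2)^t-b^t$ in Corollary~\ref{cor:ratio} is positive.
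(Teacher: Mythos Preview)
Your proposal is correct and follows essentially the same route as the paper: lower-bound the measure of each half-annulus via Corollary~\ref{cor:ratio}, invoke the VC bound of Lemma~\ref{lem:VC_dimension} together with the $\epsilon$-net guarantee of Theorem~\ref{thm:VC_dim} for a single vertex $u$, and then union-bound over the $n$ vertices. The only cosmetic difference is that the paper writes the per-range lower bound as $\frac{\delta c_t(a^t-b^t)\log n}{n|S^t|}$ with $\delta=\frac{(a/2)^t-b^t}{a^t-b^t}$ and only cancels to $(a/2)^t-b^t$ at the end, whereas you substitute immediately; the asymptotic cancellation $\log\bigl(8(t+1)/\epsilon\bigr)=(1+o(1))\log n$ you describe is exactly the paper's $\lim_{n\to\infty}\frac{1}{\log n}\log\frac{8|S^t|n(t+1)}{\delta c_t(a^t-b^t)\log n}=1$.
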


\begin{proof}
Recall that the volume of $B(u,r_1,r_2)$ is $c_t(r_2^t-r_1^t)=\frac{c_t \log n}{n}(a^t-b^t)$. According to Corollary \ref{cor:ratio}, whenever $a>2b$, $ \frac{\min (|\mathcal{R}_{L}^{1},\mathcal{R}_{L}^{2}|)}{|\mathcal{R}_{L}^{1}|+|\mathcal{R}_{L}^{2}|} \ge \delta$ where $\delta =  \frac{ \left(a/2\right)^t -b^t}{ a^t -b^t}$ when the hyperplane $L$ satisfies the conditions of the lemma. In that case, we have that for all $r \in R_u$,
\begin{align*}
 \Pr_{x \sim_U X}(x \in r) \ge \frac{\delta c_t \log n}{ n |S^t|}(a^t-b^t).
\end{align*}
Since VC Dimension of $(X,R_u) \le t+1$ and $n$ points are sampled from $X$, the conditions of Lemma \ref{thm:VC_dim} is satisfied for $\eta=\frac{2}{n^{2}}$ if
\begin{align*}
n \ge \max \Big ( \frac{8n|S^t|(t+1)}{\delta c_t \log n (a^t-b^t)}\log \frac{8 n|S^t|(t+1)}{\delta c_t\log n (a^t-b^t)},\frac{8n|S^t|}{\delta c_t \log n (a^t-b^t)} \log n  \Big)
\end{align*} 
Since $\lim_{n \rightarrow \infty} \frac{1}{\log n} \log \frac{8|S^t| n(t+1)}{\delta c_t\log n (a^t-b^t)} \rightarrow 1$ for constant $t$, hence we have that 
\begin{align*}
a^t-b^t \ge \frac{8|S^t|(t+1)}{c_t\delta}.
\end{align*} 
By taking a union bound over all the $n$ range spaces $(X,R_u)$ corresponding to the $n$ nodes and applying the statement of Lemma \ref{thm:VC_dim}, we have proved the lemma.
\end{proof}




\section{Proof of Lemma~\ref{lem:disc}}
\begin{lemma}\label{lem:disc}
A random geometric graph $G(n,\frac{a \log n}{n})$ will have $\omega(1)$ disconnected components for $a <1.$
\end{lemma}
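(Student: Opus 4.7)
The plan is to mirror the second-moment argument that was already used to prove the lower bound in Theorem~\ref{thm:rag}, but now applied to the event that a vertex has no neighbor in a specified direction. For each vertex $u$, introduce the indicator random variable $A_u^l$ that equals $1$ iff there is no other vertex $v$ with $X_v$ lying in the arc of length $\frac{a \log n}{n}$ immediately to the left of $X_u$ on the circle $[0,1]$. Since the $X_v$ are i.i.d.\ uniform,
\[
\Pr(A_u^l = 1) = \Big(1 - \tfrac{a \log n}{n}\Big)^{n-1} = n^{-a - o(1)},
\]
so $A := \sum_u A_u^l$ satisfies $\mathbb{E}[A] = n^{1 - a - o(1)} \to \infty$ whenever $a<1$.

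Next, I would upgrade this to a high-probability statement by bounding $\mathrm{Var}(A)$ using an argument nearly identical to the covariance computation already carried out in the proof of the necessary condition of Theorem~\ref{thm:rag}. Splitting on whether $d_L(X_u,X_v) > \tfrac{2a \log n}{n}$: in the typical case the two left-arcs are disjoint, giving $\Pr(A_u^l = 1 \cap A_v^l = 1) = (1 - \tfrac{2a \log n}{n})^{n-2}$; in the remaining event of probability $O(\tfrac{a\log n}{n})$ we use the trivial bound $\Pr(A_u^l=1 \cap A_v^l=1) \le \Pr(A_u^l=1)$. Summing over pairs yields $\mathrm{Var}(A) \le \mathbb{E}[A]\,(1 + O(\log n))$, so Chebyshev's inequality gives $A = \omega(1)$ with probability $1-o(1)$.

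Finally, I would convert the lower bound on $A$ into a lower bound on the number of connected components. View the vertex positions as $n$ points on the circle $[0,1]$ with geodesic distance and consider the $n$ arcs between cyclically consecutive positions. In the random geometric graph $G(n, \tfrac{a\log n}{n})$ the connected components are precisely the maximal runs of consecutive points in which every gap has length at most $\tfrac{a\log n}{n}$; equivalently, the number of components equals the number of gaps of length strictly greater than $\tfrac{a\log n}{n}$. Each such large gap is the unique left-gap of exactly one vertex, namely the vertex immediately to its right, and that vertex satisfies $A_u^l = 1$; conversely every vertex with $A_u^l = 1$ arises this way. Hence the number of components equals $A$, which by the previous step is $\omega(1)$ with probability $1-o(1)$.

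The only non-routine step is the variance bound, but since this is a direct adaptation of a covariance computation already written out in Theorem~\ref{thm:rag}, I do not anticipate any genuine obstacle; the bookkeeping is essentially the same except that only the left-arc (of length $\frac{a\log n}{n}$) plays a role instead of a two-sided annulus.
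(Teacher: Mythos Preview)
Your proposal is correct and follows essentially the same approach as the paper's proof: both introduce the indicator that a vertex has no neighbor in its left arc of length $\tfrac{a\log n}{n}$, compute $\mathbb{E}[A]=n^{1-a-o(1)}$, bound $\mathrm{Var}(A)$ via the same case split on whether $d_L(X_u,X_v)>\tfrac{2a\log n}{n}$, and apply Chebyshev. Your final combinatorial step (identifying the number of components with the number of large gaps, hence with $A$) is a touch sharper than the paper's, which only concludes that $k$ such vertices force at least $k-1$ components, but the argument is otherwise identical.
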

\begin{proof}
Define an indicator random variable $A_u$ for a node $u$ which is $1$ if it does not have a neighbor on its left. We must have that 
\begin{align*}
\Pr(A_u)=\Big(1-\frac{a \log n}{n}\Big)^{n-1}.
\end{align*}
Therefore we must have that $\sum_u \avg A_u =n^{1-a}=\Omega(1)$ if $a<1$. This statement also holds true with high probability. To show this we need to prove that the variance of $\sum_u \avg A_u$ is bounded. We have that 
\begin{align*}
{\rm Var}(A) <\avg[A]+\sum_{u \neq v} {\rm Cov}(A_u,A_v)=\avg[A]+\sum_{u \neq v} \Pr(A_u=1 \cap A_v=1)-\Pr(A_u=1)\Pr(A_v=1)
\end{align*}
Now, consider the scenario when the  vertices $u$ and $v$ are at a distance more than $\frac{2a \log n}{n}$ apart (happens with probability at least $1-\frac{4a\log n}{n})$. Then the region in $[0,1]$ that is within distance $\frac{a\log n}{n}$ from both of the vertices is empty and therefore $\Pr(A_u=1 \cap A_v=1) = \Pr(A_u=1)\Pr(A_v=1|A_u=1) \leq  \Pr(A_u=1)\Pr(A_v=1) = (\Pr(A_u=1))^2$.  When the vertices  are within distance $\frac{2a \log n}{n}$ of one another, then
$
\Pr(A_u=1 \cap A_v=1) \le \Pr(A_u=1).
$
Therefore,
\begin{align*}
\Pr(A_u=1 \cap A_v=1) \le (1-\frac{4a\log n}{n}) (\Pr(A_u=1))^2 + \frac{4a\log n}{n}\Pr(A_u=1).
\end{align*}
Consequently,
\begin{align*}
\Pr(A_u=1 \cap A_v=1)-\Pr(A_u=1)\Pr(A_v=1) &\le (1-\frac{4a\log n}{n}) (\Pr(A_u=1))^2 \\+ \frac{4a\log n}{n}\Pr(A_u=1) -& (\Pr(A_u=1))^2 
\le  \frac{4a\log n}{n}\Pr(A_u=1).
\end{align*}
Now,
$$
{\rm Var}(A) \le \avg[A] + \binom{n}{2}\frac{4a\log n}{n}\Pr(A_u=1) \le \avg[A](1+ 2a\log n).
$$
By using Chebyshev bound, with probability at least $1-\frac{1}{\log n}$, 
$$A >n^{1-a}-\sqrt{n^{1-a}(1+2a\log n)\log n},$$

Now, observe that if there exists $k$ nodes which do not have a neighbor on one side, then there must exist $k-1$ disconnected components. Hence the number of disconnected components in $G(n,\frac{a \log n}{n})$ is $\omega(1)$.
\end{proof}


\end{document}